\Crefname{equation}{Relation}{Relations}
\newcommand{\Z}{\mathbb{Z}}
\newcommand{\N}{\mathbb{N}}
\newcommand{\D}{\mathbb{D}}
\newcommand{\Q}{\mathbb{Q}}
\newcommand{\R}{\mathbb{R}}
\newcommand{\interp}[1]{[\![#1]\!]}
\newcommand{\tool}{\textsc{Tietze}\xspace}
\newcommand{\GateAt}[2]{{#1}_{#2}}
\newcommand{\CGate}[3]{\GateAt{C{#1}}{{#2},{#3}}}
\newcommand{\CCGate}[3]{\GateAt{CC{#1}}{{#2},{#3}}}
\newtheoremstyle{break}
  {}
  {}
  {\itshape}
  {}
  {\bfseries}
  {.}
  {\newline}
  {}
\theoremstyle{plain}
\newtheorem{theorem}{Theorem}[section]
\newtheorem{lemma}[theorem]{Lemma}
\newtheorem*{proposition*}{Proposition}
\newtheorem{corollary}[theorem]{Corollary}
\theoremstyle{break}
\theoremstyle{definition}
\newtheorem{definition}[theorem]{Definition}
\newtheorem{example}[theorem]{Example}
\theoremstyle{remark}
\begin{document}

\newcommand{\papertitle}
           {A Sound and Complete Equational Theory for 3-Qubit
           Toffoli-Hadamard Circuits}   

\title{\papertitle}
\author{Matthew Amy
        \institute{Simon Fraser University \\ Burnaby, Canada}
        \email{matt\_amy@sfu.ca}
        \and
        Neil J. Ross
        \institute{Dalhousie University \\ Halifax, Canada}
        \email{neil.jr.ross@dal.ca}
        \and
        Scott Wesley
        \institute{Dalhousie University \\ Halifax, Canada}
        \email{scott.wesley@dal.ca}}

\newcommand{\titlerunning}{\papertitle}
\newcommand{\authorrunning}{M. Amy, N. J. Ross \& S. Wesley}

\maketitle

\begin{abstract}
We give a sound and complete equational theory for $3$-qubit quantum circuits over the Toffoli-Hadamard gate set $\{ X, CX, CCX, H\}$.
That is, we introduce a collection of true equations among Toffoli-Hadamard circuits on three qubits that is sufficient to derive any other true equation between such circuits.
To obtain this equational theory, we first consider circuits over the Toffoli-$K$ gate set $\{ X, CX, CCX, K\}$, where $K=H\otimes H$.
The Toffoli-Hadamard and Toffoli-$K$ gate sets appear similar, but they are crucially different on exactly three qubits.
Indeed, in this case, the former generates an infinite group of operators, while the latter generates the finite group of automorphisms of the well-known $E_8$ lattice.
We take advantage of this fact, and of the theory of automorphism groups of lattices, to obtain a sound and complete collection of equations for Toffoli-$K$ circuits.
We then extend this equational theory to one for Toffoli-Hadamard circuits by leveraging prior work of Li \textit{et al.} on Toffoli-Hadamard operators.
\end{abstract}

\section{Introduction}

The \emph{Toffoli-Hadamard} gate set is obtained by extending the
classical reversible gate set $\{ X, CX, CCX\}$ with the Hadamard gate
$H$. The addition of the Hadamard gate promotes the gate set $\{ X,
CX, CCX\}$ from one that is universal for classical reversible
computation to one that is universal for quantum computation
\cite{aharonov2003,Shi2003}. Because the Hadamard gate can introduce
phases of $(-1)$ and produce superpositions, one can think of the
addition of the Hadamard gate as a simple way to augment classical
reversible computation with these typically quantum features. In turn, this motivates the study of
Toffoli-Hadamard circuits
\cite{aharonov2003,AGLR2023,AmyGlaudell2020,dalla2013toffoli,LiRoss2021,Shi2003,vilmart2018zx,vilmart2023completeness}.

In recent years, an important effort has been made to understand
quantum circuits equationally. If $G$ is a set of quantum gates, an
\emph{equational theory} for $G$ is given by a set of \emph{equations} (or \emph{relations}) among the circuits over
$G$. The equational
theory is \emph{sound} if it only equates circuits that correspond to
the same operator, and \emph{complete} if it always equates circuits
that correspond to the same operator. Equational theories can be used
to optimize and verify quantum circuits in practice, but, more
fundamentally, they can illuminate the mathematical structure
underlying the gate set $G$. Sound and complete equational theories
have been found for several important gate sets
\cite{Amy2016AFP,BianSelinger2022,BS2023-cliffordcs3,Clement2022,MakaryRoss2021,Sel2014-clifford}.

In this paper, we give a sound and complete equational theory for
3-qubit Toffoli-Hadamard circuits. A presentation for the group of
Toffoli-Hadamard operators was given in \cite{LiRoss2021}, but the
presentation uses 1-, 2-, and 4-level operators
as generators. While these operators can be represented by Toffoli-Hadamard
circuits, this leads to an unnatural presentation, from the
perspective of quantum circuits. What is more, the presentation of
\cite{LiRoss2021} contains over 2000 relations, even when restricted to 3-qubit operators. Many of these
relations can be presented concisely as relation schemas in the
language of operators, but these relations need to be expanded to be
stated in the language of circuits. In contrast, our presentation
contains only 65 relations, most of which are natural from the
perspective of quantum circuits.

To obtain our presentation, we first consider circuits over the
Toffoli-$K$ gate set $\{ X, CX, CCX, K\}$, where $K=H\otimes H$.  The
Toffoli-Hadamard and Toffoli-$K$ gate sets appear similar, but they
are crucially different on exactly three qubits.  Indeed, in this
case, the former generates an infinite group of operators, while the
latter generates the finite group of automorphisms of the well-known $E_8$ lattice. The correspondence between 3-qubit Toffoli-$K$ circuits
and the automorphisms of the $E_8$ lattice was previously known (see
\cite{Goucher2020,Planat2023}). We take advantage of this correspondence, and of the theory of automorphism groups of lattices, to obtain a sound and
complete collection of equations for Toffoli-$K$ circuits. The
automorphism group of the $E_8$ lattice admits a finite Coxeter
presentation, which enjoys many geometric and combinatorial
properties, and we use \emph{Tietze transformations} to turn the Coxeter
presentation of the group of Toffoli-$K$ operators into a concise
circuit presentation. We then extend this equational theory to one for
Toffoli-Hadamard circuits by building upon \cite{LiRoss2021}. Our paper
therefore regards the group of $3$-qubit Toffoli-Hadamard circuits as an extension of the automorphism group of the $E_8$ lattice in order to elucidate its underlying
mathematical structure.

The paper is organized as follows.
In \cref{Sec:Groups}, we define three groups of interest.
In \cref{Sec:Prelim}, we recall prior results on finite group presentations and we review Tietze transformations.
In \cref{Sec:E8}, we use the theory of Coxeter groups to obtain a presentation for the group of $3$-qubit Toffoli-K circuits using a minimal number of generators. We moreover show that every operator in this group can be represented by a circuit of Toffoli-count at most $120$.
In \cref{Sec:O8D,Sec:TofH}, the results of Li \emph{et al.} \cite{LiRoss2021} are used to extend this presentation to a presentation for $3$-qubit circuits over the gate set $\{ X, CX, CCX, K, CCZ\}$, and then to one for 3-qubit Toffoli-Hadamard circuits. Our approach relies on a large number of derivations and intricate rewriting proofs, which we relegate to several appendices and a supplement~\cite{supp}.

\section{Three Groups and Their Generators}
\label{Sec:Groups}

Let $\Z$ denote the ring of integers.
The \emph{half-integers} $\Z+1/2$ are defined as $\Z+1/2 = \{a+1/2 \mid a \in \Z\}$ and the ring of \emph{dyadic fractions} $\D$ is defined as $\D = \Z[1/2] = \{a/2^k \mid a \in \Z \mbox{ and } k\in \N \}$.
Equivalently, $\D$ is the smallest subring of $\Q$ that contains both $\Z$ and $1/2$.
The \emph{$E_8$ lattice} $\Gamma_8$ is the following collection of 8-dimensional vectors,
\[
\Gamma_8 = \left\{ x \in \Z^8 \cup (\Z+1/2)^8 \;\middle\vert\; \sum x_i \equiv 0 \pmod{2} \right\}.
\]
In other words, $\Gamma_8$ consists of the vectors in $\R^8$ whose components sum to an even integer and are either all integers or all half-integers.
The $E_8$ lattice is well-studied because it enjoys many remarkable properties \cite{ConwaySloane1987}; in particular, it provides the densest sphere packing in dimension 8 \cite{packing}.

We now introduce the three groups that will be the focus of this paper.
Let $R$ be a ring.
For each $n \in \N$, let $\mathrm{GL}( n, R )$ denote the general linear group over $R$ in dimension $n$ and let $\mathrm{O}( n, R )$ denote the orthogonal group over $R$ in dimension $n$.
Define $W(E_8)$ to be the subgroup of $\mathrm{O}(8,\D)$ consisting of the elements of $\mathrm{O}(8,\D)$ that fix the $E_8$ lattice.
Define $\textsf{TofH}(n)$ to be the subgroup of $\mathrm{O}(2^n,\Z[1/\sqrt{2}])$ consisting of matrices $M / \sqrt{2}{}^k$, where $M$ is an integer matrix and $k \in \N$.
We will be interested in the groups $W(E_8)$, $\mathrm{O}(8,\D)$, and $\textsf{TofH}(3)$. Note that we have $W(E_8) \le \mathrm{O}(8,\D) \le \textsf{TofH}(3)$.

The above three groups are generated by well-known quantum gates.
Let $I$ denote the $2\times 2$  identity matrix and $\otimes$ denote the Kronecker tensor product.
Given a dimension $2$ matrix $M$, define $\GateAt{M}{0} = M \otimes I \otimes I$, $\GateAt{M}{1} = I \otimes M \otimes I$, and $\GateAt{M}{2} = I \otimes I \otimes M$.
That is, $\GateAt{M}{j}$ applies operator $M$ to the $j$-th qubit.
Furthermore, define $\CGate{M}{j}{k}$ to be the operator that sends each standard basis state $\ket{x_0 x_1 x_2}$ to $(\GateAt{M}{k})^{x_j} \ket{x_0 x_1 x_2}$.
That is, $\CGate{M}{j}{k}$ applies operator $M$ to the $k$-th qubit whenever the $j$-th qubit is in the basis state $\ket{1}$.
Likewise, define $\CCGate{M}{j}{k}$ to be the operator that sends each standard basis state $\ket{x_0 x_1 x_2}$ to $(\GateAt{M}{l})^{x_j x_k} \ket{x_0 x_1 x_2}$ for $l \in \{ 1, 2, 3 \} \setminus \{ j, k \}$.
That is, $\CCGate{M}{j}{k}$ applies operator $M$ to the $l$-th qubit whenever the $j$-th and $k$-th qubits are both in the basis state $\ket{1}$.
The operators $\CGate{M}{j}{k}$ and $\CCGate{M}{j}{k}$ denote the usual \emph{controlled-$M$ gate} and \emph{doubly-controlled-$M$ gate}, respectively.
Now recall the \emph{Pauli $X$}, \emph{Pauli $Z$}, \emph{Hadamard}, and $K$ matrices,
\[
    X = \begin{bmatrix}
        0 & 1 \\
        1 & 0
    \end{bmatrix},
    \qquad
    Z = \begin{bmatrix}
        1 & 0 \\
        0 & -1
    \end{bmatrix},
    \qquad
    H =
    \frac{1}{\sqrt{2}}
    \begin{bmatrix}
        1 & 1 \\
        1 & -1
    \end{bmatrix},
    \qquad
    \mbox{and}
    \qquad
    K_{j,k} = H_j \circ H_k,
\]
where $(\circ)$ denotes matrix multiplication.
Then $\CCGate{X}{j}{k}$ denotes the \emph{Toffoli} gate.
Note that the matrices of the form $\GateAt{X}{j}$, $\GateAt{K}{{j,k}}$, $\CGate{X}{j}{k}$, $\CCGate{X}{j}{k}$, and $\CCGate{Z}{j}{k}$ belong to $\mathrm{O}(8,\D)$.
It is known that $\{\GateAt{X}{j},\CGate{X}{j}{k},\CCGate{X}{j}{k},\GateAt{K}{{j,k}}\}$ is a generating set for $W(E_8)$ \cite{ConwaySloane1987}.
Similarly it is known that $\{\GateAt{X}{j},\CGate{X}{j}{k},\CCGate{X}{j}{k},\GateAt{K}{{j,k}}, \CCGate{Z}{j}{k} \}$ and $\{\GateAt{X}{j},\CGate{X}{j}{k},\CCGate{X}{j}{k},\GateAt{H}{j}\}$ are generating sets for $\mathrm{O}(8,\D)$ and $\textsf{TofH}(3)$, respectively \cite{AmyGlaudell2020}\footnote{The generator $CCZ_{j,k}$ is necessary to apply these results to the ancilla-free three-qubit case.}.

\section{Presentations and Tietze Transformations}
\label{Sec:Prelim}

We now recall key results from combinatorial group theory. In particular, we discuss monoid presentations, as well as \emph{Tietze transformations}, which will play an important role in the rest of the paper.

\subsection{Presentations}

Let $\Sigma$ be an alphabet (i.e., a set of symbols).
Then $\Sigma^*$ is the \emph{free monoid on $\Sigma$}.
The elements of $\Sigma^*$ are the words over $\Sigma$, the monoid operation is string concatenation, which we denote by $(\cdot)$, and the identity element in $\Sigma^*$ is the empty word, which we denote by $\epsilon$.

If $G$ is a monoid and $Q$ is a quotient of $G$, then we write $\pi_Q: G \twoheadrightarrow Q$ to denote the canonical projection of $G$ onto $Q$.
Given a subset $R$ of $\Sigma^{*} \times \Sigma^{*}$, we write $Q = \langle \Sigma \mid R \rangle$ to denote the largest quotient of $G = \Sigma^*$ such that $\pi_Q( q ) = \pi_Q( r )$ for all $( q, r ) \in R$.
If $M \cong \langle \Sigma \mid R \rangle$, then we say that $\langle \Sigma \mid R \rangle$ is a \emph{presentation of $M$} and write $q \approx_R$ for each $( q, r ) \in R$.
The elements of $\Sigma$ are called \emph{generators} and the elements of $R$ are called \emph{relations}.
If, for each $x \in \Sigma$, there exists a $w \in \Sigma^*$ such that $\pi_Q(x \cdot w) = \pi_Q(\epsilon)$, then $M$ is a group and $\langle \Sigma \mid R \rangle$ is a \emph{monoid presentation for the group} $M$.
In either case, if $\Sigma$ and $R$ are finite, then $\langle \Sigma \mid R \rangle$ is a \emph{finite presentation}.
We distinguish between the presentations $\langle \Sigma \mid R \rangle$ and $\langle \Sigma \mid R' \rangle$ whenever $R \ne R'$, even if $R$ and $R'$ generate the same quotient.

Certain aspects of presentations can be conveniently expressed in the language of string rewriting.
Let $\Sigma$ be an alphabet and $R \subseteq \Sigma^* \times \Sigma^*$.
Fix some $u \in \Sigma^*$ and $v \in \Sigma^*$.
If there exists some $( q, r ) \in R$ and $s, t \in \Sigma^*$ such that $u = s \cdot q \cdot t$ and $v = s \cdot r \cdot t$, then we write,
\[
u \xrightarrow{R} v. 
\]
If either $u \xrightarrow{R} v$ or $u \xleftarrow{R} v$, then we write $u \stackrel{R}{\leftrightarrow} v$.
We say that \emph{$u$ rewrites to $v$}, denoted $u \sim_R v$, if either $u = v$ or there exists a finite sequence,
\[
u
\stackrel{R}{\leftrightarrow}
w_1
\stackrel{R}{\leftrightarrow}
w_2
\stackrel{R}{\leftrightarrow}
\cdots
\stackrel{R}{\leftrightarrow}
w_n
\stackrel{R}{\leftrightarrow}
v.
\]
That is, $( \sim_R )$ is the symmetric, transitive and reflexive closure of $\xrightarrow{R}$. Importantly, $\pi_Q( u ) = \pi_Q( v )$ in $Q = \langle \Sigma, R \rangle$ if and only if $u \sim_R v$ \cite[Ch.~7]{BookOtto1993}.
That is, two words $u$ and $v$ represent the same element in $G$ if and only if the relations in $R$ suffice to rewrite $u$ into $v$.
In this sense, the relations in $R$ define a \emph{complete equational theory} for the monoid $Q$ with respect to the generators $\Sigma$.
For further information on presentations and on rewriting, the reader is encouraged to consult \cite{Johnson1990} and \cite{BookOtto1993}, respectively.

\subsection{Tietze Transformations}

The transformations, which we state formally below, allow one to add a generator, remove a generator, add a relation, and remove a relation.
Let $\Sigma$ be an alphabet, $R \subseteq \Sigma^{*} \times \Sigma^{*}$, and $G = \langle \Sigma \mid R \rangle$ be a monoid.
\begin{itemize}
    \item[--] \textbf{Gen($+$)}.
          Let $x$ be a symbol.
          If $x \not \in \Sigma$ and $w \in \Sigma^*$, then $G \cong \langle \Sigma \cup \{ x \} \mid R \cup \{ x \approx w \} \rangle$.
    \item[--] \textbf{Gen($-$)}.
          Let $x \in \Sigma$, $x \approx_R w$, $\Pi = \Sigma \setminus \{ x \}$, and $Q = R \setminus \{ x \approx w \}$.
          If $Q \subseteq \Pi^* \times \Pi^*$, then $G \cong \langle \Pi \mid Q \rangle$.
    \item[--] \textbf{Rel($+$)}.
          If $q \sim_R r$, then $G \cong \langle \Sigma \mid R \cup \{ q \approx r \} \rangle$.
    \item[--] \textbf{Rel($-$)}.
          Let $q \approx_R r$ and $Q = R \setminus \{ q \approx r \}$.
          If $q \sim_Q r$, then $G \cong \langle \Sigma \mid Q \rangle$.
\end{itemize}
The \textbf{Gen($+$)} rule states that one can add a generator if one also adds a relation defining it in terms of the other generators. The \textbf{Gen($-$)} rule states that a generator can be removed if it is defined in terms of the other generators, and does not appear in any of the other relations. The \textbf{Rel($+$)} rule states that if a relation can be derived from the existing ones, then it can be added to the set of relations. Finally, the \textbf{Rel($-$)} rule conversely states that if a relation can be derived from other relations in the presentation, it is redundant and can be removed.

Tietze transformations are sound and complete for the isomorphism of finite monoid presentations. That is, two presentations $\langle \Sigma \mid R \rangle$ and $\langle \Pi \mid Q \rangle$ specify the same monoid if and only if $\langle \Sigma \mid R \rangle$ can be obtained from $\langle \Pi \mid Q \rangle$ through a finite sequence of Tietze transformations~\cite[Section~1]{HenryMimram2022}.

The goal of this paper is to find presentations for groups of quantum operators in which each generator corresponds to a specific quantum gate.
More explicitly, given a group $G$, a generating set $\Sigma$, and a \emph{semantic interpretation} $\interp{\cdot}_{\Sigma}: \Sigma \to G$, our goal is to find a set of relations $R \subseteq \Sigma^{*} \times \Sigma^{*}$ such that $\interp{\cdot}$ induces an isomorphism between $\langle \Sigma \mid R \rangle$ and $G$.
In what follows, we start from a known presentation $\langle \Pi \mid Q \rangle$ over different generators $\Pi$ with a semantic interpretation $\interp{\cdot}_\Pi: \Pi \to G$, and obtain $\langle \Sigma \mid R \rangle$ via a sequence of Tietze transformations.
As these Tietze transformations act on the abstract group $\langle \Pi \mid Q \rangle$, one must ensure that the transformations respect the intended interpretation $\interp{\cdot}_{\Sigma}$ of the new generators in $\Sigma$, as discussed further in \cref{Appendix:Tietze}.

\section{From Coxeter to Circuit Presentations of \texorpdfstring{$W(E_8)$}{W(E8)}}
\label{Sec:E8}

A \emph{Coxeter group} is a group $G$ which admits a group presentation of the form $\langle r_1, \ldots, r_n \mid (r_j r_k)^{N_{j,k}} \approx \epsilon\rangle$,
where $N$ is an $n \times n$ matrix over $\N \cup \{ \infty \}$ such that $N_{j,j} = 1$ and $N_{j,k} > 1$ for all $j \ne k$~\cite{Humphreys1990}.
The matrix $N$ is known as the \emph{Coxeter matrix} of $G$.
Note that since $\pi_G( r_j \cdot r_j ) = \epsilon$ for each $r_j$, then every Coxeter presentation is automatically a monoid presentation for a group.
Coxeter groups are an abstraction of reflection groups and, in particular, for every finite Coxeter group $G$, there is a faithful group representation $G \to \mathrm{O}(n )$ that maps each $r_j$ to a  reflection in $\R^n$~\cite{Humphreys1990}.
Recall that a \emph{Householder transformation} is a reflection about the hyperplane normal to some vector $\alpha \in \R^n$ defined by $v \mapsto v - 2 \frac{\langle v, \alpha \rangle}{\langle \alpha, \alpha \rangle} \alpha$~\cite{Humphreys1990}.
If $r$ is the reflection about the hyperplane normal to $\alpha \in \R^n$ and $M \in \mathrm{O}( n )$, then $M \circ r \circ M^{-1}$ is the reflection about the hyperplane normal to $M \alpha$~\cite[Prop.~1.2]{Humphreys1990}.
As a special case, $v$ and $-v$ define the same reflection.

The goal of this section is to construct a presentation for the Weyl group of the $E_8$ lattice in terms of Toffoli-K gates. 
Recall that the Weyl group for any lattice $L \subseteq \R^n$ is the finite reflection group generated by reflections about the roots of $L$ (see~\cite[Sec.~2.9]{Humphreys1990}).
That is, given a root system $\Phi$ for $L$, the group $W( L )$ is generated by $\{ r_\alpha : \alpha \in \Phi \}$ where $r_\alpha$ is the reflection through the hyperplane normal to $\alpha$.
Consequently, $W( E_8 )$ is a Coxeter group.
A root system and the corresponding Coxeter matrix for $W( E_8 )$ are given in \cref{Fig:E8Desc}.

\begin{figure}[t]
  \centering
  \begin{subfigure}[t]{0.58\textwidth}
    \centering
    \small
    \input{e8_roots}
    \subcaption{$E_8$ Root System.}
  \end{subfigure}
  \hfill
  \begin{subfigure}[t]{0.38\textwidth}
    \centering
    \small
    \input{e8_coxeter}
    \subcaption{$W( E_8 )$ Coxeter Matrix.}
  \end{subfigure}
  \caption{The root system and Coxeter matrix for $W( E_8 )$. Note that the root system consists of $8$ vectors and are presented as the columns of an $8 \times 8$ matrix.}
  \label{Fig:E8Desc}
\end{figure}

\begin{figure}[t]
  \centering
  {\small\begin{tikzcd}[column sep=4em, row sep=4em]
  \langle \Sigma_{E8} \mid R_{E8} \rangle
  \arrow["\textbf{Gen(}+\textbf{)}"]{r}
  \arrow[dotted, "\cong"]{dd}
  &
  \langle \Sigma_{E8} \cup \Sigma_{D} \mid R_{E8} \cup R_{D(E8)} \rangle
  \arrow["\textbf{Rel(}+\textbf{)}"]{r}
  &
  \langle \Sigma_{E8} \cup \Sigma_{D} \mid R_{E8} \cup R_{D(E8)} \cup R_{E8(D)} \rangle
  \arrow["\textbf{Rel(}+\textbf{)}"]{d}
  \\
  &&
  \langle \Sigma_{E8} \cup \Sigma_{D} \mid R_{E8} \cup R_{D(E8)} \cup R_{E8(D)} \cup R_{D} \rangle
  \arrow["\textbf{Rel(}-\textbf{)}"]{d}
  \\
  \langle \Sigma_{D} \mid R_{D} \rangle
  &
  \langle \Sigma_{E8} \cup \Sigma_{D} \mid R_{E8(D)} \cup R_{D} \rangle
  \arrow["\textbf{Gen(}-\textbf{)}"]{l}
  &
  \langle \Sigma_{E8} \cup \Sigma_{D} \mid R_{E8}  \cup R_{E8(D)} \cup R_{D} \rangle
  \arrow["\textbf{Rel(}-\textbf{)}"]{l}
\end{tikzcd}}%
  \caption{A diagrammatic summary of the Tietze transformations used to obtain a presentation for $W(E_8)$. Note that in this diagram $\Sigma_{E8}$ denotes the Coxeter generators, $\Sigma_{D}$ denotes the dyadic Toffoli-Hadamard gates, $R_{D(E8)}$ expresses the $\Sigma_{D}$ in terms of $\Sigma_{E8}$, and $R_{E8(D)}$ expresses $\Sigma_{E8}$ in terms of $\Sigma_{D}$.}
  \label{Fig:E8ProofSum}
\end{figure}

To obtain a presentation in terms of Toffoli-K gates, we begin with the Coxeter presentation of $W( E_8 )$.
The desired presentation is then obtained through a sequence of Tietze transformations.
First, the \textbf{Gen($+$)} rule is used to introduce the dyadic Toffoli-K gates with their intended semantics.
Second, the \textbf{Rel($+$)} rule is used to rewrite the Coxeter generators in terms of Toffoli-K gates (call these relations $R_{E8(D)}$).
Third, the \textbf{Rel($+$)} rule is used to introduce well-known relations satisfied by the Toffoli-K gates (see, e.g.,~\cite{LiRoss2021,MakaryRoss2021}).
Given these new relations, the \textbf{Rel($-$)} rule is used to eliminate all defining relations for the Toffoli-K gates.
In a similar fashion, the \textbf{Rel($-$)} rule is also used to eliminate the Coxeter relations of $W( E_8 )$.
At this point, the Coxeter generators only appear in $R_{E8(D)}$, and can be eliminated using the \textbf{Gen($-$)} rule.
What remains is a presentation of $W(E_8)$ in terms of Toffoli-K gates.
All steps of this proof are summarized in \cref{Fig:E8ProofSum}.

Each step of this proof requires numerous applications of the corresponding Tietze transformation.
To establish that each \textbf{Gen($+$)} and \textbf{Rel($+$)} transformation holds, an equation of $8 \times 8$ matrices must be validated.
To establish that each \textbf{Rel($-$)} transformation holds, a derivational proof must be validated.
In both cases, the proof obligation is computational in nature.
The validity of our Tietze transformations have been machine-verified by the software package \tool\footnote{Available at: \url{\toolurl}.}.

\subsection{Introducing the Toffoli-K Gates}
\label{Sec:E8:DIntro}

The generators of $W( E_8 )$ can be written as follows.
{\scriptsize\begin{align*}
    r_1 &= \GateAt{X}{0} \circ \GateAt{X}{1} \circ \CCGate{X}{0}{1} \circ \GateAt{X}{1} \circ \GateAt{X}{0}
    &
    r_2 &= \GateAt{X}{0}  \circ \CGate{X}{2}{1} \circ \CCGate{X}{0}{1} \circ \CGate{X}{2}{1} \circ \GateAt{X}{0} 
    \\
    r_3 &= \GateAt{X}{0} \circ \CCGate{X}{0}{1} \circ \GateAt{X}{0}
    &
    r_4 &= \CGate{X}{0}{1} \circ \CGate{X}{0}{2} \circ \CCGate{X}{1}{2} \circ \CGate{X}{0}{2} \circ \CGate{X}{0}{1}
    \\
    r_5 &= \GateAt{X}{1} \circ \CCGate{X}{0}{1} \circ \GateAt{X}{1}
    &
    r_6 &= \CGate{X}{2}{1} \circ \CCGate{X}{0}{1} \circ \CGate{X}{2}{1}
    \\
    r_7 &= \CGate{Z}{0}{1} \circ \CGate{X}{2}{1} \circ \CCGate{X}{0}{1} \circ \CGate{X}{2}{1} \circ \CGate{Z}{0}{1}
    &
    r_8 &= K_{1,2} \circ \GateAt{X}{1} \circ \GateAt{X}{2} \circ \CGate{Z}{0}{2} \circ \CCGate{X}{1}{2} \circ \CGate{Z}{0}{2} \circ \GateAt{X}{2} \circ \GateAt{X}{1} \circ K_{1,2}
\end{align*}}%
These equations can be derived from the geometry of $\mathbb{R}^8$.
First, note that $\CCGate{X}{0}{1}$ is a reflection about the hyperplane normal to $\ket{\widehat{b}} = \ket{1} \otimes \ket{1} \otimes \ket{-}$ where $\ket{\pm} = (\ket{0} \pm \ket{1}) / \sqrt{2}$.
Then for each generator $r_j$ with normal vector $\ket{b_j}$, it suffices to find an element $M \in W(E_8)$ such that $M \ket{\widehat{b}} = \ket{b_j}$.
The corresponding circuit would be $M \circ \CCGate{X}{1}{2} \circ M^{-1}$.
As an example of this technique, consider the Coxeter generator $r_3$ defined by the normal vector $\ket{b_3} = \ket{0} \otimes \ket{1} \otimes \ket{-}$.
Since $( \GateAt{X}{0} ) \ket{\widehat{b}} = \ket{b_3}$ with $\GateAt{X}{0}$ self-inverse, then $r_3 = \GateAt{X}{0} \circ \CCGate{X}{0}{1} \circ \GateAt{X}{0}$.
The remaining cases are established in \cref{Appendix:E8Decomp}.

Next, the Toffoli-K gates are introduced.
For simplicity of presentation, we first introduce the swap matrices $\sigma_{j,k} = \CGate{X}{j}{k} \circ \CGate{X}{k}{j} \circ \CGate{X}{j}{k}$ where $\sigma_{j,k}$ permutes the $j$-th qubit with the $k$-th qubit.
Recall that the Toffoli-K gates correspond to the following matrices\footnote{For simplicity, we assume that all $K$ gates are applied to adjacent qubits. This is sufficient, since $K_{0,2} = K_{0,1} \circ K_{1,2}$.}:
{\small\begin{equation*}
    \Sigma_D := \left\{ \GateAt{X}{j}, \CGate{X}{k}{l}, \CCGate{X}{j}{k}, \GateAt{Z}{j}, \CGate{Z}{j}{k}, K_{j,j+1}, \sigma_{j,k} \mid j,k,l \in \{ 0, 1, 2 \}, j < k, j \ne l \ne k \right\}.
\end{equation*}}%
It turns out that all Toffoli-K gates are generated by $\GateAt{X}{0}$, $\CGate{X}{1}{0}$, $\CCGate{X}{1}{2}$, and $K_{1,2}$.
To derive these primitive gates, it helps to first derive several diagonal matrices over $(\pm 1)$.
These are then used to derive the $\CCGate{X}{0}{1}$ and $\GateAt{X}{0}$ gates.
From this, the swap matrices can be derived, after which, it is relatively straightforward to construct the $\GateAt{X}{0}$, $\GateAt{K}{1,2}$, and $\CGate{X}{0}{1}$ gates.
This yields four words $w_X$, $w_K$, $w_{CX}$, and $w_{CCX}$, such that $\interp{w_X}^{*}_{E8} = \GateAt{X}{0}$, $\interp{w_{CX}}^{*}_{E8} = \CGate{X}{1}{0}$, $\interp{w_{CCX}}^{*}_{E8} = \CCGate{X}{1}{2}$, and $\interp{w_K}^{*}_{E8} = \GateAt{K}{1,2}$, as outlined in \cref{Appendix:E8Construct}. 

The remaining Toffoli-K gates are derived in terms of $K_{1,2}$, $\CCGate{X}{1}{2}$, $\GateAt{X}{0}$, and $\CGate{X}{0}{1}$.
To simplify this process, we note that once a gate has been derived, it may then be used to derive other gates.
This is analogous to how the generator $\GateAt{X}{0}$ appears in the defining relation for $\CGate{X}{0}{1}$.
Given a set of defining relations, if the dependencies between the generators defined by the relations form an acyclic digraph, then the defining relations arise from a valid sequence of \textbf{Gen($+$)} transformations (see~\cref{Appendix:DerivedGens}).
Likewise, the derived generators can be eliminated by a valid sequence of Tietze transformation.
The defining relations for the remaining $19$ gates are found in \cref{Appendix:DerivedGens}.
Since the dependencies among these generators are acyclic, then they must arise from $19$ valid applications of the \textbf{Gen($+$)} rules.
Let $R_{D(E8)}$ denote all $23$ relations.
Then $W( E_8 ) \cong \langle \Sigma_{E8} \cup \Sigma_D \mid R_{E8} \cup R_{D(E8)} \rangle$.

\subsection{Deriving the \texorpdfstring{$W(E_8)$}{W(E8)} Coxeter Generators}

Recall the circuit definitions for the $W(E_8)$ generators from \cref{Sec:E8:DIntro}.
Let $R_{E8(D)}$ denote the set of corresponding relations.
For example, the relation corresponding to $r_3$ is $r_3 \approx \GateAt{X}{0} \cdot \CCGate{X}{0}{1} \cdot \GateAt{X}{0}$.
Since these relations hold by definition, then they may be introduced via $8$ applications of \textbf{Rel($+$)} and consequently $W( E_8 ) \cong \langle \Sigma_{E8} \cup \Sigma_D \mid R_{E8} \cup R_{D(E8)} \cup R_{E8(D)} \rangle$.

\subsection{Elimination of the Coxeter Generators}
\label{Sec:E8:DyadRel}

\begin{figure}[t]
  \centering
  \begin{subfigure}[t]{0.49\textwidth}
    \centering
    \scriptsize
    \begin{align*}
      \mathbf{r_1} \cdot r_1
        &\; \rightarrow \; \GateAt{X}{0} \cdot \GateAt{X}{1} \cdot \CCGate{X}{0}{1} \cdot \GateAt{X}{1} \cdot \GateAt{X}{0} \cdot \mathbf{r_1} \\
        &\; \rightarrow \; \GateAt{X}{0} \cdot \GateAt{X}{1} \cdot \CCGate{X}{0}{1} \cdot \GateAt{X}{1} \cdot \mathbf{\GateAt{X}{0} \cdot \GateAt{X}{0}} \cdot \GateAt{X}{1} \cdot \CCGate{X}{0}{1} \cdot \GateAt{X}{1} \cdot \GateAt{X}{0} \\
        &\; \rightarrow \; \GateAt{X}{0} \cdot \GateAt{X}{1} \cdot \CCGate{X}{0}{1} \cdot \mathbf{\GateAt{X}{1} \cdot \GateAt{X}{1}} \cdot \CCGate{X}{0}{1} \cdot \GateAt{X}{1} \cdot \GateAt{X}{0} \\
        &\; \rightarrow \; \GateAt{X}{0} \cdot \GateAt{X}{1} \cdot \mathbf{\CCGate{X}{0}{1} \cdot \CCGate{X}{0}{1}} \cdot \GateAt{X}{1} \cdot \GateAt{X}{0} \\
        &\; \rightarrow \; \GateAt{X}{0} \cdot \mathbf{\GateAt{X}{1} \cdot \GateAt{X}{1}} \cdot \GateAt{X}{0} \\
        &\; \rightarrow \; \mathbf{\GateAt{X}{0} \cdot \GateAt{X}{0}} \\
        &\; \rightarrow \; \epsilon
    \end{align*}
    \subcaption{Deriving $\epsilon$ from $r_1 \cdot r_1$.}
    \label{Fig:Derivs:R1R1}
  \end{subfigure}
  \hfill
  \begin{subfigure}[t]{0.49\textwidth}
    \centering
    \scriptsize
    \begin{align*}
      \mathbf{\CGate{X}{0}{1}} \cdot \GateAt{X}{1}
        &\; \rightarrow \; \sigma_{0,1} \cdot \CGate{X}{1}{0} \cdot \sigma_{0,1} \cdot \mathbf{\GateAt{X}{1}} \\
        &\; \rightarrow \; \sigma_{0,1} \cdot \CGate{X}{1}{0} \cdot \mathbf{\sigma_{0,1} \cdot \sigma_{0,1}} \cdot \GateAt{X}{0} \sigma_{0,1} \\
        &\; \rightarrow \; \sigma_{0,1} \cdot \mathbf{\CGate{X}{1}{0} \cdot \GateAt{X}{0}} \cdot \sigma_{0,1} \\
        &\; \rightarrow \; \sigma_{0,1} \cdot \GateAt{X}{0} \mathbf{\cdot} \CGate{X}{1}{0} \sigma_{0,1} \\
        &\; \rightarrow \; \mathbf{\sigma_{0,1} \cdot \GateAt{X}{0} \cdot \sigma_{0,1}} \cdot \sigma_{0,1} \cdot \CGate{X}{1}{0} \cdot \sigma_{0,1} \\
        &\; \rightarrow \; \GateAt{X}{1} \cdot \mathbf{\sigma_{0,1} \cdot \CGate{X}{1}{0} \cdot \sigma_{0,1}} \\
        &\; \rightarrow \; \GateAt{X}{1} \cdot \CGate{X}{0}{1}
    \end{align*}
    \subcaption{Deriving $\GateAt{X}{1} \cdot \CGate{X}{0}{1}$ from $\CGate{X}{0}{1} \cdot \GateAt{X}{1}$.}
    \label{Fig:Derivs:Comm}
  \end{subfigure}
  \caption{Examples of derivations proofs which appear in the proof that $W( E_8 ) \cong \langle \Sigma_D \mid R_0 \rangle$.}
  \label{Fig:Derivs}
\end{figure}

In this section, the relations in $R_{E8}$ and $R_{D(E8)}$ are eliminated.
To do this, some additional $(\Sigma_D)$-relations are required.
For the remainder of this section, let $M_{(x_0,x_1,\ldots,x_k)}$ denote a gate $M$ applied to the qubits $x_0$ through to $x_k$.
For example, if $M$ is a doubly-controlled $X$ gate, then $M_{(1,2,0)}$ would correspond to $\CCGate{X}{1}{2}$.
Using this notation, we introduce the following families of relations, denoted $R_{D}$.
\begin{itemize}
\item[--] \textbf{Bifunctoriality}.
      $M_S \cdot N_T = N_T \cdot M_S$, for all $M_S, N_T \in \Sigma_{D}$ with $S \cap T = \varnothing$.
\item[--] \textbf{Symmetry}.
      $\sigma_{i,j} \cdot M_S \cdot \sigma_{i,j} = M_{\sigma_{i,j}( S )}$, for all $M_S \in \Sigma_D$ and integers $0 \le i < j \le 3$.
\item[--] \textbf{Order}.
      $M \cdot M = \epsilon$, for all $M \in \Sigma_D$.
\item[--] \textbf{Commutators}.
      $M_S \cdot N_T = N_T \cdot w$, for all $M_S, N_T \in \Sigma_{D}$ with $S \cap T \ne \varnothing$ and $w  \in (\Sigma_D)^{*}$ minimal.
\end{itemize}
The relation $\sigma_{1,2} \,\approx\, \CGate{Z}{1}{2} \cdot K_{1,2} \cdot \CGate{Z}{1}{2} \cdot K_{1,2} \cdot \CGate{Z}{1}{2} \cdot K_{1,2}$ from~\cite{BianSelinger2022} is also included for simplicity.
From these relations, all elements of $R_{D(E8)}$ and $R_{E8}$ can be derived.
Since all elements of $\Sigma_{E8}$ are self-inverse, then it suffices to consider only the upper half of the Coxeter matrix for $W( E_8 )$.
As an example, consider the relation $r_1 \cdot r_1 \approx \epsilon$ in $R_{E8}$.
The proof proceeds as in \cref{Fig:Derivs:R1R1}.
Then $\epsilon$ can be derived from $r_1 \cdot r_1$ using the relations in $R_D$.
Similar methods can be used to eliminate the remaining $(R_{E8})$-relations.
All derivations, for both $R_{D(E8)}$ and $R_{E8}$ can be found in the supplement to this paper~\cite{supp}.

\begin{theorem}
    \label{Thm:G0_Is_E8}
    $W( E_8 ) \cong \langle \Sigma_D \mid R_D \rangle$
\end{theorem}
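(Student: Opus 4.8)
The plan is to establish the isomorphism by transforming the Coxeter presentation $\langle \Sigma_{E8} \mid R_{E8} \rangle$ of $W(E_8)$ into $\langle \Sigma_D \mid R_D \rangle$ through the sequence of Tietze transformations summarized in \cref{Fig:E8ProofSum}. Since Tietze transformations are sound and complete for the isomorphism of finite presentations, and since $\langle \Sigma_{E8} \mid R_{E8} \rangle \cong W(E_8)$ by the Coxeter data of \cref{Fig:E8Desc}, it suffices to verify that each arrow in the diagram is a legitimate transformation. The top and first vertical arrows have already been justified in \cref{Sec:E8:DIntro}: the \textbf{Gen($+$)} step introducing $\Sigma_D$ with the defining relations $R_{D(E8)}$ is valid because the dependency digraph of these definitions is acyclic, and the two \textbf{Rel($+$)} steps introducing $R_{E8(D)}$ and $R_D$ are valid because, at those stages, the presentation is isomorphic to $W(E_8)$, so any semantically true relation is derivable and may be adjoined; here each added relation is true by a verified matrix identity.

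The substance of the proof lies in the two \textbf{Rel($-$)} arrows, which require genuine derivational proofs rather than appeals to semantic truth. First I would show that every relation in $R_{D(E8)}$, i.e.\ each defining equation $x \approx w$ with $x \in \Sigma_D$ and $w$ a word over $\Sigma_{E8}$, can be re-derived from $R_{E8} \cup R_{E8(D)} \cup R_D$: substituting each Coxeter generator occurring in $w$ by its $R_{E8(D)}$-definition turns $w$ into a word over $\Sigma_D$, and one then rewrites $x$ to that word using only $R_D$. Second I would eliminate $R_{E8}$ by showing that each Coxeter relation $(r_j r_k)^{N_{j,k}} \approx \epsilon$ is derivable from $R_{E8(D)} \cup R_D$; here one again replaces $r_j, r_k$ by their $\Sigma_D$-definitions via $R_{E8(D)}$ and reduces the resulting word to $\epsilon$ using $R_D$, exactly as illustrated for $r_1 \cdot r_1$ in \cref{Fig:Derivs:R1R1}. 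Because every $r_j$ is self-inverse, only the upper triangle of the Coxeter matrix need be treated. Once both families are removed, the Coxeter generators occur solely in $R_{E8(D)}$, each in a single defining relation $r_j \approx w_j^D$ whose right-hand side is a word over $\Sigma_D$; the \textbf{Gen($-$)} rule then removes them one at a time, in any order since no $w_j^D$ mentions another $r_k$, leaving exactly $\langle \Sigma_D \mid R_D \rangle$.

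The main obstacle is the sheer volume and intricacy of the derivations feeding the two \textbf{Rel($-$)} steps: there are $23$ defining relations in $R_{D(E8)}$ and, across the upper triangle of the $8 \times 8$ Coxeter matrix, a large family of relations $(r_j r_k)^{N_{j,k}} \approx \epsilon$ to discharge, each by an explicit rewriting sequence in $R_D$ of the kind shown in \cref{Fig:Derivs}. These proofs are long and error-prone, so I would relegate the full derivations to the supplement and rely on the \tool\ package to machine-verify both the matrix identities underpinning the \textbf{Gen($+$)} and \textbf{Rel($+$)} steps and the rewriting proofs underpinning the \textbf{Rel($-$)} steps. The conceptual difficulty is modest; it is organizing and certifying this mass of computation that constitutes the real work.
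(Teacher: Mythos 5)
Your proposal is correct and follows essentially the same route as the paper: the Tietze-transformation sequence of \cref{Fig:E8ProofSum}, with \textbf{Gen($+$)} justified by acyclicity of the derived-generator graph, the \textbf{Rel($+$)} steps justified semantically via verified matrix identities, the two \textbf{Rel($-$)} steps discharged by explicit rewriting derivations (reduced to the upper triangle of the Coxeter matrix by self-inverseness), and a final \textbf{Gen($-$)} elimination of the Coxeter generators. Like the paper, you correctly locate the real work in the mass of derivations backing the \textbf{Rel($-$)} steps, which the paper likewise relegates to the supplement and machine-verifies with \tool.
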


It follows immediately from \cref{Thm:G0_Is_E8} that given any circuit $C$ over $\Sigma_D$, there exists a minimal word $w$ over the alphabet $\{ r_1, r_2, \ldots, r_8 \}$ such $\interp{C}_D^{*} = \interp{w}_{E8}^{*}$.
Then by the decompositions of \cref{Sec:E8:DIntro}, there exists a circuit $C'$ over $\Sigma_D$ such that $\interp{w}_{E8}^{*} = \interp{C'}_{D}^*$ such that $C'$ contains exactly $|w|$ Toffoli gates.
By \cite[Thm.~1.8]{Humphreys1990}, every minimal word in $W( E_8 )$ has length at most $n$, where $n$ is the cardinality of the positive root system associated with $W( E_8 )$.
By \cite{Humphreys1990}, the positive root system associated with $W( E_8 )$ has cardinality $120$.
Therefore, $C'$ contains at most $120$ Toffoli gates.
This provides an upper-bound on the Toffoli count for circuits over $\Sigma_D$, which can be thought of as a measure of computational complexity for these three-qubit circuits.

\begin{corollary}
    If $C \in \Sigma_D^*$, then there exists $C' \in \Sigma_D$ with Toffoli count at most $120$ such that $\interp{C}_D^* = \interp{C'}_D^*$.
\end{corollary}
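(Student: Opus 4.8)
The plan is to move the counting problem off circuits and onto the Coxeter combinatorics of $W(E_8)$ via the isomorphism of \cref{Thm:G0_Is_E8}, and then to read the Toffoli count of a circuit directly off the Coxeter length of the corresponding group element. First, given $C \in \Sigma_D^*$, I would set $g = \interp{C}_D^*$ and use \cref{Thm:G0_Is_E8} to regard $g$ as an element of $W(E_8)$. Since $W(E_8)$ is the Coxeter group generated by the reflections $r_1, \ldots, r_8$, the element $g$ admits a reduced expression $w = r_{i_1} \cdots r_{i_m}$ whose length $m = \ell(g)$ is minimal among all words over $\{r_1, \ldots, r_8\}$ representing $g$, and this $w$ satisfies $\interp{w}_{E8}^* = g$.

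Next, I would turn $w$ back into a circuit over $\Sigma_D$ while controlling its Toffoli count. The crucial observation is that each decomposition of \cref{Sec:E8:DIntro} writes the Coxeter generator $r_j$ as a conjugate $M_j \circ \CCGate{X}{j}{k} \circ M_j^{-1}$ of a single Toffoli gate, where $M_j$ is a word in the non-Toffoli generators $\GateAt{X}{j}$, $\CGate{X}{j}{k}$, $\CGate{Z}{j}{k}$, $K_{j,j+1}$, and $\sigma_{j,k}$. Substituting these circuit words for the letters of $w$ yields a circuit $C'$ over $\Sigma_D$ with $\interp{C'}_D^* = g = \interp{C}_D^*$, and the substitution contributes exactly one Toffoli gate per letter of $w$; hence $C'$ has Toffoli count exactly $\ell(g)$. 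This reduces the problem to bounding the Coxeter length $\ell(g)$ uniformly over $g \in W(E_8)$.

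Finally, I would invoke the standard Coxeter-theoretic bound: in a finite Coxeter group the length function attains its maximum at the longest element $w_0$, and $\ell(w_0)$ equals the number of positive roots of the associated root system \cite[Thm.~1.8]{Humphreys1990}. The $E_8$ root system has $120$ positive roots, so $\ell(g) \le 120$ for every $g \in W(E_8)$, and therefore $C'$ has Toffoli count at most $120$, as claimed.

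This is genuinely a corollary, so I do not expect a serious obstacle once \cref{Thm:G0_Is_E8} is in hand; the only point requiring care is the bookkeeping that Toffoli count matches Coxeter length. That identification relies entirely on each generator decomposition of \cref{Sec:E8:DIntro} containing exactly one Toffoli gate, with every remaining factor Toffoli-free, so that a reduced word of length $\ell(g)$ transfers to a circuit with precisely $\ell(g)$ Toffolis. The main thing to double-check is therefore that each conjugating word $M_j$ is built solely from $X$, $CX$, $CZ$, $K$, and swap gates, as it is by construction; were any $r_j$ to require more than one Toffoli, the resulting bound would merely scale by the maximal per-generator Toffoli count rather than equal the positive-root count exactly.
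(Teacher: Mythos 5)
Your proposal is correct and follows essentially the same route as the paper: pass through the isomorphism of \cref{Thm:G0_Is_E8} to a minimal (reduced) word over the Coxeter generators, substitute the decompositions of \cref{Sec:E8:DIntro} (each containing exactly one Toffoli) to get a circuit whose Toffoli count equals the Coxeter length, and bound that length by the number of positive roots of $E_8$, namely $120$, via \cite[Thm.~1.8]{Humphreys1990}. Your closing remark that the bound hinges on each $r_j$ using exactly one Toffoli is precisely the bookkeeping the paper relies on as well.
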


\subsection{A Reduced Set of Relations for \texorpdfstring{$W(E_8)$}{W(E8)}}
\label{Sec:E8:RedRels}

\begin{figure}[t]
  {\scriptsize\begin{minipage}{0.47\textwidth}
    \begin{align}
    \CGate{Z}{0}{1} &\,\approx \, K_{1,2} \cdot \CGate{X}{0}{1} \cdot K_{1,2} \label{Rel:DGen:Start} \\
    \GateAt{X}{1} &\,\approx \, \CGate{X}{0}{1} \cdot \GateAt{X}{0} \cdot \CGate{X}{0}{1} \cdot \GateAt{X}{0} \\
    \GateAt{Z}{0} &\,\approx \, \CGate{Z}{0}{1} \cdot \CGate{X}{0}{1} \cdot \CGate{Z}{0}{1} \cdot \CGate{X}{0}{1} \\
    \GateAt{Z}{1} &\,\approx \, K_{1,2} \cdot \GateAt{X}{1} \cdot K_{1,2} \\
    \CGate{X}{2}{0} &\,\approx \, \GateAt{X}{1} \cdot \CCGate{X}{1}{2} \cdot \GateAt{X}{1} \cdot \CCGate{X}{1}{2} \\
    \CGate{X}{2}{1} &\,\approx \, \CGate{X}{2}{0} \cdot \CGate{X}{0}{1} \cdot \CGate{X}{2}{0} \cdot \CGate{X}{0}{1} \\
    \CGate{X}{1}{2} &\,\approx \, K_{1,2} \cdot \CGate{X}{2}{1} \cdot K_{1,2} \\
    \sigma_{1,2} &\,\approx \, \CGate{X}{1}{2} \cdot \CGate{X}{2}{1} \cdot \CGate{X}{1}{2} \\
    \CGate{X}{0}{2} &\,\approx \, \sigma_{1,2} \cdot \CGate{X}{0}{1} \cdot \sigma_{1,2} \\
    \sigma_{0,2} &\,\approx \, \CGate{X}{0}{2} \cdot \CGate{X}{2}{0} \cdot \CGate{X}{0}{2} \\
    K_{0,1} &\,\approx \, \sigma_{0,2} \cdot K_{1,2} \cdot \sigma_{0,2} \\
    \CGate{X}{1}{0} &\,\approx \, K_{0,1} \cdot \CGate{X}{0}{1} \cdot K_{0,1} \\
    \sigma_{0,1} &\,\approx \, \CGate{X}{0}{1} \cdot \CGate{X}{1}{0} \cdot \CGate{X}{0}{1} \\
    \CCGate{X}{0}{2} &\,\approx \, \sigma_{0,1} \cdot \CCGate{X}{1}{2} \cdot \sigma_{0,1} \\
    \GateAt{X}{2} &\,\approx \, \sigma_{0,2} \cdot \GateAt{X}{0} \cdot \sigma_{0,2} \\
    \GateAt{Z}{2} &\,\approx \, \sigma_{0,2} \cdot \GateAt{Z}{0} \cdot \sigma_{0,2} \\
    \CCGate{X}{0}{1} &\,\approx \, K_{1,2} \cdot \CCGate{X}{0}{2} \cdot K_{1,2} \label{Rel:DGen:DefCXX1} \\
    \CGate{Z}{0}{2} &\,\approx \, \sigma_{1,2} \cdot \CGate{Z}{0}{1} \cdot \sigma_{1,2} \\
    \CGate{Z}{1}{2} &\,\approx \, \sigma_{0,1} \cdot \CGate{Z}{0}{2} \cdot \sigma_{0,1}
    \label{Rel:DGen:End} \\
    \sigma_{1,2} &\,\approx \, \CGate{Z}{1}{2} \cdot K_{1,2} \cdot \CGate{Z}{1}{2} \cdot K_{1,2} \cdot \CGate{Z}{1}{2} \cdot K_{1,2} \\
    \GateAt{X}{0} \cdot \GateAt{X}{0} &\,\approx \, \epsilon \\
    \CGate{X}{0}{1} \cdot \CGate{X}{0}{1} &\,\approx \, \epsilon \\
    K_{1,2} \cdot K_{1,2} &\,\approx \, \epsilon \label{Rel:DGen:ElimK12}
    \end{align}
  \end{minipage}
  \begin{minipage}{0.51\textwidth}
    \begin{align}
    \CCGate{X}{1}{2} \cdot \CCGate{X}{1}{2} &\,\approx \, \epsilon \\
    K_{0,1} \cdot K_{0,1} &\,\approx \, \epsilon \\
    \CGate{X}{1}{2} \cdot \GateAt{X}{0}  &\,\approx \, \GateAt{X}{0} \cdot \CGate{X}{1}{2} \\
    \GateAt{X}{0} \cdot K_{1,2}  &\,\approx \, K_{1,2} \cdot \GateAt{X}{0} \\
    \GateAt{X}{1}  &\,\approx \, \sigma_{0,1} \cdot \GateAt{X}{0} \cdot \sigma_{0,1} \\
    \CGate{X}{2}{0}  &\,\approx \, \sigma_{0,2} \cdot \CGate{X}{0}{2} \cdot \sigma_{0,2} \\
    \CGate{X}{1}{2}  &\,\approx \, \sigma_{0,1} \cdot \CGate{X}{0}{2} \cdot \sigma_{0,1} \\
    \CGate{X}{2}{1}  &\,\approx \, \sigma_{0,1} \cdot \CGate{X}{2}{0} \cdot \sigma_{0,1} \\
    \CCGate{X}{0}{1} &\,\approx \, \sigma_{0,2} \cdot \CCGate{X}{1}{2} \cdot \sigma_{0,2} \\
    \CCGate{X}{0}{1} &\,\approx \, \sigma_{1,2} \cdot \CCGate{X}{0}{2} \cdot \sigma_{0,2} \\
    \GateAt{Z}{1} &\,\approx \, \sigma_{0,1} \cdot \GateAt{Z}{0} \cdot \sigma_{0,1} \\
    K_{0,1}  &\,\approx \, \sigma_{0,1} \cdot K_{0,1} \cdot \sigma_{0,1} \\
    \CCGate{X}{1}{2} \cdot \CGate{X}{1}{0} &\,\approx \, \CGate{X}{1}{0} \cdot \CCGate{X}{1}{2} \\
    \GateAt{X}{0} \cdot \CCGate{X}{1}{2}  &\,\approx \, \CCGate{X}{1}{2} \cdot \GateAt{X}{0} \\
    \GateAt{X}{0} \cdot \CGate{X}{1}{0}  &\,\approx \, \CGate{X}{1}{0} \cdot \GateAt{X}{0} \\
    K_{0,1} \cdot K_{1,2}  &\,\approx \, K_{1,2} \cdot K_{0,1} \\
    \CGate{Z}{0}{1} \cdot \CGate{Z}{1}{2}  &\,\approx \, \CGate{Z}{1}{2} \cdot \CGate{Z}{0}{1} \\
    K_{0,1} \cdot \GateAt{Z}{0}  &\,\approx \, \GateAt{X}{0} \cdot K_{0,1} \\
    \GateAt{X}{0} \cdot \CCGate{X}{0}{1}  &\,\approx \, \CCGate{X}{0}{1} \cdot \CGate{X}{1}{2} \cdot \GateAt{X}{0} \\
    \CGate{X}{0}{1} \cdot \CGate{Z}{1}{2}  &\,\approx \, \CGate{Z}{1}{2} \cdot \CGate{Z}{0}{2} \cdot \CGate{X}{0}{1} \\
    \CGate{X}{1}{2} \cdot \CCGate{X}{1}{2}  &\,\approx \, \CCGate{X}{1}{2} \cdot \CGate{X}{1}{0} \cdot \CGate{X}{1}{2} \\
    \CCGate{X}{1}{2} \cdot \CGate{X}{0}{1}  &\,\approx \, \CGate{X}{0}{1} \cdot \CCGate{X}{0}{2} \cdot \CCGate{X}{1}{2} \cdot \CCGate{X}{0}{2} \\
    \CCGate{X}{0}{1} \cdot \CCGate{X}{0}{2} &\,\approx \, \CCGate{X}{0}{2} \cdot \CCGate{X}{0}{1} \cdot \CCGate{X}{0}{2} \cdot \CCGate{X}{0}{1}
    \end{align}
  \end{minipage}}
  \caption{Relations for $W(E_8)$, denoted $R_0$.\label{Fig:G0Rels}}
  \hfill\rule{0.95\textwidth}{0.1mm}\hfill\hbox{}
\end{figure}

The relations $R_D$ from \cref{Sec:E8:DyadRel} are far from minimal.
For example, the family of commutator relations contains all relations of the form $\CGate{X}{j}{k} \cdot \GateAt{X}{k} \approx \GateAt{X}{k} \cdot \CGate{X}{j}{k}$.
However, given all symmetry relations, it suffices to include only $\CGate{X}{1}{0} \cdot \GateAt{X}{0} \approx \GateAt{X}{0} \cdot \CGate{X}{1}{0}$.
The remaining commutator relations can be derived, as illustrated in \cref{Fig:Derivs:Comm}.
Furthermore, many of the relations in $R_D$ do not appear in any derivations of the supplement.
For example, the relation $\CGate{X}{2}{0} \cdot \GateAt{X}{0} \approx \GateAt{X}{0} \cdot \CGate{X}{2}{0}$ does not appear, and therefore \cref{Thm:G0_Is_E8} holds with respect to the relation set $R_{D} \setminus \{ \CGate{X}{2}{0} \cdot \GateAt{X}{0} \approx \GateAt{X}{0} \cdot \CGate{X}{2}{0} \}$.
Using both techniques, a new relation set $R_0$ is obtained, as illustrated in \cref{Fig:G0Rels}.
All derivations can be found in the supplement to this paper~\cite{supp}.

\begin{corollary}
    $W( E_8 ) \cong \langle \Sigma_D \mid R_0 \rangle$
\end{corollary}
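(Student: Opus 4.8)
The plan is to derive the corollary from \cref{Thm:G0_Is_E8} by showing that $\langle \Sigma_D \mid R_0 \rangle$ and $\langle \Sigma_D \mid R_D \rangle$ are one and the same quotient of $\Sigma_D^*$; the isomorphism $\interp{\cdot}$ of \cref{Thm:G0_Is_E8} then yields an isomorphism $\langle \Sigma_D \mid R_0 \rangle \cong W(E_8)$. Since both presentations use the generating set $\Sigma_D$, and since equality in the quotient coincides with the rewriting congruence, it suffices to prove that $\sim_{R_0}$ and $\sim_{R_D}$ agree on $\Sigma_D^*$. Equivalently, in the language of \cref{Sec:Prelim}, I would pass from $R_D$ to $R_0$ by first adding, via \textbf{Rel($+$)}, any relation of $R_0$ that is stated in a new convenient form, and then discarding the redundant relations of $R_D$ via \textbf{Rel($-$)}.

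One inclusion is essentially free. Every relation $(q,r)$ of $R_0$ is sound, meaning that $\interp{q}$ and $\interp{r}$ denote the same $8 \times 8$ matrix; these are finitely many matrix identities checkable by \tool. Because \cref{Thm:G0_Is_E8} identifies $\langle \Sigma_D \mid R_D \rangle$ with $W(E_8)$ and equality in the quotient coincides with $\sim_{R_D}$, soundness of $(q,r)$ forces $q \sim_{R_D} r$. As $\sim_{R_D}$ is a congruence, each single $R_0$-rewrite stays inside $\sim_{R_D}$, whence $\sim_{R_0} \subseteq \sim_{R_D}$. This step uses only \cref{Thm:G0_Is_E8} and is therefore not circular.

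The substance of the corollary is the reverse inclusion $\sim_{R_D} \subseteq \sim_{R_0}$, which cannot appeal to soundness (that would presuppose the result) and instead requires an explicit derivation $q \sim_{R_0} r$ for each relation $(q,r) \in R_D$. I would organize these derivations around the two reductions that motivate $R_0$. First, the symmetry relations $\sigma_{i,j} \cdot M_S \cdot \sigma_{i,j} \approx M_{\sigma_{i,j}(S)}$ retained in $R_0$ let me transport a single representative of each bifunctoriality or commutator family to every admissible qubit position: conjugating by a swap reindexes a gate's support, so one instance such as $\CGate{X}{1}{0} \cdot \GateAt{X}{0} \approx \GateAt{X}{0} \cdot \CGate{X}{1}{0}$ recovers the whole family, exactly as in \cref{Fig:Derivs:Comm}. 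Second, for each relation of $R_D$ that was dropped because it never occurs in the derivations underlying \cref{Thm:G0_Is_E8}, I would produce a rewriting sequence generating it from $R_0$, thereby witnessing its removability under \textbf{Rel($-$)}.

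The main obstacle is the sheer volume of these derivations: the reverse inclusion demands a certified rewriting proof for every relation in the large, though finite, set $R_D$, and several of these are genuine word problems rather than one-line symmetry arguments. I would therefore delegate the search and bookkeeping to a rewriting engine, record the derivations in the supplement, and rely on \tool to machine-verify each one, in the same spirit as the sample derivations of \cref{Fig:Derivs}. Once every relation of $R_D$ has been derived from $R_0$, the two congruences agree, the presentations coincide, and the corollary follows.
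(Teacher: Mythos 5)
Your proposal is correct, and its first half coincides with the paper: the inclusion $\sim_{R_0} \subseteq \sim_{R_D}$ via soundness of the $R_0$ relations plus the completeness granted by \cref{Thm:G0_Is_E8} is exactly how the paper justifies introducing $R_0$ by \textbf{Rel($+$)}, and your symmetry-transport of a single representative commutator matches the paper's first reduction technique (cf.\ \cref{Fig:Derivs:Comm}). The genuine divergence is in how the two arguments dispose of the relations of $R_D$ that simply disappear. You demand the full reverse inclusion $\sim_{R_D} \subseteq \sim_{R_0}$, i.e.\ an explicit $R_0$-derivation of \emph{every} relation of $R_D$, whereas the paper uses a proof-theoretic shortcut for the second family: a relation such as $\CGate{X}{2}{0} \cdot \GateAt{X}{0} \approx \GateAt{X}{0} \cdot \CGate{X}{2}{0}$ occurs in none of the supplement's derivations, so the whole chain of Tietze transformations proving \cref{Thm:G0_Is_E8} goes through verbatim with that relation deleted from $R_D$; no derivation from $R_0$ is ever produced or needed for it. Your route buys a cleaner and more self-contained statement (the two relation sets generate literally the same congruence on $\Sigma_D^*$, each deriving the other), and it cannot silently fail, since $W(E_8)$ is finite and the required derivations exist precisely when the corollary is true; the cost is a substantial amount of additional certified rewriting that the paper's observation makes unnecessary. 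Conversely, the paper's shortcut is cheaper but rests on an inspection of which relations of $R_D$ are actually used in the supplement, a bookkeeping fact your argument never has to trust.
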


\subsection{A Minimal Generating Set for \texorpdfstring{$W(E_8)$}{W(E8)}}
\label{Sec:E8:MinGens}

Define $\Sigma_0 = \left\{ \GateAt{X}{0}, \CGate{X}{0}{1}, \CCGate{X}{1}{2}, K_{1,2} \right\}$.
From \cref{Sec:E8:DIntro}, it is clear that $\Sigma_0$ generates $W(E_8)$.
In fact, $\Sigma_0$ is minimal in the sense that every proper subset of $\Sigma_0$ generates a proper subgroup of $W(E_8)$.
In other words, no proper subset of $\Sigma_0$ generates $W( E_8 )$.
To show that $\Sigma_0$ is a minimal generating set for $W( E_8 )$, it suffices to show that for every maximal proper subset $\Sigma'$ of $\Sigma_0$, there exists some $8 \times 8$ dyadic matrix $M$ such that $M$ commutes with the elements of $\Sigma'$ but does not commute with the elements of $\Sigma_0$.
Intuitively, the subgroup generated by $\Sigma'$ commutes with $M$, whereas the subgroup generated by $\Sigma_0$ does not commute with $M$.
This claim is proven in \cref{Appendix:E8MinGens}, and the matrices are constructed.

\begin{theorem}
    \label{Thm:MinGens}
    $\Sigma_0$ is a minimal generating set for $W( E_8 )$.
\end{theorem}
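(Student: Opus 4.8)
The plan is to follow the criterion stated just before the theorem: show that each of the four maximal proper subsets of $\Sigma_0$ fails to generate $W(E_8)$, and then deduce the same for every proper subset. Since $\Sigma_0$ has exactly four elements, its maximal proper subsets are the four sets $\Sigma_0 \setminus \{g\}$ with $g \in \Sigma_0$, and any proper subset $S \subsetneq \Sigma_0$ is contained in at least one of them (namely, in $\Sigma_0 \setminus \{g\}$ for any $g \notin S$). Because $\langle S \rangle \le \langle \Sigma_0 \setminus \{g\}\rangle$ whenever $S \subseteq \Sigma_0 \setminus \{g\}$, it suffices to prove that $\langle \Sigma_0 \setminus \{g\}\rangle$ is a proper subgroup of $W(E_8)$ for each of the four choices of $g$.

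The key observation is that commuting with a fixed matrix $M$ is preserved under products: if $M$ commutes with $A$ and with $B$, then $M$ commutes with $AB$. Hence if $M$ commutes with every element of a generating set $\Sigma'$, then $M$ lies in the centralizer of the entire subgroup $\langle \Sigma' \rangle$. Fix $g \in \Sigma_0$ and suppose we can produce an $8 \times 8$ matrix $M_g$ that commutes with every element of $\Sigma_0 \setminus \{g\}$ but not with $g$ itself. Then $M_g$ centralizes $\langle \Sigma_0 \setminus \{g\}\rangle$; were this subgroup equal to $W(E_8)$, the matrix $M_g$ would centralize $g \in \Sigma_0 \subseteq W(E_8)$, contradicting $[M_g, g] \ne 0$. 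Therefore $\langle \Sigma_0 \setminus \{g\}\rangle \subsetneq W(E_8)$, which is exactly what the reduction in the first paragraph requires.

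It then remains only to exhibit the four witnesses $M_g$, and this is where the content lies. For a fixed $g$, the requirement that an $8 \times 8$ matrix commute with each of the three remaining generators is a homogeneous system of linear equations in the $64$ entries of $M$; its solution set is a linear subspace $C(\Sigma_0 \setminus \{g\})$, and imposing commutation with $g$ as well cuts it down to the subspace $C(\Sigma_0) \subseteq C(\Sigma_0 \setminus \{g\})$. What must be checked, case by case, is that this containment is \emph{strict}, so that a witness $M_g$ can be chosen in the difference and recorded as an explicit dyadic matrix. I expect this construction to be the main obstacle: finding, for each of the four generators, a concrete matrix in the common centralizer of the other three that nevertheless fails to commute with it is a genuine computation with no purely structural shortcut. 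In practice one solves the four linear systems for the centralizers and verifies the single non-commutation in each case. The explicit matrices are given in \cref{Appendix:E8MinGens}.
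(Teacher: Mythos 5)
Your proposal is correct and follows essentially the same route as the paper: the same reduction of minimality to the four maximal proper subsets, and the same centralizer argument (the paper's \cref{Lemma:App:MinGens} and the lemma following it) showing that a matrix commuting with three generators but not the fourth forces $\langle \Sigma_0 \setminus \{g\}\rangle \subsetneq W(E_8)$. The step you defer — exhibiting the four witnesses — is carried out in \cref{Appendix:E8MinGens:E8} exactly as you anticipate: by inspection for three cases ($Z_2$ when $K_{1,2}$ is omitted, $H_2$ when $CCX_{1,2}$ is omitted, $X_0$ when $CX_{0,1}$ is omitted), and for the subset omitting $X_0$ by solving the resulting linear integer system with the solver Z3 to produce an explicit integer matrix.
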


\newcommand{\TLOGen}[2]{\mathcal{G}_{#1}^{#2}}
\newcommand{\TLORel}[2]{\mathcal{R}_{#1}^{#2}}
\newcommand{\TLO}[2]{{#1}{}_{[#2]}}

\section{Extending to a Presentation of \texorpdfstring{$\mathrm{O}(8, \D)$}{O(8,D)}}
\label{Sec:O8D}

\begin{figure}[t]
  {\scriptsize\begin{minipage}[t]{0.33\textwidth}
    \begin{align}
    \TLO{X}{a,b}{}^2 &\,\approx \, \epsilon \label{Eq:O8DRels:Orig:Perm1} \\
    \TLO{(-1)}{a}{}^2 &\,\approx \, \epsilon \label{EQ:O8DRels:Orig:Rep1} \\
    \TLO{K}{a,b,c,d}{}^2 &\,\approx \, \epsilon \label{EQ:O8DRels:Orig:Rep2} \\
    \TLO{X}{a,b} \cdot \TLO{X}{c,d} &\,\approx \, \TLO{X}{c,d} \cdot \TLO{X}{a,b} \label{Eq:O8DRels:Orig:Perm2} \\
    \TLO{X}{a,b} \cdot \TLO{(-1)}{c} &\,\approx \, \TLO{(-1)}{c} \cdot \TLO{X}{a,b} \label{Eq:O8DRels:Orig:Perm3} \\
    \TLO{X}{a,b} \cdot \TLO{K}{c,d,e,f} &\,\approx \, \TLO{K}{c,d,e,f} \cdot \TLO{X}{a,b} \label{Eq:O8DRels:Orig:Perm4} \\
    \TLO{(-1)}{a} \cdot \TLO{K}{b,c,d,e} &\,\approx \, \TLO{K}{b,c,d,e} \cdot \TLO{(-1)}{a}  \label{EQ:O8DRels:Orig:Rep3} \\
    \TLO{(-1)}{a} \cdot \TLO{(-1)}{b} &\,\approx \, \TLO{(-1)}{b} \cdot \TLO{(-1)}{a} \label{Eq:O8DRels:Orig:ZCom1} \\
    \TLO{K}{a,b,c,d} \cdot \TLO{K}{e,f,g,h} &\,\approx \, \TLO{K}{e,f,g,h} \cdot \TLO{K}{a,b,c,d} \label{Eq:O8DRels:Orig:KCom1} \\
    \TLO{X}{a,c} \cdot \TLO{X}{a,b} &\,\approx \, \TLO{X}{c,b} \cdot \TLO{X}{a,c} \label{Eq:O8DRels:Orig:Perm5}
    \end{align}
  \end{minipage}
  \begin{minipage}[t]{0.65\textwidth}
    \begin{align}
    \TLO{X}{b,c} \cdot \TLO{X}{a,b} &\,\approx \, \TLO{X}{a,c} \cdot \TLO{X}{b,c} \label{Eq:O8DRels:Orig:Perm6} \\
    \TLO{X}{a,b} \cdot \TLO{(-1)}{a} &\,\approx \, \TLO{(-1)}{b} \cdot \TLO{X}{a,b} \label{Eq:O8DRels:Orig:Perm7} \\
    \TLO{X}{a,e} \cdot \TLO{K}{a,b,c,d} &\,\approx \, \TLO{K}{e,b,c,d} \cdot \TLO{X}{a,e}  \label{Eq:O8DRels:Orig:Perm8} \\
    \TLO{X}{b,e} \cdot \TLO{K}{a,b,c,d} &\,\approx \, \TLO{K}{a,e,c,d} \cdot \TLO{X}{b,e}  \label{Eq:O8DRels:Orig:Perm9} \\
    \TLO{X}{c,e} \cdot \TLO{K}{a,b,c,d} &\,\approx \, \TLO{K}{a,b,e,d} \cdot \TLO{X}{c,e}  \label{Eq:O8DRels:Orig:Perm10} \\
    \TLO{X}{d,e} \cdot \TLO{K}{a,b,c,d} &\,\approx \, \TLO{K}{a,b,c,e} \cdot \TLO{X}{d,e}  \label{Eq:O8DRels:Orig:Perm11} \\
    \TLO{X}{a,b} \cdot \TLO{K}{a,b,c,d} &\,\approx \, \TLO{K}{a,b,c,d} \cdot \TLO{X}{a,b} \cdot \TLO{(-1)}{b} \cdot \TLO{(-1)}{d}  \label{EQ:O8DRels:Orig:Rep5a} \\
    \TLO{X}{b,c} \cdot \TLO{K}{a,b,c,d} &\,\approx \, \TLO{(-1)}{a} \cdot \TLO{K}{a,b,c,d} \cdot \TLO{(-1)}{a} \cdot \TLO{K}{a,b,c,d} \cdot \TLO{(-1)}{a}  \label{EQ:O8DRels:Orig:Rep6} \\
    \TLO{X}{c,d} \cdot \TLO{K}{a,b,c,d} &\,\approx \, \TLO{K}{a,b,c,d} \cdot \TLO{X}{b,d}  \label{EQ:O8DRels:Orig:Rep7} \\
    \TLO{K}{a,b,c,d} \cdot \TLO{K}{b,d,e,f} &\,\approx \, \TLO{K}{b,d,e,f} \cdot \TLO{K}{a,b,c,d}  \label{EQ:O8DRels:Orig:Rep8} \\
    \TLO{(-1)}{a} \cdot \TLO{(-1)}{e} \cdot \TLO{X}{a,e} \cdot \rho_{a,b,c,d,e,f,g,h} &\,\approx\, \rho_{a,b,c,d,e,f,g,h} \cdot \TLO{X}{a,e} \cdot \TLO{(-1)}{e} \cdot \TLO{(-1)}{a} \label{EQ:O8DRels:Orig:Rep9}
    \end{align}
  \end{minipage}}
  \caption{The relations in $\TLORel{n}{}$ from~\cite{LiRoss2021}, for all valid choices of $a, b, c, d, e, f, g, h \in \mathbb{Z}$. We write $\rho_{a,b,c,d,e,f,g,h}$ for the substring $\TLO{K}{e,f,g,h} \cdot \TLO{K}{a,b,c,d} \cdot \TLO{X}{d,e} \cdot \TLO{K}{a,b,c,d} \cdot \TLO{K}{e,f,g,h}$.}
  \label{Fig:O8DRels:Orig}
  \hfill\rule{0.95\textwidth}{0.1mm}\hfill\hbox{}
\end{figure}

Li \textit{et al.}~\cite{LiRoss2021} introduced a presentation for $\mathrm{O}(8, \D)$ using $m$-level operators.
Let $n > 0$, $I$ be the $n \times n$ identity matrix, and $[m] = \{ 0, 1, \ldots, m - 1 \}$.
Then given an $m \times m$ matrix $M$ with $m < n$, and a strictly increasing sequence $( a_0, \ldots, a_{m-1} )$ over $[m]$, define $\TLO{M}{a_0,\ldots,a_{m-1}}$ to be the $n \times n$ matrix such that:
\begin{enumerate}
\item For each pair of elements $( j, k )$ over $[m]$, the component $( a_j, a_k )$ of $\TLO{M}{a_0,\ldots,a_{m-1}}$ is equal to the component $( a_j, a_k )$ of $M$;
\item For each pair of elements $( j, k )$ over $[n] \setminus \{ a_0, a_1, \ldots, a_m \}$, the component $( j, k )$ of $\TLO{M}{a_0,a_1,\ldots,a_{m-1}}$ is equal to the component $( a_j, a_k )$ of $I$.
\end{enumerate}
We say that $\TLO{M}{a_0,\ldots,a_{m-1}}$ is an \emph{$m$-level operator of type $M$}.
When $n=8$ for example, $\CCGate{X}{0}{1} = \TLO{X}{6,7}$, $\CCGate{Z}{0}{1} = \TLO{(-1)}{7}$, and $\TLO{K}{4,5,6,7}$ is a controlled $K$-gate.
Define the following for $n > 3$.
\begin{equation*}
    \TLOGen{n}{} = \left\{
        \TLO{(-1)}{a},
        \TLO{X}{a,b},
        \TLO{K}{a,b,c,d}
        \mid
        a, b, c, d \in \Z
        \text{ and }
        0 \le a < b < c < d < n
    \right\}
\end{equation*}
It was shown in \cite{LiRoss2021} that $\mathrm{O}( n, \D ) \cong \langle \TLOGen{n}{} \mid \TLORel{n}{} \rangle$, where $\TLORel{n}{}$ is given in \cref{Fig:O8DRels:Orig}.
The goal of this section is to construct a sequence of Tietze transformations, starting from $\langle \TLOGen{8}{} \mid \TLORel{8}{} \rangle$, such that the generators and relations describing the subgroup $W( E_8 )$ are replaced by $\Sigma_D$ and $R_0$, respectively.
This process follows similarly to \cref{Sec:E8}.
However, one should note that $|\TLORel{8}{}| = 2113$ (see~\cref{Appendix:RelCount}).
Inspection of $\TLORel{n}{}$ reveals that many of these relations are either definitional, or obtained through permutations of indices.
For this reason, $\TLORel{n}{}$ is partially reduced before carrying out the aforementioned Tietze transformations.
First, the permutations are eliminated via a sequence of \textbf{Rel($-$)} transformations to obtain $\TLORel{n}{1}$.
Next, some redundant commutator relations are eliminated via a sequence of \textbf{Rel($-$)} transformations to obtain $\TLORel{n}{2}$.
Finally, the derived generators are eliminated to obtain $\TLORel{n}{3}$.
All proofs can be found in \cref{Appendix:O8D}.

\subsection{Permutation Groups and Reindexing}
\label{Sec:O8D:Reindex}

Let $[n] = \{ 0, 1, \ldots, n - 1 \}$ and $\mathrm{S}(n)$ denote the group of permutations on $[n]$.
For $j, k \in [n]$, let $\tau_{j,k}$ denote the permutation that swaps $j$ and $k$.
For example, $\tau_{0,1}(0) = 1$, $\tau_{0,1}(1) = 0$, and $\tau_{0,1}(2) = 2$.
The group $\mathrm{S}(n)$ is a finite reflection group generated by the \emph{transpositions} $\{ \tau_{j,j+1} \mid j \in [n] \}$~(see~\cite{Humphreys1990}).
The \emph{braiding relations}, which state that $\tau_{j,j+1} \circ \tau_{j+1,j+2} \circ \tau_{j,j+1} = \tau_{j+1,j+2} \circ \tau_{j,j+1} \circ \tau_{j+1,j+2}$ for all $j \in [n-2]$, together with the order relations are sound and complete for $\mathrm{S}(n)$~(see~\cite{Johnson1990}).
The standard representation of $\mathrm{S}(n)$ as a reflection group sends each $\tau_{j,k}$ to $X_{[j,k]}$.
This means that every two-level operator of type $X$ can be decomposed into sequence of transpositions.
Intuitively, each $X_{[j,k]}$ acts by permuting the standard basis vectors $\ket{j}$ and $\ket{k}$, which can be achieved through a sequence of transpositions of basis vectors.
Clearly, $\mathrm{S}(8) \hookrightarrow W(E_8) \le O(8, \D)$.

Many relations in $\TLORel{n}{}$ are related via permutation of indices.
The \emph{formal application of $\sigma$} to a word over $\TLOGen{n}{}$ is defined inductively as follows.
{\scriptsize\begin{align*}
    \sigma(\epsilon) &= \epsilon
    &
    \sigma(\TLO{X}{a,b} \cdot w) &= \TLO{X}{\sigma(a), \sigma(b)} \cdot \sigma(w)
    \\
    \sigma(\TLO{(-1)}{a} \cdot w) &= \TLO{(-1)}{\sigma(a)} \cdot \sigma(w)
    &
    \sigma(\TLO{K}{a,b,c,d} \cdot w) &= \TLO{K}{\sigma(a),\sigma(b),\sigma(c),\sigma(d)} \cdot \sigma(w)
\end{align*}}%
Note that $\sigma( w )$ may yield $m$-level operators with invalid indices.
For example, $\tau_{1,2}( \TLO{X}{1,2} )$ yields $\TLO{X}{2,1}$, which is not a valid two-level operator since $2 > 1$.
The permutation $\sigma$ is a \emph{valid reindexing for $w$} if all symbols in $\sigma(w)$ are well-formed multi-level operators.
If $\sigma$ is valid for $v$ and $w$, then $\sigma$ is valid for $v \cdot w$.
Conversely, if $\sigma$ is valid for $w$, then $\sigma$ is valid for all subwords in $w$.
Consider, for example, the word $w = \TLO{K}{2,3,4,5} \cdot \TLO{K}{3,5,6,7}$ which appears on the left-hand side of an instance of \cref{Eq:O8DRels:Orig:KCom1}.
Let $\sigma \in \mathrm{S}(8)$ be the cyclic permutation $7 \mapsto 5 \mapsto 3 \mapsto 1 \mapsto 6 \mapsto 4 \mapsto 2 \mapsto 0 \mapsto 7$.
Then $\sigma$ is a valid reindexing for $w$ since $\sigma( w ) = \TLO{K}{0,1,2,5} \cdot \TLO{K}{1,3,4,5}$.
In \cref{Appendix:O8D:Reindex}, we show that all valid reindexings are derivable using only the relations in $\TLORel{\sigma}{} = \{ \text{\cref{Eq:O8DRels:Orig:Perm1,Eq:O8DRels:Orig:Perm3,Eq:O8DRels:Orig:Perm4,Eq:O8DRels:Orig:Perm5,Eq:O8DRels:Orig:Perm6,Eq:O8DRels:Orig:Perm7,Eq:O8DRels:Orig:Perm8,Eq:O8DRels:Orig:Perm9,Eq:O8DRels:Orig:Perm10,Eq:O8DRels:Orig:Perm11}} \}$.

\subsection{Selecting Representative Relations for \texorpdfstring{$\mathrm{O}(n, \D)$}{O(n,D)}}
\label{Sec:O8D:RepRels}

As a consequence of \cref{Sec:O8D:Reindex}, many relations in $\TLORel{n}{}$ can be replaced by representative instances.
For example, let $r$ denote instance $\TLO{(-1)}{6} \cdot \TLO{(-1)}{7} \approx \TLO{(-1)}{7} \cdot \TLO{(-1)}{6}$ of \cref{Eq:O8DRels:Orig:ZCom1}.
Clearly $\sigma = \tau_{0,6} \circ \tau_{1,7}$ is a valid reindexing for $r$, where $\sigma(r)$ is $\TLO{(-1)}{0} \cdot \TLO{(-1)}{1} \approx \TLO{(-1)}{1} \cdot \TLO{(-1)}{0}$
Then by \cref{Appendix:O8D:Reindex}, it is possible to derive $\sigma(r)$ from $r$ using $\TLORel{n}{} \setminus \{ \sigma(r) \}$.
Then $\langle \TLOGen{n}{} \mid \TLORel{n}{} \rangle \cong \langle \TLOGen{n}{} \mid \TLORel{n}{} \setminus \{ \sigma(r) \} \rangle$ by \textbf{Rel($-$)}.

This process can be repeated, until all instances of \cref{Eq:O8DRels:Orig:ZCom1} have been eliminated, except for the representative relation $r$.
In a similar fashion, \cref{EQ:O8DRels:Orig:Rep1,EQ:O8DRels:Orig:Rep2,EQ:O8DRels:Orig:Rep3,Eq:O8DRels:Orig:KCom1,EQ:O8DRels:Orig:Rep5a,EQ:O8DRels:Orig:Rep6,EQ:O8DRels:Orig:Rep7,EQ:O8DRels:Orig:Rep8,EQ:O8DRels:Orig:Rep9} can be eliminated, since these relations do not appear in $\TLORel{\sigma}{}$.
Then $\mathrm{O}(n,\D) \cong \langle \TLOGen{n}{} \mid \TLORel{n}{1} \rangle$ where $\TLORel{n}{1}$ is the set of representative relations (see \cref{Appendix:O8D:Reps}).

\subsection{Selecting Representative Generators for \texorpdfstring{$\mathrm{O}(n, \D)$}{O(n,D)}}

Define the new generator set,
\begin{equation*}
    \TLOGen{n}{1} = \left\{
        \TLO{X}{a,b},
        \mid
        a, b, \in \Z
        \text{ and }
        0 \le a < b < n
    \right\} \cup \left\{
        \TLO{K}{0,1,2,3}
    \right\} \cup \left\{
        \TLO{(-1)}{0}
    \right\}.
\end{equation*}
Many of the generators in $\TLOGen{n}{}$ are redundant in the sense that they may be constructed using only the generators in $\TLOGen{n}{1}$.
This is because $\mathrm{S}(n) \hookrightarrow \TLOGen{n}{1}$, with $\TLOGen{n}{1} \setminus \TLOGen{n}{}$ consisting of valid indexings of either $\TLO{K}{0,1,2,3}$ or $\TLO{(-1)}{0}$.
Furthermore, these reindexings follow from relations in $\TLORel{n}{1}$.
As an example, consider the instance $\TLO{X}{0,7} \cdot \TLO{(-1)}{0} \approx \TLO{(-1)}{7} \cdot \TLO{X}{0,7}$ of \cref{Eq:O8DRels:Orig:Perm7}.
Using the order relation for $\TLO{X}{0,7}$, the following derivation holds.
{\scriptsize\begin{equation*}
    \TLO{(-1)}{7} \;\leftarrow\; \TLO{(-1)}{7} \cdot \TLO{X}{0,7}{}^2 \;\leftarrow\; \TLO{X}{0,7} \cdot \TLO{(-1)}{0} \cdot \TLO{X}{0,7}
\end{equation*}}%
Similarly, the original relation can be obtained from this new relation using the order relation for $\TLO{X}{0,7}$.
Then through a \textbf{Rel($+$)} transformation followed by a \textbf{Rel($-$)} transformation, the commutator relation $\TLO{X}{0,7} \cdot \TLO{(-1)}{0} \approx \TLO{(-1)}{7} \cdot \TLO{X}{0,7}$ can be replaced by the definitional relation $\TLO{(-1)}{7} = \TLO{X}{0,7} \cdot \TLO{(-1)}{0} \cdot \TLO{X}{0,7}$.
This process can be repeated for all instances of \cref{Eq:O8DRels:Orig:Perm7}.

To derive the four-level operators of type $K$, it suffices to note that the following family of relations are valid with respect to $\interp{\cdot}_O^*$.
{\scriptsize\begin{equation*}
    \TLO{K}{a,b,c,d} \approx \TLO{X}{0,a} \cdot \TLO{X}{1,b} \cdot \TLO{X}{2,c} \cdot \TLO{X}{3,d} \cdot \TLO{K}{0,1,2,3} \cdot \TLO{X}{3,d} \cdot \TLO{X}{2,c} \cdot \TLO{X}{1, b} \cdot \TLO{X}{0,a}
\end{equation*}}%
The cases where $\{ a, b, c, d \} \cap \{ 0, 1, 2, 3 \} \ne \varnothing$ can be handled using the techniques of \cref{Appendix:O8D:Reindex}.
These relations are introduced using a sequence of \textbf{Rel($+$)} relations to obtain a new relation set $R$.
In this relation set, all multi-level operators of type $(-1)$ and $K$ are defined in terms of $\TLO{(-1)}{0}$ and $\TLO{K}{0,1,2,3}$, respectively.
As outlined in \cref{Appendix:DerivedGens}, these defining relations can be used to eliminate all generators in $\TLOGen{n}{1} \setminus \TLORel{n}{0}$ via a finite sequence of Tietze transformations.

The elimination process works as follows.
Let $M \in \TLOGen{n}{1} \setminus \TLOGen{n}{}$.
Then $M$ appears in some defining relation $M \approx w$.
If $M$ appears in some relation $r \in R$, then every instance of $M$ will be replaced by $w$.
For example, \cref{Eq:O8DRels:Orig:ZCom1} will be replaced by the following relation.
{\scriptsize\begin{equation*}
    \TLO{(-1)}{0} \cdot \TLO{X}{0,5} \cdot \TLO{(-1)}{0} \cdot \TLO{X}{0,5} \approx \TLO{X}{0,5} \cdot \TLO{(-1)}{0} \cdot \TLO{X}{0,5} \cdot \TLO{(-1)}{0}
\end{equation*}}%
We introduce the following abbreviations for simplicity of presentation.
{\scriptsize\begin{align*}
    \TLO{(-1)}{c} &= \TLO{X}{0,c} \cdot \TLO{(-1)}{0} \cdot \TLO{X}{0,c}
    &
    \TLO{K}{0,1,2,d} &= \TLO{X}{3,d} \cdot \TLO{K}{0,1,2,3} \cdot \TLO{X}{3,d}
    &
    \TLO{K}{0,1,c,d} &= \TLO{X}{2,c} \cdot \TLO{K}{0,1,2,d} \cdot \TLO{X}{2,c}
    \\
    \TLO{K}{0,b,c,d} &= \TLO{X}{1,b} \cdot \TLO{K}{0,1,c,d} \cdot \TLO{X}{1,b}
    &
    \TLO{K}{a,b,c,d} &= \TLO{X}{0,a} \cdot \TLO{K}{0,b,c,d} \cdot \TLO{X}{0,a}
\end{align*}}%
Denote this new set of relations $\TLORel{n}{2}$.
Then $\mathrm{O}(n,\D) \cong \langle \TLOGen{n}{1} \mid \TLORel{n}{2} \rangle$.

\subsection{Eliminating Redundant Relations}

\begin{figure}[t]
  {\scriptsize\begin{minipage}[t]{0.35\textwidth}
    \begin{align}
    \TLO{X}{a,a+1}{}^2 &\,\approx \, \epsilon \\
    \TLO{(-1)}{0}{}^2 &\,\approx \, \epsilon \\
    \TLO{K}{0,1,2,3}{}^2 &\,\approx \, \epsilon \\
    \TLO{X}{b,b+1} \cdot \TLO{(-1)}{0} &\,\approx \, \TLO{(-1)}{0} \cdot \TLO{X}{b,b+1} \\
    \TLO{X}{c,c+1} \cdot \TLO{K}{0,1,2,3} &\,\approx \, \TLO{K}{0,1,2,3} \cdot \TLO{X}{c,c+1} \\
    \TLO{(-1)}{4} \cdot \TLO{K}{0,1,2,3} &\,\approx \, \TLO{K}{0,1,2,3} \cdot \TLO{(-1)}{4} \\
    \TLO{(-1)}{0} \cdot \TLO{(-1)}{4} &\,\approx \, \TLO{(-1)}{4} \cdot \TLO{(-1)}{0}
    \end{align}
  \end{minipage}
  \begin{minipage}[t]{0.63\textwidth}
    \begin{align}
    \TLO{K}{0,1,2,3} \cdot \TLO{K}{4,5,6,7} &\,\approx \, \TLO{K}{4,5,6,7} \cdot \TLO{K}{0,1,2,3} \\
    \TLO{X}{a,a+1} \cdot \TLO{X}{a,a+2} &\,\approx \, \TLO{X}{a+1,a+2} \cdot \TLO{X}{a,a+1} \\
    \TLO{X}{a+1,b} \cdot \TLO{X}{a,a+1} &\,\approx \, \TLO{X}{a,b} \cdot \TLO{X}{a+1,b} \\
    \TLO{X}{0,1} \cdot \TLO{K}{0,1,2,3} &\,\approx \, \TLO{K}{0,1,2,3} \cdot \TLO{X}{0,1} \cdot \TLO{(-1)}{1} \cdot \TLO{(-1)}{3} \\
    \TLO{X}{1,2} \cdot \TLO{K}{0,1,2,3} &\,\approx \, \TLO{(-1)}{0} \cdot \TLO{K}{0,1,2,3} \cdot \TLO{(-1)}{0} \cdot \TLO{K}{0,1,2,3} \cdot \TLO{(-1)}{0} \\
    \TLO{X}{2,3} \cdot \TLO{K}{0,1,2,3} &\,\approx \, \TLO{K}{0,1,2,3} \cdot \TLO{X}{1,3} \\
    \TLO{K}{0,1,2,3} \cdot \TLO{K}{1,3,4,5} &\,\approx \, \TLO{K}{1,3,4,5} \cdot \TLO{K}{0,1,2,3} \\
    \TLO{(-1)}{0} \cdot \TLO{(-1)}{4} \cdot \TLO{X}{0,4} \cdot \rho &\,\approx\, \rho \cdot \TLO{X}{0,4} \cdot \TLO{(-1)}{4} \cdot \TLO{(-1)}{0}
  \end{align}
  \end{minipage}}
  \caption{The reduced relations in $\TLORel{n}{3}$, for all valid choices of $a, b, c \in \mathbb{Z}$ where $b > 0$ and $c > 3$. We write $\rho$ for the substring $\TLO{K}{4,5,6,7} \cdot \TLO{K}{0,1,2,3} \cdot \TLO{X}{3,4} \cdot \TLO{K}{0,1,2,3} \cdot \TLO{K}{4,5,6,7}$.}
  \label{Fig:O8DRels:Small}
  \hfill\rule{0.95\textwidth}{0.1mm}\hfill\hbox{}
\end{figure}

It will now shown that many relations in $\TLORel{n}{2}$ are redundant.
First, the braiding relations and order relations are used according to \cref{Appendix:O8D:Reindex} to eliminate all other relations over the two-level operators of type $X$.
This reduced relation set is then used to show that all instances of \cref{Eq:O8DRels:Orig:Perm3,Eq:O8DRels:Orig:Perm4} can be derived using transpositions in place of swaps.
Finally, it is shown that the relations \cref{Eq:O8DRels:Orig:Perm8,Eq:O8DRels:Orig:Perm9,Eq:O8DRels:Orig:Perm10,Eq:O8DRels:Orig:Perm11} are entirely redundant.
All derivations can be found in \cref{Appendix:O8D:Reduce}.
This new set of relations is denoted $\TLORel{n}{3}$, and can be found in \cref{Fig:O8DRels:Small}.
Then via a sequence of \textbf{Rel($-$)} transformations, the following presentation is obtained.

\begin{theorem}
    $\mathrm{O}(n,\D) \cong \langle \TLOGen{n}{1} \mid \TLORel{n}{3} \rangle$.
\end{theorem}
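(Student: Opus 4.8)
The plan is to begin from the presentation $\langle \TLOGen{n}{1} \mid \TLORel{n}{2} \rangle$, which is already known to be isomorphic to $\mathrm{O}(n,\D)$ by the previous subsection, and to reach $\langle \TLOGen{n}{1} \mid \TLORel{n}{3} \rangle$ through a finite sequence of $\mathbf{Rel(-)}$ transformations. Since the generating set $\TLOGen{n}{1}$ is left untouched and $\TLORel{n}{3} \subseteq \TLORel{n}{2}$, it suffices to delete each relation $r \in \TLORel{n}{2} \setminus \TLORel{n}{3}$ one at a time, exhibiting at every step a derivation of $r$ from the relations that remain. The first thing I would fix is a safe deletion order, so that no relation is ever invoked to justify the removal of a relation that has already been deleted; concretely, I would delete in the three batches described below and argue within each batch that the removed relations are derivable from $\TLORel{n}{3}$ itself.

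First I would clear out the relations that mention only two-level operators of type $X$. These operators, together with the relations confined to them, present the symmetric group $\mathrm{S}(n)$ on its transposition generators, as recalled in \cref{Sec:O8D:Reindex}, and the adjacent braiding relations together with the order relations are already sound and complete for $\mathrm{S}(n)$. Hence the disjoint commutators \cref{Eq:O8DRels:Orig:Perm2} and the non-adjacent instances of \cref{Eq:O8DRels:Orig:Perm5,Eq:O8DRels:Orig:Perm6} can all be derived and removed, exactly as in the reindexing argument of \cref{Appendix:O8D:Reindex}.

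Next I would reduce the mixed commutators \cref{Eq:O8DRels:Orig:Perm3,Eq:O8DRels:Orig:Perm4}, which assert that a general $\TLO{X}{a,b}$ commutes past $\TLO{(-1)}{c}$ or $\TLO{K}{c,d,e,f}$ on disjoint indices. Writing $\TLO{X}{a,b}$ as a product of adjacent transpositions and sliding each transposition across the diagonal or $K$-block factor, these follow by induction from the adjacent instances retained in $\TLORel{n}{3}$, namely $\TLO{X}{b,b+1} \cdot \TLO{(-1)}{0} \approx \TLO{(-1)}{0} \cdot \TLO{X}{b,b+1}$ and $\TLO{X}{c,c+1} \cdot \TLO{K}{0,1,2,3} \approx \TLO{K}{0,1,2,3} \cdot \TLO{X}{c,c+1}$, together with the braiding relations. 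Finally I would show that the index-relabelling relations \cref{Eq:O8DRels:Orig:Perm8,Eq:O8DRels:Orig:Perm9,Eq:O8DRels:Orig:Perm10,Eq:O8DRels:Orig:Perm11} are entirely redundant: since $\TLORel{n}{2}$ already defines every $\TLO{K}{a,b,c,d}$ as a transposition-conjugate of the representative $\TLO{K}{0,1,2,3}$, each such relation unfolds into a statement about transpositions acting on $K$-indices and is derivable from the braiding relations and the retained relations describing the action of $\TLO{X}{0,1}$, $\TLO{X}{1,2}$, and $\TLO{X}{2,3}$ on $\TLO{K}{0,1,2,3}$. Deleting this last batch leaves precisely $\TLORel{n}{3}$, and the soundness and completeness of Tietze transformations assemble the three batches of $\mathbf{Rel(-)}$ steps into the claimed isomorphism.

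I expect the main obstacle to be the second and third batches, where the transposition-decomposition inductions must be carried out using only relations surviving into $\TLORel{n}{3}$: one has to track how conjugation by a transposition acts on the representatives $\TLO{(-1)}{0}$ and $\TLO{K}{0,1,2,3}$, and in particular verify that transpositions touching the support $\{0,1,2,3\}$ of $K$ reduce correctly to the nontrivial retained relations (which introduce factors of type $(-1)$), while transpositions disjoint from the support commute freely. Getting the deletion order and these inductions right is the crux; the explicit derivations would be relegated to \cref{Appendix:O8D:Reduce}.
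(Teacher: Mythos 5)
Your proposal is correct and matches the paper's own proof: the paper likewise passes from $\langle \TLOGen{n}{1} \mid \TLORel{n}{2} \rangle$ to $\langle \TLOGen{n}{1} \mid \TLORel{n}{3} \rangle$ by exactly these batches of \textbf{Rel($-$)} transformations — first the pure type-$X$ relations via completeness of the braiding and order relations for $\mathrm{S}(n)$, then \cref{Eq:O8DRels:Orig:Perm3,Eq:O8DRels:Orig:Perm4} by decomposing into adjacent transpositions and sliding them across $\TLO{(-1)}{0}$ and $\TLO{K}{0,1,2,3}$, and finally the redundancy of \cref{Eq:O8DRels:Orig:Perm8,Eq:O8DRels:Orig:Perm9,Eq:O8DRels:Orig:Perm10,Eq:O8DRels:Orig:Perm11} — with every deleted relation derived from $\TLORel{n}{3}$ itself in \cref{Appendix:O8D:Reduce}. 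The one detail to adjust is in your last batch: after unfolding, the two conjugating words differ by a permutation fixing $\{0,1,2,3\}$ pointwise (\cref{Lemma:KPullPerm}), so the derivations need only the braiding/order relations and the commutation of $\TLO{K}{0,1,2,3}$ with transpositions disjoint from its support (\cref{Lemma:KPull}), not the within-support action relations for $\TLO{X}{0,1}$, $\TLO{X}{1,2}$, $\TLO{X}{2,3}$ that you cite.
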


\subsection{Introducing the \texorpdfstring{$W(E_8)$}{W(E8)} Generators}

It this section, the circuit generators and relations for $W(E_8)$ are introduced.
Since $\CCGate{X}{1}{2} = \TLO{X}{6,7}$, then without loss of generality, every instance of $\TLO{X}{6,7}$ in $\TLORel{8}{3}$ can be replaced by $\CCGate{X}{1}{2}$.
Next, the generators $\GateAt{X}{0}$ and $\CGate{X}{0}{1}$ are introduced.
This yields the following relations.
{\scriptsize\begin{align*}
    (r_X): \GateAt{X}{0} &\approx \TLO{X}{0,4} \cdot \TLO{X}{1,5} \cdot \TLO{X}{2,6} \cdot \TLO{X}{3,7}
    &
    (r_{CX}): \CGate{X}{0}{1} &\approx \TLO{X}{2,6} \cdot \TLO{X}{3,7}
\end{align*}}%
It turns out that the $\GateAt{K}{1,2}$ gate decomposes into a word over $\GateAt{X}{0}$ and $\TLO{K}{0,1,2,3}$.
This is because $\TLO{K}{0,1,2,3}$ is a $\GateAt{K}{1,2}$ gate which is applied when qubit $0$ is in state $\ket{1}$, and $\GateAt{X}{0} \circ \TLO{K}{0,1,2,3} \circ \GateAt{X}{0}$ is a $\GateAt{K}{1,2}$ gate which is applied when qubit $0$ is in state $\ket{0}$.
Together, these two words compose to a $\GateAt{K}{1,2}$ gate without any controls.
This yields the following relation.
{\scriptsize\begin{align*}
    (r_K): \GateAt{K}{1,2} &\approx \TLO{K}{4,5,6,7} \cdot \GateAt{X}{0} \cdot \TLO{K}{4,5,6,7} \cdot \GateAt{X}{0}
\end{align*}}%
The relations $r_X$, $r_{CX}$, and $r_K$ can be validated with respect to $\interp{\cdot}_O$.
These relations do not depend on one-another, so the generators in $\Sigma_0$ may be introduced via a sequence of \textbf{Gen($+$)} transformations, as outlined in \cref{Appendix:DerivedGens}.
Likewise, the derived generators in $\Sigma_D \setminus \Sigma_0$ may be introduced via a sequence of \textbf{Gen($+$)} transformations, as outlined in \cref{Appendix:DerivedGens}.
Finally, the relations in $R_0$ may be introduced, since \cref{Sec:E8:DIntro} established the validity of these relations in $W( E_8 )$, which is a subgroup of $O(8, \D)$.
This sequence of transformations yields $\mathrm{O}(8,\D) \cong \left\langle \TLOGen{8}{1} \cup \Sigma_D \mid \TLORel{8}{3} \cup R_0 \cup \{ r_X, r_{CX}, r_K \}  \right\rangle$.

\subsection{Eliminating the Multi-Level Operators}

Using the generators in $\Sigma_D$ and the relations in $R_0$, it is possible to eliminate all two-level operators of type $X$.
As a first step, it must be shown that the two-level operators can be decomposed into circuits over $\Sigma_D$.
This follows from the fact that $\Sigma_D$ generates $W(E_8)$, and $\mathrm{S}(n) \hookrightarrow W(E_8)$.
{\scriptsize\begin{align*}
    \TLO{X}{0,1} &= \GateAt{X}{0} \circ \GateAt{X}{1} \circ \CCGate{X}{0}{1} \circ \GateAt{X}{1} \circ \GateAt{X}{0}
    &
    \TLO{X}{1,2} &= \GateAt{X}{0} \circ \CCGate{X}{0}{1} \circ \CCGate{X}{0}{2} \circ \CCGate{X}{0}{1} \circ \GateAt{X}{0}
    \\
    \TLO{X}{2,3} &= \GateAt{X}{0} \circ \CCGate{X}{0}{1} \circ \GateAt{X}{0}
    &
    \TLO{X}{3,4} &= \GateAt{X}{0} \circ \GateAt{X}{2} \circ \CCGate{X}{0}{1} \circ \GateAt{X}{0} \circ \CCGate{X}{1}{2} \circ \CCGate{X}{0}{2} \circ \CCGate{X}{1}{2} \circ \GateAt{X}{0} \circ \CCGate{X}{0}{1} \circ \GateAt{X}{2} \circ \GateAt{X}{0}
    \\
    \TLO{X}{4,5} &= \GateAt{X}{1} \circ \CCGate{X}{0}{1} \circ \GateAt{X}{1}
    &
    \TLO{X}{5,6} &= \CCGate{X}{0}{1} \circ \CCGate{X}{0}{2} \circ \CCGate{X}{0}{1}
\end{align*}}%
These relations can be validated with respect to $\interp{\cdot}_O$, and consequently introduced via a sequence of \textbf{Gen($+$)} operations.
These definitional relations can then be used to eliminate the two-level operators of type $X$, as outlined in \cref{Appendix:DerivedGens}.

\begin{figure}[t]
  {\scriptsize\begin{minipage}[t]{0.35\textwidth}
    \begin{align}
    \TLO{(-1)}{0}{}^2 &\,\approx \, \epsilon \\
    \TLO{K}{0,1,2,3}{}^2 &\,\approx \, \epsilon \\
    \TLO{X}{1,2} \cdot \TLO{(-1)}{0} &\,\approx \, \TLO{(-1)}{0} \cdot \TLO{X}{1,2} \\
    \TLO{X}{2,3} \cdot \TLO{(-1)}{0} &\,\approx \, \TLO{(-1)}{0} \cdot \TLO{X}{2,3} \\
    \TLO{X}{3,4} \cdot \TLO{(-1)}{0} &\,\approx \, \TLO{(-1)}{0} \cdot \TLO{X}{3,4} \\
    \TLO{X}{4,5} \cdot \TLO{(-1)}{0} &\,\approx \, \TLO{(-1)}{0} \cdot \TLO{X}{4,5} \\
    \TLO{X}{5,6} \cdot \TLO{(-1)}{0} &\,\approx \, \TLO{(-1)}{0} \cdot \TLO{X}{5,6} \\
    \TLO{X}{6,7} \cdot \TLO{(-1)}{0} &\,\approx \, \TLO{(-1)}{0} \cdot \TLO{X}{6,7} \\
    \TLO{X}{4,5} \cdot \TLO{K}{0,1,2,3} &\,\approx \, \TLO{K}{0,1,2,3} \cdot \TLO{X}{4,5}
    \end{align}
  \end{minipage}
  \begin{minipage}[t]{0.63\textwidth}
    \begin{align}
    \TLO{X}{5,6} \cdot \TLO{K}{0,1,2,3} &\,\approx \, \TLO{K}{0,1,2,3} \cdot \TLO{X}{5,6} \\
    \TLO{X}{6,7} \cdot \TLO{K}{0,1,2,3} &\,\approx \, \TLO{K}{0,1,2,3} \cdot \TLO{X}{6,7} \\
    \TLO{(-1)}{4} \cdot \TLO{K}{0,1,2,3} &\,\approx \, \TLO{K}{0,1,2,3} \cdot \TLO{(-1)}{4} \\
    \TLO{(-1)}{0} \cdot \TLO{(-1)}{4} &\,\approx \, \TLO{(-1)}{4} \cdot \TLO{(-1)}{0} \\
    \TLO{K}{0,1,2,3} \cdot \TLO{K}{4,5,6,7} &\,\approx \, \TLO{K}{4,5,6,7} \cdot \TLO{K}{0,1,2,3} \\
    \TLO{X}{0,1} \cdot \TLO{K}{0,1,2,3} &\,\approx \, \TLO{K}{0,1,2,3} \cdot \TLO{X}{0,1} \cdot \TLO{(-1)}{1} \cdot \TLO{(-1)}{3} \\
    \TLO{X}{1,2} \cdot \TLO{K}{0,1,2,3} &\,\approx \, \TLO{(-1)}{0} \cdot \TLO{K}{0,1,2,3} \cdot \TLO{(-1)}{0} \cdot \TLO{K}{0,1,2,3} \cdot \TLO{(-1)}{0} \\
    \TLO{X}{2,3} \cdot \TLO{K}{0,1,2,3} &\,\approx \, \TLO{K}{0,1,2,3} \cdot \TLO{X}{1,3} \\
    \TLO{K}{0,1,2,3} \cdot \TLO{K}{1,3,4,5} &\,\approx \, \TLO{K}{1,3,4,5} \cdot \TLO{K}{0,1,2,3} \\
    \TLO{(-1)}{0} \cdot \TLO{(-1)}{4} \cdot \TLO{X}{0,4} \cdot \rho &\,\approx\, \rho \cdot \TLO{X}{0,4} \cdot \TLO{(-1)}{4} \cdot \TLO{(-1)}{0}
  \end{align}
  \end{minipage}}
  \caption{The relations in $\TLORel{8}{4}$, sufficient to extend from $W( E_8 )$ to $O( 8, \D )$.}
  \label{Fig:O8DRels:Dyadic}
  \hfill\rule{0.95\textwidth}{0.1mm}\hfill\hbox{}
\end{figure}

In this new presentation, then relations \cref{Eq:O8DRels:Orig:Perm1,Eq:O8DRels:Orig:Perm5,Eq:O8DRels:Orig:Perm6} are replaced by relations over $\Sigma_D$.
Since $R_0$ is complete for $W( E_8 )$, then these relations can be derived from $R_0$.
Consequently, these relations can be eliminated with a sequence of \textbf{Rel($-$)} transformations.
This yields a new set of relations, denoted $\TLORel{8}{4}$, which can be found in \cref{Fig:O8DRels:Dyadic}.
For simplicity of presentation, we used $\TLO{X}{0,1}$ through to $\TLO{X}{6,7}$ as abbreviations for the circuits given above.
Furthermore, we take $CCZ = \TLO{(-1)}{7}$ to be a generator with $\TLO{(-1)}{0}$ an alias for $\GateAt{X}{0} \cdot \GateAt{X}{1} \cdot CCZ \cdot \GateAt{X}{1} \cdot \GateAt{X}{0}$.
Then define $\Sigma_1 = \Sigma_D \cup \{ \TLO{K}{0,1,2,3}, CCZ \}$ and $R_1 = R_0 \cup \TLORel{8}{4}$, where $\TLO{K}{0,1,2,3}$ corresponds to a negatively controlled $K$ gate.

\begin{theorem}
    $\mathrm{O}(8,\D) \cong \langle \Sigma_1 \mid R_1 \rangle$.
\end{theorem}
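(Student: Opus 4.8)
The plan is to establish the isomorphism by chaining together the sequence of Tietze transformations developed throughout this section, starting from the presentation $\langle \TLOGen{8}{1} \mid \TLORel{8}{3} \rangle \cong \mathrm{O}(8,\D)$ of the preceding theorem. Because Tietze transformations are sound and complete for the isomorphism of finite monoid presentations, it suffices to check that every step is a valid transformation compatible with the semantic interpretation $\interp{\cdot}_O$; the isomorphism type is then preserved throughout, and the endpoint is $\langle \Sigma_1 \mid R_1 \rangle$ by construction.

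First I would recall the intermediate presentation $\mathrm{O}(8,\D) \cong \langle \TLOGen{8}{1} \cup \Sigma_D \mid \TLORel{8}{3} \cup R_0 \cup \{ r_X, r_{CX}, r_K \} \rangle$ obtained by identifying $\CCGate{X}{1}{2} = \TLO{X}{6,7}$, introducing the generators of $\Sigma_D$ via \textbf{Gen($+$)} along the acyclic dependency structure of \cref{Appendix:DerivedGens}, and adjoining the relations $R_0$ via \textbf{Rel($+$)}, which is legitimate because these relations hold in $W(E_8) \le \mathrm{O}(8,\D)$ and the current presentation is already complete for $\mathrm{O}(8,\D)$. Next I would eliminate the two-level operators of type $X$. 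Every such operator lies in the image of $\mathrm{S}(8) \hookrightarrow W(E_8)$, and the explicit $\Sigma_D$-decompositions of the adjacent transpositions $\TLO{X}{0,1}$ through $\TLO{X}{5,6}$ are valid with respect to $\interp{\cdot}_O$; taking these as defining relations, the acyclic elimination procedure of \cref{Appendix:DerivedGens} lets me substitute them into every relation and then delete the corresponding generators by \textbf{Gen($-$)}.

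After this substitution, the order and braiding relations \cref{Eq:O8DRels:Orig:Perm1,Eq:O8DRels:Orig:Perm5,Eq:O8DRels:Orig:Perm6} among the two-level $X$ operators become relations entirely over $\Sigma_D$. Since $\mathrm{S}(8) \hookrightarrow W(E_8)$, these relations hold in $W(E_8)$, and since $W(E_8) \cong \langle \Sigma_D \mid R_0 \rangle$ is complete, they are derivable from $R_0$; they are therefore redundant and may be discarded by a sequence of \textbf{Rel($-$)} transformations, leaving exactly the relations $\TLORel{8}{4}$ of \cref{Fig:O8DRels:Dyadic}. A final \textbf{Gen($+$)}/\textbf{Gen($-$)} pair swaps the distinguished generator $\TLO{(-1)}{0}$ for the more natural $CCZ = \TLO{(-1)}{7}$, using the alias $\TLO{(-1)}{0} = \GateAt{X}{0} \cdot \GateAt{X}{1} \cdot CCZ \cdot \GateAt{X}{1} \cdot \GateAt{X}{0}$ and substituting it wherever $\TLO{(-1)}{0}$ occurs. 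The generating set is now $\Sigma_1 = \Sigma_D \cup \{ \TLO{K}{0,1,2,3}, CCZ \}$ and the relation set is $R_1 = R_0 \cup \TLORel{8}{4}$, yielding the claim.

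I expect the delicate step to be the \textbf{Rel($-$)} elimination of the transposition relations, which rests squarely on the completeness half of $W(E_8) \cong \langle \Sigma_D \mid R_0 \rangle$: one must verify that each order and braiding relation, once rewritten through the $\Sigma_D$-decompositions, really does rewrite to the empty word using only $R_0$. Equally delicate is the bookkeeping of exactly which occurrences of the eliminated generators appear in which instances of $\TLORel{8}{3}$, together with the verification that each surviving substituted relation coincides with a member of $\TLORel{8}{4}$ while each deleted one is provably redundant. As with the earlier constructions, the underlying matrix validations and rewriting derivations are machine-checked using \tool.
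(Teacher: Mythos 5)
Your proposal follows the paper's own route essentially step for step: starting from $\langle \TLOGen{8}{1} \mid \TLORel{8}{3} \rangle$, passing through the intermediate presentation $\langle \TLOGen{8}{1} \cup \Sigma_D \mid \TLORel{8}{3} \cup R_0 \cup \{ r_X, r_{CX}, r_K \} \rangle$, eliminating the two-level type-$X$ operators via their $\Sigma_D$-decompositions and the acyclic derived-generator machinery, discarding the now-redundant order and braiding relations by the completeness of $R_0$ for $W(E_8)$, and finishing with the $CCZ$/$\TLO{(-1)}{0}$ generator exchange to land on $\Sigma_1$ and $R_1$. The delicate points you flag (the \textbf{Rel($-$)} eliminations resting on completeness of $R_0$, and the bookkeeping of substituted relations) are exactly where the paper defers to \cref{Appendix:DerivedGens} and machine verification, so the proposal is correct and matches the paper's proof.
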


\subsection{A Minimal Generating Set for \texorpdfstring{$\mathrm{O}(8, \D)$}{O(8,D)}}

It turns out that $CCZ$ and $\TLO{K}{0,1,2,3}$ can be defined in terms of one-another, given that generators in $\Sigma_D$.
The decompositions are as follows.
{\scriptsize\begin{align*}
    CCZ &= \GateAt{K}{1,2} \circ \CGate{Z}{1}{2} \circ \GateAt{X}{0} \circ \TLO{K}{0,1,2,3} \circ \GateAt{X}{0} \circ \CGate{Z}{1}{2} \circ \TLO{K}{0,1,2,3} \circ \TLO{X}{5,6}
    &
    \TLO{K}{0,1,2,3} &= ( \GateAt{K}{1,2} \circ CCZ )^3 \circ \TLO{X}{5,6}
\end{align*}}%
Given this observation, it seems natural to eliminate the $\TLO{K}{0,1,2,3}$, given that it is not a common generator in quantum computation.
However, the set $\Sigma_D \cup \{ CCZ \}$ is minimal, whereas the set $\Sigma_D \cup \{ \TLO{K}{0,1,2,3} \}$ is not.
In other words, choosing the generator $\TLO{K}{0,1,2,3}$ enables a smaller generating set, whereas choosing the generator $CCZ$ allows for more conventional circuit decompositions.
For this reason, we choose to keep both $\TLO{K}{0,1,2,3}$ and $CCZ$ in our presentation.
The minimality of these generating sets are proven in \cref{Appendix:E8MinGens}, using the same techniques as in \cref{Sec:E8:MinGens}.

\begin{theorem}
    \label{Thm:O8DMin}
    The following generating sets are minimal for $\mathrm{O}(8,\D)$.
    \begin{enumerate}
    \item $\Sigma_K = \left\{ \GateAt{X}{0}, \CGate{X}{0}{1}, \CCGate{X}{1}{2}, \TLO{K}{0,1,2,3} \right\}$.
    \item $\Sigma_Z = \left\{ \GateAt{X}{0}, \CGate{X}{0}{1}, \CCGate{X}{1}{2}, \GateAt{K}{1,2}, CCZ \right\}$
    \end{enumerate}
\end{theorem}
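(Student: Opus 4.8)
The plan is to apply the centralizer criterion used for \cref{Thm:MinGens} in \cref{Sec:E8:MinGens}. Recall that a generating set is minimal exactly when no proper subset generates the whole group, and since every proper subset is contained in a maximal one obtained by deleting a single generator, it suffices to treat the removals one generator at a time. For a generator $g$ and the maximal subset $\Sigma' = \Sigma \setminus \{ g \}$, I would exhibit a dyadic $8 \times 8$ matrix $M$ with $M A = A M$ for every $A \in \Sigma'$ but $M g \ne g M$. The centralizer of $M$ in $\mathrm{GL}(8,\D)$ is then a subgroup containing all of $\langle \Sigma' \rangle$ yet excluding $g$, which forces $\langle \Sigma' \rangle \ne \mathrm{O}(8,\D)$. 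Carrying this out for each $g$ in $\Sigma_K$ and in $\Sigma_Z$ establishes the theorem; the two generating sets give $4$ and $5$ maximal subsets, respectively, for a total of nine cases.

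Three of these nine cases can be dispatched directly by locating the retained generators inside a visibly proper subgroup, with no witness construction. Deleting the Hadamard-type generator is the prototype: removing $\TLO{K}{0,1,2,3}$ from $\Sigma_K$ leaves the permutation matrices $\GateAt{X}{0}, \CGate{X}{0}{1}, \CCGate{X}{1}{2}$, and removing $\GateAt{K}{1,2}$ from $\Sigma_Z$ leaves $\GateAt{X}{0}, \CGate{X}{0}{1}, \CCGate{X}{1}{2}, CCZ$. In both cases every retained generator is an integer matrix, so the generated group lies in $\mathrm{GL}(8,\Z)$ and cannot contain the non-integer dyadic matrix that was removed. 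Removing $CCZ$ from $\Sigma_Z$ leaves exactly $\Sigma_0$, which generates the proper subgroup $W(E_8)$.

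The substantive work is the remaining six cases, namely the classical removals of $\GateAt{X}{0}$, $\CGate{X}{0}{1}$, and $\CCGate{X}{1}{2}$ from each generating set, where a Hadamard-type generator persists and the retained subgroup is no longer contained in $\mathrm{GL}(8,\Z)$. Here I would compute, for each $\Sigma'$, the commutant of $\Sigma'$ by solving the linear system $M A = A M$ over $A \in \Sigma'$, and then select an explicit dyadic solution $M$ that fails to commute with the omitted gate. Because $\Sigma_Z = \Sigma_0 \cup \{ CCZ \}$, each witness built for the corresponding removal in \cref{Thm:MinGens} is a natural starting point, and the only new obligation is that it also commute with $CCZ = \TLO{(-1)}{7}$; since the commutant of $\TLO{(-1)}{7}$ consists of the matrices preserving the splitting $\mathrm{span}(e_0,\ldots,e_6) \oplus \mathrm{span}(e_7)$, I would recast each witness into this block form while retaining its failure to commute with the target classical gate. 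For $\Sigma_K$ the surviving generator is $\TLO{K}{0,1,2,3}$ rather than $\GateAt{K}{1,2}$, so the witnesses must be recomputed against its commutant.

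The main obstacle is exactly this last constraint: for each omitted classical generator, one must produce a single dyadic matrix that \emph{simultaneously} commutes with the two surviving classical gates and with the Hadamard-type generator, yet still detects the omitted gate. The difficulty is that commuting with $\TLO{K}{0,1,2,3}$ (or with $CCZ$) already shrinks the commutant considerably, and one must check that what remains still separates the three classical generators from one another. I expect these witnesses to exist and to be small signed or block-diagonal dyadic matrices, but their explicit construction and the verification of the commutation relations are computational; following the pattern of \cref{Thm:MinGens}, I would relegate the explicit matrices and their checks to \cref{Appendix:E8MinGens}.
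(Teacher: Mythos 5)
Your overall skeleton matches the paper's: reduce minimality to the maximal proper subsets and, for each, exhibit an element that commutes with the retained generators but not with the omitted one (the paper's \cref{Lemma:App:MinGens}). Your integrality argument for deleting the Hadamard-type generators is a sound alternative to the paper's witness $\GateAt{Z}{2}$, your treatment of $\Sigma_Z \setminus \{ CCZ \} = \Sigma_0$ is exactly the paper's, and for $\Sigma_K$ the paper does carry out precisely the witness computations you describe (one of them found by an integer-programming solver), so that half of the theorem goes through on your plan.

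The genuine gap is the case $\Sigma_Z \setminus \{ \CCGate{X}{1}{2} \} = \{ \GateAt{X}{0}, \CGate{X}{0}{1}, \GateAt{K}{1,2}, CCZ \}$, which you fold into the six ``substantive'' centralizer computations and for which you only say you \emph{expect} a witness to exist. The paper reports the opposite: no obvious commuting operation exists, and the corresponding integer program is one that Z3 fails to solve; the centralizer method is abandoned for exactly this case. There is also a structural reason to think it cannot work: conjugation by $\GateAt{H}{2}$ fixes $\GateAt{X}{0}$, $\CGate{X}{0}{1}$, $\GateAt{K}{1,2}$ and sends $CCZ$ to $\CCGate{X}{0}{1}$, so the retained subgroup is isomorphic to a subgroup of the reflection group $W(E_8)$, whose natural action on $\R^8$ is absolutely irreducible; if the retained subgroup still acts irreducibly, Schur's lemma forces its commutant to be the scalars, and scalars commute with $\CCGate{X}{1}{2}$ too, so no separating witness of any size can exist. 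The idea your proposal is missing is precisely this conjugation argument, used not to build a witness but to prove properness directly: since $f(M) = \GateAt{H}{2} \circ M \circ \GateAt{H}{2}$ embeds $\langle \GateAt{X}{0}, \CGate{X}{0}{1}, \GateAt{K}{1,2}, CCZ \rangle$ isomorphically into the finite group $W(E_8)$, that subgroup is finite, while $\mathrm{O}(8,\D)$ is infinite, hence the subgroup is proper (see \cref{Appendix:E8MinGens}). Without this (or some other non-centralizer argument) your plan stalls at exactly the case you yourself flag as the main obstacle.
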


\section{Extending to the 3-Qubit Toffoli-Hadamard Circuits}
\label{Sec:TofH}

We now give a presentation of $\textsf{TofH}(3)$, by leveraging the presentation of $\mathrm{O}(8, \D)$ found in \cref{Sec:O8D}.
The argument in this section closely follows \cite[Section~5]{LiRoss2021}.
From \cite{AmyGlaudell2020}, it is known that $\textsf{TofH}(3)$ is obtained by adding the generator $\GateAt{H}{2}$ to $\mathrm{O}(8, \D)$.
Let $\Sigma_2 = \Sigma_1 \cup \{ \GateAt{H}{2} \}$ and $R_2$ extend the set $R_1$ with all relations found in \cref{Fig:G2Rels}.
Using the relations in \cref{Fig:G2Rels}, the generator $\GateAt{H}{2}$ can be moved from the left-hand side to the right-hand side of any word over $\Sigma_1$.
Since $\GateAt{H}{2}$ is self-inverse, this is sufficient to decide equality in $\textsf{TofH}(3)$.

\begin{figure}[t]
  {\scriptsize\begin{minipage}{0.55\textwidth}
    \begin{align}
    \GateAt{H}{2} \cdot \GateAt{X}{0}
        &\,\approx \,
        \GateAt{X}{0} \cdot \GateAt{H}{2}
        \label{eq:g21}
    \\
    \GateAt{H}{2} \cdot \CGate{X}{0}{1}
        &\,\approx \,
        \CGate{X}{0}{1} \cdot \GateAt{H}{2}
        \label{eq:g22}
    \\
    \GateAt{H}{2} \cdot \CCGate{X}{1}{2}
        &\,\approx
        \, K_{0,1} \cdot K_{1,2} \cdot CCZ \cdot K_{1,2} \cdot K_{0,1} \cdot \GateAt{H}{2}
        \label{eq:g23}
    \end{align}
  \end{minipage}
  \begin{minipage}{0.43\textwidth}
    \begin{align}
    \GateAt{H}{2} \cdot CCZ
        &\,\approx \,
        \CCGate{X}{0}{1} \cdot \GateAt{H}{2}
        \label{eq:g24}
    \\
    \GateAt{H}{2} \cdot K_{1,2}
        &\,\approx \,
        K_{1,2} \cdot \GateAt{H}{2}
        \label{eq:g25}
    \\
    \GateAt{H}{2} \cdot \TLO{K}{0,1,2,3}
        &\,\approx \,
        \TLO{K}{0,1,2,3} \cdot \GateAt{H}{2}
        \label{eq:g25a}
    \\
    \GateAt{H}{2} \cdot \GateAt{H}{2}
        &\,\approx \,
        \epsilon
        \label{eq:g26}
    \end{align}
  \end{minipage}}%
  \caption{Additional relations for $\textsf{TofH}(3)$.}
  \label{Fig:G2Rels}
  \hfill\rule{0.95\textwidth}{0.1mm}\hfill\hbox{}
\end{figure}

The proof proceeds as follows.
In \cref{lem:lcommute}, it is shown that $\GateAt{H}{2}$ commutes with every word in over $\Sigma_1$ using only the relations in $R_2$.
This is used in \cref{lem:commute}, to show that every word over $\Sigma_2$ can be rewritten as a word over $\Sigma_1$, followed by at most one $\GateAt{H}{2}$ gate.
Since $R_1 \subseteq R_2$ is a complete equational theory for $\mathrm{O}( 8, \D )$ with every element of $\textsf{TofH}(3)$ of the form described in \cref{lem:commute}, it follows that $R_2$ is a complete equational theory for $\textsf{TofH}(3)$ (see \cref{Thm:App:TofHComplete}).

\begin{lemma}
    \label{lem:lcommute}
    If $w \in \Sigma_1^*$, then there exists a $w' \in \Sigma_1^*$ such that $\GateAt{H}{2} \cdot w \,\sim_{R2}\, w' \cdot \GateAt{H}{2}$.
\end{lemma}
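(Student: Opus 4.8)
The plan is to argue by induction on word length, after first reducing to a convenient six-element generating set. Put $\Gamma = \{ \GateAt{X}{0}, \CGate{X}{0}{1}, \CCGate{X}{1}{2}, K_{1,2}, CCZ, \TLO{K}{0,1,2,3} \}$, the exact collection of $\Sigma_1$-generators whose interaction with $\GateAt{H}{2}$ is recorded by \cref{eq:g21,eq:g22,eq:g23,eq:g24,eq:g25,eq:g25a}. The first step is to observe that every generator of $\Sigma_1$ rewrites, using only $R_0 \subseteq R_2$, to a word over $\Gamma$: the generators $\TLO{K}{0,1,2,3}$ and $CCZ$ already lie in $\Gamma$, while the defining relations \eqref{Rel:DGen:Start}--\eqref{Rel:DGen:End} express each element of $\Sigma_D \setminus \Sigma_0$ in terms of previously introduced generators, with the dependency chain grounding out in $\Sigma_0 \subseteq \Gamma$ (the acyclicity of these dependencies being exactly what is established in \cref{Appendix:DerivedGens}). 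Hence for any $w \in \Sigma_1^*$ there is a $\tilde w \in \Gamma^*$ with $w \sim_{R_2} \tilde w$, and since $\GateAt{H}{2} \cdot w \sim_{R_2} \GateAt{H}{2} \cdot \tilde w$, it suffices to prove the claim for words over $\Gamma$.

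For words over $\Gamma$ I would induct on length. The base case $w = \epsilon$ is immediate with $w' = \epsilon$. For the inductive step write $w = g \cdot v$ with $g \in \Gamma$ and $v \in \Gamma^*$. Each of \cref{eq:g21,eq:g22,eq:g23,eq:g24,eq:g25,eq:g25a} has the shape $\GateAt{H}{2} \cdot g \approx u_g \cdot \GateAt{H}{2}$ with $u_g \in \Sigma_1^*$, so $\GateAt{H}{2} \cdot g \cdot v \sim_{R_2} u_g \cdot \GateAt{H}{2} \cdot v$. Applying the induction hypothesis to $v$ yields $\GateAt{H}{2} \cdot v \sim_{R_2} v' \cdot \GateAt{H}{2}$ for some $v' \in \Sigma_1^*$, whence $\GateAt{H}{2} \cdot w \sim_{R_2} u_g \cdot v' \cdot \GateAt{H}{2}$ and $w' = u_g \cdot v' \in \Sigma_1^*$ is as required. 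The structural point that makes this induction legitimate is that $\GateAt{H}{2}$ is only ever pushed rightward past the single fixed generator $g$ at each step; the words $u_g$ it leaves behind accumulate harmlessly on the left and are never revisited, so there is no circularity even though $u_g$ may contain non-fundamental generators such as $K_{0,1}$ or $\CCGate{X}{0}{1}$.

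The conceptual reason \cref{eq:g21,eq:g22,eq:g23,eq:g24,eq:g25,eq:g25a} can be stated with $u_g \in \Sigma_1^*$ at all is that $\GateAt{H}{2}$ normalizes $\mathrm{O}(8,\D)$: conjugation by $\GateAt{H}{2}$ sends each generator of $\Gamma$ back into $\mathrm{O}(8,\D)$ (this is exactly what the six relations assert semantically), and since $\Gamma$ generates $\mathrm{O}(8,\D)$ and $\GateAt{H}{2}^2 = I$, the map $g \mapsto \GateAt{H}{2} \cdot g \cdot \GateAt{H}{2}$ is an involutive automorphism of $\mathrm{O}(8,\D)$; soundness of each individual relation is then a finite $8 \times 8$ matrix check over $\Z[1/\sqrt{2}]$. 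I expect the main obstacle to be the reduction step rather than the induction: one must verify that the defining relations of $R_0$ genuinely rewrite every $\Sigma_D$-generator into $\Gamma^*$ \emph{syntactically}, not merely semantically, which is precisely where the acyclic ordering of the derived generators from \cref{Appendix:DerivedGens} is needed. Once that reduction is in hand, the induction above is routine.
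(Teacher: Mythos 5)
Your proof is correct, and its core is the same induction the paper uses: peel one generator off the left of $w$, push $\GateAt{H}{2}$ past it via one of \cref{eq:g21,eq:g22,eq:g23,eq:g24,eq:g25,eq:g25a}, and apply the induction hypothesis to the shorter remainder. Where you differ is in your preliminary reduction to the six-element set $\Gamma$, and this is a genuine improvement in rigor rather than mere caution. The paper's inductive step asserts that for an arbitrary leading symbol $x \in \Sigma_1$ one of the seven relations of \cref{Fig:G2Rels} yields $\GateAt{H}{2} \cdot x \,\sim_{R2}\, u' \cdot \GateAt{H}{2}$, but this is only literally true when $x$ is one of $\GateAt{X}{0}$, $\CGate{X}{0}{1}$, $\CCGate{X}{1}{2}$, $K_{1,2}$, $CCZ$, $\TLO{K}{0,1,2,3}$; for the remaining generators of $\Sigma_D$ (e.g.\ $\sigma_{0,1}$, $\GateAt{Z}{2}$, $\CGate{Z}{1}{2}$) the fact is derivable but requires exactly the rewriting through the defining relations \cref{Rel:DGen:Start} through \cref{Rel:DGen:End} that you perform up front, justified by the acyclicity of the derived-generator graph (this is the content of \cref{Lemma:App:ElimDGen1}). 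Your restructuring — reduce $w$ to $\tilde{w} \in \Gamma^*$ once, then induct on words over $\Gamma$ — cleanly avoids the alternative pitfall of expanding the leading generator inside the induction, which would increase word length and break the induction measure. You are also right that the induction remains well-founded even though the words $u_g$ deposited on the left may contain derived generators such as $K_{0,1}$ or $\CCGate{X}{0}{1}$: only the suffix to the right of $\GateAt{H}{2}$ drives the induction, and it stays in $\Gamma^*$ and strictly shortens. In short, both proofs are the same push-right induction, but yours makes explicit a step the paper compresses into a single (strictly speaking inaccurate) appeal to the relations of \cref{Fig:G2Rels}.
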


\begin{proof}
    The proof follows by induction on $|w|$.
    \begin{enumerate}
    \item[--] \textbf{Base Case}.
          If $|w| = 0$, then $\GateAt{H}{2} \cdot w = w \cdot \GateAt{H}{2}$.
          Then $\GateAt{H}{2} \cdot w \,\sim_{R2}\, w \cdot \GateAt{H}{2}$ by the transitivity of $( \sim_{R2 })$.
    \item[--] \textbf{Inductive Case}.
          Assume that for some $k \in \mathbb{N}$, if $|w| = k$, then there exists a $w' \in \Sigma_1^*$ such that $\GateAt{H}{2} \cdot w \,\sim_{R2}\, w' \cdot \GateAt{H}{2}$.
    \item[--] \textbf{Inductive Step}.
          Assume that $|w| = k + 1$.
          Then there exists a $u \in \Sigma_1^*$ and $x \in \Sigma_1$ such that $x \cdot u = w$.
          It follows by one of \cref{eq:g21,eq:g22,eq:g23,eq:g24,eq:g25,eq:g25a}, that there exits a $u' \in \Sigma_1^*$ such that $\GateAt{H}{2} \cdot w \,\sim_{R2}\, u' \cdot \GateAt{H}{2} \cdot u$.
          Since $|u| = k$, then by the inductive hypothesis, there exists a $w' \in \Sigma_1^*$ such that $\GateAt{H}{2} \cdot u \,\sim_{R2}\, w' \cdot \GateAt{H}{2}$.
          Then $u' \cdot \GateAt{H}{2} \cdot u \,\sim_{R2}\, u' \cdot w' \cdot \GateAt{H}{2}$.
          Then $\GateAt{H}{2} \cdot u \,\sim_{R2}\, u' \cdot w' \cdot \GateAt{H}{2}$ by the transitivity of $( \sim_{R2} )$, and the inductive case holds.
    \end{enumerate}
    Then by the principle of induction, there exists a $w' \in \Sigma_1^*$ such that $\GateAt{H}{2} \cdot w \,\sim_{R2}\, w' \cdot \GateAt{H}{2}$.
 \end{proof}

\begin{lemma}
  \label{lem:commute}
  If $w \in \Sigma_2^*$, then there exists some $w' \in \Sigma_1^*$ and $\ell \in \{ 0, 1 \}$ such that $w \,\sim_{R_2}\, w' \cdot \GateAt{H}{2}^\ell$.
\end{lemma}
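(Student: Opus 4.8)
The plan is to proceed by strong induction on the number of occurrences of $\GateAt{H}{2}$ in $w$, using \cref{lem:lcommute} as the workhorse to commute each Hadamard gate rightward past the intervening $\Sigma_1$-segments, and the relation $\GateAt{H}{2} \cdot \GateAt{H}{2} \approx \epsilon$ from \cref{eq:g26} to annihilate adjacent pairs. Writing $n$ for the number of $\GateAt{H}{2}$ letters in $w$, the base case $n = 0$ is immediate, since then $w \in \Sigma_1^*$ already and we may take $w' = w$ and $\ell = 0$.

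For the inductive step, I would isolate the first two occurrences of $\GateAt{H}{2}$. If $n = 1$, write $w = u \cdot \GateAt{H}{2} \cdot v$ with $u, v \in \Sigma_1^*$; applying \cref{lem:lcommute} to $v$ yields $v' \in \Sigma_1^*$ with $\GateAt{H}{2} \cdot v \sim_{R_2} v' \cdot \GateAt{H}{2}$, so $w \sim_{R_2} (u \cdot v') \cdot \GateAt{H}{2}$, and we are done with $\ell = 1$. If $n \geq 2$, write $w = u \cdot \GateAt{H}{2} \cdot v_1 \cdot \GateAt{H}{2} \cdot v_2$ where $u, v_1 \in \Sigma_1^*$ and $v_2 \in \Sigma_2^*$ carries the remaining $n - 2$ Hadamards. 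Using \cref{lem:lcommute} on $v_1$ and then \cref{eq:g26} gives
\[
w \;\sim_{R_2}\; u \cdot v_1' \cdot \GateAt{H}{2} \cdot \GateAt{H}{2} \cdot v_2 \;\sim_{R_2}\; u \cdot v_1' \cdot v_2,
\]
and the word $u \cdot v_1' \cdot v_2$ contains exactly $n - 2$ occurrences of $\GateAt{H}{2}$, so the induction hypothesis applies and yields the desired normal form.

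The only structural fact I would need to invoke explicitly is that $(\sim_{R_2})$ is a congruence with respect to concatenation: this is exactly what lets me replace the subword $\GateAt{H}{2} \cdot v_1$ by $v_1' \cdot \GateAt{H}{2}$ inside the larger context $u \cdot (\,\cdot\,) \cdot \GateAt{H}{2} \cdot v_2$, and it follows directly from the definition of $\xrightarrow{R}$ via arbitrary contexts $s \cdot q \cdot t$ together with the transitivity of $(\sim_{R_2})$. I do not expect any genuine obstacle here: all the real work has been absorbed into \cref{lem:lcommute}, and what remains is purely the bookkeeping of the Hadamard count. The one point requiring mild care is to phrase the induction on the number of $\GateAt{H}{2}$ letters (rather than on $|w|$), so that each step strictly decreases this count by two and the recursion terminates cleanly in the base cases $n \in \{0, 1\}$.
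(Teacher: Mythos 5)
Your proof is correct and takes essentially the same approach as the paper's: induction on the number of $\GateAt{H}{2}$ symbols in $w$, using \cref{lem:lcommute} to push each Hadamard rightward through a $\Sigma_1$-segment and \cref{eq:g26} to cancel adjacent pairs. The only difference is bookkeeping---the paper peels off the \emph{last} occurrence of $\GateAt{H}{2}$ and decrements the count by one, performing the cancellation after invoking the inductive hypothesis (when $\ell + 1 = 2$), whereas you strip the \emph{first two} occurrences and cancel immediately, decrementing by two---but this is a cosmetic reorganization of the same argument.
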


\begin{proof}
  Let $f: \Sigma_2^* \to \N$ map each word $w \in \Sigma_2^*$ to the number of $\GateAt{H}{2}$ symbols in $w$.
  The proof follows by induction on $f( w )$.
  \begin{enumerate}
  \item[--] \textbf{Base Case}.
        If $f( w ) = 0$, then $w \in \Sigma_1^{*}$.
        Then $w \,\sim_{R2}\, w \cdot \GateAt{H}{2}^0$ by the reflexivity of $( \sim_{R2} )$
  \item[--] \textbf{Inductive Hypothesis}.
        Assume that for some $k \in \mathbb{N}$, if $f( w ) = k$, then there exists some $w' \in \Sigma_1^*$ and $\ell \in \{ 0, 1 \}$ such that $w \,\sim_{R2}\, w' \cdot \GateAt{H}{2}^\ell$.
  \item[--] \textbf{Inductive Step}.
        Assume that $f( w ) = k + 1$.
        Then there exists $w_1 \in \Sigma_2^{*}$ and $w_2 \in \Sigma_1^{*}$ such that $w = w_1 \cdot \GateAt{H}{2} \cdot w_2$ with $f( w_1 ) = f( w ) - 1$.
        Then by \cref{lem:lcommute}, $w \,\sim_{R2}\, w_1 \cdot w_2' \cdot \GateAt{H}{2}$ for some $w_2' \in \Sigma_1^{*}$.
        Then $f( w_1 \cdot w_2' ) = f( w_1 ) = f( w ) - 1$.
        By the inductive hypothesis, there exists some $w_3 \in \Sigma_1^*$ and $\ell \in \{ 0, 1 \}$ such that $w_1 \cdot w_2' \,\sim_{R2}\, w_3 \cdot \GateAt{H}{2}^\ell$.
        Then $w \,\sim_{R2}\, w_3 \cdot \GateAt{H}{2}^{\ell+1}$.
        If $\ell = 0$, then $w \,\sim_{R2}\, w_3 \cdot \GateAt{H}{2}$ and we are done.
        Otherwise, if $\ell = 1$, then $w \,\sim_{R2}\, w_3$ by \cref{eq:g26}.
        In either case, there exists an $\ell' \in \{ 0, 1 \}$ such that $w \,\sim_{R2}\, w_3 \cdot \GateAt{H}{2}^{\ell'}$ and the inductive step holds.
  \end{enumerate}
  Then by the principle of induction, there exists some $w' \in \Sigma_1^*$ and $\ell \in \{ 0, 1 \}$ such that $w \,\sim_{R_2}\, w' \cdot \GateAt{H}{2}^\ell$.
\end{proof}

\begin{theorem}
    \label{Thm:App:TofHComplete}
    For all $w_1, w_2 \in \Sigma_2^*$, $\interp{w_1}_H^* = \interp{w_2}_H^*$ if and only if $w_1 \sim_{R2} w_2$.
\end{theorem}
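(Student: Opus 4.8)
The plan is to prove the two implications separately, establishing soundness first and then using it to bootstrap completeness. Throughout I rely on the facts that $\Sigma_1 \subseteq \Sigma_2$ and $R_1 \subseteq R_2$, that $\interp{\cdot}_H^*$ is a monoid homomorphism, and that $\interp{\cdot}_H^*$ agrees with the interpretation used for $\mathrm{O}(8,\D)$ on words over $\Sigma_1$. For the soundness direction, I would show $w_1 \sim_{R2} w_2 \Rightarrow \interp{w_1}_H^* = \interp{w_2}_H^*$. Since $(\sim_{R2})$ is the reflexive, symmetric, and transitive closure of the single-step rewrites generated by $R_2$, and $\interp{\cdot}_H^*$ is a homomorphism, it suffices to verify $\interp{q}_H^* = \interp{r}_H^*$ for each relation $q \approx r$ in $R_2$. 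The relations of $R_1$ were already validated in proving $\mathrm{O}(8,\D) \cong \langle \Sigma_1 \mid R_1\rangle$, so only the seven relations \cref{eq:g21,eq:g22,eq:g23,eq:g24,eq:g25,eq:g25a,eq:g26} of \cref{Fig:G2Rels} remain, and each reduces to a single $8\times 8$ matrix identity (machine-verified by \tool).

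For completeness, I would first apply \cref{lem:commute} to place both inputs into normal form: there exist $w_1', w_2' \in \Sigma_1^*$ and $\ell_1, \ell_2 \in \{0,1\}$ with $w_i \sim_{R2} w_i' \cdot \GateAt{H}{2}^{\ell_i}$. By the soundness just established, $\interp{w_i}_H^* = \interp{w_i'}_H^* \cdot \GateAt{H}{2}^{\ell_i}$, and since every generator of $\Sigma_1$ interprets to an element of $\mathrm{O}(8,\D)$, we have $\interp{w_i'}_H^* \in \mathrm{O}(8,\D)$. The hypothesis $\interp{w_1}_H^* = \interp{w_2}_H^*$ then becomes $\interp{w_1'}_H^* \cdot \GateAt{H}{2}^{\ell_1} = \interp{w_2'}_H^* \cdot \GateAt{H}{2}^{\ell_2}$.

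The crux is to rule out $\ell_1 \ne \ell_2$. If, say, $\ell_1 = 0$ and $\ell_2 = 1$, then $(\interp{w_2'}_H^*)^{-1}\interp{w_1'}_H^* = \GateAt{H}{2}$, forcing $\GateAt{H}{2} \in \mathrm{O}(8,\D)$; but $\GateAt{H}{2} = I \otimes I \otimes H$ has entries $\pm 1/\sqrt 2 \notin \D$, a contradiction. Hence $\ell_1 = \ell_2$, and cancelling the common invertible factor $\GateAt{H}{2}^{\ell_1}$ yields $\interp{w_1'}_H^* = \interp{w_2'}_H^*$ in $\mathrm{O}(8,\D)$. Completeness of $R_1$ for $\mathrm{O}(8,\D)$ then gives $w_1' \sim_{R_1} w_2'$, hence $w_1' \sim_{R2} w_2'$ since $R_1 \subseteq R_2$. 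Chaining through the normal forms, $w_1 \sim_{R2} w_1'\GateAt{H}{2}^{\ell_1} \sim_{R2} w_2'\GateAt{H}{2}^{\ell_2} \sim_{R2} w_2$, as required.

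The main obstacle is precisely this parity analysis: one must show that the cosets $\mathrm{O}(8,\D)$ and $\mathrm{O}(8,\D)\cdot\GateAt{H}{2}$ are disjoint, which is exactly what makes the count of $\GateAt{H}{2}$ gates modulo $2$ a genuine semantic invariant and lets the completeness of $R_2$ collapse onto the already-established completeness of $R_1$. The remaining work is routine once the normal form of \cref{lem:commute} and the irrationality of $1/\sqrt2$ are in hand.
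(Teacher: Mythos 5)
Your proposal is correct and follows essentially the same route as the paper's own proof: soundness by checking the relations of \cref{Fig:G2Rels} as matrix identities, normal forms via \cref{lem:commute}, a coset-disjointness argument to force $\ell_1 = \ell_2$ (the paper phrases it as $\interp{w_2'}_H^* \circ \interp{\GateAt{H}{2}}_H$ having denominators $2^k\sqrt{2}$, you phrase it as $\GateAt{H}{2} \notin \mathrm{O}(8,\D)$ after cancellation, which is the same invariant), and finally the completeness of $R_1 \subseteq R_2$ for $\mathrm{O}(8,\D)$ to conclude $w_1 \sim_{R2} w_2$. No gaps; your write-up is, if anything, slightly cleaner in its ordering of the parity analysis.
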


\begin{proof}
  It follows by matrix multiplication that the relations in \cref{Fig:G2Rels} are sound.
  It remains to be shown that the relations in \cref{Fig:G2Rels} are complete.
  Let $w_1 \in \Sigma_2^*$ and $w_2 \in \Sigma_2^*$ such that $\interp{w_1}_H^* = \interp{w_2}_H^*$.
  By \cref{lem:commute}, there exists $\ell_1, \ell_2 \in \{ 0, 1 \}$ and $w_1', w_2' \in \Sigma_1^*$ such that $w_1 \,\sim_{R2}\, w_1' \cdot \GateAt{H}{2}^{\ell_1}$ and $w_2 \,\sim_{R2}\, w_2' \cdot \GateAt{H}{2}^{\ell_2}$.
  Since $\interp{w_1}_H^* \in \mathrm{O}(8, \D) \cong \langle \Sigma_1, R_1 \rangle$ with $R_1 \subseteq R_2$, then $w_1' \cdot \GateAt{H}{2}^{\ell_1} \,\sim_{R2}\, w_2' \cdot \GateAt{H}{2}^{\ell_1}$.
  Assume for the intent of contradiction that $\ell_1 \ne \ell_2$.
  Then $\interp{w_1'}_H^* = \interp{w_2'}_H^* \circ \interp{\GateAt{H}{2}}_H$.
  Then $\interp{w_2'}_H^* \circ \interp{\GateAt{H}{2}}_H^* \in \mathrm{O}(8, \D)$.
  However, $\interp{w_2'}_H^* \circ \interp{\GateAt{H}{2}}_H$ has a denominator of the form $1 / (2^k \sqrt{2})$, and therefore $\interp{w_2'}_H^* \circ \interp{\GateAt{H}{2}}_H \not \in \mathrm{O}(8, \D)$.
  By contradiction, $\ell_1 = \ell_2$.
  Since $\ell_1 = \ell_2$, then $w_1 \,\sim_{R2}\, w_2$ by the transitivity and symmetry of $(\sim_{R2})$.
  Since $w_1$ and $w_2$ were arbitrary, then the relations in \cref{Fig:G2Rels} are complete.
\end{proof}
\section{Conclusion}

We used the geometry of $W( E_8 )$ to obtain a circuit  presentation for the $3$-qubit Toffoli-K circuits, and then leveraged \cite{LiRoss2021} to obtain a finite presentation of $3$-qubit Toffoli-Hadamard circuits. Our presentation contains 65 relations, compared to the 2113 relations of \cite{LiRoss2021}.
There are several directions for future work. We hope to simplify our presentation by further reducing the number of relations. In addition, we plan to explore the structural properties of the group of 3-qubit Toffoli-Hadamard circuits and of its subgroups. In particular, it is known that the group $\mathrm{O}(8, \D)$ is generated by reflections, but it is not known whether this group can be presented as an (infinite) Coexeter group.
From an applied perspective, we also hope to explore applications of these presentations to circuit optimization and equivalence checking.

\bibliographystyle{eptcs}
\bibliography{cox}

\begin{thebibliography}{10}
\providecommand{\bibitemdeclare}[2]{}
\providecommand{\surnamestart}{}
\providecommand{\surnameend}{}
\providecommand{\urlprefix}{Available at }
\providecommand{\url}[1]{\texttt{#1}}
\providecommand{\href}[2]{\texttt{#2}}
\providecommand{\urlalt}[2]{\href{#1}{#2}}
\providecommand{\doi}[1]{doi:\urlalt{https://doi.org/#1}{#1}}
\providecommand{\eprint}[1]{arXiv:\urlalt{https://arxiv.org/abs/#1}{#1}}
\providecommand{\bibinfo}[2]{#2}

\bibitemdeclare{misc}{aharonov2003}
\bibitem{aharonov2003}
\bibinfo{author}{Dorit \surnamestart Aharonov\surnameend}
  (\bibinfo{year}{2003}): \emph{\bibinfo{title}{A simple proof that {T}offoli
  and {H}adamard are quantum universal}}.
\newblock \eprint{quant-ph/0301040}.

\bibitemdeclare{article}{Amy2016AFP}
\bibitem{Amy2016AFP}
\bibinfo{author}{Matthew \surnamestart Amy\surnameend},
  \bibinfo{author}{Jianxin \surnamestart Chen\surnameend} \&
  \bibinfo{author}{Neil~J. \surnamestart Ross\surnameend}
  (\bibinfo{year}{2018}): \emph{\bibinfo{title}{A Finite Presentation of
  {CNOT}-{D}ihedral Operators}}.
\newblock {\slshape \bibinfo{journal}{EPTCS}} \bibinfo{volume}{266}, pp.
  \bibinfo{pages}{84--97}, \doi{10.4204/eptcs.266.5}.

\bibitemdeclare{article}{AmyGlaudell2020}
\bibitem{AmyGlaudell2020}
\bibinfo{author}{Matthew \surnamestart Amy\surnameend}, \bibinfo{author}{Andrew
  \surnamestart Glaudell\surnameend} \& \bibinfo{author}{Neil \surnamestart
  Ross\surnameend} (\bibinfo{year}{2020}):
  \emph{\bibinfo{title}{Number-Theoretic Characterizations of Some Restricted
  Clifford+T Circuits}}.
\newblock {\slshape \bibinfo{journal}{Quantum}} \bibinfo{volume}{4}, p.
  \bibinfo{pages}{252}, \doi{10.22331/q-2020-04-06-252}.

\bibitemdeclare{inproceedings}{AGLR2023}
\bibitem{AGLR2023}
\bibinfo{author}{Matthew \surnamestart Amy\surnameend},
  \bibinfo{author}{Andrew~N. \surnamestart Glaudell\surnameend},
  \bibinfo{author}{Sarah~Meng \surnamestart Li\surnameend} \&
  \bibinfo{author}{Neil~J. \surnamestart Ross\surnameend}
  (\bibinfo{year}{2023}): \emph{\bibinfo{title}{Improved Synthesis of
  {T}offoli-{H}adamard Circuits}}.
\newblock In: {\slshape \bibinfo{booktitle}{Reversible Computation}},
  \bibinfo{publisher}{Springer-Verlag}, pp. \bibinfo{pages}{169--209},
  \doi{10.1007/978-3-031-38100-3\_12}.

\bibitemdeclare{misc}{supp}
\bibitem{supp}
\bibinfo{author}{Matthew \surnamestart Amy\surnameend},
  \bibinfo{author}{Neil~J. \surnamestart Ross\surnameend} \&
  \bibinfo{author}{Scott \surnamestart Wesley\surnameend}
  (\bibinfo{year}{2024}): \emph{\bibinfo{title}{Supplement: A Sound and
  Complete Equational Theory for 3-Qubit {T}offoli-{H}adamard Circuits}}.
\newblock \bibinfo{note}{Available as an ancillary file from the arXiv page of
  this paper.}

\bibitemdeclare{book}{BaaderNipkow1998}
\bibitem{BaaderNipkow1998}
\bibinfo{author}{Franz \surnamestart Baader\surnameend} \&
  \bibinfo{author}{Tobias \surnamestart Nipkow\surnameend}
  (\bibinfo{year}{1998}): \emph{\bibinfo{title}{Term Rewriting and All That}}.
\newblock \bibinfo{publisher}{Cambridge University Press},
  \doi{10.1017/CBO9781139172752}.

\bibitemdeclare{article}{BianSelinger2022}
\bibitem{BianSelinger2022}
\bibinfo{author}{Xiaoning \surnamestart Bian\surnameend} \&
  \bibinfo{author}{Peter \surnamestart Selinger\surnameend}
  (\bibinfo{year}{2023}): \emph{\bibinfo{title}{Generators and relations for
  2-qubit {C}lifford+{$T$} operators}}.
\newblock {\slshape \bibinfo{journal}{EPTCS}} \bibinfo{volume}{394}, pp.
  \bibinfo{pages}{13--28}, \doi{10.4204/eptcs.394.2}.

\bibitemdeclare{article}{BS2023-cliffordcs3}
\bibitem{BS2023-cliffordcs3}
\bibinfo{author}{Xiaoning \surnamestart Bian\surnameend} \&
  \bibinfo{author}{Peter \surnamestart Selinger\surnameend}
  (\bibinfo{year}{2023}): \emph{\bibinfo{title}{Generators and relations for
  3-qubit {Clifford+$CS$} operators}}.
\newblock {\slshape \bibinfo{journal}{EPTCS}} \bibinfo{volume}{384}, pp.
  \bibinfo{pages}{114--126}, \doi{10.4204/eptcs.384.7}.

\bibitemdeclare{book}{BookOtto1993}
\bibitem{BookOtto1993}
\bibinfo{author}{Ronald~V. \surnamestart Book\surnameend} \&
  \bibinfo{author}{Friedrich \surnamestart Otto\surnameend}
  (\bibinfo{year}{1993}): \emph{\bibinfo{title}{String-Rewriting Systems}}.
\newblock \bibinfo{publisher}{Springer}, \doi{10.1007/978-1-4613-9771-7}.

\bibitemdeclare{article}{dalla2013toffoli}
\bibitem{dalla2013toffoli}
\bibinfo{author}{Maria Luisa~Dalla \surnamestart Chiara\surnameend},
  \bibinfo{author}{Antonio \surnamestart Ledda\surnameend},
  \bibinfo{author}{Giuseppe \surnamestart Sergioli\surnameend} \&
  \bibinfo{author}{Roberto \surnamestart Giuntini\surnameend}
  (\bibinfo{year}{2013}): \emph{\bibinfo{title}{The {T}offoli-{H}adamard gate
  system: an algebraic approach}}.
\newblock {\slshape \bibinfo{journal}{Journal of Philosophical Logic}}
  \bibinfo{volume}{42}, pp. \bibinfo{pages}{467--481},
  \doi{10.1007/s10992-013-9271-9}.

\bibitemdeclare{inproceedings}{Clement2022}
\bibitem{Clement2022}
\bibinfo{author}{Alexandre \surnamestart Cl\'ement\surnameend},
  \bibinfo{author}{Nicolas \surnamestart Heurtel\surnameend},
  \bibinfo{author}{Shane \surnamestart Mansfield\surnameend},
  \bibinfo{author}{Simon \surnamestart Perdrix\surnameend} \&
  \bibinfo{author}{Beno\^\i{}t \surnamestart Valiron\surnameend}
  (\bibinfo{year}{2023}): \emph{\bibinfo{title}{A Complete Equational Theory
  for Quantum Circuits}}.
\newblock In: {\slshape \bibinfo{booktitle}{LiCS}}, \bibinfo{publisher}{IEEE},
  pp. \bibinfo{pages}{1--13}, \doi{10.1109/LICS56636.2023.10175801}.

\bibitemdeclare{book}{ConwaySloane1987}
\bibitem{ConwaySloane1987}
\bibinfo{author}{J.~H. \surnamestart Conway\surnameend} \&
  \bibinfo{author}{N.~J.~A. \surnamestart Sloane\surnameend}
  (\bibinfo{year}{1987}): \emph{\bibinfo{title}{Sphere-Packings, Lattices, and
  Groups}}.
\newblock \bibinfo{publisher}{Springer-Verlag},
  \doi{10.1007/978-1-4757-6568-7}.

\bibitemdeclare{article}{EndrullisGeuvers2011}
\bibitem{EndrullisGeuvers2011}
\bibinfo{author}{Jörg \surnamestart Endrullis\surnameend},
  \bibinfo{author}{Herman \surnamestart Geuvers\surnameend},
  \bibinfo{author}{Jakob~Grue \surnamestart Simonsen\surnameend} \&
  \bibinfo{author}{Hans \surnamestart Zantema\surnameend}
  (\bibinfo{year}{2011}): \emph{\bibinfo{title}{Levels of undecidability in
  rewriting}}.
\newblock {\slshape \bibinfo{journal}{Information and Computation}}
  \bibinfo{volume}{209}(\bibinfo{number}{2}), pp. \bibinfo{pages}{227--245},
  \doi{10.1016/j.ic.2010.09.003}.

\bibitemdeclare{misc}{Goucher2020}
\bibitem{Goucher2020}
\bibinfo{author}{Adam~P. \surnamestart Goucher\surnameend}
  (\bibinfo{year}{2020}): \emph{\bibinfo{title}{Minimalistic Quantum
  Computation}}.
\newblock
  \bibinfo{howpublished}{\url{https://cp4space.hatsya.com/2020/05/10/minimalistic-quantum-computation/}}.
\newblock \bibinfo{note}{Accessed: 2023-11-26}.

\bibitemdeclare{article}{HenryMimram2022}
\bibitem{HenryMimram2022}
\bibinfo{author}{Simon \surnamestart Henry\surnameend} \&
  \bibinfo{author}{Samuel \surnamestart Mimram\surnameend}
  (\bibinfo{year}{2022}): \emph{\bibinfo{title}{Tietze Equivalences as Weak
  Equivalences}}.
\newblock {\slshape \bibinfo{journal}{Applied Categorical Structures}}
  \bibinfo{volume}{30}(\bibinfo{number}{3}), pp. \bibinfo{pages}{453--483},
  \doi{10.1007/s10485-021-09662-w}.

\bibitemdeclare{book}{Humphreys1990}
\bibitem{Humphreys1990}
\bibinfo{author}{James~E. \surnamestart Humphreys\surnameend}
  (\bibinfo{year}{1990}): \emph{\bibinfo{title}{Reflection Groups and Coxeter
  Groups}}.
\newblock \bibinfo{series}{Cambridge Studies in Advanced Mathematics},
  \bibinfo{publisher}{Cambridge University Press},
  \doi{10.1017/CBO9780511623646}.

\bibitemdeclare{book}{Johnson1990}
\bibitem{Johnson1990}
\bibinfo{author}{D.~L. \surnamestart Johnson\surnameend}
  (\bibinfo{year}{1990}): \emph{\bibinfo{title}{Presentations of Groups}}.
\newblock \bibinfo{publisher}{Cambridge University Press},
  \doi{10.1017/CBO9781139168410}.

\bibitemdeclare{article}{LiRoss2021}
\bibitem{LiRoss2021}
\bibinfo{author}{Sarah~Meng \surnamestart Li\surnameend},
  \bibinfo{author}{Neil~J. \surnamestart Ross\surnameend} \&
  \bibinfo{author}{Peter \surnamestart Selinger\surnameend}
  (\bibinfo{year}{2021}): \emph{\bibinfo{title}{Generators and Relations for
  the Group $On(Z[1/2])$}}.
\newblock {\slshape \bibinfo{journal}{EPTCS}} \bibinfo{volume}{343}, pp.
  \bibinfo{pages}{210--264}, \doi{10.4204/eptcs.343.11}.

\bibitemdeclare{article}{MakaryRoss2021}
\bibitem{MakaryRoss2021}
\bibinfo{author}{Justin \surnamestart Makary\surnameend},
  \bibinfo{author}{Neil~J. \surnamestart Ross\surnameend} \&
  \bibinfo{author}{Peter \surnamestart Selinger\surnameend}
  (\bibinfo{year}{2021}): \emph{\bibinfo{title}{Generators and Relations for
  Real Stabilizer Operators}}.
\newblock {\slshape \bibinfo{journal}{EPTCS}} \bibinfo{volume}{343}, p.
  \bibinfo{pages}{14–36}, \doi{10.4204/eptcs.343.2}.

\bibitemdeclare{inproceedings}{MouraBjorner2008}
\bibitem{MouraBjorner2008}
\bibinfo{author}{Leonardo \surnamestart de~Moura\surnameend} \&
  \bibinfo{author}{Nikolaj \surnamestart Bj{\o}rner\surnameend}
  (\bibinfo{year}{2008}): \emph{\bibinfo{title}{{Z3}: An Efficient {SMT}
  Solver}}.
\newblock In: {\slshape \bibinfo{booktitle}{TACAS}},
  \bibinfo{publisher}{Springer-Verlag}, pp. \bibinfo{pages}{337--340},
  \doi{10.5555/1792734.1792766}.

\bibitemdeclare{article}{Planat2023}
\bibitem{Planat2023}
\bibinfo{author}{Michel \surnamestart Planat\surnameend}
  (\bibinfo{year}{2011}): \emph{\bibinfo{title}{Clifford group dipoles and the
  enactment of {W}eyl/{C}oxeter group {$W(E8)$} by entangling gates}}.
\newblock {\slshape \bibinfo{journal}{Gen. Math. Notes}}
  \bibinfo{volume}{2}(\bibinfo{number}{1}), pp. \bibinfo{pages}{96--113},
  \doi{10.22331/q-2020-04-06-252}.

\bibitemdeclare{article}{Sel2014-clifford}
\bibitem{Sel2014-clifford}
\bibinfo{author}{Peter \surnamestart Selinger\surnameend}
  (\bibinfo{year}{2015}): \emph{\bibinfo{title}{Generators and Relations for
  $n$-Qubit {Clifford} Operators}}.
\newblock {\slshape \bibinfo{journal}{LMCS}}
  \bibinfo{volume}{11}(\bibinfo{number}{2:10}), pp. \bibinfo{pages}{1--17},
  \doi{10.2168/LMCS-11(2:10)2015}.

\bibitemdeclare{article}{Shi2003}
\bibitem{Shi2003}
\bibinfo{author}{Yaoyun \surnamestart Shi\surnameend} (\bibinfo{year}{2003}):
  \emph{\bibinfo{title}{Both Toffoli and controlled-NOT need little help to do
  universal quantum computing}}.
\newblock {\slshape \bibinfo{journal}{Quantum Info. Comput.}}
  \bibinfo{volume}{3}(\bibinfo{number}{1}), pp. \bibinfo{pages}{84--92},
  \doi{10.5555/2011508.2011515}.

\bibitemdeclare{article}{packing}
\bibitem{packing}
\bibinfo{author}{Maryna~S. \surnamestart Viazovska\surnameend}
  (\bibinfo{year}{2017}): \emph{\bibinfo{title}{The sphere packing problem in
  dimension 8}}.
\newblock {\slshape \bibinfo{journal}{Annals of Mathematics}}
  \bibinfo{volume}{185}(\bibinfo{number}{3}), pp. \bibinfo{pages}{991--1015},
  \doi{10.4007/annals.2017.185.3.7}.

\bibitemdeclare{article}{vilmart2018zx}
\bibitem{vilmart2018zx}
\bibinfo{author}{Renaud \surnamestart Vilmart\surnameend}
  (\bibinfo{year}{2019}): \emph{\bibinfo{title}{A {ZX}-calculus with triangles
  for {T}offoli-{H}adamard, {C}lifford+{$T$}, and beyond}}.
\newblock {\slshape \bibinfo{journal}{EPTCS}} \bibinfo{volume}{287}, pp.
  \bibinfo{pages}{313--344}, \doi{10.4204/eptcs.287.18}.

\bibitemdeclare{inproceedings}{vilmart2023completeness}
\bibitem{vilmart2023completeness}
\bibinfo{author}{Renaud \surnamestart Vilmart\surnameend}
  (\bibinfo{year}{2023}): \emph{\bibinfo{title}{Completeness of sum-over-paths
  for {T}offoli-{H}adamard and the dyadic fragments of quantum computation}}.
\newblock In: {\slshape \bibinfo{booktitle}{CSL}}, {\slshape
  \bibinfo{series}{LIPIcs}} \bibinfo{volume}{252}, \bibinfo{publisher}{Schloss
  Dagstuhl -- Leibniz-Zentrum f{\"u}r Informatik}, pp.
  \bibinfo{pages}{36:1--36:17}, \doi{10.4230/LIPIcs.CSL.2023.36}.

\end{thebibliography}

\newpage
\appendix
\section{Semantic Tietze Transformations}
\label{Appendix:Tietze}

This section recalls what it means for a function to induce a monoid homomorphism.
This is then used to prove the soundness and completeness of Tietze transformations with respect to semantic interpretations.
The majority of this section is dedicated to proving that all induced homomorphisms are isomorphisms, and in the case of \textbf{Gen($+$)}, the extension is unique.
The uniqueness of this extension is necessary to prove that each generator in $\Sigma_D$ has the intended matrix semantics.

\subsection{Induced Monoid Homomorphisms}

Let $\Sigma$ be an alphabet and $M$ a monoid.
For each function $f: \Sigma \to M$, define the function $f^*: \Sigma^* \to M$ such that $f^*( x_1 \cdot x_2 \cdots x_n ) = 1_M \circ f( x_1 ) \circ f( x_2 ) \circ \cdots \circ f( x_n )$ for all $x_1 \cdot x_2 \cdots x_n \in \Sigma^*$.
It can then be shown that $f^*$ is the unique monoid homomorphism such that $f^*( x ) = f( x )$ for all $x \in \Sigma$~\cite{Johnson1990}.
Given a set of relations $R \subseteq \Sigma^* \times \Sigma^*$, it can then be asked whether $f$ induces a monoid homomorphism between $G = \langle \Sigma \mid R \rangle$ and $M$.
This question is answered by the following theorem.

\begin{theorem}[\cite{BookOtto1993}]
    \label{Thm:App:IndHom}
    Let $M$ and $G = \langle \Sigma \mid R \rangle$ be monoids with $f: \Sigma \to M$.
    There there exists a unique monoid homomorphism $\varphi: G \to M$ such that $f^* = \varphi \circ \pi_G $ if and only if $f^*( q ) = f^*( r )$ for all $q \approx_R r$.
    In this case, $\Im( \varphi ) = \langle \varphi( \pi_G( \Sigma ) ) \rangle$.
\end{theorem}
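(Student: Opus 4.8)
The plan is to recognize this as the universal property of the quotient monoid $G = \Sigma^*/{\sim_R}$ and to exploit the characterization recalled above that $\pi_G(u) = \pi_G(v)$ holds exactly when $u \sim_R v$. The forward implication is immediate: if such a $\varphi$ exists, then for any $q \approx_R r$ we have $q \sim_R r$, hence $\pi_G(q) = \pi_G(r)$, and applying $\varphi$ gives $f^*(q) = \varphi(\pi_G(q)) = \varphi(\pi_G(r)) = f^*(r)$.

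For the reverse implication, I would define $\varphi$ directly on $G$ by $\varphi(\pi_G(w)) = f^*(w)$, and the crux is to show this is \emph{well-defined}, i.e.\ that $u \sim_R v$ implies $f^*(u) = f^*(v)$. I would prove this by induction on the length of a rewriting sequence witnessing $u \sim_R v$, so that it suffices to treat a single step $u \stackrel{R}{\leftrightarrow} v$, where $u = s \cdot q \cdot t$ and $v = s \cdot r \cdot t$ for some $(q,r) \in R$ and $s, t \in \Sigma^*$. Since $f^*$ is a monoid homomorphism, $f^*(u) = f^*(s)\, f^*(q)\, f^*(t)$ and $f^*(v) = f^*(s)\, f^*(r)\, f^*(t)$, and these agree by the hypothesis $f^*(q) = f^*(r)$. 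Reflexivity, symmetry, and transitivity of $(\sim_R)$ then propagate the equality along the whole sequence.

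With $\varphi$ well-defined, the homomorphism property is routine: $\varphi(\pi_G(u)\,\pi_G(v)) = \varphi(\pi_G(u \cdot v)) = f^*(u \cdot v) = f^*(u)\, f^*(v) = \varphi(\pi_G(u))\,\varphi(\pi_G(v))$, while $\varphi$ preserves the identity because $f^*(\epsilon) = 1_M$. The equation $f^* = \varphi \circ \pi_G$ holds by construction. Uniqueness follows from surjectivity of $\pi_G$: any $\varphi'$ with $f^* = \varphi' \circ \pi_G$ satisfies $\varphi'(\pi_G(w)) = f^*(w) = \varphi(\pi_G(w))$ for every $w \in \Sigma^*$, so $\varphi' = \varphi$. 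For the image, note that $G = \langle \pi_G(\Sigma) \rangle$ since $\pi_G$ is a surjective homomorphism out of the free monoid on $\Sigma$; applying $\varphi$ and using that homomorphisms send generated submonoids to generated submonoids yields $\Im(\varphi) = \varphi(\langle \pi_G(\Sigma)\rangle) = \langle \varphi(\pi_G(\Sigma))\rangle$.

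The only genuine obstacle is the well-definedness step, and even there the difficulty is entirely absorbed by the equivalence between $\pi_G(u) = \pi_G(v)$ and $u \sim_R v$ together with the single-step rewrite computation; the remainder is formal. One subtlety worth flagging is that the single-step argument invokes $f^*(q) = f^*(r)$ only for pairs literally in $R$, so it is essential that the hypothesis quantifies over the generating relations $q \approx_R r$ rather than over the full congruence $(\sim_R)$ — which is precisely what makes the theorem a usable criterion when checking that a semantic interpretation $\interp{\cdot}$ descends to a presentation.
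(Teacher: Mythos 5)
Your proposal is correct. Note, however, that the paper does not prove this theorem at all: it is quoted directly from Book and Otto's monograph (the citation \cite{BookOtto1993} in the theorem header), so there is no in-paper argument to compare against. Your proof is the standard one for this universal property — verifying well-definedness of $\varphi$ on the quotient by inducting over a rewrite sequence and reducing to a single step $s \cdot q \cdot t \leftrightarrow s \cdot r \cdot t$, then getting the homomorphism property, uniqueness, and the image claim $\Im(\varphi) = \langle \varphi(\pi_G(\Sigma)) \rangle$ by routine computations — and it correctly leans on the fact, recalled in the paper, that $\pi_G(u) = \pi_G(v)$ if and only if $u \sim_R v$. Your closing remark is also well taken: the practical value of the criterion is precisely that the hypothesis only quantifies over the pairs in $R$, not over the full congruence, which is what makes it checkable for the semantic interpretations used throughout the paper.
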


\cref{Thm:App:IndHom} characterizes when $f$ induces a monoid homomorphism, and how to construct this induced homomorphism $\varphi$.
It can then be asked how the construction of $\varphi$ interacts with the introduction or elimination of generators.
As outlined by the following theorems, the elimination of generators corresponds to certain restrictions of $\varphi$, whereas the introduction of generators corresponds to certain unique extensions of $\varphi$.
 
\begin{lemma}
    \label{Lemma:App:ElimDGen0}
    Let $\Sigma$ be an alphabet, $x \in \Sigma$, $\Sigma' = \Sigma \setminus \{ x \}$, and $D = \{ x \approx w \}$ for some $w \in (\Sigma')^*$.
    If $q \in \Sigma^*$, then there exists a $q' \in ( \Sigma' )^*$ such that $q \sim_{D} q'$.
\end{lemma}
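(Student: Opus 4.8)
The plan is to proceed by induction on the number of occurrences of the symbol $x$ in the word $q$; call this number $n$. The crucial observation is that the relation $D = \{ x \approx w \}$ rewrites a single $x$ into $w$, and since $w \in (\Sigma')^*$ by hypothesis, the word $w$ contains no occurrence of $x$. Hence each application of $D$ to an occurrence of $x$ strictly decreases the occurrence count, which is exactly what drives the induction.

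For the base case $n = 0$, the word $q$ already lies in $(\Sigma')^*$, so I would simply take $q' = q$ and appeal to the reflexivity of $(\sim_D)$ to conclude $q \sim_D q'$. For the inductive step, I would assume the claim for all words with $k$ occurrences of $x$ and suppose $q$ has $k + 1$ occurrences. I would then factor $q$ at its first occurrence of $x$, writing $q = s \cdot x \cdot t$ with $s \in (\Sigma')^*$ and $t \in \Sigma^*$. Applying $D$ gives $q = s \cdot x \cdot t \xrightarrow{D} s \cdot w \cdot t$, and since neither $s$ nor $w$ contains $x$, the word $s \cdot w \cdot t$ has exactly $k$ occurrences of $x$. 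The inductive hypothesis then supplies a $q' \in (\Sigma')^*$ with $s \cdot w \cdot t \sim_D q'$, and transitivity of $(\sim_D)$ yields $q \sim_D q'$, completing the step.

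Since the entire argument consists of a single rewrite followed by an appeal to the inductive hypothesis, I do not expect a serious obstacle; the only point requiring care is the bookkeeping that the occurrence count genuinely drops, which rests entirely on $w$ being a word over $\Sigma'$ rather than over $\Sigma$. It is also worth recording that $(\sim_D)$ is defined as the reflexive, symmetric, and transitive closure of $\xrightarrow{D}$, so both the single forward rewrite $q \xrightarrow{D} s \cdot w \cdot t$ and the chaining with the inductive hypothesis are licensed directly by the definition.
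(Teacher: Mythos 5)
Your proof is correct and follows essentially the same route as the paper's: induction on the number of occurrences of $x$ in $q$, one application of the rewrite $x \approx w$ to reduce the count (valid precisely because $w \in (\Sigma')^*$), and then the inductive hypothesis plus transitivity of $(\sim_D)$. The only cosmetic difference is that you split $q$ at its \emph{first} occurrence of $x$ while the paper splits at an arbitrary one; both work identically.
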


\begin{proof}
    Let $f: \Sigma^* \to \mathbb{N}$ count the $x$ symbols in a word.
    Then the proof follows by induction on $f( q )$.
    \begin{enumerate}
    \item[--] \textbf{Base Case}.
          If $f( q ) = 0$, then $q \in (\Sigma')^*$.
    \item[--] \textbf{Inductive Hypothesis}.
          Assume that for some $k \in \mathbb{N}$, if $f( q ) = k$, then there exists some $q' \in (\Sigma')^*$ such that $q \sim_D q'$.
    \item[--] \textbf{Inductive Step}.
          Assume $f( q ) = k + 1$.
          Since $f( q ) > 1$, then there exists some $u, v \in \Sigma^*$ such that $q = u \cdot x \cdot v$.
          Since $k + 1 = f( q ) = f( u \cdot x \cdot v ) = f( u ) + f( x ) + f( v ) = f( u ) + f( v ) + 1$, then $k = f( u ) + f( v )$.
          Since $x \approx_D w$, then $q \sim_D u \cdot w \cdot v$ with $f( u \cdot w \cdot v ) = f( u ) + f( v ) = k$.
          Then by the inductive hypothesis, there exists some $q' \in ( \Sigma' )^*$ such that $u \cdot w \cdot v \sim_D q'$.
          Then $q \sim q'$.
    \end{enumerate}
    By the principle of induction, there exists some $q' \in (\Sigma')^*$ such that $q \sim_D q'$.
\end{proof}

\begin{theorem}
    \label{Cor:App:IndExt}
    Let $M$ and $G = \langle \Sigma \mid R \rangle$ be monoids with $f: \Sigma \to M$, and $H = \langle \Sigma \cup \{ x \} \mid R \cup \{ x \approx w \} \rangle$ be a monoid for some $x \not \in \Sigma$ and $w \in \Sigma^*$.
    Define $g: \Sigma \cup \{ x \} \to M$ such that $g|_{\Sigma} = f$ and $g: x \mapsto f^*( w )$.
    If $f$ induces a monoid homomorphism from $G$ to $M$, then $g$ is the unique extension of $f$ to induce a monoid homomorphism from $H$ to $M$.
    Furthermore, if $f$ induces an injection (resp. surjection) from $G$ to $M$, then $g$ induces an injection (resp. surjection) from $H$ to $M$.
\end{theorem}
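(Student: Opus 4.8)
The plan is to reduce every claim to the characterization of induced homomorphisms in \cref{Thm:App:IndHom}. The crucial preliminary observation is that $g^*$ and $f^*$ agree on all of $\Sigma^*$: since $g|_{\Sigma} = f$, the restriction of $g^*$ to $\Sigma^*$ is a monoid homomorphism agreeing with $f$ on the generators, so by uniqueness of the homomorphic extension it equals $f^*$. In particular, because $w \in \Sigma^*$, we get $g^*(x) = g(x) = f^*(w) = g^*(w)$.

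First I would establish that $g$ induces a homomorphism $\psi \colon H \to M$. By \cref{Thm:App:IndHom} this amounts to checking $g^*(q') = g^*(r')$ for every defining relation $q' \approx r'$ of $H$. For the new relation $x \approx w$ this is exactly the identity $g^*(x) = g^*(w)$ just noted. For a relation $q \approx r$ in $R$, both words lie in $\Sigma^*$, so $g^*(q) = f^*(q)$ and $g^*(r) = f^*(r)$; since $f$ induces a homomorphism, \cref{Thm:App:IndHom} gives $f^*(q) = f^*(r)$, hence $g^*(q) = g^*(r)$. This produces $\psi$ with $g^* = \psi \circ \pi_H$. Uniqueness of $g$ among extensions of $f$ is then immediate: any extension $g'$ respecting $x \approx w$ must satisfy $g'(x) = (g')^*(x) = (g')^*(w) = f^*(w) = g(x)$, so $g' = g$.

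For surjectivity, suppose $\varphi \colon G \to M$ is the surjection induced by $f$, so that $M = \Im(\varphi) = \langle \varphi(\pi_G(\Sigma)) \rangle$ by \cref{Thm:App:IndHom}. For each $y \in \Sigma$ we have $\psi(\pi_H(y)) = g(y) = f(y) = \varphi(\pi_G(y))$, so $\Im(\psi) \supseteq \langle \varphi(\pi_G(\Sigma)) \rangle = M$, whence $\psi$ is surjective.

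The main obstacle is injectivity, and it is where \cref{Lemma:App:ElimDGen0} is essential. Set $D = \{ x \approx w \}$. Given $a, b \in H$ with $\psi(a) = \psi(b)$, I would choose representatives $u, v \in (\Sigma \cup \{x\})^*$; by \cref{Lemma:App:ElimDGen0} there exist $u', v' \in \Sigma^*$ with $u \sim_D u'$ and $v \sim_D v'$, and since $D \subseteq R \cup D$ this gives $a = \pi_H(u')$ and $b = \pi_H(v')$. Because $u', v' \in \Sigma^*$, evaluation yields $f^*(u') = g^*(u') = \psi(a) = \psi(b) = g^*(v') = f^*(v')$, i.e. $\varphi(\pi_G(u')) = \varphi(\pi_G(v'))$. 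Injectivity of $\varphi$ then forces $\pi_G(u') = \pi_G(v')$, equivalently $u' \sim_R v'$. Since $R \subseteq R \cup D$, the same rewriting holds over $R \cup D$, so $\pi_H(u') = \pi_H(v')$ and therefore $a = b$. The delicate point throughout is keeping the two rewriting systems $\sim_R$ and $\sim_{R \cup D}$ straight: the elimination lemma lets one descend from words over $\Sigma \cup \{x\}$ to words over $\Sigma$ using only $D$, after which injectivity transfers along $\varphi$ and lifts back because the relations of $G$ embed into those of $H$.
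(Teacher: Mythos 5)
Your proposal is correct and follows essentially the same route as the paper's proof: existence and uniqueness of the induced homomorphism via \cref{Thm:App:IndHom} (with the key identity $g^*|_{\Sigma^*} = f^*$), injectivity by descending to words over $\Sigma$ via \cref{Lemma:App:ElimDGen0} and transferring along the injectivity of $\varphi$, and surjectivity from $\Im(\psi) \supseteq \varphi(\pi_G(\Sigma))$ together with the generation statement in \cref{Thm:App:IndHom}. The only cosmetic difference is that you quantify over elements of $H$ and pick word representatives, where the paper quantifies directly over words; the arguments are otherwise identical.
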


\begin{proof}
    Let $\Pi = \Sigma \cup \{ x \}$ and $Q = R \cup \{ x \approx w \}$.
    Assume that $f$ induces a homomorphism from $G$ to $M$.
    \begin{itemize}
        \item[--] (\textbf{Induced Hom}).
              Let $q \approx_Q r$.
              Then either $q \approx_R r$ or $( q, r ) = ( x, w )$.
              First, assume that $q \approx_R r$.
              Then $f^*( q ) = f^*( r )$ by \cref{Thm:App:IndHom}.
              Then $g^*( q ) = (g|_\Sigma)^*( q ) = f^*( q ) = f^*( r ) = (g|_\Sigma)^*( r ) = g^*( r )$.
              Next, assume that $( q, r ) = ( x, w )$.
              Then $g^*( x ) = g( x ) = f^*( w ) = (g|_\Sigma)^*( w ) = g^*( w )$.
              In either case $g^*( q ) = g^*( r )$.
              Since $q \approx_R r$ was arbitrary, then $g^*( q ) = g^*( r )$ for all $q \approx_R r$.
              Then $g$ induce a monoid homomorphism from $H$ to $M$ by \cref{Thm:App:IndHom}.
        \item[--] (\textbf{Uniqueness}).
              Assume that $k: \Sigma \cup \{ x \} \to M$ is an extension of $f$ which induces a monoid homomorphism.
              Since $k$ induces a monoid homomorphism and $x \approx_Q w$, then $k^*( x ) = k^*( w )$ by \cref{Thm:App:IndHom}.
              Then $k( x ) = k^*( x ) = k^*( w ) = (k|_\Sigma)^*( w ) = f^*( w ) = g( x )$ by construction of $g$.
              Moreover, since $k|_\Sigma = f = g|_\Sigma$, then $k = g$.
              Since $k$ was arbitrary, then $g$ is unique.
        \item[--] (\textbf{Injectivity}).
              Let $f$ induce $\varphi$ and $g$ induce $\rho$.
              Assume that $\varphi$ is injective.
              Let $q, r \in \Pi^*$ such that $\rho( \pi_H( q ) ) = \rho( \pi_H( r ) )$.
              Since $x \approx_Q w$, then by \cref{Lemma:App:ElimDGen0} there exists $q', r' \in \Sigma^*$ such that $q \sim_Q q'$ and $r \sim_Q r'$.
              Then $\pi_H( q ) = \pi_H( q' )$ and $\pi_H( r ) = \pi_H( r' )$.
              Then $\rho( \pi_H( q' ) ) = \rho( \pi_H( r' ) )$.
              Then $g^*(q') = g^*(r')$.
              Since $q', r' \in \Sigma^*$, then $f^*(q') = f^*(r')$.              Then $\varphi( \pi_G( q' ) ) = \varphi( \pi_G( r' ) )$.
              Since $\varphi$ is injective, then $\pi_G( q' ) = \pi_G( r' )$.
              Then $q' \sim_R r'$.
              Since $R \subseteq Q$, then $q' \sim_Q r'$.
              Then $\pi_H( q' ) = \pi_H( r' )$.
              Then $\pi_H( q ) = \pi_H( r )$.
              Since $q$ and $r$ were arbitrary, then $\rho$ is injective.
        \item[--] (\textbf{Surjectivity}).
              Let $f$ induce $\varphi$ and $g$ induce $\rho$.
              Assume that $\varphi$ is surjective.
              Since $\varphi( \pi_G( y ) ) = f^*( y ) = g^*( y ) = \rho( \pi_H( y ) )$ for each $y \in \Sigma$, then $\varphi( \pi_G( \Sigma ) ) \subseteq \Im( \rho )$.
              Since $\varphi$ is surjective, then $H = \Im( \varphi ) = \langle \varphi( \pi_G( \Sigma ) ) \rangle \le \Im( \rho ) \le H$ and $\rho$ is surjective.
    \end{itemize}
    Therefore, $g$ is the unique extension of $f$ to induce a monoid homomorphism from $H$ to $M$, with $g$ inducing an injection (resp. surjection) whenever $f$ induces an injection (resp. surjection).
\end{proof}

\begin{theorem}
    \label{Cor:App:IndRes}
    Let $M$ and $G = \langle \Sigma \mid R \rangle$ be monoids with $f: \Sigma \to M$ and $H = \langle \Pi \mid Q \rangle$ where $\Pi = \Sigma \setminus \{ x \}$ for some $x \in \Sigma$ and $Q = R \setminus \{ x \approx w \}$ for some $x \approx_R w$.
    If $Q \subseteq \Pi^* \times \Pi^*$ and $f$ induces a monoid homomorphism from $G$ to $M$, then $f|_{\Pi}$ induces a monoid homomorphism from $H$ to $M$.
    Furthermore, if $f$ induces an injection (resp. surjection) from $G$ to $M$, then $f|_{\Pi}$ induces an injection (resp. surjection) from $H$ to $M$.
\end{theorem}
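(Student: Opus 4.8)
The plan is to treat this statement as the dual of \cref{Cor:App:IndExt}. Since $(x,w) \in R$, $Q = R \setminus \{x \approx w\}$, and $w \in \Pi^*$ (as required by the \textbf{Gen($-$)} hypothesis), we have the literal identity of presentations $\langle \Sigma \mid R \rangle = \langle \Pi \cup \{x\} \mid Q \cup \{x \approx w\} \rangle$, so the pair $(H, G)$ here is an instance of the pair $(G, H)$ of \cref{Cor:App:IndExt} read from the smaller presentation to the larger one. However, \cref{Cor:App:IndExt} transfers injectivity and surjectivity only from the smaller presentation $H$ to the larger $G$, which is the reverse of what is needed; I would therefore argue the transfer directly. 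Writing $\varphi : G \to M$ for the homomorphism induced by $f$, the plan is to first produce the homomorphism induced by $f|_\Pi$, then relate it to $\varphi$ through the inclusion $\Pi \hookrightarrow \Sigma$, and finally transfer the two properties along that relation.

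First I would show that $f|_\Pi$ induces a homomorphism $\psi : H \to M$. By \cref{Thm:App:IndHom} it suffices to check $(f|_\Pi)^*(q) = (f|_\Pi)^*(r)$ for every $q \approx_Q r$. Since $Q \subseteq R$ and $q, r \in \Pi^*$, we have $(f|_\Pi)^*(q) = f^*(q)$ and $(f|_\Pi)^*(r) = f^*(r)$, while $f^*(q) = f^*(r)$ because $f$ induces $\varphi$ and $q \approx_R r$. Next, the inclusion sending each $y \in \Pi$ to $\pi_G(y)$ induces (again by \cref{Thm:App:IndHom}, using $Q \subseteq R$) a homomorphism $\iota : H \to G$ with $\iota(\pi_H(v)) = \pi_G(v)$ for all $v \in \Pi^*$, and comparing on generators gives $\psi = \varphi \circ \iota$. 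For surjectivity, \cref{Thm:App:IndHom} yields $\Im(\psi) = \langle f(\Pi) \rangle$ and $M = \Im(\varphi) = \langle f(\Sigma) \rangle$; since $x \approx_R w$ forces $f(x) = f^*(x) = f^*(w) \in \langle f(\Pi) \rangle$ (using $w \in \Pi^*$), these coincide, so $\varphi$ surjective implies $\psi$ surjective.

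The crux is injectivity, which I would reduce to injectivity of $\iota$, i.e. to the claim that for $x$-free words $v, v' \in \Pi^*$, $v \sim_R v'$ implies $v \sim_Q v'$. This I would prove with the retraction homomorphism $s : \Sigma^* \to \Pi^*$ determined by $s(y) = y$ for $y \in \Pi$ and $s(x) = w$. Applying $s$ to a single step $u \stackrel{R}{\leftrightarrow} u'$ and splitting on the relation used collapses the step $x \approx w$ to an equality (because $s(x) = w = s(w)$, as $w \in \Pi^*$) and turns any step coming from $Q$ into a genuine $Q$-step (because $s$ fixes words in $\Pi^*$); by induction on the length of an $R$-rewriting, $v = s(v) \sim_Q s(v') = v'$. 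With $\iota$ injective and $\psi = \varphi \circ \iota$, injectivity of $\varphi$ gives injectivity of $\psi$. The main obstacle is precisely this substitution lemma: one must verify that no $R$-derivation between $x$-free words is essentially forced to route through the relation $x \approx w$ in a way that cannot be projected away, a confluence-flavoured bookkeeping argument that the homomorphism $s$ handles uniformly and that tacitly relies on $w \in \Pi^*$.
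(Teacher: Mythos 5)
Your proposal is correct, and its skeleton matches the paper's proof: both establish the induced homomorphism by checking $(f|_\Pi)^*(q) = (f|_\Pi)^*(r)$ for $q \approx_Q r$ using $Q \subseteq R$ and $Q \subseteq \Pi^* \times \Pi^*$, both transfer surjectivity by observing that $f^*(x) = f^*(w)$ puts the image of $x$ inside the submonoid generated by $f(\Pi)$, and both reduce injectivity to the implication that $q \sim_R r$ with $q, r \in \Pi^*$ forces $q \sim_Q r$. The genuine difference is at that last step, which is the crux: the paper simply asserts it (``Since $Q = R \setminus \{x \approx w\}$, $Q \subseteq \Pi^* \times \Pi^*$, and $q, r \in \Pi^*$, then $q \sim_Q r$''), with no argument ruling out $R$-derivations that detour through words containing $x$ via backwards applications of $x \approx w$. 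Your retraction homomorphism $s \colon \Sigma^* \to \Pi^*$ with $s(x) = w$ supplies exactly the missing justification: a step using $x \approx w$ collapses to an equality because $s(x) = w = s(w)$, and a step using $Q$ remains a $Q$-step because $s$ fixes $\Pi^*$ and acts as a congruence on contexts, so induction on the derivation length projects any $R$-derivation between $x$-free words to a $Q$-derivation. (The paper's \cref{Lemma:App:ElimDGen0} is in the same spirit but is only invoked in \cref{Cor:App:IndExt}, and by itself it handles endpoints, not intermediate words, so it does not close this gap either.) You also correctly flag that $w \in \Pi^*$ is an implicit hypothesis — the theorem statement does not say it, but both the paper's surjectivity computation (which forms $\pi_H(w)$ and $f|_\Pi^*(w)$) and your lemma require it, and it is what makes \textbf{Gen($-$)} sound at all. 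In short: same architecture, but your write-up proves the one step the paper hand-waves, so it is the more complete argument.
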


\begin{proof}
    Assume that $f$ induces a homomorphism from $G$ to $M$.
    \begin{itemize}
        \item[--] (\textbf{Induced Hom}).
              Since $f$ induces a monoid homomorphism, then $f^*(q) = f^*(r)$ for all $q \approx_R r$ by \cref{Thm:App:IndHom}.
              Since $Q \subseteq R$ and $Q \subseteq \Pi^* \times \Pi^*$, then $f|_{\Pi}^*(q) = f^*(q) = f^*(r) = f|_{\Pi}^*(r)$ for all $q \approx_Q r$.
              Then $f|_{\Pi}$ induces a monoid homomorphism from $H$ to $M$ by \cref{Thm:App:IndHom}.
        \item[--] (\textbf{Injectivity}).
              Let $f$ induce $\varphi$ and $f|_{\Pi}$ induce $\rho$.
              Assume that $\varphi$ is injective.
              Let $q, r \in \Pi^*$ such that $\rho( \pi_H( q ) ) = \rho( \pi_H( r ) )$.
              Then $f|_{\Pi}^*(q) = f|_{\Pi}^*(r)$.
              Then $f^*(q) = f^*(r)$.
              Then $\varphi( \pi_G( q ) ) = \varphi( \pi_G( r ) )$.
              Since $\varphi$ is an injective, then $\pi_G( q ) = \pi_G( r )$.
              Then $q \sim_R r$.
              Since $Q = R \setminus \{ x \approx w \}$, $Q \subseteq \Pi^* \times \Pi^*$, and $q, r \in \Pi^*$, then $q \sim_Q r$.
              Then $\pi_H( q ) = \pi_H( r )$.
              Since $q$ and $r$ were arbitrary, then $\rho$ is injective.
        \item[--] (\textbf{Surjectivity}).
              Let $f$ induce $\varphi$ and $f|_{\Pi}$ induce $\rho$.
              Assume that $\varphi$ is surjective.
              Since $\varphi( \pi_G( y ) ) = f^*( y ) = f|_{\Pi}^*( y ) = \rho( \pi( y ) ) \in \Im( \rho )$ for each $y \in \Pi$, then $\varphi( \pi_G( \Pi ) ) \subseteq \Im( \rho )$.
              Since $x \approx_R w$, then $\pi_G( x ) = \pi_G( w )$, and consequently $\varphi( \pi_G( x ) ) = \varphi( \pi_G( w ) ) = f^*( w ) = f|_{\Pi}^*( w ) = \rho( \pi_H( w ) ) \in \Im( \rho )$.
              Then $\varphi( \pi_G( \Sigma ) ) \subseteq \Im( \rho )$.
              Since $\varphi$ is surjective, then $H = \Im( \varphi ) \langle \varphi( \pi_G( \Sigma ) ) \le \Im( \rho ) \le H$ and $\rho$ is surjective.
    \end{itemize}
    Therefore, $f|_{\Pi}$ induces a monoid homomorphism from $H$ to $M$ with $f|_{\Pi}$ inducing an injection (resp. surjection) whenever $f$ induces an injection (resp. surjection).
\end{proof}

\subsection{Semantic Interpretations and Relations}

Let $G = \langle \Sigma \mid R \rangle$ be a monoid presentation with an interpretation $\interp{\cdot}_{G}: \Sigma \mapsto H$.
The \textbf{Rel($+$)} transformation states that if $r \in \Sigma^*$ and $q \in \Sigma^*$ with $r \sim_R q$, then $\langle \Sigma \mid R \rangle \cong \langle \Sigma \mid R \cup \{ r \} \rangle$.
In practice, deriving $q$ from $r$ can be challenging, and on a theoretical level, this is known to be undecidable~\cite{EndrullisGeuvers2011}.
However, it is rarely the case that one would try to prove $r \sim_R q$ without some intuition that $\pi_G( r ) = \pi_G( q )$.
In the case of this paper, this intuition comes from knowledge about operators in $O( 8, \D )$.
For example, if $M \circ N = A \circ B$, then for any complete set of relations $R$, it must be the case that $M \cdot N \sim_R A \cdot B$.
More generally, if $\interp{r}_\Sigma^* = \interp{q}_\Sigma^*$ with $\interp{\cdot}_\Sigma^*$ inducing an injection, then $r \sim_R q$.
This claim is established by the following theorem, and used freely throughout the paper to simplify derivations.

\begin{definition}[Valid Semantic Interpretation]
    A semantic interpretation $\interp{\cdot}_{\Sigma}: G \to H$ for a presentation $G = \langle \Sigma \mid R \rangle$ is valid if $\interp{q}_{\Sigma}^* = \interp{r}_{\Sigma}^*$ for all $r \approx_R q$.
\end{definition}

\begin{theorem}
    \label{Thm:SemRewrite}
    Let $G = \langle \Sigma \mid R \rangle$ be a presentation with a valid semantic interpretation $\interp{\cdot}_G: \Sigma \mapsto H$.
    If $\interp{\cdot}_G$ is injective and $\interp{q}_\Sigma^* = \interp{r}_\Sigma^*$, then $q \sim_R r$.
\end{theorem}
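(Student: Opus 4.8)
The plan is to reduce the statement to the injectivity of the homomorphism induced by $\interp{\cdot}_\Sigma$, combined with the standard correspondence between equality in $G$ and rewriting by $R$. First I would observe that the validity of $\interp{\cdot}_\Sigma$ says precisely that $\interp{q}_\Sigma^* = \interp{r}_\Sigma^*$ for every $q \approx_R r$, which is exactly the hypothesis required by \cref{Thm:App:IndHom}. Applying that theorem to $M = H$ and $f = \interp{\cdot}_\Sigma$ yields a unique monoid homomorphism $\varphi \colon G \to H$ satisfying $\interp{\cdot}_\Sigma^* = \varphi \circ \pi_G$. The assumption that $\interp{\cdot}_G$ is injective is to be read as the statement that this induced homomorphism $\varphi$ is injective.

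Next I would unwind the hypothesis $\interp{q}_\Sigma^* = \interp{r}_\Sigma^*$ through the factorization $\interp{\cdot}_\Sigma^* = \varphi \circ \pi_G$. This gives the chain $\varphi(\pi_G(q)) = \interp{q}_\Sigma^* = \interp{r}_\Sigma^* = \varphi(\pi_G(r))$. Since $\varphi$ is injective, it follows that $\pi_G(q) = \pi_G(r)$; that is, the words $q$ and $r$ denote the same element of the quotient $G$.

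Finally, I would invoke the rewriting characterization recalled in \cref{Sec:Prelim} (following \cite[Ch.~7]{BookOtto1993}), namely that $\pi_G(q) = \pi_G(r)$ holds if and only if $q \sim_R r$. Combined with the previous step this immediately yields $q \sim_R r$, as required.

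The argument is essentially a formal chaining of results already established, so no deep obstacle arises; the two points requiring care are confirming that the hypotheses of \cref{Thm:App:IndHom} are met, which is exactly what validity of $\interp{\cdot}_\Sigma$ supplies, and fixing the convention that injectivity of the interpretation means injectivity of the induced homomorphism $\varphi$. Once these are pinned down, the passage from semantic equality to $R$-rewriting is immediate, and it is this equivalence that justifies using matrix identities over $\mathrm{O}(8,\D)$ to drive the \textbf{Rel($+$)} transformations throughout the paper.
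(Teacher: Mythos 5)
Your proposal is correct and follows essentially the same route as the paper's proof: invoke \cref{Thm:App:IndHom} via validity to get the induced homomorphism $\varphi$ with $\interp{\cdot}_\Sigma^* = \varphi \circ \pi_G$, use injectivity of $\varphi$ to conclude $\pi_G(q) = \pi_G(r)$, and finish with the rewriting characterization $\pi_G(q) = \pi_G(r) \iff q \sim_R r$. Your explicit remark that ``injectivity of $\interp{\cdot}_G$'' must be read as injectivity of the induced homomorphism is a point the paper glosses over silently, so it is a welcome clarification rather than a deviation.
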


\begin{proof}
    Since $\interp{\cdot}_\Sigma$ is valid, then by \cref{Thm:App:IndHom}, $\interp{\cdot}_\Sigma$ induces a monoid homomorphism $\varphi: G \to H$ such that $\interp{\cdot}_\Sigma^* = \varphi \circ \pi_G$.
    Assume that $\interp{q}_\Sigma^* = \interp{r}_\Sigma^*$.
    Then $\varphi(\pi_G(q)) = \interp{q}_\Sigma^* = \interp{r}_\Sigma^* = \varphi(\pi_G(r))$.
    Since $\varphi$ is injective, then $\pi_G(q) = \pi_G(r)$.
    Then $q \sim_R r$.
\end{proof}

\subsection{Semantics and Generator Introduction}

In the previous section, it was assumed that $\interp{\cdot}_{\Sigma}$ inducted an injection.
This is a reasonable assumption.
For example, if $G = \langle \Sigma \mid R \rangle$ is a presentation for $H$, then there exists an isomorphism $G \cong H$ from which $\interp{\cdot}_{\Sigma}$ can be extracted.

This can become problematic when trying to translate a known presentation to a desired generator set.
Assume that $\langle \Pi \mid Q \rangle$ is a known presentation with a semantic interpretation $\interp{\cdot}_\Pi: \Pi \to H$, from with a presentation $\langle \Sigma \mid R \rangle$ is derivable via a sequence of Tietze transformations.
One would hope that after each Tietze transformation, the semantic interpretation continues to induce an injection, so that \cref{Thm:SemRewrite} continues to hold.
Furthermore, one would hope that after all of the Tietze transformations, $\interp{\cdot}_\Sigma$ is a valid semantic interpretation.

It will be shown that under reasonable assumptions, all Tietze transformations satisfy these assumptions.
The first concern is answered by \cref{Cor:App:IndExt} and \cref{Cor:App:IndRes}, which state that after each \textbf{Gen($+$)} or \textbf{Gen($-$)} transformation, $\interp{\cdot}_\Pi$ continues to induce a monoid homomorphism (resp. injection, surjection, isomorphism).
The only time injectivity might fail is after a new generator is added.
In \textbf{Gen($+$)}, the symbol $x \in \Sigma$ becomes an alias for $w \in \Pi^{*}$.
Then it suffices to check that $\interp{x}_\Sigma = \interp{w}^*_{\Pi}$.

\newpage
\section{From Derivations to Tietze Transformations}

This section describes higher-level structures to reason about Tietze transformations.
It is proven that each structure corresponds to a valid sequence of Tietze transformations, and is therefore sound for the isomorphism of finite monoid presentations.
The structures described in this section are used throughout \cref{Sec:E8} and \cref{Sec:O8D}.

\subsection{Digraphs and Termination}

This section reviews the basics of directed graphs.
A directed graph is a tuple $( V, E )$ such that $V \subseteq E \times E$.
A vertex $v$ is a \emph{child} of $u$ if $( u, v ) \in E$.
A vertex $v$ is a \emph{parent} of $u$ if $( v, u ) \in E$.
A \emph{path} in $( V, E )$ is some sequence $( u_0, u_1, \ldots, u_n )$ over $V$ such that $( u_{k-1}, u_k ) \in E$ for all $k \in \{ 1, 2, \ldots, n \}$.
A directed graph $( V, E )$ is \emph{acyclic} if $u_0 \ne u_n$ for all paths $( u_0, u_1, \ldots, u_n )$ in $( V, E )$.

\begin{lemma}[{\cite[Sect.~2.2]{BaaderNipkow1998}}]
    \label{Lemma:App:Acyc}
    If a digraph $( V, E )$ is finite and acyclic, then for every vertex $v \in V$, there exists some path of length $n$ ending (resp. starting) at $v$ such that every path ending (resp. starting) at $v$ has length at most $n$.
    In particular, there exists a vertex $v \in V$ such that $v$ has no children (resp. parents) in $( V, E )$.
\end{lemma}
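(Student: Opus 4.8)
The plan is to extract from acyclicity the single quantitative fact that all path lengths are bounded, and then obtain both conclusions by elementary maximization. First I would record the key observation that, under the notion of acyclicity defined above, no path may visit a vertex twice: if a path $(u_0,\ldots,u_n)$ satisfied $u_i = u_j$ for some $0 \le i < j \le n$, then the contiguous subsequence $(u_i, u_{i+1}, \ldots, u_j)$ would again be a path, its consecutive pairs being exactly edges inherited from the original path, of length $j - i \ge 1$ with coinciding endpoints, contradicting acyclicity. Consequently every path uses at most $|V|$ distinct vertices and therefore has length at most $|V| - 1$.

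With this bound in hand, the first claim is immediate. Fixing $v \in V$, the set of lengths of paths ending at $v$ is nonempty, since the trivial one-vertex path $(v)$ ends at $v$ and has length $0$, and it is contained in the finite set $\{0, 1, \ldots, |V|-1\}$. A nonempty finite set of natural numbers has a maximum $n$, and a path of length $n$ ending at $v$ is then exactly the witness required, since every path ending at $v$ has length at most $n$. The parenthetical version for paths \emph{starting} at $v$ follows verbatim after replacing $(V, E)$ by the reversed digraph $(V, E')$ with $E' = \{(b, a) \mid (a, b) \in E\}$, which is acyclic precisely when $(V, E)$ is.

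For the final assertion I would argue by choosing a vertex whose longest incoming path is as long as possible. Assuming $V \neq \varnothing$, the empty case being vacuous, let $\ell(v)$ denote the maximal length of a path ending at $v$, which is well defined by the previous paragraph, and pick $v$ maximizing $\ell$ over the finite set $V$. If $v$ had a child $w$, so that $(v, w) \in E$, then appending $w$ to a longest path $(u_0, \ldots, u_n = v)$ ending at $v$ produces the path $(u_0, \ldots, u_n, w)$ ending at $w$ of length $\ell(v) + 1$; hence $\ell(w) \ge \ell(v) + 1 > \ell(v)$, contradicting the maximality of $v$. Thus $v$ has no children, and the dual statement about a vertex with no parents follows once more by edge reversal.

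I do not expect a genuine obstacle here, as the result is standard; the only points demanding care are interpreting acyclicity as forbidding closed paths of positive length, so that the no-repeated-vertex argument goes through, handling the degenerate case $V = \varnothing$ separately, and checking that edge reversal preserves acyclicity so that the parenthetical versions are genuinely dual rather than requiring independent arguments.
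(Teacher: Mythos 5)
Your proof is correct, but note that the paper never proves this lemma itself: it is imported by citation from Baader and Nipkow, where the corresponding fact is stated for termination of finite abstract reduction systems (a finite acyclic relation is terminating, and a terminating relation admits normal forms, i.e.\ vertices without children). Your argument supplies the missing elementary proof directly in the language of the paper's definitions: acyclicity --- read, as you rightly flag, as forbidding closed paths of \emph{positive} length, since the paper's literal definition would be violated by every length-$0$ path --- forces all paths to be vertex-simple and hence of length at most $|V|-1$; finiteness then lets you take the maximum length $n$ of paths ending at a fixed $v$, and maximizing $\ell(v)$ over $V$ produces a vertex with no children, because a child would extend a longest incoming path and contradict maximality. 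The edge-reversal duality for the \emph{starting}/\emph{parents} versions is also sound, since reversal preserves acyclicity. What the citation buys the paper is brevity and a pointer to the rewriting-theoretic setting in which the lemma is actually deployed; what your proof buys is self-containedness and a check that the result really does hold under the paper's specific notions of path and acyclicity. The one caveat --- a defect of the lemma's statement rather than of your argument --- is that the ``in particular'' clause is false for $V = \varnothing$, which you correctly isolate and set aside.
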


Note that the original statement of \cref{Lemma:App:Acyc} concerned the termination of abstract rewriting systems (which can be modelled using paths through digraphs).
In the proofs that follow, this intuition is useful to keep in mind.
Indeed, \cref{Lemma:App:Acyc} is used to argue that certain rewriting procedures terminate, though the rewriting systems are never stated explicitly for simplicity.

\subsection{Derived Generators and Tietze Transformations}
\label{Appendix:DerivedGens}

\begin{figure}[t]
  \centering
\begin{tikzcd}
	&& {\GateAt{X}{1}} &&& {\CGate{Z}{0}{1}} \\
	& {\GateAt{Z}{1}} & {\CGate{X}{2}{0}} & {\CGate{X}{2}{1}} & {\CGate{X}{1}{2}} & {\CGate{Z}{0}{2}} & {\GateAt{Z}{0}} \\
	{\CGate{X}{1}{0}} & {K_{0,1}} & {\sigma_{0,2}} & {\CGate{X}{0}{2}} & {\sigma_{1,2}} & {\CGate{Z}{1}{2}} & {\GateAt{Z}{2}} \\
	{\sigma_{0,1}} & {\CCGate{X}{0}{2}} & {\CCGate{X}{0}{1}} & {\GateAt{X}{2}}
	\arrow[from=3-5, to=2-4]
	\arrow[from=3-4, to=3-5]
	\arrow[from=3-3, to=3-4]
	\arrow[from=3-2, to=3-3]
	\arrow[from=3-1, to=3-2]
	\arrow[from=3-5, to=2-5]
	\arrow[from=2-5, to=2-4]
	\arrow[from=2-6, to=3-5]
	\arrow[from=3-6, to=2-6]
	\arrow[curve={height=-12pt}, from=3-6, to=3-3]
	\arrow[curve={height=-6pt}, from=4-4, to=3-3]
	\arrow[from=4-3, to=4-2]
	\arrow[from=4-2, to=4-1]
	\arrow[from=4-1, to=3-1]
	\arrow[from=3-7, to=2-7]
	\arrow[curve={height=-30pt}, from=3-7, to=3-3]
	\arrow[from=3-3, to=2-3]
	\arrow[from=2-4, to=2-3]
	\arrow[from=2-6, to=1-6]
	\arrow[from=2-7, to=1-6]
	\arrow[from=2-3, to=1-3]
	\arrow[from=2-2, to=1-3]
\end{tikzcd}
  \caption{The derived generator graph for the defining relations in $R_0$.}
  \label{Fig:App:Deps}
\end{figure}

In \cref{Sec:E8:DIntro}, it was claimed that $19$ of the generators in $\Sigma_D$ could be introduced freely, because their defining relations formed an acyclic dependency graph.
Later, in \cref{Appendix:E8MinGens}, the same argument was used to remove these $19$ generators from the generating set for $W( E_8 )$.
Similar arguments appear throughout \cref{Sec:O8D}.
The goal of this section is to establish both directions rigorously.
First, formal definitions are given for defining relations and derived generator graphs (i.e., the graph of dependencies between the relations).
Then, \cref{Thm:App:IntroDGens} and \cref{Thm:App:ElimDGens} are established to justify the claims of \cref{Sec:E8:DIntro} and \cref{Sec:E8:MinGens} respectively.
Finally, it is shown that these theorems apply to the generators in \cref{Sec:E8}.

\begin{definition}[Defining Relations]
    Let $\Sigma$ be an alphabet.
    A \emph{$\Sigma$-defining relation} for $x \in \Sigma$ is a relation of the form $x \approx w$ where $w \in \Sigma^*$.
    A \emph{family of $\Sigma$-defining relations for $\Pi \subseteq \Sigma$} is a set $\{ r_x \mid x \in \Pi \}$ such that for each $x \in \Pi$, $r_x$ is a $\Sigma$-defining relation for $x$.
\end{definition}

\begin{definition}[Derived Generator Graph]
    Let $D$ be a family of $\Sigma$-defining relations for $\Pi \subseteq \Sigma$.
    The \emph{derived generator graph for $D$} is the digraph $\Gamma_D( D ) := ( \Pi, E )$ such that $( x, y ) \in E$ if and only if there exists $w, w' \in \Sigma^*$ such that $x \approx w \cdot y \cdot w'$ is a relation in $D$.
\end{definition}

The proofs of \cref{Thm:App:IntroDGens} and \cref{Thm:App:ElimDGens} both rely heavily on derived generator graphs.
Intuitively, a generator can be introduced (resp. eliminated) if it has no dependencies (resp. dependants) in the derived generator graph.
It is always possible to find such a generator, provided the graph is acyclic.
However, once a generator has been removed from the defining relations, it is important that the new derived generator graph is also acyclic.
In fact, the new derived generator graph is always a subgraph of the previous derived generator graph, as outlined in \cref{Lemma:App:Subgraph0}.

\begin{lemma}
    \label{Lemma:App:Subgraph0}
    If $D$ is a family of $\Sigma$-defining relations and $D' \subseteq D$, then $\Gamma_D( D' )$ is a subgraph of $\Gamma_D( D )$.
\end{lemma}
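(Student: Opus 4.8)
The plan is to unwind the two definitions and verify the subgraph condition separately on vertices and on edges. First I would observe that since $D = \{ r_x \mid x \in \Pi \}$ is a family of $\Sigma$-defining relations for $\Pi$ and $D' \subseteq D$, the subset $D'$ is itself a family of $\Sigma$-defining relations, for the set $\Pi' = \{ x \in \Pi \mid r_x \in D' \}$. This is because every element of $D'$ is one of the relations $r_x$, whose left-hand side $x \in \Pi$ is uniquely determined; thus $D' = \{ r_x \mid x \in \Pi' \}$ does satisfy the definition of a family of defining relations. Consequently $\Gamma_D( D' ) = ( \Pi', E' )$ is well-defined, and by construction $\Pi' \subseteq \Pi$.

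Next I would show $E' \subseteq E$. Suppose $( x, y ) \in E'$. By the definition of the derived generator graph applied to $D'$, there exist $w, w' \in \Sigma^*$ such that the relation $x \approx w \cdot y \cdot w'$ lies in $D'$. Since $D' \subseteq D$, this same relation lies in $D$, so $( x, y ) \in E$ by the definition of $\Gamma_D( D )$. As $( x, y )$ was arbitrary, $E' \subseteq E$. Combined with $\Pi' \subseteq \Pi$, this establishes that $\Gamma_D( D' )$ is a subgraph of $\Gamma_D( D )$.

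There is no genuine obstacle here: the edge condition ``$x \approx w \cdot y \cdot w'$ is a relation in $D$'' is manifestly monotone in $D$, so shrinking $D$ to $D'$ can only lose edges, never create them. The one point requiring a moment's care is confirming that $D'$ qualifies as a family of defining relations at all, so that $\Gamma_D( D' )$ is defined; this follows immediately from the fact that the left-hand side of a defining relation determines its generator uniquely, so no two relations in $D$ share a left-hand side and $\Pi'$ is unambiguous.
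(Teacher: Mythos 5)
Your proof is correct and follows essentially the same route as the paper's: observe that $D'$ is a family of $\Sigma$-defining relations for some $\Pi' \subseteq \Pi$, then unwind the definition of the derived generator graph to show every edge of $\Gamma_D(D')$ is an edge of $\Gamma_D(D)$ by monotonicity of membership in $D' \subseteq D$. Your extra remark justifying that $\Pi'$ is well-defined (left-hand sides uniquely determine the generators) is a small refinement the paper leaves implicit, but the argument is the same.
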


\begin{proof}
    If $D$ is a family of $\Sigma$-defining relations for $\Pi$, then there exists $\Pi' \subseteq \Pi$ such that $D'$ is a family of $\Sigma$-defining relations for $\Pi'$.
    Then $\Gamma_D( D ) = ( \Pi, E )$ and $\Gamma_D( D' ) = ( \Pi', E' )$ for some $E \subseteq \Pi \times \Pi$ and $E' \subseteq \Pi' \times \Pi$.
    Let $( x, y ) \in E'$,
    Then there exists some $w, w' \in \Sigma^*$ such that $x \approx_{D'} w \cdot y \cdot w'$.
    Since $D' \subseteq D$, then $x \approx_{D} w \cdot y \cdot y'$.
    Then $( x, y ) \in E$.
    Since $( x, y )$ was arbitrary, then $E' \subseteq E$.
    Since $\Pi' \subseteq \Pi$ and $E' \subseteq E$, then $\Gamma_D( D' )$ is a subgraph of $\Gamma_D( D )$.
\end{proof}

\subsubsection{Introduction of Derived Generators}

The goal of this section is to prove \cref{Thm:App:IntroDGens}.
First, \cref{Lemma:App:DLeaf} is introduced to prove that every finite set of defining relations with an acyclic graph must contain at least one defining relation $x \approx w$ whose right-hand side consists only of primitive generators.
Since $w$ is consists only of primitive generators, then it may be introduced by a \textbf{Gen($+$)} transformation.
The proof then follows by induction on the number of defining relations, as outlined below.

\begin{lemma}
    \label{Lemma:App:DLeaf}
    If $D$ is a family of $\Sigma$-defining relations for $\Pi$ with $\Pi$ finite and $\Gamma_D( D )$ acyclic, then there exists a relation $x \approx w$ in $D$ such that $w \in ( \Sigma \setminus \Pi )^*$.
\end{lemma}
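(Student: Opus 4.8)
The plan is to reduce the statement to the existence of a childless vertex in the derived generator graph, which is exactly what \cref{Lemma:App:Acyc} provides. The key observation is that, under the definition of the derived generator graph, a vertex $x \in \Pi$ has no children in $\Gamma_D( D )$ precisely when no symbol of $\Pi$ occurs on the right-hand side of its defining relation $x \approx w$; that is, precisely when $w \in ( \Sigma \setminus \Pi )^*$. So the lemma follows immediately once such a childless vertex is produced, and the bulk of the work is simply unfolding this definition.

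Concretely, I would first note that since $\Pi$ is finite, the digraph $\Gamma_D( D ) = ( \Pi, E )$ is finite, and it is acyclic by hypothesis. Applying the ``in particular'' clause of \cref{Lemma:App:Acyc} then yields a vertex $x \in \Pi$ with no children. Since $D$ is a family of $\Sigma$-defining relations for $\Pi$ and $x \in \Pi$, there is a defining relation $x \approx w$ in $D$ for some $w \in \Sigma^*$, and it remains only to verify that $w \in ( \Sigma \setminus \Pi )^*$.

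I would establish this by contradiction: suppose $w$ contains some symbol $y \in \Pi$. Then $w = w' \cdot y \cdot w''$ for some $w', w'' \in \Sigma^*$, so by the definition of $\Gamma_D( D )$ we have $( x, y ) \in E$, contradicting the fact that $x$ has no children. Hence no symbol of $\Pi$ appears in $w$, giving $w \in ( \Sigma \setminus \Pi )^*$ as required.

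There is no genuine obstacle here; the argument is a direct application of \cref{Lemma:App:Acyc} combined with an unfolding of the derived generator graph definition. The only subtlety is the implicit assumption that $\Pi$ (equivalently $D$) is nonempty, which is what guarantees that a childless vertex, and hence the claimed relation, actually exists; this holds in every application of the lemma throughout the paper.
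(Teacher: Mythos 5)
Your proof is correct and follows essentially the same route as the paper's: apply \cref{Lemma:App:Acyc} to obtain a childless vertex $x$ in $\Gamma_D(D)$, then show by contradiction that no symbol of $\Pi$ can occur in the right-hand side of its defining relation. Your closing remark about the implicit nonemptiness of $\Pi$ is a fair observation (the paper's proof glosses over this degenerate case as well), but otherwise there is nothing to add.
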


\begin{proof}
    Since $\Pi$ is finite and $\Gamma_D( D )$ is acyclic, the by \cref{Lemma:App:Acyc}, there exists some vertex $x \in \Pi$ such that $x$ has no children in $\Gamma_D( D )$.
    Let $x \approx_D w$ be the $\Sigma$-defining relation for $x$ in $D$ with $n = |w|$.
    Let $k \in [n]$.
    Assume for the intent of contradiction that $w_k \in \Pi$.
    Then there exists some $w', w'' \in \Sigma^*$ such that $w = w' \cdot w_k \cdot w''$.
    Then $w_k$ is a child of $x$ by definition.
    However, $x$ has no children by assumption.
    Then $w_k \not \in \Pi$ by contradiction.
    Since $k$ was arbitrary, then $w \in ( \Sigma \setminus \Pi )^*$.
\end{proof}

\begin{theorem}
    \label{Thm:App:IntroDGens}
    Let $\Sigma \subseteq \widehat{\Sigma}$ be an alphabet with $\Pi = \widehat{\Sigma} \setminus \Sigma$ finite and $R \subseteq \Sigma^* \times \Sigma^*$.
    If $D$ is a family of $\widehat{\Sigma}$-defining relations for $\Pi$ with $\Gamma_D( D )$ acyclic, then there exists a length $|\Pi|$ sequence of \textbf{Gen($+$)} transformations between $\langle \Sigma \mid R \rangle$ and $\langle \widehat{\Sigma} \mid R \cup D \rangle$.
\end{theorem}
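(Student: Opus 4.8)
The plan is to proceed by induction on the cardinality of $\Pi$. In the base case $|\Pi| = 0$ we have $\widehat{\Sigma} = \Sigma$ and $D = \varnothing$, so $\langle \Sigma \mid R \rangle$ and $\langle \widehat{\Sigma} \mid R \cup D \rangle$ are literally the same presentation, and the empty (length $0$) sequence of transformations suffices.

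For the inductive step, suppose the result holds whenever the set of new generators has size $n$, and assume $|\Pi| = n + 1$. The first move is to locate a derived generator whose definition does not depend on any other derived generator, so that it can be legally introduced by \textbf{Gen($+$)}. This is precisely what \cref{Lemma:App:DLeaf} provides: since $\Pi$ is finite and $\Gamma_D( D )$ is acyclic, there is a relation $x \approx w$ in $D$ with $w \in ( \widehat{\Sigma} \setminus \Pi )^* = \Sigma^*$. As $x \in \Pi$ gives $x \notin \Sigma$, while $w \in \Sigma^*$, the \textbf{Gen($+$)} rule yields $\langle \Sigma \mid R \rangle \cong \langle \Sigma \cup \{ x \} \mid R \cup \{ x \approx w \} \rangle$, which is the first transformation of the sequence.

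It then remains to set up the smaller instance and invoke the inductive hypothesis. Put $\Sigma_1 = \Sigma \cup \{ x \}$, $R_1 = R \cup \{ x \approx w \}$, $\Pi_1 = \Pi \setminus \{ x \}$, and $D_1 = D \setminus \{ x \approx w \}$. I would check the bookkeeping: $\Sigma_1 \subseteq \widehat{\Sigma}$, the partition identity $\Pi_1 = \widehat{\Sigma} \setminus \Sigma_1$ holds, $|\Pi_1| = n$, and $R_1 \subseteq \Sigma_1^* \times \Sigma_1^*$ since $R \subseteq \Sigma^* \times \Sigma^* \subseteq \Sigma_1^* \times \Sigma_1^*$ and $w \in \Sigma^* \subseteq \Sigma_1^*$. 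Crucially, $D_1$ is a family of $\widehat{\Sigma}$-defining relations for $\Pi_1$, and by \cref{Lemma:App:Subgraph0} its derived generator graph $\Gamma_D( D_1 )$ is a subgraph of $\Gamma_D( D )$, hence acyclic. The inductive hypothesis then supplies a length $n$ sequence of \textbf{Gen($+$)} transformations from $\langle \Sigma_1 \mid R_1 \rangle$ to $\langle \widehat{\Sigma} \mid R_1 \cup D_1 \rangle$. Since $\{ x \approx w \} \subseteq D$, we have $R_1 \cup D_1 = R \cup \{ x \approx w \} \cup ( D \setminus \{ x \approx w \} ) = R \cup D$, so this endpoint is exactly $\langle \widehat{\Sigma} \mid R \cup D \rangle$. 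Prepending the single \textbf{Gen($+$)} step from the previous paragraph produces the desired length $n + 1 = |\Pi|$ sequence.

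The main obstacle is conceptual rather than computational and is already isolated in \cref{Lemma:App:DLeaf}: the \textbf{Gen($+$)} rule can only introduce a symbol whose defining word lies over the alphabet built up so far, so one must always be able to find a derived generator whose definition avoids the not-yet-introduced derived generators. Acyclicity of $\Gamma_D( D )$ guarantees such a leaf exists, and the remaining delicate point is verifying that this property survives the deletion of the chosen relation, which is exactly the content of \cref{Lemma:App:Subgraph0}. With these two lemmas in hand, the argument reduces to the routine partition and set-membership checks indicated above.
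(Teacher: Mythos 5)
Your proposal is correct and follows essentially the same argument as the paper: induction on $|\Pi|$, using \cref{Lemma:App:DLeaf} to extract a defining relation $x \approx w$ with $w \in \Sigma^*$ that can be introduced by a single \textbf{Gen($+$)} step, and \cref{Lemma:App:Subgraph0} to preserve acyclicity before invoking the inductive hypothesis. Your version is, if anything, slightly more careful than the paper's in making the bookkeeping ($\Pi_1 = \widehat{\Sigma} \setminus \Sigma_1$, $R_1 \cup D_1 = R \cup D$) explicit.
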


\begin{proof}
    The proof follows by induction on $|\Pi|$.
    \begin{enumerate}
    \item[--] \textbf{Base Case}.
          Assume that $|\Pi| = 0$.
          Then $|D| = 0$ and $\langle \Sigma \mid R \rangle = \langle \Sigma' \mid R \cup D \rangle$.
          Then there exists a length $0$ sequence of \textbf{Gen($+$)} transformations between $\langle \Sigma \mid R \rangle$ and $\langle \widehat{\Sigma} \mid R \cup D \rangle$.
    \item [--] \textbf{Inductive Hypothesis}.
          Assume that for some $k \in \mathbb{N}$, if $|\Pi| = k$ and $\Pi_D( D )$ is acyclic, then there exists a length $k$ sequence of \textbf{Gen($+$)} transformations between $\langle \Sigma \mid R \rangle$ and $\langle \widehat{\Sigma} \mid R \cup D \rangle$.
    \item[--] \textbf{Inductive Step}.
          Assume that $|\Pi| = k + 1$ and $\Gamma_D( D )$ is acylic.
          Since $\Pi$ is finite and $\Gamma_D( D )$ is acyclic, then by \cref{Lemma:App:DLeaf} there exists a $\widehat{\Sigma}$-defining relation $x \approx w$ in $D$.
          Let $\Lambda = \Sigma \cup \{ x \}$ and $Q = R \cup \{ x \approx w \}$.
          Then $D \setminus \{ r_x \}$ is a family of $\widehat{\Sigma}$-defining relations for $\Pi \setminus \{ x \}$ and $Q \subseteq \Lambda^* \times \Lambda^*$.
          Since $\Gamma_D( D \setminus \{ r_x \} )$ is a subgraph of $\Gamma_D( D )$ by \cref{Lemma:App:Subgraph0} with $\Gamma_D( D )$ acyclic, then $\Gamma_D( D \setminus \{ r_x \} )$ is also acyclic.
          Since $|\Pi \setminus \{ x \}| = |\Pi| - 1 = k$ and $\Gamma_D( D \setminus \{ r_x \} )$ is acyclic, then by the inductive hypothesis, there exists a length $k$ sequence of \textbf{Gen($+$)} transformations from $\langle \Lambda \mid Q \rangle$ to $\langle \widehat{\Sigma} \mid R \cup D \rangle$.
           Furthermore, $\langle \Sigma \mid R \rangle \cong \langle \Lambda \mid Q \rangle$ by \textbf{Gen($+$)}.
           Then there exists a length $k + 1$ sequence of \textbf{Gen($+$)} transformations between $\langle \Sigma \mid D \rangle$ and $\langle \widehat{\Sigma} \mid R \cup D \rangle$.
    \end{enumerate}
    By the principle of induction, there exists a length $|\Pi|$ sequence of \textbf{Gen($+$)} transformations between $\langle \Sigma \mid D \rangle$ and $\langle \Sigma' \mid R \cup D \rangle$.
\end{proof}

\subsubsection{Tietze Transformations to Exchange Relations}

The goal of this section is to introduce a technique necessary to prove that derived generators can be eliminated via finite sequences of Tietze transformations.
Using the Tietze transformations discussed so far, it is possible to remove and introduce redundant relation.
In practice, one often wishes to replace a relation $q \approx w$ with a relation $q' \approx w'$, where neither $q \approx w$ nor $q' \approx w'$ is redundant without the other relation.
The following lemma gives a sufficient condition for when $q \approx w$ can be exchanged with $q' \approx w'$, and provides an upper bound on the number of Tietze transformations required to carry out the exchange.

\begin{lemma}
    \label{App:Lemma:TietzeEx}
    Let $G = \langle \Sigma \mid R \rangle$ be a presentation with $q \approx w$ in $R$ and $Q = R \setminus \{ q \approx w \}$.
    If there exists $q' \in \Sigma^*$ and $w' \in \Sigma^*$ such that $q \sim_Q q'$ and $w \sim_Q w'$, then there exists a finite sequence of Tietze transformations between $G$ and $\langle \Sigma \mid Q \cup \{ q' \approx w' \} \rangle$ of length between $1$ and $2$.
\end{lemma}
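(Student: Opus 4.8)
The plan is to carry out the exchange in two steps: first introduce the target relation $q' \approx w'$ via a single \textbf{Rel($+$)} transformation, and then discard the original relation $q \approx w$ via a single \textbf{Rel($-$)} transformation. The entire argument rests on one elementary observation, used repeatedly: since $Q \subseteq R$, every derivation $u \sim_Q v$ is also a derivation $u \sim_R v$, and more generally enlarging the relation set can only enlarge the rewriting relation $(\sim)$. With this in hand, the proof reduces to checking the side conditions of the two transformations.

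First I would verify that $q' \approx w'$ may be added by \textbf{Rel($+$)}, whose side condition is $q' \sim_R w'$. Because $q \approx_R w$, we have $q \sim_R w$; and because $Q \subseteq R$, the hypotheses $q \sim_Q q'$ and $w \sim_Q w'$ also give $q \sim_R q'$ and $w \sim_R w'$. Chaining these with the symmetry and transitivity of $(\sim_R)$ yields $q' \sim_R q \sim_R w \sim_R w'$, hence $q' \sim_R w'$. Thus \textbf{Rel($+$)} gives $G \cong \langle \Sigma \mid R \cup \{ q' \approx w' \} \rangle$, where $R \cup \{ q' \approx w' \} = Q \cup \{ q \approx w,\, q' \approx w' \}$.

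Next I would remove $q \approx w$ by \textbf{Rel($-$)}. Let $S = Q \cup \{ q' \approx w' \}$ denote the relation set obtained by deleting $q \approx w$ from $R \cup \{ q' \approx w' \}$; the side condition is then $q \sim_S w$. The point is that the freshly added relation bridges the two sides: $q \sim_Q q'$ and $w' \sim_Q w$ persist in $S$ since $Q \subseteq S$, while $q' \sim_S w'$ holds directly from the relation $q' \approx w' \in S$. Chaining, $q \sim_S q' \sim_S w' \sim_S w$, so $q \sim_S w$ and \textbf{Rel($-$)} applies, producing $\langle \Sigma \mid Q \cup \{ q' \approx w' \} \rangle$. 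This two-transformation sequence establishes the upper bound of $2$.

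For the lower bound, when $q' \approx w'$ already lies in $Q$ the \textbf{Rel($+$)} step is vacuous, since then $R \cup \{ q' \approx w' \} = R$ and $Q \cup \{ q' \approx w' \} = Q$; one checks directly that $q \sim_Q q' \sim_Q w' \sim_Q w$ (the middle rewrite using $q' \approx w' \in Q$), so a single \textbf{Rel($-$)} removes $q \approx w$ and yields the target. I expect no genuine obstacle here beyond careful bookkeeping of which relation set each occurrence of $(\sim)$ is taken over, together with systematic appeals to the monotonicity $Q \subseteq R \Rightarrow (\sim_Q) \subseteq (\sim_R)$, so that each side condition is verified against the correct set.
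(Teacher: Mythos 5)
Your proposal is correct and follows essentially the same route as the paper's own proof: a \textbf{Rel($+$)} step adding $q' \approx w'$ (justified by chaining $q' \sim_R q \sim_R w \sim_R w'$), followed by a \textbf{Rel($-$)} step removing $q \approx w$ (justified by chaining $q \sim q' \sim w' \sim w$ over $Q \cup \{q' \approx w'\}$), with a degenerate case handling the length-$1$ bound. The only cosmetic difference is that the paper's one-transformation case is triggered by $q' \approx_R w'$ rather than your $q' \approx w' \in Q$; in fact your write-up correctly labels the second transformation as \textbf{Rel($-$)}, where the paper's text contains a typo calling it \textbf{Rel($+$)}.
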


\begin{proof}
    Let $S = R \cup \{ q' \approx w' \}$ and $T = Q \cup \{ q' \approx w' \}$.
    If $q' \approx_R w'$, then $\langle \Sigma \mid R \rangle = \langle \Sigma \mid S \rangle$.
    Assume instead that $q \not \approx_R w$.
    Since $Q \subseteq R$, then $q \sim_R q'$ and $w \sim_R w'$.
    Then $q' \sim_R q$ by the symmetry of $( \sim_R )$.
    Since $q' \sim_R q$, $q \sim_R w$, and $w \sim_R w'$, then $q' \sim_R w'$ by transitivity.
    Then $\langle \Sigma \mid R \rangle \cong \langle \Sigma \mid S \rangle$ by a single \textbf{Rel($+$)} transformation.
    In either case, there exists a sequence of Tietze transformations between $\langle \Sigma \mid R \rangle$ and $\langle \Sigma \mid S \rangle$ of length at most one.
    Since $Q \subseteq T$, then $q \sim_T q'$ and $w \sim_T w'$.
    Then $w' \sim_T w$ by the symmetry of $( \sim_T )$.
    Since $q \sim_T q'$, $q' \sim_T w'$, and $w' \sim_T w$, then $q \sim_T w$ by the transitivity of $( \sim_S )$.
    Since $S \ne T$, then $\langle \Sigma \mid S \rangle \cong \langle \Sigma \mid T \rangle$ by a single \textbf{Rel($+$)} transformation.
    In conclusion, there exists a sequence of Tietze transformations between $\langle \Sigma \mid R \rangle$ and $\langle \Sigma \mid S \rangle$ of length at between $1$ and $2$.
\end{proof}

\subsubsection{Elimination of Derived Generators}

The goal of this section is to prove \cref{Thm:App:ElimDGens}.
The first step in this proof is to show that the set of relations can be replaced by $Q \cup D$ where $Q$ is a set of relations over the primitive generators and $D$ is the set of defining relations.
This follows by induction in \cref{Lemma:App:ElimDGen2}, where \cref{Lemma:App:ElimDGen1} is used to find new relations of the form, and then \cref{App:Lemma:TietzeEx} is used to exchange the relations.
The second step in this proof is to show that the derived generators can be eliminated through a finite sequence of Tietze transformations.
This follows by induction in \cref{Thm:App:ElimDGens}, where \cref{Lemma:App:DRoot} is used to find a derived generator upon which non other derived generator depends.
Such a generator is necessarily redundant, and may be eliminated via a \textbf{Rel($-$)} transformation.

\begin{lemma}
    \label{Lemma:App:ElimDGen1}
    Let $\Sigma \setminus \widehat{\Sigma}$ be an alphabet with $\Pi = \widehat{\Sigma} \setminus \Sigma$.
    If $D$ is a family of $\widehat{\Sigma}$-defining relations for $\Pi$ with $\Gamma_D( D )$ acyclic, then for each $w \in \Sigma^*$, there exists some $w' \in (\Sigma')^*$ such that $w \sim_D w'$.
\end{lemma}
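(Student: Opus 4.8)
The plan is to prove the statement by induction on $|\Pi|$, the number of derived generators, in direct analogy with the single-generator elimination of \cref{Lemma:App:ElimDGen0}. The target is to show that any word $w$ over the full alphabet $\widehat{\Sigma}$ rewrites, using only the defining relations in $D$, to a word $w'$ over the primitive alphabet $\Sigma$. The essential structural input is \cref{Lemma:App:DLeaf}: whenever $\Pi$ is finite and $\Gamma_D(D)$ is acyclic, there is a leaf relation $x \approx u$ in $D$ with $u \in \Sigma^*$, i.e.\ a derived generator whose definition mentions no other derived generator. This leaf is what the induction peels off.

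For the base case $|\Pi| = 0$ we have $\widehat{\Sigma} = \Sigma$, so $w$ already lies in $\Sigma^*$ and $w' = w$ suffices. For the inductive step with $|\Pi| = k+1$, I would first invoke \cref{Lemma:App:DLeaf} to obtain a leaf $x \approx u$ with $u \in \Sigma^*$, and then \emph{promote} $x$ to a primitive symbol: set $\Sigma_1 = \Sigma \cup \{x\}$, $\Pi_1 = \Pi \setminus \{x\}$, and $D_1 = D \setminus \{x \approx u\}$. Then $D_1$ is a family of $\widehat{\Sigma}$-defining relations for $\Pi_1 = \widehat{\Sigma} \setminus \Sigma_1$, and since $D_1 \subseteq D$, \cref{Lemma:App:Subgraph0} shows $\Gamma_D(D_1)$ is a subgraph of the acyclic $\Gamma_D(D)$, hence itself acyclic. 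As $|\Pi_1| = k$, the inductive hypothesis applied to $(\Sigma_1, \widehat{\Sigma}, D_1)$ yields a word $w'' \in \Sigma_1^*$ with $w \sim_{D_1} w''$, and $D_1 \subseteq D$ upgrades this to $w \sim_D w''$. It then remains only to delete the single promoted symbol $x$ from $w''$: since $x \approx u$ with $u \in \Sigma^*$, \cref{Lemma:App:ElimDGen0} over the alphabet $\Sigma_1$ (with $\Sigma_1 \setminus \{x\} = \Sigma$) produces $w' \in \Sigma^*$ with $w'' \sim_{\{x \approx u\}} w'$, whence $w'' \sim_D w'$. Transitivity of $(\sim_D)$ then gives $w \sim_D w'$ with $w' \in \Sigma^*$, closing the induction.

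The hard part is not any calculation but the bookkeeping that keeps every hypothesis intact at the smaller instance. Specifically, one must verify that after promoting $x$ the remaining relations $D_1$ are still well-formed $\widehat{\Sigma}$-defining relations for $\Pi_1$ even though their right-hand sides may now reference the newly-primitive $x$, and that acyclicity is preserved (supplied by \cref{Lemma:App:Subgraph0}). Once these are checked, the leaf lemma isolates a single generator to eliminate, \cref{Lemma:App:ElimDGen0} does the local rewriting, and the two rewriting sequences are stitched together by transitivity.
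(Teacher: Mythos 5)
Your proof is correct, but it runs the induction in the opposite direction from the paper. The paper's inductive step invokes \cref{Lemma:App:DRoot} to find a \emph{root} generator $x$ (one with no parents, so no other defining relation mentions it), eliminates $x$ from $w$ \emph{first} via \cref{Lemma:App:ElimDGen0} applied with the single relation $x \approx q$ (acyclicity guarantees $q$ does not mention $x$), and then recurses on the shrunken full alphabet $\widehat{\Sigma} \setminus \{x\}$ with relations $D \setminus \{r_x\}$, keeping the primitive alphabet $\Sigma$ fixed. You instead invoke \cref{Lemma:App:DLeaf} to find a \emph{leaf} generator $x$ (one whose definition $u$ lies entirely in $\Sigma^*$), promote $x$ into the primitive alphabet, recurse \emph{first} on $(\Sigma \cup \{x\}, \widehat{\Sigma}, D \setminus \{x \approx u\})$, and only at the end strip out $x$ with \cref{Lemma:App:ElimDGen0}. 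Both arguments peel one generator per step, both rely on \cref{Lemma:App:Subgraph0} for preservation of acyclicity, and both stitch the two rewriting sequences together by transitivity of $(\sim_D)$, so the proofs are dual to one another: the paper consumes the dependency order from the top (topologically first), you consume it from the bottom (topologically last). One bookkeeping consequence of your choice is that the inductive hypothesis must be quantified over the primitive alphabet as well (since $\Sigma$ grows to $\Sigma \cup \{x\}$ in the recursive call), whereas the paper's version instead varies the full alphabet $\widehat{\Sigma}$; you handle this correctly by stating the recursion for the triple $(\Sigma_1, \widehat{\Sigma}, D_1)$. A second, minor asymmetry is that in your recursion the relations of $D_1$ may still mention the promoted symbol $x$ — harmless, since $x$ is then primitive — while in the paper's recursion the residual relations genuinely live over the smaller alphabet, which is exactly what \cref{Lemma:App:DRoot} is engineered to provide.
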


\begin{proof}
    The proof follows by induction on the size of $\Pi$.
    \begin{enumerate}
    \item [--] \textbf{Base Case}.
          Assume that $|\Pi| = 0$ and $w \in \widehat{\Sigma}^*$.
          Then $\Pi = \varnothing$ and $w \in \Sigma^*$.
          Since $w \sim_D w$ by the symmetry of $(\sim_D)$, then there exists some $w' \in \Sigma^*$ such that $w \sim_D w'$.
    \item [--] \textbf{Inductive Hypothesis}.
          Assume that for some $k \in \mathbb{N}$, if $|\Pi| = k$, $\Gamma_D( D )$ is acyclic, and $w \in \widehat{\Sigma}^*$, then there exists some $w' \in \Sigma^*$ such that  $w \sim_D w'$.
    \item [--] \textbf{Inductive Step}.
          Assume that $|\Pi| = k + 1$ and $\Gamma_D( D )$ is acyclic.
          Since $\Pi$ is finite and $\Gamma_D( D )$ is acyclic, then by \cref{Lemma:App:DRoot}, there exists a $\widehat{\Sigma}$-defining relation $x \approx q$ in $D$ such that $D \setminus \{ r_x \}$ is a family of $(\widehat{\Sigma} \setminus \{ x \} )$-defining relations for $\Pi \setminus \{ x \}$.
          Then by \cref{Lemma:App:ElimDGen0}, there exists some $w' \in ( \widehat{\Sigma} \setminus \{ x \} )^*$ such that $w \sim_D w'$.
          Since $\Gamma_D( D \setminus \{ r_x \} )$ is a subgraph of $\Gamma_D( D )$ by \cref{Lemma:App:Subgraph0} with $\Gamma_D( D )$ acyclic, then $\Gamma_D( D \setminus \{ r_x \} )$ is also acyclic.
          Then $D \setminus \{ r_x \}$ is a family of $( \widehat{\Sigma} \setminus \{ x \} )$-defining relations for $\Pi \setminus \{ x \}$ with $\Gamma_D( D )$ acyclic and $w' \in ( \widehat{\Sigma} \setminus \{ x \} )^*$.
          Since $|\Pi \setminus \{ x \}| = |\Pi| - 1 = k$, then by the inductive hypothesis, there exists some $w'' \in \Sigma^*$ such that $w' \sim_{D'} w''$.
          Since $D' \subseteq D$, then $w' \sim_D w''$.
          Since $\sim_D$ is transitive, then $w \sim_D w''$.
          Then there exists a $w'' \in \Sigma^*$ such that $w \sim_D w''$.
    \end{enumerate}
    Since $\Gamma_D( D )$ is acyclic, then by the principle of induction, there exists a $w' \in \Sigma^*$ such that $w \sim_D w'$.
\end{proof}

\begin{lemma}
    \label{Lemma:App:ElimDGen2}
    Let $\Sigma \subseteq \widehat{\Sigma}$ be an alphabet with $\Pi = \widehat{\Sigma} \setminus \Sigma$ and $R \subseteq \widehat{\Sigma}^* \times \widehat{\Sigma}^*$ finite.
    If $D \subseteq R$ is a family of $\widehat{\Sigma}$-defining relations for $\Pi$ with $\Gamma_D( D )$ acyclic, then there exists a $Q \subseteq \Sigma^* \times \Sigma^*$ with $|Q| \le |R \setminus D|$ and a finite sequence of Tietze transformations between $\langle \widehat{\Sigma} \mid R \rangle$ and $\langle \widehat{\Sigma} \mid Q \cup D \rangle$ of length between $k$ and $2k$ where $k = |R| - |R \cap ( \Sigma^* \times \Sigma^* )| - |D|$.
\end{lemma}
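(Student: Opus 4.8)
The plan is to induct on $k = |R| - |R \cap (\Sigma^* \times \Sigma^*)| - |D|$, which counts exactly the relations of $R$ that involve a derived generator from $\Pi$ yet are not themselves defining relations. To organize this, I would first partition $R$ as a disjoint union $R = M \sqcup P \sqcup D$, where $P = R \cap (\Sigma^* \times \Sigma^*)$ consists of the relations over primitive generators, $D$ consists of the defining relations, and $M$ collects the remaining \emph{mixed} relations. This is a genuine partition: since each $x \in \Pi$ satisfies $x \notin \Sigma$, no defining relation lies in $\Sigma^* \times \Sigma^*$, so $P \cap D = \varnothing$. With this notation $k = |M|$, and the inductive invariant is that $D \subseteq R$ remains a family of $\widehat{\Sigma}$-defining relations for $\Pi$ with $\Gamma_D( D )$ acyclic throughout.

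For the base case $k = 0$ we have $M = \varnothing$, hence $R = P \cup D$. Taking $Q = P$ gives $Q \subseteq \Sigma^* \times \Sigma^*$ with $Q \cup D = R$ and $|Q| = |P| = |R \setminus D|$, and the empty sequence of Tietze transformations (of length $0$, which lies between $k = 0$ and $2k = 0$) suffices. For the inductive step with $k \ge 1$, I would choose a mixed relation $q \approx w$ in $M$. Since $q, w \in \widehat{\Sigma}^*$, \cref{Lemma:App:ElimDGen1} supplies words $q', w' \in \Sigma^*$ with $q \sim_D q'$ and $w \sim_D w'$. Because $q \approx w \notin D$, we have $D \subseteq R \setminus \{ q \approx w \}$, so these rewrites lift to $q \sim_{R \setminus \{q \approx w\}} q'$ and $w \sim_{R \setminus \{q \approx w\}} w'$. \cref{App:Lemma:TietzeEx} then provides a sequence of between $1$ and $2$ Tietze transformations from $\langle \widehat{\Sigma} \mid R \rangle$ to $\langle \widehat{\Sigma} \mid R' \rangle$, where $R' = (R \setminus \{q \approx w\}) \cup \{ q' \approx w' \}$.

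The defining relations $D$ are untouched, so $D \subseteq R'$ and $\Gamma_D( D )$ is still acyclic, while $q' \approx w'$ now lies in $\Sigma^* \times \Sigma^*$; hence $R'$ has exactly one fewer mixed relation than $R$, i.e. its associated count is $k - 1$. Applying the inductive hypothesis to $R'$ yields $Q \subseteq \Sigma^* \times \Sigma^*$ with $|Q| \le |R' \setminus D|$ together with a sequence of between $k-1$ and $2(k-1)$ transformations from $\langle \widehat{\Sigma} \mid R' \rangle$ to $\langle \widehat{\Sigma} \mid Q \cup D \rangle$. Concatenating the two sequences gives a total length between $1 + (k-1) = k$ and $2 + 2(k-1) = 2k$, as required.

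The remaining obligation is the size bound $|Q| \le |R \setminus D|$, and this is where I expect the only genuine delicacy to lie. The subtlety is that $q' \approx w'$ may already belong to $R \setminus \{q \approx w\}$, so the passage from $R$ to $R'$ replaces one mixed relation by a primitive relation that might merge with an existing one; consequently one only gets $|P'| \le |P| + 1$ and hence $|R' \setminus D| \le (|M|-1) + (|P|+1) = |R \setminus D|$, giving $|Q| \le |R' \setminus D| \le |R \setminus D|$. Keeping the accounting tight through this possible collision of relations—while verifying that the count $k$ strictly drops and the length bounds survive—is the main point to handle with care; the structural content of the step is otherwise immediate from \cref{Lemma:App:ElimDGen1} and \cref{App:Lemma:TietzeEx}.
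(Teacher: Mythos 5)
Your proof is correct and follows essentially the same route as the paper's: the same three-way partition of $R$ into primitive, defining, and mixed relations, induction on the number of mixed relations, and the same combination of \cref{Lemma:App:ElimDGen1} (lifted from $\sim_D$ to $\sim_{R \setminus \{q \approx w\}}$) and \cref{App:Lemma:TietzeEx} in the inductive step. If anything, your closing accounting ($|P'| \le |P|+1$ and $|M'| = |M|-1$, handling a possible collision of $q' \approx w'$ with an existing primitive relation) is stated more carefully than the paper's, which writes the garbled inequalities $|S'| \le |S|$ and $|\overline{S}'| = |\overline{S}|$ before arriving at the same bound $|Q| \le |R \setminus D|$.
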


\begin{proof}
    Let $S = R \cap ( \Sigma^* \times \Sigma^* )$ and $\overline{S} = R \setminus ( S \sqcup D )$.
    Then $R = S \sqcup D \sqcup \overline{S}$.
    The proof follows by induction on $|\overline{S}|$ in this decomposition.
    \begin{enumerate}
    \item[--] \textbf{Base Case}.
          Assume that $R$ decomposes into $S \sqcup D \sqcup \overline{S}$ with $|\overline{S}| = 0$.
          Then $\langle \widehat{\Sigma} \mid R \rangle = \langle 
          \widehat{\Sigma} \mid S \cup D \rangle$.
          Then there exists a length $0$ sequence of Tietze transformations between $\langle \widehat{\Sigma} \mid R \rangle$ and $\langle \widehat{\Sigma} \mid S \cup D \rangle$.
          Clearly $|S| = |R \setminus D|$.
    \item[--] \textbf{Inductive Hypothesis}.
          Assume that for some $k \in \mathbb{N}$, if $R$ decomposes as $S \sqcup D \sqcup \overline{S}$ with $|\overline{S}| = k$, then there exists some $Q \subseteq \Sigma^* \times \Sigma^*$ with $|Q| \le |R \setminus D|$ and a sequence of Tietze transformations between $\langle \widehat{\Sigma} \mid R \rangle$ and $\langle \widehat{\Sigma} \mid Q \cup D \rangle$ of length at most $2k$.
    \item[--] \textbf{Inductive Step}.
          Assume that $R$ decomposes into $S \sqcup D \sqcup \overline{S}$ with $|\overline{S}| = k + 1$.
          Let $q \approx w$ in $\overline{S}$.
          Then by \cref{Lemma:App:ElimDGen1}, there exists $q' \in \Sigma^*$ and $r' \in \Sigma^*$ such that $q \sim_D q'$ and $r \sim_D r'$.
          Then by \cref{App:Lemma:TietzeEx}, there exists a finite sequence of Tietze transformations between $\langle \widehat{\Sigma} \mid R \rangle$ and $\langle \widehat{\Sigma} \mid R' \rangle$ of length between $1$ and $2$, where $R' = R \cup \{ q' \approx w' \} \setminus \{ q \approx w \}$.
          Then $R'$ decomposes as $S' \sqcup D \sqcup \overline{S}'$ where $S' = S \cup \{ q' \approx r' \}$ and $\overline{S}' = \overline{S} \setminus \{ q \approx r \}$.
          Since $q \approx_S r$, then $|\overline{S}'| = |\overline{S}| - 1 = k$.
          Then the inductive hypothesis holds, and there exists some $Q \subseteq \Sigma^* \times \Sigma^*$ with $|Q| \le |R' \setminus D|$ and a sequence of Tietze transformations between $\langle \widehat{\Sigma} \mid R' \rangle$ and $\langle \widehat{\Sigma} \mid Q \cup D \rangle$ of length between $k$ and $2k$.
          Then there exists a sequence of Tietze transformations between $\langle \widehat{\Sigma} \mid R \rangle$ and $\langle \widehat{\Sigma} \mid Q \cup D \rangle$ of length between $k + 1$ and $2(k + 1)$.
          Since $|S'| \le |S|$ and $|\overline{S}'| = |\overline{S}|$, then $|Q| \le |S'| + |\overline{S}'| \le |S| + |\overline{S}| = |R \setminus D|$.
    \end{enumerate}
    By the principle of induction, there exists a $Q \subseteq \Sigma^* \times \Sigma^*$ with $|Q| \le |R \setminus D|$ and a sequence of Tietze transformations between $\langle \widehat{\Sigma} \mid R \rangle$ and $\langle \widehat{\Sigma} \mid Q \cup D \rangle$ of length between $|\overline{S}|$ and $2|\overline{S}|$.
\end{proof}

\begin{lemma}
    \label{Lemma:App:DRoot}
    If $D$ is a family of $\Sigma$-defining relations for $\Pi$ with $\Pi$ finite and $\Gamma_D( D )$ acyclic, then there exists a relation $x \approx w$ in $D$ such that $D \setminus \{ x \approx w \}$ is a family of $( \Sigma \setminus \{ x \} )$-defining relations for $\Pi \setminus \{ x \}$.
\end{lemma}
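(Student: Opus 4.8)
The plan is to exploit the dual of the argument already used in \cref{Lemma:App:DLeaf}. Whereas \cref{Lemma:App:DLeaf} locates a vertex with no \emph{children} (a generator whose defining relation has a right-hand side built entirely from primitive symbols), here I instead want a vertex with no \emph{parents}: a derived generator that does not occur on the right-hand side of any defining relation. Removing such a generator's defining relation cannot disturb the defining-relation structure of the surviving relations, precisely because none of them mention the removed symbol.

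Concretely, I would first invoke \cref{Lemma:App:Acyc} in its ``parents'' form. Since $\Pi$ is finite and $\Gamma_D( D )$ is acyclic, there exists a vertex $x \in \Pi$ having no parents in $\Gamma_D( D )$. Let $x \approx w$ denote the unique $\Sigma$-defining relation for $x$ in $D$, and set $D' = D \setminus \{ x \approx w \}$.

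Next I would verify that $D'$ is a family of $( \Sigma \setminus \{ x \} )$-defining relations for $\Pi \setminus \{ x \}$. Since $D$ contains exactly one defining relation $y \approx w_y$ for each $y \in \Pi$, the set $D'$ contains exactly one relation for each $y \in \Pi \setminus \{ x \}$; it remains only to check that $x$ never appears in the right-hand side $w_y$. But ``$x$ has no parents'' means that $( y, x ) \notin E$ for every $y \in \Pi$, which by the definition of the edge set of $\Gamma_D( D )$ is precisely the statement that $x$ does not occur in any $w_y$. Acyclicity already excludes the self-loop $( x, x )$, so this covers the case $y = x$ as well, though that case becomes irrelevant once the relation is deleted. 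Hence $w_y \in ( \Sigma \setminus \{ x \} )^*$ for every $y \in \Pi \setminus \{ x \}$, so each surviving relation is a genuine $( \Sigma \setminus \{ x \} )$-defining relation, as required.

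I do not expect a genuine obstacle here: the entire content is the observation that a parent-free vertex of the dependency graph corresponds exactly to a safely removable defining relation. The only point demanding any care is translating the graph-theoretic condition ``$x$ has no parents'' into the syntactic condition ``$x$ does not appear on any right-hand side of $D$,'' which follows immediately by unwinding the definition of $\Gamma_D( D )$.
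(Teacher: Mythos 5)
Your proposal is correct and follows essentially the same route as the paper's own proof: both invoke \cref{Lemma:App:Acyc} to extract a parent-free vertex $x$, delete its defining relation, and then unwind the definition of the edge set of $\Gamma_D(D)$ to conclude that $x$ cannot appear on the right-hand side of any surviving relation (the paper phrases this last step as a contradiction over symbol positions, but the content is identical). Your side remark that acyclicity rules out the self-loop, and that this case is anyway irrelevant after deletion, is a correct and harmless refinement.
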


\begin{proof}
    Since $\Pi$ is finite and $\Gamma_D( D )$ is acyclic, then by \cref{Lemma:App:Acyc}, there exists some vertex $x \in \Pi$ such that $x$ has no parents in $\Gamma_D( D )$.
    Since $x$ is a vertex in $\Gamma_D( D )$, then there exists a $\Sigma$-defining relation $x \approx w$ in $D$.
    Let $D' = D \setminus \{ x \approx w \}$.
    Let $y \approx q$ be a $\Sigma$-defining relation in $D'$ with $n = |q|$.
    Let $k \in [n]$.
    Assume for the intent of contradiction that $q_k = x$.
    Then there exists some $q', q'' \in \Sigma^*$ such that $q = q' \cdot x \cdot q''$.
    Then $x$ is a child of $y$ by definition.
    However, $x$ has no parents by assumption.
    Then $q_k \ne x$ by contradiction.
    Since $k$ was arbitrary, then $q \in ( \Sigma \setminus \{ x \} )^*$.
    Since $y \approx q$ was arbitrary, then $D'$ is a family of $( \Sigma \setminus \{ x \} )$-defining relations for $\Pi \setminus \{ x \}$.
\end{proof}

\begin{theorem}
    \label{Thm:App:ElimDGens}
    Let $\Sigma' \subseteq \Sigma$ be an alphabet with $\Pi = \Sigma \setminus \Sigma'$ finite and $R \subseteq \Sigma^* \times \Sigma^*$ finite.
    If $D \subseteq R$ is a family of defining relations for $\Pi$ with $\Gamma_D( D )$ acyclic, then there exists a $Q \subseteq (\Sigma')^* \times (\Sigma')^*$ with $|Q| \le |R \setminus D|$ and a sequence of Tietze transformations between $\langle \Sigma \mid R \rangle$ and $\langle \Sigma' \mid Q \rangle$ of length between $n + k$ and $2n + k$ where $n = |R| - |R \cap ( ( \Sigma' )^* \times ( \Sigma' )^* )| - |D|$ and $k = |\Pi|$.
\end{theorem}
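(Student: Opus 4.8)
The plan is to decompose the argument into the two phases sketched just before the statement: first rewrite the relation set so that every relation is either a defining relation in $D$ or a relation over the primitive alphabet $\Sigma'$, and then peel off the derived generators one at a time.

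For the first phase, I would apply \cref{Lemma:App:ElimDGen2} directly, taking the ambient alphabet $\widehat{\Sigma}$ of that lemma to be our $\Sigma$ and its primitive alphabet to be our $\Sigma'$. Since $D \subseteq R$ is a family of $\Sigma$-defining relations for $\Pi = \Sigma \setminus \Sigma'$ with $\Gamma_D(D)$ acyclic, the lemma yields a set $Q \subseteq (\Sigma')^* \times (\Sigma')^*$ with $|Q| \le |R \setminus D|$, together with a sequence of Tietze transformations between $\langle \Sigma \mid R \rangle$ and $\langle \Sigma \mid Q \cup D \rangle$ of length between $n$ and $2n$, where $n = |R| - |R \cap ((\Sigma')^* \times (\Sigma')^*)| - |D|$. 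This immediately discharges the bound on $|Q|$ and accounts for the $n$-to-$2n$ portion of the total length.

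For the second phase, I would induct on $k = |\Pi|$ to show that from $\langle \Sigma \mid Q \cup D \rangle$ one reaches $\langle \Sigma' \mid Q \rangle$ using exactly $k$ applications of the \textbf{Gen($-$)} rule. The base case $k = 0$ is immediate, since then $\Pi = \varnothing$ and $D = \varnothing$. For the inductive step, I would invoke \cref{Lemma:App:DRoot} to produce a defining relation $x \approx w$ in $D$ such that $D \setminus \{ x \approx w \}$ is a family of $(\Sigma \setminus \{x\})$-defining relations for $\Pi \setminus \{x\}$; concretely, $x$ has no parents in $\Gamma_D(D)$, so $x$ does not occur on the right-hand side of any other defining relation. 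Because $Q$ consists only of words over $\Sigma'$, the generator $x$ occurs in no relation of $Q \cup D$ other than $x \approx w$ itself, so the side condition of \textbf{Gen($-$)} is met and we obtain $\langle \Sigma \mid Q \cup D \rangle \cong \langle \Sigma \setminus \{x\} \mid Q \cup (D \setminus \{ x \approx w \}) \rangle$. By \cref{Lemma:App:Subgraph0}, $\Gamma_D(D \setminus \{ x \approx w \})$ is a subgraph of $\Gamma_D(D)$ and hence still acyclic, so the inductive hypothesis applies to the smaller presentation and furnishes the remaining $k-1$ \textbf{Gen($-$)} transformations down to $\langle \Sigma' \mid Q \rangle$.

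Concatenating the two phases gives a sequence of length between $n + k$ and $2n + k$, as claimed. The main obstacle is keeping the two bookkeeping invariants aligned across the elimination: at each step the chosen generator must occur only in its own defining relation so that \textbf{Gen($-$)} is applicable, and the residual defining relations must remain a valid, acyclic family so the induction can proceed. Both are secured by the interplay of \cref{Lemma:App:DRoot}, which always supplies a parentless (hence eliminable) derived generator, and \cref{Lemma:App:Subgraph0}, which preserves acyclicity under deletion, with the property $Q \subseteq (\Sigma')^* \times (\Sigma')^*$ guaranteeing that derived generators never leak into the primitive relations.
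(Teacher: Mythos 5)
Your proposal is correct and follows essentially the same route as the paper: Phase one invokes \cref{Lemma:App:ElimDGen2} to reach $\langle \Sigma \mid Q \cup D \rangle$ in between $n$ and $2n$ steps, and phase two inducts on $|\Pi|$, peeling off one parentless derived generator per \textbf{Gen($-$)} application while \cref{Lemma:App:Subgraph0} preserves acyclicity. The only cosmetic difference is that you cite \cref{Lemma:App:DRoot} to obtain the eliminable generator, whereas the paper's proof inlines that lemma's argument (choosing a parentless vertex via \cref{Lemma:App:Acyc} and verifying by contradiction that no other relation mentions it); the content is identical.
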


\begin{proof}
    By \cref{Lemma:App:ElimDGen2}, there exists a $Q \subseteq ( \Sigma' )^* \times ( \Sigma' )^*$ with $|Q| \le |R \setminus D|$ and a sequence of Tietze transformations between $\langle \Sigma \mid R \rangle$ and $\langle \Sigma \mid Q \cup D \rangle$ of length between $n$ and $2n$.
    The proof follows by induction on $|\Pi|$.
    \begin{enumerate}
    \item[--] \textbf{Base Case}.
          If $|\Pi| = 0$, then $|D| = 0$.
          Then $\langle \Sigma \mid Q \cup D \rangle = \langle \Sigma' \mid Q \rangle$.
          Then there exists a length $0$ sequence of Tietze transformations between $\langle \Sigma \mid Q \cup D \rangle$ and $\langle \Sigma' \mid Q \rangle$.
    \item[--] \textbf{Inductive Case}.
          Assume that for some $k \in \mathbb{N}$, if $|\Pi| = k$ and $\Pi_D( D )$ is acyclic, then there exists a length $k$ sequence of Tietze transformations between $\langle \Sigma \mid Q \cup D \rangle$ and $\langle \Sigma' \mid Q \rangle$.
    \item[--] \textbf{Inductive Step}.
          Assume that $|\Pi| = k + 1$ and $\Gamma_D( D )$ is acyclic.
          Since $\Pi$ is finite and $\Gamma_D( D )$ is acyclic, then by \cref{Lemma:App:Acyc} there exists some vertex $x \in \Pi$ such that $x$ has no parents in $\Gamma_D( D )$.
          Define $\Lambda = \Sigma \setminus \{ x \}$ and $D' = D \setminus \{ r_x \}$.
          Let $y \approx_{D'} w$ and assume for the intent of contradiction that $w \not \in \Lambda^*$.
          Then there exists $w', w'' \in \Sigma^*$ such that $w = w' \cdot x \cdot w''$.
          Then $x$ is a child of $y$ in $\Gamma_D( D )$.
          However, $x$ has no parents by assumption.
          Then $w \in \Lambda^*$ by contradiction.
          Since $y \approx_{D'} w$ was arbitrary, then $D' \subseteq \Lambda^* \times \Lambda^*$.
          Then $Q \cup D' \subseteq \Lambda^* \times \Lambda^*$.
          It follows that $\langle \Sigma \mid Q \cup D \rangle \cong \langle \Lambda \mid Q \cup D' \rangle$ by \textbf{Gen($-$)}.
          Furthermore, $D'$ is a family of defining relations for $\Pi \setminus \{ x \}$.
          Since $|\Pi \setminus \{ x \}| = k$ and $\Gamma_D( D' )$ is acyclic by \cref{Lemma:App:Subgraph0}, then there exists a length $k$ sequence of Tietze transformations between $\langle \Lambda \mid Q \cup D' \rangle$ and $\langle \Sigma' \mid Q \rangle$ by the inductive hypothesis.
          Then there exists a length $k + 1$ sequence of Tietze transformations between $\langle \Sigma \mid R \rangle$ and $\langle \Sigma' \mid Q \rangle$.
    \end{enumerate}
    Then by the principle of induction, there exists a length $k$ sequence of Tietze transformations between $\langle \Sigma \mid Q \cup D \rangle$ and $\langle \Sigma' \mid Q \rangle$.
    Then there exists a sequence of Tietze transformations between $\langle \Sigma \mid R \rangle$ and $\langle \Sigma' \mid Q \rangle$ of length between $n + k$ and $2n + k$.
\end{proof}

\subsubsection{The Derived Generator Graph for \texorpdfstring{$W( E_8 )$}{W(E8)}}

The derived generators for $W( E_8 )$ are $\Pi := \Sigma_D \setminus \{ \GateAt{X}{0}, \CGate{X}{0}{1},  \CCGate{X}{1}{2}, K_{1,2} \}$.
The defining relations $D \subseteq R_0$ for $\Pi \subseteq \Sigma_D$ are given by \cref{Rel:DGen:Start} through to \cref{Rel:DGen:End} in \cref{Fig:G0Rels}.
An illustration of the derived generator graph $\Gamma_D( D )$ can be found in \cref{Fig:App:Deps}.
Since this graph is acyclic, then \cref{Thm:App:IntroDGens} and \cref{Thm:App:ElimDGens} apply.
The derived generator graphs for $O(n,\D)$ have paths of length at most one, and are therefore trivially acyclic.

\subsection{Derivational Proofs and Tietze Transformation}
\label{Appendix:DerivationThm}

Assume that $G \cong \langle \Sigma \mid R \rangle$ with semantic interpretation $\interp{\cdot}_G$.
During proofs based on Tietze transformations, it is often necessary to find a sequence of \textbf{Rel($-$)} and \textbf{Rel($+$)} transformations between $\langle \Sigma \mid R \rangle$ and $\langle \Sigma \mid Q \rangle$.
For example, this case arises in \cref{Sec:E8}, where $R = R_{E8} \cup R_{D(E8)}$ and $Q = R_{E8(D)} \cup R_D$.
If $\interp{\cdot}_G$ induces an isomorphism and every relation $w \approx_Q w'$ satisfies $\interp{w}^* = \interp{w'}$, then using \cref{Thm:SemRewrite} there exists a sequence of Tietze transformations between $\langle \Sigma \mid R \rangle$ and $\langle \Sigma \mid R \rangle$.
Eliminating the relations in $R$ requires more care.
For example, if $r \in R$ is not derivable from $(R \cup Q) \setminus \{ r \}$, then $\langle \Sigma \mid R \cup Q \rangle$ is a proper quotient of $\langle \Sigma \mid Q \rangle$.
Instead, it must be shown that for each $w \approx_R w'$, it follows that $w \sim_Q w'$.
One way to approach this problem is to first derive some auxiliary relations $A$ from $Q$, and then use $Q \cup A$ to derive $R$.
However, transforming these derivations into a sequence of Tietze transformations is often tedious, and not well-aligned with the process of proof discovery.
On the other hand, if the derivations are not transformed into a valid sequence of Tietze transformations, then it is possible to obtain invalid proofs, such as those with cyclic derivations (see \cref{Ex:Cyclic}).

This section formalizes the ad-hoc proof technique described above,and identifies sufficient conditions for when such a family of derivations induces a valid sequence of Tietze transformations between $\langle \Sigma \mid R \rangle$ and $\langle \Sigma \mid Q \rangle$.
In the following definitions, $L( \Sigma ) = \mathbb{N} \times ( \Sigma^* \times \Sigma^* )$ will represent a set of indexed relations over $\Sigma$.
For example, let $( n, ( w, w' ) ) \in L(\Sigma)$.
The index $n$ in $( n, ( w, w' ) )$ indicates that $( n, ( w, w' ) )$ is a derived relation, and allows for multiple derivations of the same relation.
More concretely, if $n = 3$, $w = a \cdot b$, and $w' = x \cdot y \cdot z$, then $( n, ( w, w' ) )$ is the third derivation that yields $a \cdot b \approx x \cdot y \cdot z$.

\begin{definition}[Derivational Proof]
    A \emph{derivational proof in $\langle \Sigma \mid R \rangle$} is a subset $P \subseteq L( \Sigma ) \times (  L( \Sigma ) \cup R )^{*}$ which satisfies the following conditions.
    \begin{itemize}
    \item[--] \textbf{Indexed}.
              For all $( \ell, d ) \in P$ and $( \ell', d' ) \in P$ distinct, $\ell \ne \ell'$.
    \item[--] \textbf{Well-founded}.
          For all $( \ell, d ) \in P$ and $k \in \{ 1, 2, \ldots, |d| \}$, either $d_k \in R$ or $d_k \in L(\Sigma)$ and there exists some $d' \in ( L(\Sigma) \cup R)^*$ such that $( d_k, d' ) \in P$.
    \item[--] \textbf{Valid}.
          For all $( \ell, d ) \in P$ with $( n, ( w, w' ) ) = \ell$ and $m = |d|$, there exists some $v \in (\Sigma^*){}^{m+1}$ such that $v_1 = w$, $v_{m+1} = w'$, and for all $k \in [m]$ either $d_k \in R$ and $v_k \xrightarrow{d_k} v_{k+1}$ or $( n', r ) = d_k$ and $v_k \xrightarrow{r} v_{k+1}$.
    \end{itemize}
    A set $Q \subseteq \Sigma^{*} \times \Sigma^{*}$ is \emph{entailed by $P$}, written $P \models Q$, if $Q \subseteq R \cup \{ r \mid ( ( n, r ), d ) \in P \}$.
\end{definition}

\begin{definition}[Proof Substitution]
    Let $P$ be a proof in $\langle \Sigma \mid R \rangle$.
    If $( \ell, d ) \in P$, $d \in ( L( \Sigma ) \cup R )^{*}$, and $P' = ( P \setminus \{ ( \ell, d ) \} ) \cup \{ ( \ell, d' ) \}$ is a proof for $\langle \Sigma \mid R \rangle$, then we say that $P'$ is a \emph{substitution of $P$ by $d'$ at $\ell$}, written $P[ \ell \mapsto d' ]$.
\end{definition}

\begin{definition}[Derivation Graph]
    Let $P$ be a proof in $\langle \Sigma \mid R \rangle$.
    The \emph{derivation graph for $P$} is the digraph $\Gamma_D( P ) = ( V, E )$ such that $V = \{ \ell \mid ( \ell, d ) \in P \}$ and $( \ell, \ell' ) \in E$ if and only if there exists $( \ell, d ) \in P$  and $k \in \{ 1, 2, \ldots, |d| \}$ such that $d_k = \ell'$.
\end{definition}

\begin{example}[Derivations and Substitutions]
    \label{Ex:DerivsSubs}
    Consider $G = \langle x, y \mid x^2 \approx \epsilon, y^2 \approx \epsilon \rangle$.
    It is not hard to show $x \cdot y \cdot x \cdot y^2 \cdot x \cdot y \cdot x \sim \epsilon$.
    However, it helps to first prove that $x \cdot y^2 \cdot x \sim \epsilon$.
    This can be written as a derivation proof.
    There will be two derivations, with labels $\ell = ( 0, ( x \cdot y^2 \cdot x, \epsilon ) )$ and $\ell' = ( 0, ( x \cdot y \cdot x \cdot y^2 \cdot x \cdot y \cdot x, \epsilon ) )$ respectively.
    Associated with $\ell$ and $\ell'$ are two derivations $d$ and $d'$, defined as follows.
    Let $r = ( x^2, \epsilon )$ and $r' = ( y^2, \epsilon )$.
    \begin{align*}
        (d):&\;\; x \cdot y \cdot y \cdot x \xrightarrow{r'} x \cdot x \xrightarrow{r} \epsilon
        \\
        (d'):&\;\;  x \cdot y \cdot x \cdot y \cdot y \cdot x \cdot y \cdot x \xrightarrow{\ell} x \cdot y \cdot y \cdot x \xrightarrow{\ell} \epsilon
    \end{align*}
    These pieces can be assembled into a derivational proof $P = \{ ( \ell, d ), ( \ell', d' ) \}$.
    This proof is indexed, since $\ell$ and $\ell'$ each appear exactly once as labels.
    This proof is well-formed, since $r$, $r'$, $d$, and $d'$ are all in-scope.
    The proof is valid, since each step of each derivation follows.
    Of course, it is possible to expand out $d'$ using the steps of $d$.
    This yields a new derivation $d''$ defined as follows.
    \begin{equation*}
        (d''):\;\;
        x \cdot y \cdot x \cdot y \cdot y \cdot x \cdot y \cdot x
        \xrightarrow{r'}
        x \cdot y \cdot x \cdot x \cdot y \cdot x
        \xrightarrow{r}
        x \cdot y \cdot y \cdot x
        \xrightarrow{r'}
        x \cdot x
        \xrightarrow{r}
        \epsilon
    \end{equation*}
    This new derivational proof corresponds to the substitution $P[ \ell' \mapsto d'' ]$.
    In this new proof, the derivation of $\ell'$ no longer relies on the lemma $\ell$.
    Later in this section, lemma eliminating substitutions will be used to extract Tietze transformations from derivational proofs with acyclic derivation graphs.
    \qed
\end{example}

\begin{example}[Cyclic Derivations]
    \label{Ex:Cyclic}
    Consider $G = \langle x, y \mid x^2 \approx \epsilon, y^2 \approx \epsilon \rangle$.
    It is not hard to show that $G \cong \mathbb{Z}_2 \star \mathbb{Z}_2$, where $( \star )$ denotes the free product of groups.
    It follows that $G$ is non-abelian.
    In particular, $\pi_G( x \cdot y ) \ne \pi_G( y \cdot x )$.
    Now, consider the proof $P = \{ ( ( 0, ( x \cdot y, y \cdot x ) ), d ), ( ( 0, ( x \cdot y^3 , y \cdot x \cdot y ) ), d' ) \}$, where $d$ and $d'$ are defined as follows.
    Let $r = ( y^2, \epsilon )$, $\ell = ( 0, ( x \cdot y, y \cdot x ) )$ and $\ell' = ( 0, ( x \cdot y^3, y \cdot x \cdot y^2 ) )$. 
    \begin{align*}
        (d):&\;\; x \cdot y \xleftarrow{r} x \cdot y^3 \xrightarrow{\ell'} y \cdot x \cdot y^2 \xrightarrow{r} y \cdot x
        \\
        (d'):&\;\; x \cdot y^3 \xrightarrow{\ell} y \cdot x \cdot y^2 
    \end{align*}
    This proof is indexed, well-formed, and valid.
    However, these derivations suggest that $x \cdot y \sim y \cdot x$.
    The problem in this proof is that $d$ depends on $d'$ and $d'$ depends on $d$.
    In other words, the proof is self-referential.
    It will be shown later in this section that if $\Gamma_D(P)$ is acyclic, then $P$ is not self-referential.
    This motivates the requirement that $\Gamma_D(P)$ is acyclic throughout the rest of this section.
    \qed
\end{example}

\subsubsection{Substitutions and Derivation Graphs}

This section relates derivational proofs and substitutions to the derivation graphs they induce.
When a derivation $( \ell, d )$ depends on a derivation $( \ell', d' )$, we say that $\ell'$ is a lemma for $\ell$.
In \cref{Lemma:App:SucFreeDeriv}, lemma-free derivations are characterized by their vertices in a derivation graph.
Similarly, \cref{Lemma:App:Edgeless} characterizes lemma-free proofs in terms of their derivation graphs.
As expected, \cref{Lemma:App:Subgraph2} shows that if a substitution only eliminates the use of lemmas, such as in \cref{Ex:DerivsSubs}, then the resulting derivation graph is a subgraph of the original graph.
To this end, \cref{Lemma:App:DerivRewrite} provides sufficient conditions for a valid substitution.
Together, these four lemmas give a graph-theoretic characterization of the lemma substitution in \cref{Ex:DerivsSubs}.

\begin{lemma}
    \label{Lemma:App:SucFreeDeriv}
    Let $P$ be a proof for $\langle \Sigma \mid R \rangle$.
    If $( \ell, d ) \in P$ and $\ell$ has no children in $\Gamma_D( P )$, then $d \in R^*$.
\end{lemma}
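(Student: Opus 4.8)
The plan is to argue by contradiction directly from the definition of the derivation graph together with the \textbf{Well-founded} condition on $P$. The crux is a simple observation: an entry of the derivation $d$ fails to lie in $R$ precisely when it is a label from $L(\Sigma)$, and any such entry is forced by well-foundedness to be a vertex of $\Gamma_D(P)$ and hence, by the definition of the edge relation, a child of $\ell$.

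First I would fix $(\ell, d) \in P$ with $\ell$ having no children in $\Gamma_D(P)$, and reduce the goal $d \in R^*$ to showing that every entry $d_k$ with $k \in \{1, 2, \ldots, |d|\}$ lies in $R$. Here I would note that since $d \in (L(\Sigma) \cup R)^*$ and $L(\Sigma) = \N \times (\Sigma^* \times \Sigma^*)$ is disjoint from $R \subseteq \Sigma^* \times \Sigma^*$ (elements of $L(\Sigma)$ carry a natural-number index, those of $R$ do not), each $d_k$ lies in exactly one of $R$ or $L(\Sigma)$. Then I would suppose, for contradiction, that $d_k \in L(\Sigma)$ for some $k$. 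By the \textbf{Well-founded} condition there exists $d' \in (L(\Sigma) \cup R)^*$ with $(d_k, d') \in P$, so $d_k$ is a vertex of $\Gamma_D(P)$. Since $(\ell, d) \in P$ and $d_k$ occurs at position $k$ of $d$, the definition of the edge relation gives $(\ell, d_k) \in E$, i.e.\ $d_k$ is a child of $\ell$. This contradicts the hypothesis that $\ell$ has no children, so in fact no entry of $d$ lies in $L(\Sigma)$, whence every $d_k \in R$ and therefore $d \in R^*$.

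I expect no genuine obstacle here; the argument is a one-line unfolding of the definitions. The only point deserving explicit care is the disjointness of $R$ and $L(\Sigma)$, which is what licenses the dichotomy ``$d_k \in R$ or $d_k \in L(\Sigma)$'' and thereby ensures that an entry outside $R$ triggers the well-foundedness clause and produces an outgoing edge from $\ell$.
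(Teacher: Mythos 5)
Your proof is correct and takes essentially the same route as the paper's: fix an entry $d_k$, suppose it is not in $R$, and derive a contradiction because the definition of the edge relation would then make $d_k$ a child of $\ell$. The only difference is that you make explicit two points the paper leaves implicit — the disjointness of $L(\Sigma)$ and $R$, and the use of the \textbf{Well-founded} condition to ensure $d_k$ is actually a vertex of $\Gamma_D(P)$ — which is a mild improvement in rigor, not a different argument.
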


\begin{proof}
    Let $k \in \{ 1, 2, \ldots, |d| \}$.
    Assume for the intent of contradiction that $d_k \not \in R^{*}$.
    Then $( \ell, d_k )$ is an edge in $\Gamma_D( P )$.
    However, $\ell$ has no children in $\Gamma_D( P )$.
    Then $d_k \in R$.
    Since $k$ was arbitrary, then $d \in R^{*}$.
\end{proof}

\begin{lemma}
    \label{Lemma:App:Edgeless}
    If $P$ is a proof for $\langle \Sigma \mid R \rangle$, then $\Gamma_D( P )$ is edgeless if and only if $d \in R^{*}$ for all $( \ell, d ) \in P$.
\end{lemma}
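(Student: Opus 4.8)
The statement is a biconditional, so the plan is to prove the two implications separately, both resting on a single structural observation: the label set $L( \Sigma ) = \N \times ( \Sigma^* \times \Sigma^* )$ and the relation set $R \subseteq \Sigma^* \times \Sigma^*$ are disjoint, since their elements have incompatible shapes (a label carries a leading index in $\N$, a relation does not). Consequently, a derivation step $d_k \in L(\Sigma) \cup R$ lies in $R$ precisely when it is not a lemma label, which is exactly the information recorded by the edges of $\Gamma_D( P )$.

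For the forward direction, I would assume $\Gamma_D( P )$ is edgeless and fix an arbitrary $( \ell, d ) \in P$. Being edgeless means in particular that $\ell$ has no children in $\Gamma_D( P )$, so \cref{Lemma:App:SucFreeDeriv} applies directly and yields $d \in R^{*}$. Since $( \ell, d )$ was arbitrary, this gives the claim. For the reverse direction, I would argue by contradiction: assume $d \in R^{*}$ for every $( \ell, d ) \in P$ but that $\Gamma_D( P )$ is not edgeless, so some edge $( \ell, \ell' ) \in E$ exists. By the definition of the derivation graph, this produces a pair $( \ell, d ) \in P$ and an index $k$ with $d_k = \ell'$. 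As $\ell'$ is a vertex of $\Gamma_D( P )$ it is a label, hence $d_k \in L( \Sigma )$ and therefore $d_k \notin R$; but $d \in R^{*}$ forces $d_k \in R$, a contradiction. Thus the graph must be edgeless.

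The proof is short, and the only genuine subtlety — the ``main obstacle'', such as it is — is making explicit the disjointness of labels and relations, which is what licenses reading off membership of every step in $R$ from the mere absence of lemma edges. Once that is noted, the forward direction is an immediate appeal to \cref{Lemma:App:SucFreeDeriv} applied at every vertex, and the reverse direction is a one-line contradiction extracted from the definition of $E$.
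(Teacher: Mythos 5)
Your proof is correct and takes essentially the same route as the paper: the paper proves the contrapositive as a chain of equivalences unfolding the definition of $E$, which is exactly your two implications combined, with your appeal to \cref{Lemma:App:SucFreeDeriv} in the forward direction reproducing the same childless-vertex argument that the paper inlines. The disjointness of $L(\Sigma)$ and $R$ that you make explicit is the same fact the paper leaves implicit in its final equivalence, so this is a presentational difference only.
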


\begin{proof}
    Let $\Gamma_D( P ) = ( V, E )$.
    Consider the contrapositive statement.
    Then there exists an $( \ell, \ell' ) \in E$.
    This is true if and only if there exists a $( \ell, d ) \in P$ and $k \in \{ 1, 2, \ldots, |d| \}$ such that $d_k = \ell'$.
    This is true if and only if there exists an $( \ell, d ) \in P$ such that $d \not \in R^{*}$.
\end{proof}

\begin{lemma}
    \label{Lemma:App:Subgraph2}
    Let $P$ be a proof for $\langle \Sigma \mid R \rangle$ and $( V, E ) = \Gamma_D( P )$.
    If $( \ell, d ) \in P$ and there exists some $d' \in R^*$ such that $P' = P[ l \mapsto d' ]$ is also a proof for $\langle \Sigma \mid R \rangle$, then $\Gamma_D( P' ) = ( V, E_\ell )$ where $E_\ell = \{ ( \ell', \ell'' ) \in E \mid \ell \ne \ell' \}$.
\end{lemma}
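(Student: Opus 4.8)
The plan is to verify the two components of the digraph $\Gamma_D( P' ) = ( V', E' )$ separately, namely that the vertex set is unchanged, $V' = V$, and that the edge set is exactly $E_\ell$. For the vertices, recall that by the definition of proof substitution, $P' = ( P \setminus \{ ( \ell, d ) \} ) \cup \{ ( \ell, d' ) \}$, so the set of labels occurring in $P'$ coincides with the set of labels occurring in $P$. Since $V = \{ \ell' \mid ( \ell', d'' ) \in P \}$ depends only on these labels, $V' = V$ is immediate. Here I would also invoke the \textbf{Indexed} condition on $P$ and $P'$, which guarantees that the derivation paired with any given label is unique, so that the derivation graph is well defined and the source of each edge determines which derivation induced it.

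For the edges, I would split on whether the source vertex equals $\ell$. An edge $( \ell', \ell'' ) \in E'$ exists, by the definition of the derivation graph, exactly when the (unique) derivation $d''$ paired with $\ell'$ in $P'$ contains $\ell''$ as one of its entries. For any source $\ell' \ne \ell$, the pair $( \ell', d'' )$ lies in $P'$ if and only if it lies in $P$, since substitution alters only the derivation attached to $\ell$; hence the outgoing edges from such an $\ell'$ are identical in $E$ and $E'$. For the source $\ell$, the derivation attached to it in $P'$ is $d'$, and the hypothesis $d' \in R^*$ guarantees that every entry of $d'$ lies in $R$ rather than in $L( \Sigma )$. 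Because an edge is induced only by an entry that is itself a label, i.e.\ an element of $L( \Sigma )$, the vertex $\ell$ has no outgoing edges in $\Gamma_D( P' )$. Combining the two cases yields $E' = \{ ( \ell', \ell'' ) \in E \mid \ell' \ne \ell \} = E_\ell$, as required.

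The argument is largely bookkeeping, and I do not anticipate a genuine obstacle. The one point requiring care is the case $\ell' = \ell$, where one must combine the definition of the derivation graph with the fact that $d' \in R^*$ contains no element of $L( \Sigma )$, so that $\ell$ becomes the source of no edge. It is also worth double-checking the edges directed \emph{into} $\ell$, that is, edges $( \ell', \ell )$ with $\ell' \ne \ell$ recording that some other derivation uses $\ell$ as a lemma: these are preserved, since $\ell$ remains a vertex and the derivations of all $\ell' \ne \ell$ are untouched, and they are correctly retained in $E_\ell$, because that set discards only edges whose source is $\ell$.
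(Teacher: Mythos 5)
Your proposal is correct and follows essentially the same route as the paper's proof: both rest on the observations that substitution leaves the label set (hence the vertex set) unchanged, that derivations attached to labels other than $\ell$ are untouched, and that $d' \in R^*$ contains no elements of $L(\Sigma)$ so $\ell$ acquires no outgoing edges. The paper organizes this as a double-inclusion argument on vertices and edges while you case-split on the edge source, but the content is the same.
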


\begin{proof}
    Let $\Gamma_D( P' ) = ( V', E' )$.
    \begin{enumerate}
    \item[--] \textbf{Vertices ($\subseteq$)}.
           Let $\ell' \in V'$.
           Then there exists some $\delta \in ( L(\Sigma) \cup R )^{*}$ such that $( \ell', \delta ) \in P'$.
           Then either $( \ell', \delta ) \in P$ or $( \ell', \delta ) = ( \ell, d' )$.
           If $( \ell', \delta ) \in P$, then $\ell \in V$.
           If $( \ell', \delta ) = ( \ell, d' )$, then $\ell' \in V$ since $( \ell, d ) \in P$.
           In either case $\ell' \in V$.
           Since $\ell'$ was arbitrary, then $V' \subseteq V$.
    \item[--] \textbf{Vertices ($\supseteq$)}.
           Let $\ell' \in V$.
           Then there exists some $\delta \in ( L(\Sigma) \cup R )^{*}$ such that $( \ell', \delta ) \in P$.
           Then either $\ell' = \ell$ or $\ell' \ne \ell$.
           If $\ell' = \ell$, then $\ell' \in V'$ since $( \ell', d' ) \in P'$.
           If $\ell' \ne \ell$, then $\ell' \in V'$ since $( \ell', \delta ) \in P \setminus \{ ( \ell, d ) \}$.
           In either case, $\ell' \in V$,
           Since $\ell'$ was arbitrary, then $V' \subseteq V$.
    \item[--] \textbf{Edges ($\subseteq$)}.
           Let $( \ell', \ell'' ) \in E'$.
           Then there exists some $\delta \in ( L(\Sigma) \cup R)^*$ and $k \in \{ 1, 2, \ldots, |\delta| \}$ such that $( \ell', \delta ) \in P'$ and $\delta_k = \ell''$.
           Then $\delta \not \in R^*$.
           Consequently, $\delta \ne d'$.
           Then $\ell \ne \ell'$, since $P'$ is indexed.
           Then $( \ell', \delta ) \in P$.
           Consequently, $( \ell', \ell'' ) \in E$.
           Since $\ell' \ne \ell$, then $( \ell', \ell'' ) \in E_\ell$.
           Since $( \ell', \ell'' )$ was arbitrary, then $E' \subseteq E_\ell$.
    \item[--] \textbf{Edges ($\supseteq$)}.
           Assume that $( \ell', \ell'') \in E_\ell$.
           Then $\ell' \ne \ell$ and $( \ell', \ell'' ) \in E$.
           Then there exists some $\delta \in ( L(\Sigma) \cup R)^*$ and $k \in \{ 1, 2, \ldots, |\delta| \}$ such that $( \ell', \delta ) \in P$ and $\delta_k = v$.
           Since $\ell' \ne \ell$, then $( \ell', \delta ) \in P'$.
           Then $( \ell', \ell'' ) \in E'$.
           Since $( \ell', \ell'' )$ was arbitrary, then $E_\ell \subseteq E'$.
    \end{enumerate}
    Then $\Gamma_D( P' ) = ( V, E_\ell )$.
\end{proof}

\begin{lemma}
    \label{Lemma:App:DerivRewrite}
    Let $P$ be a proof for $\langle \Sigma \mid R \rangle$ and $V = \{ \ell \mid ( \ell, d ) \in P \}$.
    If $( \ell, d ) \in P$, $d' \in ( V \cup R )^{*}$, and $( \ell, d' )$ is valid, then $P[ \ell \mapsto d' ]$ is a proof for $\langle \Sigma \mid R \rangle$.
\end{lemma}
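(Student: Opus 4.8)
The plan is to verify directly that $P' := (P \setminus \{(\ell,d)\}) \cup \{(\ell,d')\}$ satisfies each of the three defining conditions of a derivational proof in $\langle \Sigma \mid R \rangle$, namely \textbf{Indexed}, \textbf{Well-founded}, and \textbf{Valid}. The single observation driving all three arguments is that the substitution leaves the label set invariant: since $P$ is \textbf{Indexed}, the pair $(\ell,d)$ is the unique element of $P$ carrying the label $\ell$, so $P'$ has exactly the same set of labels $V = \{\ell' \mid (\ell',\delta) \in P'\}$ as $P$.

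The \textbf{Indexed} condition is then immediate: the elements of $P'$ are those of $P$ with $(\ell,d)$ replaced by $(\ell,d')$, and because the only element of $P$ with label $\ell$ has been removed, $(\ell,d')$ is distinct from every remaining element and all labels stay pairwise distinct. For \textbf{Valid}, the key point to record is that validity of a single pair $(\ell'',\delta)$ is self-contained: it asserts the existence of a rewriting sequence $v$ built from $\ell''$ and $\delta$ alone, in which each step uses either a relation $\delta_k \in R$ or the relation component $r$ of a label $\delta_k = (n',r)$, and never the derivation associated with that label. Consequently validity of each $(\ell'',\delta) \in P \setminus \{(\ell,d)\}$ is inherited verbatim from $P$, while validity of the new pair $(\ell,d')$ is precisely the hypothesis that $(\ell,d')$ is valid.

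The main point is the \textbf{Well-founded} condition, since the element witnessing a dependency in $P$ might have been the very pair $(\ell,d)$ that is removed. For an old pair $(\ell'',\delta) \in P \setminus \{(\ell,d)\}$ and an index $k$ with $\delta_k \in L(\Sigma)$, well-foundedness in $P$ supplies some $(\delta_k,\delta') \in P$; if $\delta_k \ne \ell$ this element survives in $P'$, and if $\delta_k = \ell$ then the replacement $(\ell,d')$ supplies a witness carrying the same label. For the new pair $(\ell,d')$ I invoke the hypothesis $d' \in (V \cup R)^*$: each entry $d'_k$ is either a relation in $R$, or a label in $V$, in which case, by the definition of $V$ as the label set of $P'$, some element of $P'$ carries that label, as required. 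This is exactly where the restriction to $V$ rather than all of $L(\Sigma)$ is needed, as a label outside $V$ appearing in $d'$ would break well-foundedness. Having checked all three conditions, $P'$ is a derivational proof and the claim follows. I expect well-foundedness, and specifically the bookkeeping that the substitution leaves the label set $V$ invariant, to be the only load-bearing step.
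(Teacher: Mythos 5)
Your proposal is correct and follows essentially the same route as the paper's proof: both verify the \textbf{Indexed}, \textbf{Well-founded}, and \textbf{Valid} conditions for $P[\ell \mapsto d']$ directly, splitting each check according to whether the pair or dependency label in question equals $\ell$, with the replacement pair $(\ell, d')$ serving as the witness when it does. Your explicit observations that the label set is invariant under the substitution and that validity of a pair is self-contained (using only the relation components of labels, never their derivations) are the same facts the paper uses, just stated more crisply.
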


\begin{proof}
    Let $P' = [ \ell \mapsto d' ]$.
    It must be shown that $P'$ is indexed, well-formed, and valid.
    \begin{enumerate}
        \item[--] \textbf{Indexed}.
                  Let $( \ell', \delta ) \in P'$ and $( \ell'', \delta' ) \in P'$ distinct.
                  Without loss of generality, assume $\ell' \ne \ell$.
                  Then $( \ell', \delta ) \in P$.
                  Now, these are two cases to consider, depending on whether $\ell'' = \ell$.
                  If $\ell'' = \ell$, then $( \ell'', d ) \in P$ and $\ell' \ne \ell''$ since $P$ is indexed.
                  If $\ell'' \ne \ell$, then $( \ell'', \delta' ) \in P$ and $\ell' \ne \ell''$ since $P$ is indexed.
                  In either case $\ell' \ne \ell'$.
                  Since $( \ell', \delta )$ and $( \ell'', \delta' )$ were arbitrary, then $P'$ is indexed.
        \item[--] \textbf{Well-formed}.
                  Let $( \ell', \delta ) \in P'$.
                  There are two cases to consider, depending on whether $\ell' = \ell$.
                  \begin{enumerate}
                  \item[--]
                        If $\ell' = \ell$, then $\delta = d'$, since $P'$ is indexed.
                        Then $\delta \in ( V \cup R )^*$.
                        Then for all $k \in \{ 1, 2, \ldots, |\delta| \}$, either $\delta_k \in R$ or $\delta_k \in V$.
                        If $\delta_k \in V$ and $\delta_k = \ell$, then $( \delta_k, d' ) \in P'$.
                        If $\delta_k \in V$ and $\delta_k \ne \ell$, then there exists some $\delta' \in ( L(\Sigma) \cup R )^*$ such that $( d_k, \delta' ) \in P$.
                        Since $\delta_k \ne \ell$, then $( \delta_k, \delta' ) \in P'$.
                        In either case, if $\delta_k \in V$, then there exists some $\delta' \in ( L(\Sigma) \cup R )^*$ such that $( d_k, d'' ) \in P'$
                        Since $k$ was arbitrary, then $( \ell', \delta )$ is well-formed.
                  \item[--]
                        If $\ell' \ne \ell$, then $( \ell', \delta ) \in P$.
                        Let $k \in \{ 1, 2, \ldots, |\delta| \}$.
                        Since $P$ is well-formed, then either $\delta_k \in R$ or there exists some $\delta' \in ( L(\Sigma) \cup R)^*$ such that $( \delta_k, \delta' ) \in P$.
                        There are two cases to consider, depending on whether $\delta_k = \ell$.
                        If $\delta_k = \ell$, then $( \ell, d' ) \in P'$.
                        If $\delta_k \ne \ell$, then $( \delta_k, \delta' ) \in P'$.
                        In either case, there exists some $\delta' \in ( L(\Sigma) \cup R)^*$ such that $( \delta_k, \delta' ) \in P'$.
                        Since $k$ was arbitrary, then $( \ell', \delta )$ is well-formed.
                  \end{enumerate}
                  In either case, $( \ell', \delta )$ is well-formed.
                  Since $( \ell', \delta )$ was arbitrary, then $P'$ is well-formed.
        \item[--] \textbf{Valid}.
                  Let $( \ell', \delta ) \in P'$.
                  Now these are two cases to consider, depending on whether $\ell' = \ell$.
                  If $\ell' = \ell$, then $\delta = d'$ since $P'$ is indexed, and consequently $\delta$ is valid by assumption.
                  If $\ell' \ne \ell$, then $( \ell', \delta ) \in P$ and $\delta$ is valid by the validity of $P$.
                  In either case, $( \ell', \delta )$ is valid.
                  Since $( \ell', \delta )$ was arbitrary, then $P'$ is valid.
    \end{enumerate}
    Then $P'$ is a proof for $\langle \Sigma \mid R \rangle$.
\end{proof}

\subsubsection{From Derivational Proofs to Tietze Transformations}

Let $\langle \Sigma \mid R \rangle$ be a monoid presentation.
The goal of this section is to prove \cref{Thm:App:Derivation}, which states that acyclic derivational proofs are sound for the isomorphism of finite monoid presentations.
The completeness of acyclic derivational proofs follows immediately from the fact that every statement of the form $w \sim_R w'$ corresponds to at least one finite derivation.
The proof proceeds as follows.
First, \cref{Lemma:App:Construct0} shows that the lemma substitutions outlined in \cref{Ex:DerivsSubs} preserve the structure of derivational proofs.
This is used in \cref{Lemma:App:Construct} to show that if a derivation $( \ell, d )$ in a proof $P$ depends on a lemma $( \ell', d' )$ which follows directly from $R$, then $d$ can be rewritten so that all dependencies on $\ell'$ are removed without introducing any new dependencies.
This is used repeated in \cref{Lemma:App:ProofRewrite} to show that any derivation $( \ell, d )$ which depends only on lemmas which follow directly from $R$, can be rewritten to also follow directly from $R$.
This is extended in \cref{Lemma:App:Derivation} to show that any finite and acyclic derivational proof $P$ can be written into a proof $P'$ such that every derivation depends only on $R$.
Then each derivation in $P'$ follows from $( \sim_R )$, and \cref{Thm:App:Derivation} follows immediately by induction.

\begin{lemma}
    \label{Lemma:App:Construct0}
    Let $P$ be a proof for $\langle \Sigma \mid R \rangle$.
    If $( \ell, d )$ and $( \ell', d' )$ are derivations in $P$ and there exists $\delta, \delta' \in ( L(\Sigma) \cup R )^*$ such that $d = \delta \cdot \ell' \cdot \delta'$, then $P[ \ell \mapsto \delta \cdot d' \cdot \delta' ]$ is a proof for $\langle \Sigma \mid R \rangle$.
\end{lemma}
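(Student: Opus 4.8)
The plan is to reduce the statement to \cref{Lemma:App:DerivRewrite}. That lemma shows that, writing $V = \{ \ell \mid ( \ell, d ) \in P \}$ for the set of labels occurring in $P$, if $( \ell, d ) \in P$, if $\hat d \in ( V \cup R )^*$, and if $( \ell, \hat d )$ is valid, then $P[ \ell \mapsto \hat d ]$ is a proof for $\langle \Sigma \mid R \rangle$. Setting $\hat d = \delta \cdot d' \cdot \delta'$, it therefore suffices to check the membership condition $\delta \cdot d' \cdot \delta' \in ( V \cup R )^*$ and the validity of the pair $( \ell, \delta \cdot d' \cdot \delta' )$. Throughout I would write $\ell = ( n, ( w, w' ) )$ and $\ell' = ( n', ( a, b ) )$.

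For the membership condition I would appeal to well-foundedness. Since $( \ell, d ) \in P$ with $d = \delta \cdot \ell' \cdot \delta'$, each symbol of $d$ either lies in $R$ or is a label occurring as a first coordinate of some pair in $P$, hence lies in $V \cup R$; in particular $\delta, \delta' \in ( V \cup R )^*$. Applying the same observation to $( \ell', d' ) \in P$ gives $d' \in ( V \cup R )^*$, and concatenating yields $\delta \cdot d' \cdot \delta' \in ( V \cup R )^*$.

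The heart of the argument is establishing validity, and the idea is to \emph{inline} the derivation $d'$ into the context in which the step $\ell'$ was applied. Using the validity of $( \ell, d )$, fix a witnessing sequence $v_1, \ldots, v_{|d|+1}$ with $v_1 = w$ and $v_{|d|+1} = w'$, and set $p = |\delta|$, so that the $( p + 1 )$-st step is the transition $v_{p+1} \xrightarrow{( a, b )} v_{p+2}$ driven by the relation component of $\ell'$. Unfolding the definition of a rewriting step, there are $s, t \in \Sigma^*$ with $v_{p+1} = s \cdot a \cdot t$ and $v_{p+2} = s \cdot b \cdot t$. Using the validity of $( \ell', d' )$, fix a witnessing sequence $u_1, \ldots, u_{|d'|+1}$ with $u_1 = a$ and $u_{|d'|+1} = b$. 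Because single-step rewriting is closed under left and right context, each step $u_j \to u_{j+1}$ lifts to $s \cdot u_j \cdot t \to s \cdot u_{j+1} \cdot t$ under the same relation (a relation in $R$, or the one named by a label). The witnessing sequence for $\delta \cdot d' \cdot \delta'$ is then obtained by concatenating three segments: the prefix $v_1, \ldots, v_{p+1}$ for the $\delta$-part; the lifted sequence $s \cdot u_1 \cdot t, \ldots, s \cdot u_{|d'|+1} \cdot t$ for the $d'$-part, which runs from $v_{p+1}$ to $v_{p+2}$ since $s \cdot u_1 \cdot t = s \cdot a \cdot t = v_{p+1}$ and $s \cdot u_{|d'|+1} \cdot t = s \cdot b \cdot t = v_{p+2}$; and the suffix $v_{p+2}, \ldots, v_{|d|+1}$ for the $\delta'$-part. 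The segments agree at the shared endpoints $v_{p+1}$ and $v_{p+2}$, and every transition is a valid rewrite, so the concatenation witnesses the validity of $( \ell, \delta \cdot d' \cdot \delta' )$. \cref{Lemma:App:DerivRewrite} then yields that $P[ \ell \mapsto \delta \cdot d' \cdot \delta' ]$ is a proof for $\langle \Sigma \mid R \rangle$.

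The only delicate point I anticipate is the bookkeeping in the validity step: one must confirm that the relation effecting the $\ell'$-step is exactly the relation component $( a, b )$ of the label $\ell'$, so that the endpoints of the lifted $d'$-segment coincide with $v_{p+1}$ and $v_{p+2}$ and the three segments glue without gaps or index shifts. The membership condition and the closure of string rewriting under context are routine, and notably the argument needs no acyclicity hypothesis, so it applies uniformly even when $d'$ itself mentions $\ell'$ or $\ell$.
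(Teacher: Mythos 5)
Your proposal is correct and follows essentially the same route as the paper: both splice the witnessing sequence of $( \ell', d' )$ into the witnessing sequence of $( \ell, d )$ at the position of the $\ell'$-step, verify that every step of $\delta \cdot d' \cdot \delta'$ rewrites correctly, and then conclude via \cref{Lemma:App:DerivRewrite}. If anything, you are more explicit than the paper on the one delicate point: the paper directly asserts a sequence for $d'$ running from the outer words $v_m$ to $v_{m+1}$, whereas you justify this by lifting the inner sequence through the context $s, t$ (closure of single-step rewriting under context), which is precisely what that assertion requires.
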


\begin{proof}
    Let $L_P = \{ \ell \mid ( \ell, d ) \in P \}$, $d'' = \delta \cdot d' \cdot \delta'$, and $( x, ( q, r ) ) = \ell$.
    Since $\delta, d', \delta' \in (L_p \cup R)^{*}$, then $d'' \in (L_p \cup R)^{*}$.
    Since $P$ is valid, then there exists $v \in ( \Sigma^{*} )^{n+1}$ such that $v_1 = q$, $v_{n+1} = r$, and $v_k \xrightarrow{d_k} v_{k+1}$ for all $k \in[n]$, where $n = |d|$.
    Let $m = |\delta|$.
    Since $P$ is valid and $v_m \xrightarrow{\ell'} v_{m+1}$, then there exists some $u \in ( \Sigma^{*} )^{s+1}$ such that $u_1 = v_m$, $u_{s+1} = v_{m+1}$, and $u_k \xrightarrow{d'_k} u_{k+1}$ for all $k \in [s]$, where $s = |d'|$.
    Define $v' = ( v_1, \ldots, v_m, u_1, \ldots, u_s, v_{m+1}, \ldots, v_n )$.
    Let $k \in \{ 1, 2, \ldots, |d''| \}$.
    There are five cases to consider.
    \begin{enumerate}
        \item[--]
              If $k < m$, then $v'_k = v_k$, $v'_{k+1} = v_{k+1}$, and $d''_k = \delta_k = d_k$.
              Then $v'_k \xrightarrow{d''_k} v'_{k+1}$.
        \item[--]
              If $k = m$, then $v'_k = v_{m}$, $v'_{k+1} = u_1 = v_{m+1}$, and $d''_k = \delta_k$.
              Then $v'_k \xrightarrow{d''_k} v'_{k+1}$.
        \item[--]
              If $m < k < m + s$, then $v'_k = u_{k - m}$, $v'_{k+1} = u_{k - m + 1}$, and $d''_k = d'_{k - m}$.
              Then $v'_k \xrightarrow{d''_k} v'_{k+1}$.
        \item[--]
              If $k = m + s$, then $v'_k = u_s$, $v'_{k+1} = v_{m+1} = u_{s+1}$, and $d''_k = d'_s$.
              Then $v'_k \xrightarrow{d''_k} v'_{k+1}$.
        \item[--]
              If $k > m + s$, then $v'_k = u_{k - s}$, $v'_{k+1} = u_{k + 1 - s}$, and $d''_k = \delta'_{k - m - s}$.
              Then $v'_k \xrightarrow{d''_k} v'_{k+1}$.
    \end{enumerate}
    In each case, $v'_k \xrightarrow{d''_k} v'_{k+1}$.
    Since $k$ was arbitrary, then $( \ell, d'' )$ is valid.
    Then $P[ \ell \mapsto d'' ]$ is a proof by \cref{Lemma:App:DerivRewrite}.
\end{proof}

\begin{lemma}
    \label{Lemma:App:Construct}
    Let $P$ be a proof for $\langle \Sigma \mid R \rangle$.
    If $( \ell, \ell' )$ is a maximal path rooted at $\ell$ in $\Gamma_D( P )$, then there exists a $\hat{d} \in ( L(\Sigma) \cup R )^{*}$ such that $P' = P[ \ell \mapsto \hat{d} ]$ is a proof for $\langle \Sigma \mid R \rangle$ and $\ell$ has one less child in $\Gamma_D( P' )$.
\end{lemma}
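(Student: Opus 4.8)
The plan is to remove $\ell'$ as a child of $\ell$ by repeatedly substituting, one occurrence at a time, the derivation attached to $\ell'$ into the derivation attached to $\ell$; the whole argument rests on first showing that this substituted derivation is built only from relations of $R$, and therefore contributes no new edges to the derivation graph.

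First I would extract the two structural facts hidden in the hypothesis. Since $(\ell, \ell')$ is a \emph{maximal} path rooted at $\ell$, the vertex $\ell'$ can have no children in $\Gamma_D(P)$: any child $\ell''$ of $\ell'$ would extend the path to $(\ell, \ell', \ell'')$, contradicting maximality. Let $(\ell', d') \in P$ be the derivation indexed by $\ell'$. By \cref{Lemma:App:SucFreeDeriv}, the absence of children forces $d' \in R^*$, so $d'$ contains no symbol from $L(\Sigma)$. I would also note $\ell \neq \ell'$, since $\ell'$ is a child of $\ell$ yet has no children itself.

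The core of the argument is an induction on $j$, the number of occurrences of $\ell'$ in the derivation $d$ with $(\ell, d) \in P$; by hypothesis $j \geq 1$. For the step I would isolate the leftmost occurrence, writing $d = \delta \cdot \ell' \cdot \delta'$ with $\delta$ free of $\ell'$, and invoke \cref{Lemma:App:Construct0} to conclude that $P_1 := P[\ell \mapsto \delta \cdot d' \cdot \delta']$ is again a proof for $\langle \Sigma \mid R \rangle$. Because $\ell \neq \ell'$, the pair $(\ell', d')$ survives into $P_1$, and because $d' \in R^*$ the new derivation has precisely $j - 1$ occurrences of $\ell'$; applying the inductive hypothesis to $P_1$ then produces a $\hat{d}$ with no occurrence of $\ell'$ such that $P_1[\ell \mapsto \hat{d}] = P[\ell \mapsto \hat{d}]$ is a proof. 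The crucial invariant to maintain is that each substitution deletes symbols $\ell'$ and inserts only elements of $R$, so the \emph{set} of labels occurring in the derivation of $\ell$ only ever loses $\ell'$ and never gains a new element.

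Finally I would translate this back to the graph. With $P' = P[\ell \mapsto \hat{d}]$, the entries for every label other than $\ell$ are unchanged, while the new derivation $\hat{d}$ contains exactly the labels that $d$ did except $\ell'$. Hence, by the definition of $\Gamma_D(P')$, the children of $\ell$ are exactly its former children with $\ell'$ removed, so $\ell$ has one less child. I expect the bookkeeping in the inductive step — certifying that iterating the single-occurrence substitution neither reintroduces $\ell'$ nor manufactures spurious children — to be the main obstacle, and the reason the $d' \in R^*$ observation must be secured at the outset.
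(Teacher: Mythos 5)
Your proposal is correct and takes essentially the same approach as the paper's proof: induction on the number of occurrences of $\ell'$ in the derivation attached to $\ell$, applying \cref{Lemma:App:Construct0} to substitute the derivation of $\ell'$ one occurrence at a time, where maximality of $(\ell,\ell')$ forces that derivation into $R^*$ (via \cref{Lemma:App:SucFreeDeriv}) so that no new labels, and hence no new children, are ever introduced. Your explicit observations that $\ell \neq \ell'$ and that $(\ell', d')$ survives each substitution simply make rigorous points the paper leaves implicit.
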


\begin{proof}
    Let $f: ( L(\Sigma) \cup R )^{*} \to \mathbb{N}$ count the number of occurrences of $\ell'$ in a derivation.
    Given a proof $Q$ for $\langle \Sigma \mid R \rangle$, let $C_{Q}: L(\Sigma) \to \mathcal{P}( L )$ map each $\ell \in L(\Sigma)$ to its children in $\Gamma_D( Q )$.
    Since $( \ell, \ell' )$ is a maximal path, then $\ell'$ has no children in $\Gamma_D( P )$.
    The proof follows by induction on the number of occurrences of $\ell'$ in the derivation.
    \begin{enumerate}
    \item[--] \textbf{Base Case}.
          Let $\hat{d} \in ( L(\Sigma) \cup R)^{*}$ with $P' = P[ \ell \mapsto \hat{d} ]$ a proof for $\langle \Sigma \mid R \rangle$ and $C_P( \ell ) = C_{P'}( \ell ) \cup \{ \ell' \}$.
          Assume that $f( \hat{d} ) = 0$.
          Then $\hat{d}_j \ne \ell'$ for all $k \in [ n ]$, where $n = |\hat{d}|$.
          Then $( \ell, \ell' ) \not \in \Gamma_D( P' )$.
          Then $\ell' \not \in C_{P'}( \ell )$.
          Then $C_{P'}( \ell ) = C_P( \ell ) \setminus \{ \ell' \}$.
          Since $\ell' \in C_P( \ell )$, then $|C_{P'}( \ell )| = |C_{P}( \ell )| - 1$.
    \item[--] \textbf{Inductive Hypothesis}.
          Let $d' \in ( L(\Sigma) \cup R )^{*}$ such that $P' = P[ \ell \mapsto d' ]$ a proof for $\langle \Sigma \mid R \rangle$ and $C_P( \ell ) = C_{P'}( \ell ) \cup \{ \ell' \}$.
          Assume that for some $k \in \mathbb{N}$, if $f( d' ) = k$, then exists a $\hat{d} \in ( L(\Sigma) \cup R )^{*}$ such that $P'' = P[ \ell \mapsto \hat{d} ]$ is a proof for $\langle \Sigma \mid R \rangle$ and $|C_{P''}( \ell )| = |C_{P}( \ell )| - 1$.
    \item[--] \textbf{Inductive Step}.
          Under the conditions of the inductive hypothesis, assume that $f( d' ) = k + 1$.
          Then there exists some $\delta, \delta' \in ( L(\Sigma) \cup R )^{*}$ such that $d' = \delta \cdot \ell' \cdot \delta'$. 
          Since $\ell'$ is a vertex in $\Gamma_D( P )$, then there exists some $d'' \in ( L(\Sigma) \cup R)^*$ such that $( \ell, d'' ) \in P$.
          Define $\hat{d} = \delta \cdot d'' \cdot \delta'$.
          By \cref{Lemma:App:Construct0}, $P'' = P[ l \mapsto \hat{d} ]$ is a proof for $\langle \Sigma \mid R \rangle$.
          Let $\ell'' \in C_{P}( \ell ) \setminus \{ \ell' \}$.
          Then there exists some $k \in \{ 1, 2, \ldots, |\delta| \}$ such that $d_k = \ell''$.
          Since $\ell'' \ne \ell$, then without loss of generality $k \le |\delta|$ and $\hat{d}_k = \delta_k = \ell''$.
          Then $\ell'' \in C_{P''}( \ell )$.
          Since $\ell''$ was arbitrary, then $C_{P}( \ell ) \setminus \{ \ell' \} \subseteq C_{P''}( \ell )$.
          Next, let $\ell'' \in C_{P''}( \ell )$.
          Then there exists some $k \in \{ 1, 2, \ldots, |\hat{d}| \}$ such that $\hat{d}_k = \ell''$.
          Since $d \in ( R^{*} )$, then without loss of generality $k \le |\delta|$ and $\delta_k = \hat{h}_k = \ell''$.
          Then $\ell'' \in C_P( \ell )$.
          Since $\ell''$ was arbitrary, then $C_{P''}( \ell ) \subseteq C_{P}( \ell )$.
          Then $C_{P}( \ell ) = C_{P''}( \ell ) \cup \{ \ell' \}$.
          Since $k + 1 = f( d' ) = f( \delta ) + 1 + f( \delta' )$, then $f( \hat{d} ) = f( \delta ) + f( \delta' ) = k$.
          Then by the inductive hypothesis, there exists some $\hat{d}' \in ( L(\Sigma) \cup R )^{*}$ such that $Q = P'[ \ell \mapsto \hat{d}' ]$ is a proof for $\langle \Sigma \mid R \rangle$ and $|C_{Q}( \ell )| = |C_{P'}( \ell )| - 1 = |C_{P}( \ell )| - 1$.
          Since $Q = P[ \ell \mapsto \hat{d}' ]$ by definition, then the inductive step holds.
    \end{enumerate}
    It follows by definition that $d \in (L(\Sigma) \cup R)^*$ and $P = P[ \ell \mapsto d ]$.
    Then by the principle of induction, there exists a $\hat{d} \in ( L(\Sigma) \cup R )^{*}$ such that $P' = P[ \ell \mapsto \hat{d} ]$ is a proof for $\langle \Sigma \mid R \rangle$ and $|C_{P''}( \ell )| = |C_{P}( \ell )| - 1$.
\end{proof}

\begin{lemma}
    \label{Lemma:App:ProofRewrite}
    Let $P$ be a proof for $\langle \Sigma \mid R \rangle$.
    If $( \ell, d ) \in P$ and all paths rooted at $\ell$ in $\Gamma_D( P )$ have length at most one, then there exists a $\hat{d} \in R^{*}$ such that $P[ \ell \mapsto \hat{d} ]$ is a proof for $\langle \Sigma \mid R \rangle$.
\end{lemma}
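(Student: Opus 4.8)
The plan is to induct on the number of children of $\ell$ in $\Gamma_D(P)$. The key preliminary observation is that the hypothesis—that every path rooted at $\ell$ has length at most one—says precisely that no child of $\ell$ has a child of its own, i.e., each child of $\ell$ is a leaf of $\Gamma_D(P)$. Consequently, for any child $\ell'$ of $\ell$ the path $(\ell, \ell')$ is maximal, and by \cref{Lemma:App:SucFreeDeriv} the derivation attached to $\ell'$ lies in $R^*$. These two facts are exactly the inputs demanded by \cref{Lemma:App:Construct}.

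For the base case I would take $\ell$ to have no children in $\Gamma_D(P)$; then \cref{Lemma:App:SucFreeDeriv} gives $d \in R^*$ immediately, so $\hat{d} = d$ works and $P[\ell \mapsto d] = P$ is already a proof. For the inductive step, suppose $\ell$ has $k+1$ children and the claim holds whenever $\ell$ has $k$ children. I would select any child $\ell'$ of $\ell$; since $(\ell, \ell')$ is a maximal path, \cref{Lemma:App:Construct} produces a $\hat{d} \in (L(\Sigma) \cup R)^*$ such that $P' = P[\ell \mapsto \hat{d}]$ is a proof and $\ell$ has exactly $k$ children in $\Gamma_D(P')$. The next step is to check that $P'$ still satisfies the hypothesis of the lemma, after which the inductive hypothesis applied to $P'$ delivers an $R^*$-derivation for $\ell$ in a proof obtained from $P'$, and hence from $P$, by substitution at $\ell$.

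The crux of the argument—and the step I expect to require the most care—is verifying that the hypothesis is preserved under this substitution. Since $P[\ell \mapsto \hat{d}]$ modifies only the derivation attached to $\ell$, the out-edges of every vertex other than $\ell$ are unchanged, so each surviving child of $\ell$ remains a leaf of $\Gamma_D(P')$; and because \cref{Lemma:App:Construct} replaces the occurrences of $\ell'$ by its $R^*$-derivation, no new lemma dependencies are introduced at $\ell$. Thus every path rooted at $\ell$ in $\Gamma_D(P')$ still has length at most one. I would spell out these two invariants explicitly, as the remainder of the proof is a routine appeal to \cref{Lemma:App:Construct,Lemma:App:SucFreeDeriv} together with the inductive hypothesis.
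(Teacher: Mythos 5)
Your proposal is correct and follows essentially the same route as the paper: the paper inducts on the number of paths rooted at $\ell$ (which, under the length-at-most-one hypothesis, equals the number of children of $\ell$), uses \cref{Lemma:App:SucFreeDeriv} for the base case, and applies \cref{Lemma:App:Construct} to a child $\ell'$ (noting it is a leaf, so $(\ell,\ell')$ is maximal) to reduce the child count before invoking the inductive hypothesis. Your explicit verification that the substitution preserves the hypothesis is a point the paper treats only tersely, but it is the same argument.
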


\begin{proof}
    Let $\Gamma_D( P ) = ( V, E )$.
    Since $|V| = |P|$ and all paths rooted at $\ell$ have length at most one, then the number of paths rooted at $\ell$ in $\Gamma_D( P )$ is finite.
    The proof follows by induction on the number of paths rooted at $\ell$ in $\Gamma_D( P )$.
    \begin{enumerate}
    \item[--] \textbf{Base Case}.
          Assume that $\Gamma_D( P )$ has zero paths rooted at $\ell$.
          Then $d \in R^{*}$ by \cref{Lemma:App:SucFreeDeriv}.
          Then $P = P[ l \mapsto d ]$ with $d \in R^*$.
    \item[--] \textbf{Inductive Hypothesis}.
          Assume that for some $k \in \mathbb{N}$, if $\Gamma_D( P )$ has $k$ paths rooted at $\ell$, then there exists a $\hat{d} \in R^{*}$ such that $P[ \ell \mapsto \hat{d} ]$ is a proof for $\langle \Sigma \mid R \rangle$.
    \item[--] \textbf{Inductive Step}.
          Assume that $\Gamma_D( P )$ has $k + 1$ paths rooted at $\ell$.
          Then there exists at least one path rooted at $\ell$ in $\Gamma_D( P )$.
          Since all paths rooted at $\ell$ in $\Gamma_D( P )$ have length one, then there exists some path $( \ell, \ell' )$ in $\Gamma_D( P )$ such that $\ell'$ has no children in $\Gamma_D( P )$.
          Then by \cref{Lemma:App:Construct}, there exists some $d' \in ( L(\Sigma) \cup R )^*$ such that $P' = P[ \ell \mapsto d' ]$ is a proof for $\langle \Sigma \mid R \rangle$ and $\ell$ has $k$ children in $\Gamma_D( P' )$.
          Since all paths rooted at $\ell$ have length one, then $\Gamma_D( P' )$ has $k$ paths rooted at $\ell$.
          Then by the inductive hypothesis, then there exists a $\hat{d} \in R^{*}$ such that $P[ \ell \mapsto \hat{d} ]$ is a proof for $\langle \Sigma \mid R \rangle$.
          Then the inductive step holds.
    \end{enumerate}
    Then by the principle of induction, there exists a $\hat{d} \in R^{*}$ such that $P[ \ell \mapsto \hat{d} ]$ is a proof for $\langle \Sigma \mid R \rangle$.
\end{proof}

\begin{lemma}
    \label{Lemma:App:Derivation}
    Let $P$ be a finite proof for $\langle \Sigma \mid R' \rangle$ with $R' \subseteq R \subseteq \Sigma^{*} \times \Sigma^{*}$.
    If $P \models R$ and $\Gamma_D( P )$ is acyclic, then there exists a proof $P'$ for $\langle \Sigma \mid R' \rangle$ such that $|P'| = |P|$, $P' \models R$, and $d \in (R')^{*}$ for all $( \ell, d ) \in P'$.
\end{lemma}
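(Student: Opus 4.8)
The plan is to prove the statement by induction on the number of edges of the derivation graph $\Gamma_D(P)$, flattening one derived relation at a time. Throughout, the only structural facts I need about a substitution $P \mapsto P[\ell \mapsto \hat{d}]$ with $\hat{d} \in (R')^*$ are supplied by the earlier lemmas: \cref{Lemma:App:ProofRewrite} produces such a flattened $\hat{d}$ (living in $(R')^*$ because the ambient relation set of the proof is $R'$) whenever all paths rooted at $\ell$ have length at most one, and \cref{Lemma:App:Subgraph2} records that the substitution deletes exactly the out-edges of $\ell$ from the derivation graph and adds none. For the base case, if $\Gamma_D(P)$ is edgeless then \cref{Lemma:App:Edgeless} already gives $d \in (R')^*$ for every $(\ell, d) \in P$, so $P' = P$ works: it has $|P'| = |P|$, and since no substitution has touched the labels, the witnessing set of derived relations is unchanged and $P' \models R$.

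For the inductive step, assume $\Gamma_D(P)$ has at least one edge. The crucial first move is to locate a vertex $\ell$ to which \cref{Lemma:App:ProofRewrite} applies --- one that has a child but all of whose children in turn have no children of their own. Such a vertex exists because $\Gamma_D(P)$ is finite and acyclic: by \cref{Lemma:App:Acyc} the length of a longest path is well defined, so I can fix a longest path $\ell_0 \to \cdots \to \ell_N$, with $N \ge 1$ since there is an edge, and set $\ell = \ell_{N-1}$. Then $\ell$ has the child $\ell_N$, and every child of $\ell$ must itself have no children, for otherwise such a child would have an outgoing edge and the path could be extended to length $N+1$, contradicting maximality. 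Hence every path rooted at $\ell$ has length at most one.

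Applying \cref{Lemma:App:ProofRewrite} at this $\ell$ yields $\hat{d} \in (R')^*$ such that $P'' = P[\ell \mapsto \hat{d}]$ is again a proof for $\langle \Sigma \mid R' \rangle$. It then remains to check that $P''$ satisfies the inductive hypothesis with strictly fewer edges. By \cref{Lemma:App:Subgraph2}, $\Gamma_D(P'')$ is $\Gamma_D(P)$ with the out-edges of $\ell$ removed; since $\ell$ had a child, at least one edge is deleted and none is added, so the edge count strictly drops and $\Gamma_D(P'')$, being a subgraph of an acyclic digraph, remains acyclic. Finiteness is preserved with $|P''| = |P|$, because the substitution replaces one derivation by another under the same label. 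Finally, the set $\{\, r \mid ((n,r), d) \in P''\,\}$ equals the corresponding set for $P$ --- substitution alters a derivation $d$, never a label $(n,r)$ --- so $P'' \models R$. The inductive hypothesis applied to $P''$ then provides the desired flat $P'$, and this $P'$ discharges the claim for $P$.

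The main obstacle is precisely the second paragraph: guaranteeing, at each stage, a vertex all of whose children have no children, so that \cref{Lemma:App:ProofRewrite} is applicable. Once this is isolated, the remaining work is bookkeeping --- verifying that a single flattening substitution preserves being a proof, preserves $|P|$, preserves acyclicity (as a subgraph), and preserves the entailment $\models R$ (since labels are untouched) --- all of which follow directly from the substitution lemmas already established.
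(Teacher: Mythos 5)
Your proof is correct and follows essentially the same route as the paper's: the base case via \cref{Lemma:App:Edgeless}, then an inductive step that flattens the penultimate vertex of a maximal path (so that all paths rooted at it have length at most one) using \cref{Lemma:App:ProofRewrite}, with \cref{Lemma:App:Subgraph2} handling the graph bookkeeping and the observation that substitutions never alter labels guaranteeing $P'' \models R$. The only difference is the induction measure — you count edges, which drops by at least one per substitution (so you implicitly need strong induction), whereas the paper counts vertices with children, which drops by exactly one — an immaterial variation.
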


\begin{proof}
    Let $f: L(\Sigma) \times ( L(\Sigma) \cup R' )^* \to \mathbb{N}$ count the number of vertices with children in the derivation graph of a proof.
    Since $\Gamma_D( P )$ has $|P|$ vertices, then $f( P ) \le |P|$.
    Since $P$ is finite, then $f( P )$ is also finite.
    The proof follows by induction on $f( P )$.
    \begin{enumerate}
        \item[--] \textbf{Base Case}.
              Assume that $f( P ) = 0$.
              Then there are no edges in $\Gamma_D( P )$.
              Then by \cref{Lemma:App:Edgeless}, $d \in ( R' )^{*}$ for all $( \ell, d ) \in P$.
        \item[--] \textbf{Inductive Hypothesis}
              Let $Q$ be a proof for $\langle \Sigma \mid R' \rangle$.
              Assume that for some $k \in \mathbb{N}$, if $f( Q ) = k$ with $P \models R$ and $\Gamma_D( Q )$ is acyclic, then there exists a proof $P'$ for $\langle \Sigma \mid R \rangle$ such that $|P'| = |Q|$, $P' \models R$, and $d \in (R')^{*}$ for all $( \ell, d ) \in P'$.
        \item[--] \textbf{Inductive Step}.
              Let $Q$ be a proof for $\langle \Sigma \mid R' \rangle$.
              Assume that $f( Q ) = k + 1$ with $Q \models R$ and $\Gamma_D( Q )$ acyclic.
              Since $f( Q ) > 0$, then there exists at least one edge $( \ell, \ell' )$ in $\Gamma_D( Q )$.
              By \cref{Lemma:App:Acyc}, there exists some path $( \ell_0, \ldots, \ell_n )$ in $\Gamma_D( Q )$ such that $\ell_0 = \ell$ every path rooted at $\ell$ has length at most $n$.
              Since $( \ell, \ell' )$ is a path of length one in $\Gamma_D( Q )$ rooted at $\ell$, then $n \ge 1$.
              Assume for the intent of contradiction that there exists a path of length at least $2$ rooted at $\ell_{n-1}$.
              Then there exists a path $( \ell_{n-1}, x, y )$ in $\Gamma_D( Q )$.
              Then $( \ell_0, \ldots, \ell_{n-1}, x, y )$ is a path of length $n+1$ in $\Gamma_D( Q )$ rooted at $\ell$.
              However, all paths rooted at $\ell$ have length at most $n$.
              Therefore, all paths rooted at $\ell_{n-1}$ have length at most one.
              Since $\ell_{n-1}$ is a vertex in $\Gamma_D( Q )$, then there exists some $d \in ( L(\Sigma) \cup R ')^*$ such that $( \ell_{n-1}, d ) \in Q$.
              Then by \cref{Lemma:App:ProofRewrite}, there exists some $d' \in ( R' )^*$ such that $P' = Q[ \ell_{n-1} \mapsto d' ]$ is a proof for $\langle \Sigma \mid R' \rangle$.
              Let $( z_{n-1}, r_{n-1} ) = \ell_{n-1}$.
              Since $Q \models R$ with respect to $\langle \Sigma \mid R' \rangle$, it follows that $R \subseteq R' \cup \{ r \mid ( ( m, r ), d ) \in P \}$.
              Then,
              \[
              R \subseteq R' \cup \{ r \mid ( ( z, r ), d ) \in P \setminus \{ \ell_{n-1}, d ) \} \cup \{ r_{n-1} \} \subseteq R' \cup \{ r \mid ( ( m, r ), d ) \in P' \}.
              \]
              Then $P' \models R$ with respect to $\langle \Sigma \mid R' \rangle$.
              By \cref{Lemma:App:Subgraph2}, $\Gamma_D( P' )$ is also a subgraph of $\Gamma_D( Q )$ with $f( P' ) = f( Q ) - 1 = k$.
              Since $\Gamma_D( P' )$ is a subgraph of $\Gamma_D( Q )$ with $\Gamma_D( Q )$ acyclic, then $\Gamma_D( P' )$ is also acyclic.
              Then by the inductive hypothesis, there exists a proof $P''$ for $\langle \Sigma \mid R \rangle$ such that $|P''| = |P'|$, $P'' \models R$, and $d \in (R')^{*}$ for all $( \ell, d ) \in P''$.
              Then $|P''| = (|Q|-1)+1 = |Q|$, since $Q$ is indexed.
              Then the inductive step holds.
    \end{enumerate}
    Then by the principle of induction, there exists a proof $P'$ for $\langle \Sigma \mid R \rangle$ such that $|P'| = |P|$, $P' \models R$, and $d \in (R')^{*}$ for all $( \ell, d ) \in P$.
\end{proof}

\begin{theorem}
    \label{Thm:App:Derivation}
    Let $P$ be a finite proof for $\langle \Sigma \mid R' \rangle$ with $R' \subseteq R \subseteq \Sigma^{*} \times \Sigma^{*}$.
    If $P \models R$ and $\Gamma_D( P )$ is acyclic, then there exists a length $|R \setminus R'|$ sequence of \textbf{Rel($+$)} transformations between $\langle \Sigma \mid R' \rangle$ and $\langle \Sigma \mid R \rangle$.
\end{theorem}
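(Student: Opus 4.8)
The plan is to reduce the statement to a clean derivability claim using \cref{Lemma:App:Derivation}, and then assemble the required sequence of \textbf{Rel($+$)} transformations by a short induction on $|R \setminus R'|$. Since $P$ is a finite proof for $\langle \Sigma \mid R' \rangle$ with $P \models R$ and $\Gamma_D(P)$ acyclic, \cref{Lemma:App:Derivation} furnishes a proof $P'$ for $\langle \Sigma \mid R' \rangle$ with $|P'| = |P|$, $P' \models R$, and $d \in (R')^{*}$ for every $(\ell, d) \in P'$. The role of $P'$ is to guarantee that every derivation in the proof is now \emph{lemma-free}, that is, it rewrites using only the relations of $R'$ and never appeals to another derived relation.

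Next I would extract from $P'$ the key fact that every relation of $R$ is already derivable from $R'$ alone. Fix $r = (w, w') \in R$. Since $P' \models R$, either $r \in R'$, or there is a derivation $((n, r), d) \in P'$. In the latter case $d \in (R')^{*}$, so the validity condition supplies a sequence $v_1 = w, v_2, \ldots, v_{m+1} = w'$ with $v_k \xrightarrow{d_k} v_{k+1}$ and $d_k \in R'$ for each $k$; hence $w \sim_{R'} w'$. Therefore, for every $r = (w, w') \in R \setminus R'$ we have $w \sim_{R'} w'$. Because $R' \subseteq S$ implies $\sim_{R'}\, \subseteq\, \sim_S$ for any relation set $S$, this derivability is preserved after adding further relations.

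Finally I would build the sequence of \textbf{Rel($+$)} transformations by induction on $|R \setminus R'|$. If $|R \setminus R'| = 0$, then $R = R'$ and the empty sequence suffices. Otherwise pick $r = (w, w') \in R \setminus R'$ and set $\widehat{R} = R \setminus \{ r \}$, so that $R' \subseteq \widehat{R} \subseteq R$ and $|\widehat{R} \setminus R'| = |R \setminus R'| - 1$. Since $\widehat{R} \subseteq R$ and $P \models R$, also $P \models \widehat{R}$, and $\Gamma_D(P)$ remains acyclic; the inductive hypothesis then gives a length $|\widehat{R} \setminus R'|$ sequence of \textbf{Rel($+$)} transformations between $\langle \Sigma \mid R' \rangle$ and $\langle \Sigma \mid \widehat{R} \rangle$. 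Since $w \sim_{R'} w'$ and $R' \subseteq \widehat{R}$, we have $w \sim_{\widehat{R}} w'$, so a single \textbf{Rel($+$)} transformation yields $\langle \Sigma \mid \widehat{R} \rangle \cong \langle \Sigma \mid \widehat{R} \cup \{ r \} \rangle = \langle \Sigma \mid R \rangle$. Concatenating produces a length $|R \setminus R'|$ sequence, as required.

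I expect the genuine difficulty to live entirely inside \cref{Lemma:App:Derivation}: converting an acyclic but lemma-laden proof into a lemma-free one while preserving entailment. Granting that lemma, the remaining argument is routine bookkeeping, the only point demanding care being the monotonicity of $(\sim)$ under enlarging the relation set, which is what ensures each newly added relation stays derivable from the growing presentation and hence that each step is a legitimate \textbf{Rel($+$)} transformation.
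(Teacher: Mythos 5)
Your proposal is correct and follows essentially the same route as the paper: invoke \cref{Lemma:App:Derivation} to obtain a lemma-free proof $P'$, observe that validity of $P'$ then gives $w \sim_{R'} w'$ for every $(w,w') \in R \setminus R'$, and induct on $|R \setminus R'|$ with each step a \textbf{Rel($+$)} transformation. The only difference is cosmetic: you peel relations off the top set ($\widehat{R} = R \setminus \{r\}$, appending the new \textbf{Rel($+$)} at the end and using monotonicity of $(\sim)$), whereas the paper grows the bottom set ($Q = R' \cup \{r\}$, performing the \textbf{Rel($+$)} first and re-using $P'$ as a proof for the enlarged presentation); both yield the same length-$|R \setminus R'|$ sequence.
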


\begin{proof}
    By \cref{Lemma:App:Derivation}, there exists a proof $P'$ for $\langle \Sigma \mid R' \rangle$ with $P' \models R$ and $d \in (R')^{*}$ for all $( \ell, d ) \in P$.
    The proof follows by induction on $|R \setminus R'|$.
    \begin{enumerate}
    \item[--] \textbf{Base Case}.
        If $|R \setminus R'| = 0$, then $R \subseteq R' \subseteq R$.
        Then $R = R'$ and $\langle \Sigma \mid R \rangle = \langle \Sigma \mid R' \rangle$.
        Then there exists a length zero sequence of \textbf{Rel($+$)} transformations between $\langle \Sigma \mid R' \rangle$ and $\langle \Sigma \mid R \rangle$.
    \item[--] \textbf{Inductive Hypothesis}.
        Let $P'$ be a proof with respect to $\langle \Sigma \mid Q \rangle$ with $Q \subseteq R$ and $P' \models R$.
        Assume that for some $k \in \mathbb{N}$, if $|R \setminus R'| = k$, then there exists a length $k$ sequence of \textbf{Rel($+$)} transformations between $\langle \Sigma \mid Q \rangle$ and $\langle \Sigma \mid R \rangle$.
    \item[--] \textbf{Inductive Step}.
        Under the conditions of the inductive hypothesis, assume that $|R \setminus R'| = k + 1$.
        Then there exists some $r \in R \setminus R'$, say $( w, w' ) = r$.
        Since $P' \models R$ and $r \not \in R'$, then there exists some $x \in \mathbb{N}$ and $d \in ( L(\Sigma) \cup R' )^*$ such that $( ( x, r ), d ) \in P'$.
        Since $d \in ( R' )^{*}$, then $w \sim_{R'} w'$ by validity of $P'$.
        Let $Q = R' \cup \{ r \}$.
        Then $\langle \Sigma \mid R' \rangle \cong \langle \Sigma \mid Q \rangle$ by \textbf{Rel($+$)}.
        Since $R' \subseteq Q$ and $P' \models R$ with respect to $\langle \Sigma \mid R' \rangle$, then $$R \subseteq R' \cup \{ r \mid ( ( n, r ), d ) \in P' \} \subseteq Q \cup \{ r \mid ( ( n, r ), d ) \in P' \}.$$
        Then $P' \models R$ with respect to $\langle \Sigma \mid Q \rangle$.
        Since $r \in R \setminus R'$, then $|R \setminus Q| = |R \setminus R'| - 1 = k$.
        Then by the inductive hypothesis, there exists a length $k$ sequence of \textbf{Rel($+$)} transformations between $\langle \Sigma \mid Q \rangle$ and $\langle \Sigma \mid R \rangle$.
        Then there exists a length $k + 1$ sequence of \textbf{Rel($+$)} transformations between $\langle \Sigma \mid R' \rangle$ and $\langle \Sigma \mid R \rangle$.
        Then the inductive step holds.
    \end{enumerate}
    Then by the principle of induction, there exists a length $|R \setminus R'|$ sequence of \textbf{Rel($+$)} transformations between $\langle \Sigma \mid R' \rangle$ and $\langle \Sigma \mid R \rangle$.
\end{proof}

\newpage
\section{Circuit Decompositions of Coxeter Generators}
\label{Appendix:E8Decomp}

In \cref{Sec:E8:DIntro}, the Coxeter generator $r_3$ was decomposed into a circuit over $\Sigma_D$.
In this section, the remaining $7$ Coxeter generators are decomposed into circuits over $\Sigma_D$.
Scalar multiples of the normal vectors are used freely.
Recall that $\CCGate{X}{0}{1}$ is a reflection about the normal vector $\ket{\hat{b}} = \ket{1} \otimes \ket{1} \otimes \ket{-}$.
Similarly, $\CGate{X}{1}{2}$ is a reflection about the normal vector $\ket{\overline{b}} = \ket{-} \otimes \ket{1} \otimes \ket{1}$.
\begin{enumerate}
\item[$r_1$.]
     This generator is defined by the normal vector $\ket{b_1} = \ket{0} \otimes \ket{0} \otimes \ket{-}$.
     Since $(\GateAt{X}{0} \circ \GateAt{X}{1}) \ket{\hat{b}} = \ket{b_1}$ with $(\GateAt{X}{0} \circ \GateAt{X}{1})^{-1} = \GateAt{X}{1} \circ \GateAt{X}{0}$, then $r_1 = \GateAt{X}{0} \circ \GateAt{X}{1} \circ \CCGate{X}{0}{1} \circ \GateAt{X}{1} \circ \GateAt{X}{0}$.
\item[$r_2$.]
     This generator is defined by the normal vector $\ket{b_2} = \ket{0} \otimes \left( \ket{1} \otimes \ket{0} - \ket{0} \otimes \ket{1} \right) / \sqrt{2}$.
     Since $\ket{b_2}$ and $-\ket{b_2}$ define the same hyperplane, then $-\ket{b_2}$ also defines the same generator.
     Recall that $r_3$ is a reflection about the normal vector $\ket{b_3} = \ket{0} \otimes \ket{1} \otimes \ket{-}$.
     Then $\left( \CGate{X}{2}{1} \right) \ket{b_3} = -\ket{b_2}$.
     Since $\CGate{X}{2}{1}$ is self-inverse, then $r_2 = \CGate{X}{2}{1} \circ r_3 \circ \CGate{X}{2}{1}$.
     Since $r_3 = \GateAt{X}{0} \circ \CCGate{X}{0}{1} \circ \GateAt{X}{0}$ with $\GateAt{X}{0}$ and $\CGate{X}{2}{1}$ commuting, then $r_2 = \GateAt{X}{0} \circ \CGate{X}{2}{1} \circ \CCGate{X}{0}{1} \circ \CGate{X}{2}{1} \circ \GateAt{X}{0}$.
\item[$r_4$.]
     This generator is defined by the normal vector $\ket{b_4} = \left( \ket{0} \otimes \ket{1} \otimes \ket{1} - \ket{1} \otimes \ket{0} \otimes \ket{0} \right) / \sqrt{2}$.
     Since $\ket{\overline{b}} = \left( \ket{0} \otimes \ket{1} \otimes \ket{1} - \ket{1} \otimes \ket{1} \otimes \ket{1} \right) / \sqrt{2}$, then $\left( \CGate{X}{0}{1} \circ \CGate{X}{0}{2} \right) \ket{\overline{b}} = \ket{b_4}$.
     Furthermore, since $\left( \CGate{X}{0}{1} \circ \CGate{X}{0}{2} \right)^{-1} = \CGate{X}{0}{2} \circ \CGate{X}{0}{1}$, then $r_4 = \CGate{X}{0}{1} \circ \CGate{X}{0}{2} \circ \CCGate{X}{1}{2} \circ \CGate{X}{0}{2} \circ \CGate{X}{0}{1}$.
\item[$r_5$.]
     This generator is defined by the normal vector $\ket{b_5} = \ket{1} \otimes \ket{0} \otimes \ket{-}$.
     Since $(\GateAt{X}{1}) \ket{\hat{b}} = \ket{b_5}$ with $\GateAt{X}{1}$ self-inverse, then $r_5 = \GateAt{X}{1} \circ \CCGate{X}{0}{1} \circ \GateAt{X}{1}$.
\item[$r_6$.]
     This generator is defined by the normal vector $\ket{b_6} = \ket{1} \otimes \left( \ket{0} \otimes \ket{1} - \ket{1} \otimes \ket{0} \right) / \sqrt{2}$.
     Since $\left( \CGate{X}{2}{1} \right) \ket{\hat{b}} = \ket{b_6}$ with $\CGate{X}{2}{1}$ self-inverse, then $r_6 = \CGate{X}{2}{1} \circ \CCGate{X}{0}{1} \circ \CGate{X}{2}{1}$.
\item[$r_7$.]
     This generator is defined by the normal vector $\ket{b_7} = \ket{1} \otimes \left( \ket{0} \otimes \ket{1} + \ket{1} \otimes \ket{0} \right) / \sqrt{2}$.
     Recall that $r_6$ is a reflection about the normal vector $\ket{b_6} = \ket{1} \otimes \left( \ket{0} \otimes \ket{1} - \ket{1} \otimes \ket{0} \right) / \sqrt{2}$.
     It follows that $\left( \CGate{Z}{0}{1} \right) \ket{b_6} = \ket{b_7}$.
     Since $\CGate{Z}{0}{1}$ is self-inverse, then $r_7 = \CGate{Z}{0}{1} \circ r_6 \circ \CGate{Z}{0}{1}$.
     Furthermore, since $r_6 = \CGate{X}{2}{1} \circ \CCGate{X}{0}{1} \circ \CGate{X}{2}{1}$, then $r_7 = \CGate{Z}{0}{1} \circ \CGate{X}{2}{1} \circ \CCGate{X}{0}{1} \circ \CGate{X}{2}{1} \circ \CGate{Z}{0}{1}$.
\item[$r_8$.]
     The generator is defined by the normal vector $\ket{b_8} = \ket{+} \otimes \ket{+} \otimes \ket{+}$.
     First, define the operator $M = K_{1,2} \circ \GateAt{X}{1} \circ \GateAt{X}{2} \circ \CGate{Z}{0}{2}$.
     Clearly, $M^{-1} = \CGate{Z}{0}{2} \circ \GateAt{X}{2} \circ \GateAt{X}{1} \circ K_{1,2}$.
     Furthermore,
     \begin{equation*}
         M \ket{\overline{b}}
         =
         \left( K_{1,2} \circ \GateAt{X}{1} \circ \GateAt{X}{2} \right) \left( \ket{+} \otimes \ket{1} \otimes \ket{1} \right)
         =
         K_{1,2} \left( \ket{+} \otimes \ket{0} \otimes \ket{0} \right)
         =
         \ket{b_8}.
     \end{equation*}
     Therefore, $r_8 = K_{1,2} \circ \GateAt{X}{1} \circ \GateAt{X}{2} \circ \CGate{Z}{0}{2} \circ \CCGate{X}{1}{2} \circ \CGate{Z}{0}{2} \circ \GateAt{X}{2} \circ \GateAt{X}{1} \circ K_{1,2}$.
\end{enumerate}
This establishes all decompositions of the $\Sigma_{E8}$ in terms of $\Sigma_{D}$.

\newpage
\section{Constructing the Generators for \texorpdfstring{$W( E_8 )$}{W(E8}}
\label{Appendix:E8Construct}

The section walks through the construction of $\GateAt{X}{0}$, $\CGate{X}{0}{1}$, $\CCGate{X}{1}{2}$, and $\GateAt{K}{1,2}$ using the Coxeter generators for $W( E_8 )$.
As suggested in \cref{Sec:E8:DIntro}, this construction begins by deriving several diagonal matrices over $(\pm 1)$.
{\small\begin{align*}
    w_1 &= r_6 \cdot r_7
    &
    w_2 &= r_6 \cdot r_5 \cdot w_1 \cdot r_5 \cdot r_6
    &
    w_3 &= r_5 \cdot r_4 \cdot w_2 \cdot r_4 \cdot r_5
    \\
    w_4 &= r_4 \cdot r_3 \cdot w_3 \cdot r_3 \cdot r_4
    &
    w_5 &= r_3 \cdot r_2 \cdot w_4 \cdot r_2 \cdot r_3
    &
    w_6 &= r_2 \cdot r_1 \cdot w_5 \cdot r_1 \cdot r_2
\end{align*}}%
For example, $\interp{w_1}^{*}_{E8} = \CGate{Z}{0}{1} \circ \CGate{Z}{0}{2}$.
It is then possible to derive $\CCGate{X}{0}{1}$ and $\GateAt{X}{2}$.
{\small\begin{align*}
    w_7 &= r_7 \cdot r_8 \cdot r_6 \cdot w_6 \cdot w_4 \cdot w_2 \cdot r_8 \cdot w_6 \cdot w_4 \cdot w_2 \cdot r_6 \cdot r_8 \cdot r_7
    &
    w_8 &= r_1 \cdot r_3 \cdot r_5 \cdot w_7
\end{align*}}%
Then $\interp{w_7}^{*}_{E8} = \CCGate{X}{0}{1}$ and $\interp{w_8}^{*}_{E8} = \GateAt{X}{2}$.
Using $\CCGate{X}{0}{1}$, it is then possible to derive $K_{1,2}$.
{\small\begin{align*}
    w_9 &= r_6 \cdot w_7 \cdot w_1 \cdot w_7 \cdot r_6
    &
    w_{10} &= r_2 \cdot r_6 \cdot w_5 \cdot w_3 \cdot w_2 \cdot r_8 \cdot w_9 \cdot r_8 \cdot w_5 \cdot w_3 \cdot w_2 \cdot w_9
\end{align*}}%
Then $\interp{w_9}^{*}_{E8}$ is a diagonal matrix over $(\pm 1)$ and $\interp{w_{10}}^{*}_{E8} = K_{1,2}$.
Next, the permutations are derived.
{\small\begin{align*}
    w_{11} &= w_{10} \cdot r_4 \cdot w_8 \cdot r_4 \cdot w_{10} \cdot w_8
    &
    w_{12} &= r_2 \cdot r_6
    &
    w_{13} &= w_{11} \cdot w_{12} \cdot w_{11}
\end{align*}}%
It can be validated that $\interp{w_{11}}^{*}_{E8} = \sigma_{1,2}$, $\interp{w_{12}}^{*}_{E8} = \sigma_{0,1}$, and $\interp{w_{13}}^{*}_{E8} = \sigma_{0,2}$.
As an immediate consequence, $\interp{w_{13} \cdot w_7 \cdot w_{13}}^{*}_{E8} = \CCGate{X}{1}{2}$ and $\interp{w_{13} \cdot w_8 \cdot w_{13}}^{*}_{E8} = \GateAt{X}{0}$.
Then by three applications of \textbf{Gen($+$)}, the generators $K_{1,2}$, $\CCGate{X}{1}{2}$, and $\GateAt{X}{0}$ are introduced, alongside the following relations.
{\small\begin{align*}
    K_{1,2} &\approx w_{10}
    &
    \CCGate{X}{1}{2} &\approx w_{13} \cdot w_7 \cdot w_{13}
    &
    \GateAt{X}{0} &\approx w_{13} \cdot w_8 \cdot w_{13}
\end{align*}}%
Next, define $w_{14} = w_{12} \cdot \GateAt{X}{0} \cdot w_{10} \cdot \GateAt{X}{0} \cdot w_{12}$.
It can be validated directly that $\interp{w_{14}}^{*}_{E8} = \CGate{X}{0}{1}$.
Then by application of \textbf{Gen($+$)}, the generator $\CGate{X}{0}{1}$ is introduced, alongside the relation $\CGate{X}{0}{1} \approx w_{14}$.

\newpage
\section{Establishing the Minimality of \texorpdfstring{$W( E_8 )$}{W(E8)} and \texorpdfstring{$\mathrm{O}(8,\D)$}{O(8,D)} Generators}
\label{Appendix:E8MinGens}

This section establishes the minimality of certain generating sets for $W( E_8 )$ and $\mathrm{O}(8,\D)$.
First, a general result about minimal generating sets is established.
This result is then applied to the generating sets of interest, to prove their minimality.

\subsection{Two Results on Minimal Generating Sets}

\begin{theorem}
    \label{Lemma:App:MinGens}
    Let $G$ be a group with $\Sigma' \subseteq \Sigma \subseteq G$.
    If there exists a $g \in G$ such that $g$ commutes with the elements of $\Sigma'$ and $g$ does not commute with the elements of $\Sigma$, then $\langle \Sigma' \rangle$ is a proper subgroup of $\langle \Sigma \rangle$.
\end{theorem}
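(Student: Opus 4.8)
The plan is to separate the two subgroups using the centralizer of $g$. First I would record the trivial inclusion $\langle \Sigma' \rangle \le \langle \Sigma \rangle$, which holds because $\Sigma' \subseteq \Sigma$; it therefore remains only to show that this inclusion is strict. The guiding idea is that commutation with a set of generators propagates to the entire subgroup they generate.

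Concretely, I would work with the centralizer $C_G(g) = \{ h \in G \mid hg = gh \}$, which is a subgroup of $G$. By hypothesis $g$ commutes with every element of $\Sigma'$, so $\Sigma' \subseteq C_G(g)$; since $C_G(g)$ is closed under the group operation and under inverses, this yields $\langle \Sigma' \rangle \le C_G(g)$. In other words, $g$ commutes with every element of $\langle \Sigma' \rangle$.

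To finish, I would invoke the assumption that $g$ fails to commute with $\Sigma$, namely that there is some $s \in \Sigma$ with $gs \ne sg$, equivalently $s \notin C_G(g)$. Since $g$ commutes with all of $\Sigma'$, this witness necessarily lies in $\Sigma \setminus \Sigma'$. Now $s \in \Sigma \subseteq \langle \Sigma \rangle$, whereas $s \notin C_G(g)$ together with $\langle \Sigma' \rangle \le C_G(g)$ forces $s \notin \langle \Sigma' \rangle$. Hence $s$ is an element of $\langle \Sigma \rangle$ lying outside $\langle \Sigma' \rangle$, so the inclusion $\langle \Sigma' \rangle \le \langle \Sigma \rangle$ is strict, as desired.

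I do not anticipate a genuine obstacle: the argument is short and elementary. The only step deserving a moment of attention is the propagation claim—that $g$ commuting with the generators in $\Sigma'$ forces it to commute with all of $\langle \Sigma' \rangle$—which is precisely the fact that centralizers are subgroups; the remainder is routine tracking of set inclusions.
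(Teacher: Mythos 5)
Your proof is correct, and it reaches the conclusion by a cleaner route than the paper, though the underlying idea is the same. The paper proves the contrapositive: assuming $\langle \Sigma' \rangle = \langle \Sigma \rangle$, it shows by an explicit induction on word length that $g$ commutes with every product $h_1 \circ \cdots \circ h_n$ of elements of $\Sigma'$, hence with all of $\Sigma$, and then concludes. You instead package that propagation step as the standard fact that the centralizer $C_G(g)$ is a subgroup containing $\Sigma'$, hence containing $\langle \Sigma' \rangle$, and you finish by exhibiting a concrete witness $s \in \Sigma$ with $s \notin C_G(g)$, so $s \in \langle \Sigma \rangle \setminus \langle \Sigma' \rangle$. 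Your packaging buys two things: it avoids redoing the induction by hand, and—more substantively—it automatically covers inverses of generators, which the paper's induction technically omits (it only treats positive words $h_1 \circ \cdots \circ h_n$ with $h_i \in \Sigma'$, whereas elements of $\langle \Sigma' \rangle$ may involve $h_i^{-1}$; this is harmless in the paper's application, where all generators are involutions, but your argument needs no such caveat). One small point of care: the hypothesis ``$g$ does not commute with the elements of $\Sigma$'' should be read as ``$g$ fails to commute with at least one element of $\Sigma$''; you read it this way, which is the weaker (hence safer) reading and matches how the paper actually uses it.
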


\begin{proof}
    Assume that $g \in G$, $g$ commutes with every element of $\Sigma'$, and $\langle \Sigma' \rangle = \langle \Sigma \rangle$.
    It follows by induction on the length of an element in $\langle \Sigma' \rangle$, that $g$ commutes with every element in $\langle \Sigma' \rangle$.
    As a base case, if $h \in \langle \Sigma' \rangle$ corresponds to a word of length $0$, then $h$ is the identity and $g \circ h = g = h \circ g$.
    As an inductive hypothesis, assume that for some $n \in \mathbb{N}$, if $h_1, h_2, \ldots, h_n \in \Sigma'$ and $h = h_1 \circ h_2 \circ \cdots \circ h_n$, then $g \circ h = h \circ g$.
    To show that the inductive step holds, let $h_1, h_2, \ldots, h_n, h_{n+1} \in \Sigma'$ and $h = h_1 \circ h_2 \circ \cdots h_{n+1}$.
    By the inductive hypothesis, $g \circ h' = h' \circ g$ where $h' = h_1 \circ h_2 \circ \cdots \circ h_n$.
    Then $g \circ h = g \circ h' \circ h_{n+1} = h' \circ g \circ h_{n+1} = h' \circ h_{n+1} \circ g = h \circ g$.
    Then the inductive step holds, and $g$ commutes with every element of $\langle \Sigma' \rangle$.
    In particular, $g$ commutes with $\Sigma$.
    By the contrapositive, if $g$ does not commute with $\Sigma$, then $\langle \Sigma' \rangle \ne \langle \Sigma \rangle$.
    However, $\langle \Sigma' \rangle \le \langle \Sigma \rangle$ since $\Sigma' \subseteq \Sigma$.
    Therefore, $\langle \Sigma' \rangle$ is a proper subgroup of $\langle \Sigma \rangle$.
\end{proof}

\begin{lemma}
    Let $G$ be a group with $\Sigma \subseteq G$.
    If for every maximal proper subset $\Sigma'$ of $\Sigma$, $\langle \Sigma' \rangle$ is a proper subgroup of $\langle \Sigma \rangle$, then $\Sigma$ is a minimal generating set for $\langle \Sigma \rangle$.
\end{lemma}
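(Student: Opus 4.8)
The plan is to reduce the defining property of a minimal generating set---that \emph{every} proper subset of $\Sigma$ generates a proper subgroup---to the stated hypothesis, which concerns only the \emph{maximal} proper subsets. The key structural observation is that the maximal proper subsets of $\Sigma$ are exactly the singleton-complements $\Sigma \setminus \{ x \}$ for $x \in \Sigma$: deleting one element yields a proper subset, and any proper subset that omits two or more elements is strictly contained in one that omits only a single element, hence is not maximal. This characterization holds whether or not $\Sigma$ is finite. Combined with the monotonicity of the subgroup-generation operator $\langle \cdot \rangle$, it lets the proper-subgroup property propagate from maximal proper subsets down to arbitrary ones.

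First I would take an arbitrary proper subset $\Sigma' \subsetneq \Sigma$ and fix some $x \in \Sigma \setminus \Sigma'$, so that $\Sigma' \subseteq \Sigma \setminus \{ x \}$. Since $\Sigma \setminus \{ x \}$ is a maximal proper subset of $\Sigma$, the hypothesis gives that $\langle \Sigma \setminus \{ x \} \rangle$ is a proper subgroup of $\langle \Sigma \rangle$. Because $\Sigma' \subseteq \Sigma \setminus \{ x \}$, monotonicity yields $\langle \Sigma' \rangle \le \langle \Sigma \setminus \{ x \} \rangle$, and as the latter is a proper subgroup of $\langle \Sigma \rangle$, so is $\langle \Sigma' \rangle$. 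Since $\Sigma'$ was arbitrary, no proper subset of $\Sigma$ generates $\langle \Sigma \rangle$, which is precisely the assertion that $\Sigma$ is a minimal generating set for $\langle \Sigma \rangle$.

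There is essentially no serious obstacle here; the entire argument rests on monotonicity of $\langle \cdot \rangle$ and on the fact that every proper subset is contained in some maximal one. The only point requiring a moment of care is the characterization of maximal proper subsets as singleton-complements, which is what licenses the application of this lemma to \cref{Thm:MinGens} and \cref{Thm:O8DMin}: there it suffices, via \cref{Lemma:App:MinGens}, to exhibit a commuting witness for each of the finitely many singleton-deletions $\Sigma \setminus \{ x \}$, rather than for every proper subset, and this lemma is exactly what closes that gap.
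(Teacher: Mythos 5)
Your proof is correct and follows essentially the same route as the paper: embed an arbitrary proper subset into a maximal proper subset and conclude via monotonicity of $\langle \cdot \rangle$. The only difference is that you explicitly identify the maximal proper subsets as the singleton-complements $\Sigma \setminus \{x\}$, a detail the paper leaves implicit when it asserts the existence of a maximal proper subset containing $\Sigma'$; this makes your write-up slightly more self-contained but does not change the argument.
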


\begin{proof}
    Let $\Sigma'$ be a proper subset of $\Sigma$.
    Then there exists some maximal proper subset $\Pi$ of $\Sigma$ such that $\Sigma' \subseteq \Pi \subseteq \Sigma$.
    Then $\langle \Sigma' \rangle \le \langle \Pi \rangle \le \langle \Sigma \rangle$.
    Since $\Pi$ is maximal, then by assumption, $\langle \Pi \rangle$ is a proper subgroup of $\langle \Sigma \rangle$.
    Consequently, $\langle \Sigma' \rangle$ is a proper subgroup of $\langle \Sigma \rangle$.
    Since $\Sigma'$ was arbitrary, then $\Sigma$ is a minimal generating set for $\langle \Sigma \rangle$.
\end{proof}

\subsection{Minimality for \texorpdfstring{$W(E_8)$}{W(E8)}}
\label{Appendix:E8MinGens:E8}

It must be shown that for every maximal proper subset $\Sigma'$ of $\Sigma_0$, there exists some $8 \times 8$ dyadic matrix $M$ such that $M$ commutes with $\Sigma'$ but does not commute with $\Sigma_0$.
The first three cases can be solved by inspection.
In fact, these matrices follow from well-known circuit relations.
\begin{enumerate}
    \item $\GateAt{Z}{2}$ commutes with $\{ \GateAt{X}{0}, \CGate{X}{0}{1}, \CCGate{X}{1}{2} \}$ but does not commute with $K_{1,2}$.
    \item $\GateAt{H}{2}$ commutes with $\{ \GateAt{X}{0}, \CGate{X}{0}{1}, K_{1,2} \}$ but does not commute with $\CCGate{X}{1}{2}$.
    \item $\GateAt{X}{0}$ commutes with $\{ \GateAt{X}{0}, \CCGate{X}{1}{2}, K_{1,2} \}$ but does not commute with $\CGate{X}{0}{1}$.
\end{enumerate}
The final case is less obvious, but can be reduced to solving a linear integer program.
Assume that there exists such a matrix $M$.
Since $M$ is dyadic, then there exists some integer matrix $N$ and integer $k$ such that $M = N / 2^k$.
Clearly, $M$ and $N$ commute with the same matrices.
Then $M$ is characterized by the following four equations.
{\small\begin{align*}
    \GateAt{X}{0} \circ N &\ne N \circ \GateAt{X}{0}
    &
    \CGate{X}{0}{1} \circ N &= N \circ \CGate{X}{0}{1}
    &
    \CCGate{X}{1}{2} \circ N &= N \circ \CCGate{X}{1}{2}
    &
    K_{1,2} \circ N &= N \circ K_{1,2}
\end{align*}}%
Without loss of generality, $K_{1,2}$ can be replaced by its integral scalar multiple $2 \cdot K_{1,2}$.
Then the entries of $N$ can be thought of as $64$ integer variables, with each equation yielding $64$ linear constraints.
Using Z3~\cite{MouraBjorner2008} as a solver, the following solution is obtained.
{\small\begin{equation*}
    N_{0,0} = \begin{bmatrix}
  4 & 2 & 2 & 0 \\
  2 & 1 & 1 & 0 \\
  2 & 1 & 1 & 0 \\
  0 & 0 & 0 & 0
\end{bmatrix}

    \qquad
    N_{0,1} = N_{1,0} = N_{1,1} = \begin{bmatrix}
    0 & 0 & 0 & 0 \\
    0 & 0 & 0 & 0 \\
    0 & 0 & 0 & 0 \\
    0 & 0 & 0 & 0
\end{bmatrix}
    \qquad
    N = \begin{bmatrix}
        N_{0,0} & M_{0,1} \\
        N_{1,0} & M_{1,1}
    \end{bmatrix}
\end{equation*}}
This establishes \cref{Thm:MinGens}.

\subsection{Minimality of \texorpdfstring{$\Sigma_K$}{Sigma(K)} for \texorpdfstring{$\mathrm{O}(8,\D)$}{O(8,D)}}

First, it will be shown that $\Sigma_K$ is a minimal generating set for $\mathrm{O}(8, \D)$.
To see that $\Sigma_K$ generates $\mathrm{O}(8, \D)$ simply note that $\Sigma_K \cup \{ \GateAt{K}{1,2} \}$ generates $\mathrm{O}(8, \D)$ with $\GateAt{K}{1,2} = \TLO{K}{0,1,2,3} \circ \GateAt{X}{0} \circ \TLO{K}{0,1,2,3} \circ \GateAt{X}{0}$.
It remains to be shown that for every maximal proper subset $\Sigma'$ of $\Sigma_K$, there exists some $8 \times 8$ dyadic matrix $M$ such that $M$ commutes with $\Sigma'$ but does not commute with $\Sigma_K$.
The first three cases are also solved by inspection, using well-known circuit relations.
\begin{enumerate}
\item $\GateAt{Z}{2}$ commutes with $\left\{ \GateAt{X}{0}, \CGate{X}{0}{1}, \CCGate{X}{1}{2} \right\}$ but does not commute with $\TLO{K}{0,1,2,3}$.
    \item $\GateAt{H}{2}$ commutes with $\left\{ \GateAt{X}{0}, \CGate{X}{0}{1}, \TLO{K}{0,1,2,3} \right\}$ but does not commute with $\CCGate{X}{1}{2}$.
\item $\GateAt{X}{2} \circ \CGate{Z}{0}{2} \circ \GateAt{X}{2}$ commutes with $\left\{ \CGate{X}{0}{1}, \CCGate{X}{1}{2}, \TLO{K}{0,1,2,3} \right\}$ but does not commute with $\GateAt{X}{0}$.
\end{enumerate}
Using Z3 as in the $W( E_8 )$, it is then possible to find an integer matrix $L$ such that $L$ commutes with $\left\{ \GateAt{X}{0}, \CCGate{X}{1}{2}, \TLO{K}{0,1,2,3} \right\}$ but does not commute with $\CGate{X}{0}{1}$.
The solution is as follows.
{\small\begin{equation*}
    L_0 = \begin{bmatrix}
        1 &  2 &  2 &  0 \\
        2 &  0 & -1 &  0 \\
        2 & -1 &  0 &  0 \\
        0 &  0 &  0 & -3 \\
    \end{bmatrix}
    \qquad
    L_1 = 
    \qquad
    L = \begin{bmatrix}
        L_0 & L_1 \\
        L_1 & L_0
    \end{bmatrix}
\end{equation*}}%
This establishes the first claim of~\cref{Thm:O8DMin}.

\subsection{Minimality of \texorpdfstring{$\Sigma_Z$}{Sigma(Z)} for \texorpdfstring{$\mathrm{O}(8,\D)$}{O(8,D)}}

The proof the $\Sigma_Z$ is minimal proves more challenging.
However, the minimality of $\Sigma_K$ can be used to simplify this argument significantly.
Clearly, $\Sigma_D = \Sigma_Z \setminus \{ CCZ \}$ does not generate $\mathrm{O}(8,\D)$, since $\langle \Sigma_D \rangle = W( E_8 ) < \mathrm{O}(8,\D)$.
Three of the remaining four cases can be solved by inspection.
\begin{enumerate}
\item Recall $N$ from \cref{Appendix:E8MinGens:E8}.
      By construction, this matrix commutes with $\left\{ \CGate{X}{0}{1}, \CCGate{X}{1}{2}, \GateAt{K}{1,2} \right\}$ and does not commute with $\GateAt{X}{0}$.
      Furthermore, $CCZ \circ N = N \circ CCZ$, since the $8$-th row and $8$-th column of $N$ contain only zeros.
\item $\sigma_{1,2}$ commutes with $\left\{ \GateAt{X}{0}, \CCGate{X}{1}{2}, \GateAt{K}{1,2}, CCZ \right\}$ but does not commute with $\CGate{X}{0}{1}$.
\item $\GateAt{Z}{2}$ commutes with $\left\{ \GateAt{X}{0}, \CGate{X}{0}{1}, \CCGate{X}{1}{2}, CCZ \right\}$ but does not commute with $\GateAt{K}{1,2}$.
\end{enumerate}
The case of $\CCGate{X}{1}{2}$ requires more care.
There is no obvious operation which commutes with all generators except for $\CCGate{X}{1}{2}$.
Furthermore, Z3 fails to find an solution to the corresponding integer program.
Instead, consider the automorphism $f: M \mapsto \GateAt{H}{2} \circ M \circ \GateAt{H}{2}$ of $\mathrm{O}(8,\D)$.
Since $f$ fixes $\left\{ \GateAt{X}{0}, \CGate{X}{0}{1}, \GateAt{K}{1,2,} \right\}$ and maps $CCZ$ to $\CCGate{X}{0}{1}$, then $f$ induces an isomorphism between the subgroups $\langle \GateAt{X}{0}, \CGate{X}{0}{1}, \GateAt{K}{1,2}, CCZ \rangle$ and $\langle \GateAt{X}{0}, \CGate{X}{0}{1}, \GateAt{K}{1,2}, \CCGate{X}{0}{1} \rangle$ of $\mathrm{O}(8,\D)$.
Since $\GateAt{X}{0}, \CGate{X}{0}{1}$, $\GateAt{K}{1,2,}, \CCGate{X}{0}{1} \in W( E_8 )$ with $W( E_8 )$ finite, then
\begin{equation*}
    \left| \langle \GateAt{X}{0}, \CGate{X}{0}{1}, \GateAt{K}{1,2}, CCZ \rangle \right|
    =
    \left| \langle \GateAt{X}{0}, \CGate{X}{0}{1}, \GateAt{K}{1,2}, \CCGate{X}{1}{2} \rangle \right|
    \le
    |W( E_8 )|
    <
    \infty.
\end{equation*}
Since $\mathrm{O}(8,\D)$ is an infinite group, then $\langle \GateAt{X}{0}, \CGate{X}{0}{1}, \GateAt{K}{1,2}, CCZ \rangle < \mathrm{O}(8,\D)$.
This establishes the second claim of~\cref{Thm:O8DMin}.

\newpage
\section{Proof Details for a Presentation of \texorpdfstring{$\mathrm{O}(8,\D)$}{O(u,D)}}
\label{Appendix:O8D}

In \cref{Sec:O8D}, many informal claims were made about the relations in $\TLOGen{n}{}$, and the derivations that are possible using these relations.
This section restates each claim as a lemma or theorem, provides a proof for each claim, and then explains how these claims establish the lemmas and theorems in \cref{Sec:O8D}.

\subsection{Counting the Relations in \texorpdfstring{$\TLORel{n}{}$}{R(n)}}
\label{Appendix:RelCount}

This section validates the claim that $\TLOGen{8}{}$ contains $2039$ relations.
We say that two relations $( q, r ) \in \TLOGen{8}{}$ and $( q', r' ) \in \TLOGen{8}{}$ are distinct if $q \ne q'$ or $r \ne r'$.
This means, for example, that the relations $\TLO{K}{0,1,2,3} \cdot \TLO{K}{4,5,6,7} \approx \TLO{K}{4,5,6,7} \cdot \TLO{K}{4,5,6,7}$ and $\TLO{K}{4,5,6,7} \cdot \TLO{K}{0,1,2,3} \approx \TLO{K}{0,1,2,3} \cdot \TLO{K}{4,5,6,7}$ are distinct
The techniques used in this section can be generalized to count the number of relations in $\TLOGen{n}{}$.

First, consider the relations whose parameters are linearly ordered.
If a relation schema $r$ has $m$ linearly ordered parameters, then each choice of $m$ distinct numbers in $[n]$ corresponds to a unique instance of $r$.
It follows that a relation schema with $m$ linearly ordered parameters corresponds to $\binom{n}{m}$ unique relations.
For each choice of $m$, we compute $\binom{8}{m}$ and count the number of relations with $m$ linearly ordered parameters.
\begin{enumerate}
\item[--]
      If $m = 1$, then there are $\binom{8}{1} = 8$ instances.
      The only relation with a single parameter is \cref{EQ:O8DRels:Orig:Rep1}.
      Then this case contributes $8$ relations.
\item[--]
      If $m = 2$, then there are $\binom{8}{2} = 28$ instances.
      The only relations with two parameters, all linearly ordered, are \cref{Eq:O8DRels:Orig:Perm1,Eq:O8DRels:Orig:Perm7}.
      Then this case contributes $56$ relations.
\item[--]
      If $m = 3$, then there are $\binom{8}{3} = 56$ instances.
      The only relations with three parameters, all linearly ordered, are \cref{Eq:O8DRels:Orig:Perm5,Eq:O8DRels:Orig:Perm6}.
      Then this case contributes $102$ relations.
\item[--]
      If $m = 4$, then there are $\binom{8}{4} = 70$ instances.
      The relations \cref{EQ:O8DRels:Orig:Rep2,EQ:O8DRels:Orig:Rep5a,EQ:O8DRels:Orig:Rep6,EQ:O8DRels:Orig:Rep7} all have exactly four parameters, which are all linearly ordered.
      Then this case contributes $280$ relations.
\item[--]
      If $m = 5$, then there are $\binom{8}{5} = 56$ instances.
      The relations \cref{Eq:O8DRels:Orig:Perm8,Eq:O8DRels:Orig:Perm9,Eq:O8DRels:Orig:Perm10,Eq:O8DRels:Orig:Perm11} all have exactly five parameters, which are all linearly ordered.
      This this case contributes $224$ relations.
\item[--]
      If $m = 6$, then there are $\binom{8}{6} = 28$ instances.
      The only relation with a six parameters is \cref{EQ:O8DRels:Orig:Rep8}.
      Then this case contributes $28$ relations.
\item[--]
      If $m = 8$, then there are $\binom{8}{8} = 1$ instances.
      The only relation with eight parameters is \cref{EQ:O8DRels:Orig:Rep9}.
      Then this case contributes $1$ relations.
\end{enumerate}
In total, the relation schemata with linearly ordered parameters contribute $699$ instances.

The remaining six schemata induce a partial order on the parameters.
For example, in \cref{Eq:O8DRels:Orig:KCom1}, the term $\TLO{K}{a,b,c,d} \cdot \TLO{K}{e,f,g,h}$ indicates that $a < b < c < d$ and $e < f < g < h$.
However, the choices of $( a, b, c, d )$ are independent from the choices of $( e, f, g, h )$, except that all choices must be distinct.
In this example, there are $\binom{n}{4}$ ways to select the four indices in the first order.
Then $n - m$ indices remain, from which there are $\binom{n - 4}{4}$ choices.
In general, for two independent linear orders with $m$ and $k$ parameters respectively, there will be $\binom{n}{m} \cdot \binom{n - m}{k}$ choices.
The six schemata are described below.
\begin{enumerate}
\item[--]
      In \cref{Eq:O8DRels:Orig:ZCom1}, $m = 1$ and $k = 1$, resulting in $\binom{8}{1} \cdot \binom{7}{1} = 56$ choices.
\item[--]
      In \cref{Eq:O8DRels:Orig:Perm3}, $m = 1$ and $k = 2$, resulting in $\binom{8}{1} \cdot \binom{7}{2} = 168$ choices.
\item[--]
      In \cref{EQ:O8DRels:Orig:Rep3}, $m = 1$ and $k = 4$, resulting in $\binom{8}{1} \cdot \binom{7}{4} = 280$ choices.
\item[--]
      In \cref{Eq:O8DRels:Orig:Perm2}, $m = 2$ and $k = 2$, resulting in $\binom{8}{2} \cdot \binom{6}{2} = 420$ choices.
\item[--]
      In \cref{Eq:O8DRels:Orig:Perm4}, $m = 2$ and $k = 4$, resulting in $\binom{8}{2} \cdot \binom{6}{4} = 420$ choices.
\item[--]
      In \cref{Eq:O8DRels:Orig:KCom1}, $m = 4$ and $k = 4$, resulting in $\binom{8}{4} \cdot \binom{4}{4} = 70$ choices
\end{enumerate}
In total, the relations partially ordered parameters contribute $1414$ instances.
Then $|\TLORel{8}{}| = 2113$.

\subsection{Correctness of Relation Reindexing}
\label{Appendix:O8D:Reindex}

This section justifies the reindexing of relations via permutations.
First, recall that every permutation on $[n]$ can be represented by a permutation of the basis vectors in $\mathbb{R}^8$, with $\tau_{a,b}$ corresponding to $\TLO{X}{a,b}$.
The intuition is that every $\sigma$ can be represented by a word $w$ over generators of type $X$, and that conjugation by $w$ corresponds to formal reindexing when $\sigma$ is valid.

First, a subset $\TLORel{n}{B}$ of $\TLORel{n}{}$ is identified, for which all of the order and braiding relations for generators of type $X$ hold.
Consequently, $\TLORel{n}{B}$ is complete for words over generators of type $X$.
Then $\TLORel{n}{B}$ is extended to a subset $\TLORel{n}{\sigma}$ of $\TLORel{n}{}$ for which all formal reindexings are derivable.
The result is proven first, for individual generators, and then extended to entire words.

\subsubsection{Deriving the Braiding Relations}

First, define the set of relations,
\begin{equation*}
    \TLORel{n}{\tau} = \left\{ \TLO{X}{a,a+1}{}^2 \approx \epsilon \mid a \in \mathbb{Z} \right\}
        \cup
        \left\{ \TLO{X}{a,a+1} \cdot \TLO{X}{a,b} \approx \TLO{X}{a+1,b} \cdot \TLO{X}{a,a+1} \mid a, b \in \mathbb{Z} \text{ with } a + 1 < b  \right\}.
\end{equation*}
This set is sufficient to decompose all swaps into transpositions, as proven in \cref{Lemma:BraidDecomp}.
Of interest in this section is the following extension of $\TLORel{n}{\tau}$,
\begin{equation*}
    \TLORel{n}{B} = \TLORel{n}{\tau}
        \cup
        \left\{ \TLO{X}{a+1,a+2} \cdot \TLO{X}{a,a+1} \approx \TLO{X}{a,a+2} \cdot \TLO{X}{a+1,a+2} \mid a \in \mathbb{Z} \right\}.
\end{equation*}
The relations in $\TLORel{n}{B}$ entail the braiding and order relations for $S(n)$, as shown in \cref{Lemma:TypeXBraids}, and are therefore complete for equality of words over generators of type $X$.
Of important note is that $\TLORel{n}{B} \subseteq \TLORel{n}{}$.

\begin{lemma}
    \label{Lemma:BraidDecomp}
    Let $R$ be a set of relations over $\TLOGen{n}{}$ which contains all well-formed relations in $\TLORel{n}{\tau}$.
    If $v$ is a word over two-level operators of type $X$, then there exists a word $u$ over transpositions such that $v \sim_{R} u$.
\end{lemma}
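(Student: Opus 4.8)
The plan is to reduce the statement to the case of a single generator and then induct on the \emph{width} $b-a$ of that generator. First I would record that $(\sim_R)$ is a monoid congruence: since $\xrightarrow{R}$ is closed under left and right multiplication by arbitrary words, its reflexive, symmetric, and transitive closure $(\sim_R)$ respects concatenation. Consequently, writing $v = g_1 \cdot g_2 \cdots g_m$ as a product of two-level generators of type $X$, with each $g_i = \TLO{X}{a_i,b_i}$, it suffices to exhibit for each $g_i$ a word $u_i$ over transpositions with $g_i \sim_R u_i$; the congruence property then gives $v \sim_R u_1 \cdot u_2 \cdots u_m$, and the concatenation $u = u_1 \cdots u_m$ is the desired word over transpositions.

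The core is thus an induction on $b-a$ for a single generator $\TLO{X}{a,b}$ with $0 \le a < b < n$. In the base case $b - a = 1$, the generator $\TLO{X}{a,a+1}$ is already a transposition, so there is nothing to prove. For the inductive step I would assume $b - a \ge 2$, so that $a + 1 < b$ and the instance $\TLO{X}{a,a+1} \cdot \TLO{X}{a,b} \approx \TLO{X}{a+1,b} \cdot \TLO{X}{a,a+1}$ of the shift relation lies in $\TLORel{n}{\tau}$, hence in $R$. Left-multiplying by $\TLO{X}{a,a+1}$ and cancelling via the order relation $\TLO{X}{a,a+1}{}^2 \approx \epsilon$ yields
\[
\TLO{X}{a,b} \;\sim_R\; \TLO{X}{a,a+1} \cdot \TLO{X}{a+1,b} \cdot \TLO{X}{a,a+1}.
\]
Since $\TLO{X}{a+1,b}$ has width $(b-a)-1$, the inductive hypothesis supplies a word $u'$ over transpositions with $\TLO{X}{a+1,b} \sim_R u'$; substituting, and observing that the flanking $\TLO{X}{a,a+1}$ are themselves transpositions, produces a word over transpositions that is $(\sim_R)$-equal to $\TLO{X}{a,b}$, completing the induction.

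The only point requiring genuine care is the bookkeeping of well-formedness at each rewrite: one must verify that every operator appearing is a valid two-level operator of type $X$, so that the relation instance invoked is truly among the \emph{well-formed} relations of $\TLORel{n}{\tau} \subseteq R$. This is routine, since $0 \le a < a+1 \le b-1 < n$ and $0 \le a+1 < b < n$ whenever $0 \le a < b < n$ and $b-a \ge 2$, so both $\TLO{X}{a,a+1}$ and $\TLO{X}{a+1,b}$ are well-formed. I do not expect any serious obstacle beyond this check and the explicit verification that $(\sim_R)$ is compatible with concatenation; with those in hand, the single-generator decompositions assemble to give the required word $u$ over transpositions with $v \sim_R u$.
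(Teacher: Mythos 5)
Your proof is correct and follows essentially the same route as the paper's: the same induction on the width $b-a$, with the base case $b-a=1$ and the same combination of the order relation $\TLO{X}{a,a+1}{}^2 \approx \epsilon$ with the shift relation to derive $\TLO{X}{a,b} \sim_R \TLO{X}{a,a+1} \cdot \TLO{X}{a+1,b} \cdot \TLO{X}{a,a+1}$. The only cosmetic difference is that you justify the passage from single generators to arbitrary words by an explicit appeal to $(\sim_R)$ being a congruence, whereas the paper delegates that step to its appendix on derived generators; both justifications are sound.
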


\begin{proof}
    Consider a valid two-level operator $\TLO{X}{a,b}$.
    The proof follows by induction on $b - a > 0$.
    \begin{enumerate}
    \item[--] \textbf{Base Case}.
          If $b - a = 1$, then $\TLO{X}{a,b} = \TLO{X}{a,a+1}$.
    \item[--] \textbf{Inductive Hypothesis}.
          Assume that for some $k \in \mathbb{N}_{>0}$, if $b - a = k$, then there exists a word $w$
    \item[--] \textbf{Inductive Step}.
          Assume that $b - a = k + 1$.
          Since $k > 0$, then $a + 1 \ne b$, and the following derivation holds.
          {\scriptsize\begin{equation*}
              \TLO{X}{a,b}
              \leftarrow
              \TLO{X}{a,a+1}{}^2 \cdot \TLO{X}{a,b}
              \rightarrow
              \TLO{X}{a,a+1} \cdot \TLO{X}{a+1,b} \cdot \TLO{X}{a,a+1}
          \end{equation*}}
          Since $b - (a + 1) = k$, then by the inductive hypothesis, there exists a word $w$ over transpositions such that $\TLO{X}{a+1,b} \sim_R w$.
          Then $\TLO{X}{a,b} \sim_R \TLO{X}{a,a+1} \cdot w \cdot \TLO{X}{a,a+1}$.
          Since $\TLO{X}{a,a+1} \cdot w \cdot \TLO{X}{a,a+1}$ is a word over transpositions, then the inductive hypothesis holds.
    \end{enumerate}
    Then for each symbol $M$ in $w$, there exists a decomposition of $M$ into transpositions.
    Then by \cref{Appendix:DerivedGens}, there exists a word $v$ over transpositions such that $w \sim_R v$.
\end{proof}

\begin{lemma}
    \label{Lemma:TypeXBraids}
    Let $R$ be a set of relations over $\TLOGen{n}{}$ which contains all well-formed relations in $\TLORel{n}{B}$.
    If $v$ and $w$ are words over two-level operators of type $X$ and $\interp{v}_O^* = \interp{w}_O^*$, then $v \sim_{\TLORel{\sigma}{}} w$.
\end{lemma}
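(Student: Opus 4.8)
The plan is to reduce the claim to the soundness and completeness of the braiding and order relations for the symmetric group $S(n)$, recalled in \cref{Sec:O8D:Reindex}. First I would invoke \cref{Lemma:BraidDecomp}: since $\TLORel{n}{\tau} \subseteq \TLORel{n}{B} \subseteq R$, both $v$ and $w$ rewrite, using only relations in $\TLORel{n}{B}$, to words $v'$ and $w'$ over the adjacent transpositions $\TLO{X}{a,a+1}$. Because $\TLORel{n}{B} \subseteq \TLORel{n}{}$ and every relation in $\TLORel{n}{}$ is sound with respect to $\interp{\cdot}_O^*$, these rewrites preserve semantics, so $\interp{v'}_O^* = \interp{v}_O^* = \interp{w}_O^* = \interp{w'}_O^*$.

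Next I would pass through the standard faithful representation of $S(n)$ that sends each transposition $\tau_{a,a+1}$ to $\TLO{X}{a,a+1}$. Under this representation the equality $\interp{v'}_O^* = \interp{w'}_O^*$ says exactly that $v'$ and $w'$ denote the same permutation, so by the completeness of the braiding relations together with the order relations for $S(n)$, the word $v'$ can be rewritten to $w'$ using only these relations for adjacent transpositions.

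The main obstacle is verifying that these braiding relations are themselves derivable from $\TLORel{n}{B}$, since $\TLORel{n}{B}$ supplies the braids only in the disguised form $\TLO{X}{a+1,a+2} \cdot \TLO{X}{a,a+1} \approx \TLO{X}{a,a+2} \cdot \TLO{X}{a+1,a+2}$, which involves the non-adjacent swap $\TLO{X}{a,a+2}$. Here I would substitute the decomposition $\TLO{X}{a,a+2} \sim \TLO{X}{a,a+1} \cdot \TLO{X}{a+1,a+2} \cdot \TLO{X}{a,a+1}$ from \cref{Lemma:BraidDecomp} into the right-hand side and cancel a trailing $\TLO{X}{a+1,a+2}$ via the order relation $\TLO{X}{a+1,a+2}{}^2 \approx \epsilon$, obtaining precisely $\TLO{X}{a,a+1} \cdot \TLO{X}{a+1,a+2} \cdot \TLO{X}{a,a+1} \approx \TLO{X}{a+1,a+2} \cdot \TLO{X}{a,a+1} \cdot \TLO{X}{a+1,a+2}$. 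Since every relation used in the chain $v \sim v' \sim w' \sim w$ is then an instance of a relation in $\TLORel{n}{B}$, and these are in turn instances of relations in $\TLORel{\sigma}{}$, transitivity of the rewriting relation yields $v \sim_{\TLORel{\sigma}{}} w$, as required.
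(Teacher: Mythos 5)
Your proposal is correct and follows essentially the same route as the paper's proof: decompose $v$ and $w$ into words over adjacent transpositions via \cref{Lemma:BraidDecomp}, recover the adjacent braiding relation from $\TLORel{n}{B}$ by rewriting through $\TLO{X}{a,a+2}$ with the order relations, and then invoke completeness of the braid-plus-order presentation of $\mathrm{S}(n)$ together with soundness and transitivity. Your closing observation that the derivation uses only relations in $\TLORel{n}{B} \subseteq \TLORel{\sigma}{}$ even addresses the statement's conclusion $v \sim_{\TLORel{\sigma}{}} w$ slightly more carefully than the paper, which concludes $v \sim_R w$.
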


\begin{proof}
    Since $v$ and $w$ are words over two-level operators of type $X$ with $\TLORel{n}{\tau} \subseteq \TLORel{n}{B}$, then by \cref{Lemma:BraidDecomp} there exists words $\hat{v}$ and $\hat{w}$ over transpositions such that $\hat{v} \sim \hat{w}$.
    Then $\hat{v}$ and $\hat{w}$ are words in the braid representation of $\mathrm{S}(n)$.
    If $R$ contains all order and braiding relations for the transpositions in $\TLOGen{n}{}$, then $R$ is complete for words over the transpositions in $\TLOGen{n}{}$.
    Let $a \in [n-2]$.
    Then the following derivation holds using the relations in $\TLORel{n}{B}$.
    {\scriptsize\begin{align*}
        \TLO{X}{a,a+1} \cdot \TLO{X}{a+1,a+2} \cdot \TLO{X}{a,a+1}
        \leftarrow
        \TLO{X}{a,a+1}{}^2 \cdot \TLO{X}{a,a+2}
        \rightarrow
        \TLO{X}{a,a+2}
        \leftarrow
        \TLO{X}{a,a+2} \cdot \TLO{X}{a+1,a+2}{}^2
        \leftarrow
        \TLO{X}{a+1,a+2}{} \cdot \TLO{X}{a,a+1} \cdot \TLO{X}{a+1,a+2}
    \end{align*}}%
    Then $\TLO{X}{a,a+1} \cdot \TLO{X}{a+1,a+2} \cdot \TLO{X}{a,a+1} \sim_{R} \TLO{X}{a+1,a+2} \cdot \TLO{X}{a,a+1} \cdot \TLO{X}{a+1,a+2}$.
    Since $a$ was arbitrary, then $R$ is complete for $\mathrm{S}(n)$.
    Since $v \sim_R \hat{v}$ and $w \sim_R \hat{w}$, then $\interp{v}_O^* = \interp{\hat{v}}_O^*$ and $\interp{w}_O^* = \interp{\hat{w}}_O^*$.
    Then $\interp{\hat{v}}_O^* = \interp{\hat{w}}_O^*$.
    Then $\hat{v} \sim_R \hat{w}$ by the completeness of $R$.
    Then $v \sim_R w$ by the transitivity of $( \sim_R )$.
\end{proof}

\subsubsection{Formal Inverses for Self-Inverse Generators}

For each $w = w_1 \cdot w_2 \cdots w_n$ over $\TLOGen{n}{}$, define $\overline{w} = w_n \cdots w_2 \cdot w_1$.
Since each element in $\TLOGen{n}{}$ is self-inverse, then $\interp{\overline{w}}_{O}^*$ is the inverse of $\interp{w}_{O}^*$ in $\mathrm{O}( 8, \D )$.
One can prove that given a complete set of relations, both $u \cdot \overline{u}$ and $\overline{u} \cdot u$ always derive to $\epsilon$.
For the purposes of this proof, only the case for $X$-type generators is necessary.

\begin{lemma}
    \label{Lemma:FormalInverse}
    Let $R$ be a set of relations over $\TLOGen{n}{}$ which contains all well-formed relations in the set below.
    {\scriptsize\begin{align*}
        \left\{ \TLO{X}{a,a+1}{}^2 \approx \epsilon \mid a \in \mathbb{Z} \right\}
        \cup
        \left\{ \TLO{X}{a,a+1} \cdot \TLO{X}{a,b} \approx \TLO{X}{a+1,b} \cdot \TLO{X}{a,a+1} \mid a, b \in \mathbb{Z} \right\}
        \cup
        \left\{ \TLO{X}{a+1,a+2} \cdot \TLO{X}{a,a+1} \approx \TLO{X}{a,a+@} \cdot \TLO{X}{a+1,a+2} \mid a \in \mathbb{Z} \right\}
    \end{align*}}%
    If $u$ is a word over two-level operators of type $X$, then $u \cdot \overline{u} \sim_{R} \epsilon$ and $\overline{u} \cdot u \sim_{R} \epsilon$.
    Furthermore, if $v$ is a word over two-level operators of type $X$ and $u \sim_R v$, then $\overline{u} \sim_R \overline{v}$.
\end{lemma}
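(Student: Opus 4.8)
The plan is to prove the two claims in sequence, using as a computational core the auxiliary fact that every squared two-level operator of type $X$ reduces to $\epsilon$. First I would establish that $\TLO{X}{a,b}{}^2 \sim_R \epsilon$ for every valid pair $a < b$, by induction on $b - a$. The base case $b - a = 1$ is immediate from the order relation $\TLO{X}{a,a+1}{}^2 \approx \epsilon$. For the inductive step, the order and braiding relations give $\TLO{X}{a,b} \sim_R \TLO{X}{a,a+1} \cdot \TLO{X}{a+1,b} \cdot \TLO{X}{a,a+1}$, exactly as in the derivation appearing in the proof of \cref{Lemma:BraidDecomp}; squaring this word and cancelling the central $\TLO{X}{a,a+1}{}^2$ via the order relation yields $\TLO{X}{a,a+1} \cdot \TLO{X}{a+1,b}{}^2 \cdot \TLO{X}{a,a+1}$, after which the inductive hypothesis collapses $\TLO{X}{a+1,b}{}^2$ to $\epsilon$ and a final order relation completes the reduction to $\epsilon$.

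With this auxiliary fact in hand, the claim $u \cdot \overline{u} \sim_R \epsilon$ follows by induction on $|u|$. The empty word is trivial. Writing $u = x \cdot u'$ with $x$ a single two-level operator of type $X$, we have $\overline{u} = \overline{u'} \cdot x$, so $u \cdot \overline{u} = x \cdot ( u' \cdot \overline{u'} ) \cdot x$; the inductive hypothesis rewrites the bracketed factor to $\epsilon$, leaving $x \cdot x = x^2$, which reduces to $\epsilon$ by the auxiliary fact. The companion identity $\overline{u} \cdot u \sim_R \epsilon$ then requires no separate argument: since $\overline{u}$ is again a word over two-level operators of type $X$ and $\overline{\overline{u}} = u$, applying the first identity to $\overline{u}$ gives it directly.

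Finally, for the ``furthermore'' statement I would avoid any appeal to soundness and argue purely by rewriting, so that no side condition on $R$ beyond containing the listed families is needed. Given $u \sim_R v$, I would start from $\overline{v}$, insert the identity $\overline{u} \cdot u \sim_R \epsilon$ on the left to obtain $\overline{v} \sim_R \overline{u} \cdot u \cdot \overline{v}$, then use $u \sim_R v$ together with the fact that $( \sim_R )$ is a congruence to replace the central factor $u$ by $v$, giving $\overline{u} \cdot v \cdot \overline{v}$, and finally collapse $v \cdot \overline{v} \sim_R \epsilon$ to conclude $\overline{v} \sim_R \overline{u}$. The main obstacle is really the bookkeeping in the auxiliary squaring lemma, where one must track the nested cancellations carefully and invoke the braiding relation only when $a+1 < b$ so that it is well-formed; once that step is secured, both the length induction and the reversal argument are routine applications of the congruence, symmetry, and transitivity of $( \sim_R )$.
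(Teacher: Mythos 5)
Your proof is correct, but it takes a genuinely different route from the paper's. The paper argues semantically: since every generator in $\TLOGen{n}{}$ is interpreted as a self-inverse matrix, $\interp{\overline{u}}_O^*$ is the matrix inverse of $\interp{u}_O^*$, hence $\interp{u \cdot \overline{u}}_O^* = \interp{\epsilon}_O^*$, and derivability follows by invoking the completeness result for words over type-$X$ operators (\cref{Lemma:TypeXBraids}); the ``furthermore'' clause is handled the same way, passing from $u \sim_R v$ to equality of interpretations, then to equality of the interpretations of the reversals, then back to derivability. Your argument never leaves the rewriting system: an induction on $b-a$ gives $\TLO{X}{a,b}{}^2 \sim_R \epsilon$ (using exactly the conjugation derivation from \cref{Lemma:BraidDecomp}, with the braiding relation invoked only in its well-formed instances $a+1<b$), a length induction gives $u \cdot \overline{u} \sim_R \epsilon$, the companion identity follows from $\overline{\overline{u}} = u$, and the reversal claim is a congruence computation. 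The paper's route is shorter because it reuses existing machinery. Your route buys something concrete, however: the paper's proof of the ``furthermore'' part uses soundness --- ``$u \sim_R v$ implies $\interp{u}_O^* = \interp{v}_O^*$'' --- which is only valid if every relation in $R$ is sound, a condition the lemma statement does not impose, since $R$ is bounded only from below. In the paper's applications $R$ is always a subset of the sound relation set $\TLORel{n}{}$, so no harm results, but your purely syntactic argument establishes the lemma exactly as stated, for an arbitrary $R$ containing the listed families, and without depending on \cref{Lemma:TypeXBraids} at all.
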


\begin{proof}
    Since $\interp{\cdot}_O^*$ maps each generator in $\TLOGen{n}{}$ to a self-inverse matrix, then $\interp{\overline{u}}_O^*$ is the inverse to $\interp{u}_O^*$.
    Then $\interp{\overline{u} \cdot u}_O^* = \interp{\epsilon}_O^* = \interp{u \cdot \overline{u}}_O^*$.
    Since $u$ is a word over two-level operators of type $X$, then $u \cdot \overline{u} \sim_{R} \epsilon$ and $\overline{u} \cdot u \sim_{R} \epsilon$ by \cref{Lemma:TypeXBraids}.
    Now assume that $v$ is a word over two-level operators of type $X$ with $u \sim_R v$.
    Then $\interp{u}_O^* = \interp{v}_O^*$.
    Since $\overline{u}$ is the inverse to $u$ and $\overline{v}$ is the inverse to $v$, then $\interp{\overline{u}}_O^* = \interp{\overline{v}}_O^*$.
    Then $\overline{u} \sim_R \overline{v}$ by \cref{Lemma:TypeXBraids}.
\end{proof}

\subsubsection{Permuting the Indices in Multi-Level Operators}

\begin{theorem}
    \label{Thm:PermuteX}
    If $\sigma \in \mathrm{S}(n)$ is a valid reindexing for a two-level operator $M$ of type $X$, then there exists a word $v$ over the two-level operators of type $X$, such that $\interp{v}_O^* = \interp{\sigma}_S$ and $\sigma(M) \sim_{\TLORel{\sigma}{}} v \cdot M \cdot \overline{v}$.
\end{theorem}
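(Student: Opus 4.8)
The plan is to reduce the rewriting claim to the completeness result for words of type $X$ recorded in \cref{Lemma:TypeXBraids}, after choosing $v$ so that its first defining property holds by construction. First I would fix a decomposition of $\sigma$ into transpositions, $\sigma = \tau_{j_1,k_1} \circ \cdots \circ \tau_{j_m,k_m}$, which exists because $\mathrm{S}(n)$ is generated by transpositions. Since the standard representation sends each $\tau_{j,k}$ to $\TLO{X}{j,k}$ and $\interp{\cdot}_O^*$ is a monoid homomorphism, the word $v = \TLO{X}{j_1,k_1} \cdots \TLO{X}{j_m,k_m}$ over two-level operators of type $X$ satisfies $\interp{v}_O^* = \interp{\sigma}_S$. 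This secures the first required property of $v$ outright, and the same $v$ will be shown to satisfy the second.

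For the second property I would argue semantically and then invoke completeness. Both $\sigma(M)$ and $v \cdot M \cdot \overline{v}$ are words over two-level operators of type $X$: the former is the well-formed symbol $\TLO{X}{\sigma(a),\sigma(b)}$ precisely because $\sigma$ is a valid reindexing for $M$ (so $\sigma(a) < \sigma(b)$), and the latter is a concatenation of type-$X$ generators. By \cref{Lemma:TypeXBraids}, applied with the relation set $\TLORel{\sigma}{} \supseteq \TLORel{n}{B}$, it therefore suffices to show the two words have equal semantics. Here I would use that every generator of type $X$ is self-inverse, so (as in the discussion preceding \cref{Lemma:FormalInverse}) $\interp{\overline{v}}_O^*$ is the inverse of $\interp{v}_O^*$, whence $\interp{v \cdot M \cdot \overline{v}}_O^* = \interp{\sigma}_S \circ \interp{M}_O^* \circ \interp{\sigma}_S^{-1}$. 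Writing $M = \TLO{X}{a,b}$, whose interpretation is the standard-representation image $\interp{\tau_{a,b}}_S$, the conjugation identity $\sigma \circ \tau_{a,b} \circ \sigma^{-1} = \tau_{\sigma(a),\sigma(b)}$ in $\mathrm{S}(n)$ gives $\interp{\sigma}_S \circ \interp{\tau_{a,b}}_S \circ \interp{\sigma}_S^{-1} = \interp{\tau_{\sigma(a),\sigma(b)}}_S = \interp{\sigma(M)}_O^*$. Thus $\interp{v \cdot M \cdot \overline{v}}_O^* = \interp{\sigma(M)}_O^*$, and \cref{Lemma:TypeXBraids} yields $\sigma(M) \sim_{\TLORel{\sigma}{}} v \cdot M \cdot \overline{v}$, completing the proof for that $v$.

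The main obstacle is not a deep combinatorial derivation but the bookkeeping that lets completeness do the work. One must confirm that $\TLORel{n}{B} \subseteq \TLORel{\sigma}{}$, so that \cref{Lemma:TypeXBraids} applies with the advertised relation set: each of $\TLO{X}{a,a+1}{}^2 \approx \epsilon$ and the two braiding-style relations of $\TLORel{n}{B}$ is an instance of \cref{Eq:O8DRels:Orig:Perm1,Eq:O8DRels:Orig:Perm5,Eq:O8DRels:Orig:Perm6}. One must also check that the conjugation computation genuinely lands on the well-formed symbol $\TLO{X}{\sigma(a),\sigma(b)}$ rather than an ill-formed reordering, which is exactly the point at which the validity hypothesis on $\sigma$ is indispensable. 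Once these points are verified the result follows immediately, and I would emphasize that the same conjugation template $v \cdot M \cdot \overline{v}$, though here discharged by completeness, is precisely the form needed for the type-$(-1)$ and type-$K$ analogues where no such completeness result is available.
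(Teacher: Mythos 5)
Your proposal is correct and follows essentially the same route as the paper's proof: decompose $\sigma$ into transpositions, take $v$ to be the corresponding word of type-$X$ generators so that $\interp{v}_O^* = \interp{\sigma}_S$, verify semantically via the conjugation identity $\sigma \circ \tau_{a,b} \circ \sigma^{-1} = \tau_{\sigma(a),\sigma(b)}$ that $\interp{\sigma(M)}_O^* = \interp{v \cdot M \cdot \overline{v}}_O^*$, and then invoke the completeness of the braiding/order relations for type-$X$ words (\cref{Lemma:TypeXBraids}) to obtain the rewriting claim. Your explicit check that $\TLORel{n}{B} \subseteq \TLORel{\sigma}{}$ and your remark on where validity of $\sigma$ is needed are points the paper leaves implicit, but they do not change the argument.
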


\begin{proof}
    Since $M$ is a two-level operator of type $X$, then there exists an increasing sequence $(a,b)$ over $[n]$ such that $M = \TLO{X}{a,b}$.
    Let $\sigma_1 \circ \sigma_2 \circ \cdots \circ \sigma_m$ be the decomposition of $\sigma$ into a sequence of transpositions.
    Then define $v = \interp{\sigma_1}_S \cdot \interp{\sigma_2}_S \cdots \interp{\sigma_m}_S$.
    Clearly $v$ is a word over two-level operators of type $X$ satisfying $\interp{v}_O^* = \interp{\sigma}_S$.
    Furthermore, $\interp{\sigma(M)}_O^* = \interp{\tau_{\sigma(a),\sigma(b)}}_S = \interp{\sigma \cdot \tau_{a,b} \cdot \sigma^{-1}}_S^* = \interp{v \cdot M \cdot \overline{v}}_O^*$.
    Since $\sigma(M)$ and $\overline{v} \cdot M \cdot v$ are words over two-level operators of type $X$, then $\sigma(M) \sim_{\TLORel{\sigma}{}} v \cdot M \cdot \overline{v}$ by \cref{Lemma:TypeXBraids}.
\end{proof}

\begin{theorem}
    \label{Thm:PermuteZ}
    If $\sigma \in \mathrm{S}(n)$ and $M$ is a one-level operator of type $(-1)$, then there exists a word $v$ of transpositions, such that $\interp{v}_O^* = \interp{\sigma}_S$ and $\sigma(M) \sim_{\TLORel{\sigma}{}} v \cdot M \cdot \overline{v}$.
\end{theorem}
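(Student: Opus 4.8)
The plan is to mirror the proof of \cref{Thm:PermuteX}, but with one essential difference: since $\sigma(M)$ is a one-level operator of type $(-1)$ rather than a word over type-$X$ generators, I cannot close the argument by appealing to the completeness of \cref{Lemma:TypeXBraids}. Instead, I will transport the permutation through $M$ one transposition at a time, using the commutation relations \cref{Eq:O8DRels:Orig:Perm3,Eq:O8DRels:Orig:Perm7} together with the order relation \cref{Eq:O8DRels:Orig:Perm1}, all of which lie in $\TLORel{\sigma}{}$.

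First I would fix $M = \TLO{(-1)}{a}$ and write a decomposition $\sigma = \sigma_1 \circ \cdots \circ \sigma_m$ into transpositions, then define $v = \interp{\sigma_1}_S \cdots \interp{\sigma_m}_S$. As in \cref{Thm:PermuteX}, $v$ is a word over two-level operators of type $X$, and since $\interp{\cdot}_O^*$ agrees with the standard representation $\interp{\cdot}_S$ on transpositions and the latter is a homomorphism, $\interp{v}_O^* = \interp{\sigma}_S$ is immediate.

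The core step is a single-transposition lemma: for any transposition $t = \TLO{X}{c,d}$ with $c < d$ and any $\TLO{(-1)}{b}$, one has $t(\TLO{(-1)}{b}) \sim_{\TLORel{\sigma}{}} t \cdot \TLO{(-1)}{b} \cdot t$, where I use $\overline{t} = t$. This splits into three index subcases. If $b \notin \{c,d\}$, then $t(\TLO{(-1)}{b}) = \TLO{(-1)}{b}$, and \cref{Eq:O8DRels:Orig:Perm3} commutes one copy of $t$ past $\TLO{(-1)}{b}$, after which \cref{Eq:O8DRels:Orig:Perm1} cancels the two adjacent copies of $t$. If $b = c$, then $t(\TLO{(-1)}{b}) = \TLO{(-1)}{d}$, and \cref{Eq:O8DRels:Orig:Perm7} rewrites $t \cdot \TLO{(-1)}{c}$ as $\TLO{(-1)}{d} \cdot t$, whereupon \cref{Eq:O8DRels:Orig:Perm1} again cancels. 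The case $b = d$ is symmetric, obtained by conjugating the instance of \cref{Eq:O8DRels:Orig:Perm7} for indices $c<d$ on both sides by $t$ and using \cref{Eq:O8DRels:Orig:Perm1}.

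Finally I would induct on $m$, the length of the decomposition. The base case $m = 0$ gives $\sigma(M) = M = \epsilon \cdot M \cdot \overline{\epsilon}$. For the inductive step, write $\sigma = \sigma_1 \circ \sigma'$ with $\sigma' = \sigma_2 \circ \cdots \circ \sigma_m$, so that $v = \interp{\sigma_1}_S \cdot v'$ with $v' = \interp{\sigma_2}_S \cdots \interp{\sigma_m}_S$. The inductive hypothesis yields $\sigma'(M) \sim_{\TLORel{\sigma}{}} v' \cdot M \cdot \overline{v'}$, and since $\sigma'(M)$ is itself a one-level operator of type $(-1)$, the single-transposition lemma gives $\sigma(M) = \sigma_1(\sigma'(M)) \sim_{\TLORel{\sigma}{}} \interp{\sigma_1}_S \cdot \sigma'(M) \cdot \interp{\sigma_1}_S$; substituting the inductive hypothesis and using $\overline{v} = \overline{v'} \cdot \interp{\sigma_1}_S$ closes the induction. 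I expect the only genuine obstacle to be the bookkeeping of the three index subcases and confirming that each relation invoked truly belongs to $\TLORel{\sigma}{}$; the rewriting itself is a two- or three-step computation in every case.
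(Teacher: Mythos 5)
Your proposal is correct and follows essentially the same route as the paper's proof: decompose $\sigma$ into transpositions, induct on the length of the decomposition, and handle a single transposition against $\TLO{(-1)}{b}$ by the three-way index case analysis using \cref{Eq:O8DRels:Orig:Perm1,Eq:O8DRels:Orig:Perm3,Eq:O8DRels:Orig:Perm7}. The only (immaterial) differences are that you peel off the outermost transposition and state the single-transposition step as a separate lemma for arbitrary transpositions $\TLO{X}{c,d}$, whereas the paper peels off the innermost one and inlines the case analysis for adjacent transpositions.
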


\begin{proof}
    Since $M$ is a one-level operator of type $(-1)$, then there exists an $a \in [a]$ such that $M = \TLO{(-1)}{a}$.
    Let $v $
    Let $\sigma_1 \circ \sigma_2 \circ \cdots \circ \sigma_m$ be the decomposition of $\sigma$ into a sequence of transpositions.
    Then define $v = \interp{\sigma_1}_S \cdot \interp{\sigma_2}_S \cdots \interp{\sigma_m}_S$.
    It follows by induction on $m$ that $\sigma(M) \sim_{\TLORel{\sigma}{}} v \cdot M \cdot \overline{v}$.
    \begin{enumerate}
    \item[--] \textbf{Base Case}.
          Assume that $m = 0$.
          Then $v = \overline{v} = \epsilon$ and $\sigma(M) = M$.
          Then $\sigma(M) \sim_{\TLORel{\sigma}{}} v \cdot M \cdot \overline{v}$ by the reflexivity of $(\sim_{\TLORel{\sigma}{}})$.
    \item[--] \textbf{Inductive Hypothesis}.
          Assume that for some $k \in \mathbb{N}$, if $m = k$, then $\sigma(M) \sim_{\TLORel{\sigma}{}} v \cdot M \cdot \overline{v}$.
    \item[--] \textbf{Inductive Step}.
          Assume that $m = k + 1$ and define $u = \interp{\sigma_1}_S \cdot \interp{\sigma_2}_S \cdots \interp{\sigma_k}_S$.
          Then by the inductive hypothesis $\sigma(M) \sim_{\TLORel{\sigma}{}} u \cdot \sigma_m(M) \cdot \overline{u}$.
          Since $\sigma_m$ is a transposition, then there exists some $j \in [n-1]$ such that $\sigma_m = \tau_{j,j+1}$.
          Furthermore, $\interp{\sigma_m}_S = \TLO{X}{j,j+1}$.
          If $j = a$, then $\sigma_m(M) = \TLO{(-1)}{a+1}$ then the following derivation holds using only \cref{Eq:O8DRels:Orig:Perm1,Eq:O8DRels:Orig:Perm7}.
          {\scriptsize\begin{equation*}
              \TLO{(-1)}{a+1} \leftarrow \TLO{(-1)}{a+1} \cdot \TLO{X}{j,j+1}{}^2 \leftarrow \TLO{X}{j,j+1} \cdot \TLO{(-1)}{a} \cdot \TLO{X}{j+1}
          \end{equation*}}%
          The case when $j + 1 = a$ follows symmetrically.
          When $j \ne a$ and $j + 1 \ne a$, then $\sigma_m(M) = M$ and the following derivation holds using only \cref{Eq:O8DRels:Orig:Perm1,Eq:O8DRels:Orig:Perm3}.
          {\scriptsize\begin{equation*}
              \TLO{(-1)}{a} \leftarrow \TLO{(-1)}{a} \cdot \TLO{X}{j,j+1}{}^2 \leftarrow \TLO{X}{j,j+1} \cdot \TLO{(-1)}{a} \cdot \TLO{X}{j+1}
          \end{equation*}}%
          In either case, $\sigma_m(M) \sim_{\TLORel{\sigma}{}} \interp{\sigma_m}_S \cdot M \cdot \interp{\sigma_m}_S$.
          Then $\sigma_m(M) \sim_{\TLORel{\sigma}{}} u \cdot \sigma_m(M) \cdot \overline{u} \sim_{\TLORel{\sigma}{}} v \cdot M \cdot \overline{v}$ and the inductive step is established.
    \end{enumerate}
    Then by the principle of induction,  $\sigma(M) \sim_{\TLORel{\sigma}{}} v \cdot M \cdot \overline{v}$.
    Clearly $\interp{v}_O^* = \interp{\sigma}_S$.
\end{proof}

\begin{lemma}
    \label{Lemma:PermuteK:1}
    For each four-level operator $M$ of type $K$, there exists a valid reindexing $\sigma$ for $M$ and a word $v$ over two-level operators of type $X$, such that $\interp{v}_O^* = \interp{\sigma}_S$ and $\TLO{K}{0,1,2,3} \sim_{\TLORel{\sigma}{}} v \cdot M \cdot \overline{v}$.
\end{lemma}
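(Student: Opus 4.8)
The plan is to establish the $K$-type analogue of \cref{Thm:PermuteX} and \cref{Thm:PermuteZ}, and then to instantiate it with the particular reindexing that sorts the four indices of $M$ down to $0,1,2,3$. Write $M = \TLO{K}{a,b,c,d}$ with $0 \le a < b < c < d < n$, and let $\sigma \in \mathrm{S}(n)$ be any permutation with $\sigma(a) = 0$, $\sigma(b) = 1$, $\sigma(c) = 2$, and $\sigma(d) = 3$. Then $\sigma$ is automatically a valid reindexing for $M$, since $\sigma(M) = \TLO{K}{0,1,2,3}$ is well-formed. It then remains to exhibit a word $v$ over two-level operators of type $X$ with $\interp{v}_O^* = \interp{\sigma}_S$ and $\TLO{K}{0,1,2,3} \sim_{\TLORel{\sigma}{}} v \cdot M \cdot \overline{v}$.

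The key device I would isolate first is a single-step relation: if $\TLO{X}{j,j+1}$ moves exactly one index of a well-formed $K$-operator into an adjacent empty slot, so that the result is again well-formed, then conjugation by $\TLO{X}{j,j+1}$ realizes that index swap up to $\sim_{\TLORel{\sigma}{}}$. Concretely, to push the first index one step downward I would instantiate \cref{Eq:O8DRels:Orig:Perm8} as $\TLO{X}{a-1,a} \cdot \TLO{K}{a-1,b,c,d} \approx \TLO{K}{a,b,c,d} \cdot \TLO{X}{a-1,a}$ (a valid instance, since $a-1 < a < b < c < d$), so that $\TLO{K}{a,b,c,d} \cdot \TLO{X}{a-1,a} \stackrel{\TLORel{\sigma}{}}{\leftrightarrow} \TLO{X}{a-1,a} \cdot \TLO{K}{a-1,b,c,d}$, and then cancel the leftover $\TLO{X}{a-1,a}^2$ using the order relation \cref{Eq:O8DRels:Orig:Perm1} to conclude $\TLO{X}{a-1,a} \cdot \TLO{K}{a,b,c,d} \cdot \TLO{X}{a-1,a} \sim_{\TLORel{\sigma}{}} \TLO{K}{a-1,b,c,d}$. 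The analogous uses of \cref{Eq:O8DRels:Orig:Perm9,Eq:O8DRels:Orig:Perm10,Eq:O8DRels:Orig:Perm11} handle the second, third, and fourth indices, and \cref{Eq:O8DRels:Orig:Perm4} covers the purely disjoint case. In every instance the fresh index is the empty slot, so the four indices stay strictly increasing and each relation invoked is well-formed.

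With this building block I would reduce $M$ to $\TLO{K}{0,1,2,3}$ by a fixed schedule of single-index moves: slide $a$ down through the empty positions $a-1,\dots,0$ to reach $0$, then slide $b$ down to $1$, then $c$ to $2$, then $d$ to $3$. Because the moving index always descends through slots lying strictly below the next-larger index, which has not yet been moved, no move ever collides with another index; hence the problematic swap of two adjacent $K$-indices never occurs and only \cref{Eq:O8DRels:Orig:Perm8,Eq:O8DRels:Orig:Perm9,Eq:O8DRels:Orig:Perm10,Eq:O8DRels:Orig:Perm11} and \cref{Eq:O8DRels:Orig:Perm1} are needed. Recording the transpositions used yields a decomposition $\sigma = \sigma_1 \circ \cdots \circ \sigma_m$, and setting $v = \interp{\sigma_1}_S \cdots \interp{\sigma_m}_S$ gives $\interp{v}_O^* = \interp{\sigma}_S$ by construction. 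A straightforward induction on $m$, mirroring the inductive argument in \cref{Thm:PermuteZ} and appealing to the transitivity of $(\sim_{\TLORel{\sigma}{}})$, then accumulates the single-step rewrites into $\TLO{K}{0,1,2,3} = \sigma(M) \sim_{\TLORel{\sigma}{}} v \cdot M \cdot \overline{v}$.

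The main obstacle I anticipate is organizational rather than conceptual: I must confirm that at every stage of the schedule the current operator is well-formed and that the chosen instance of \cref{Eq:O8DRels:Orig:Perm8,Eq:O8DRels:Orig:Perm9,Eq:O8DRels:Orig:Perm10,Eq:O8DRels:Orig:Perm11} meets its index-ordering side conditions, keeping in mind that moving an index \emph{downward} forces these relations to be read in the reverse orientation, which is legitimate precisely because $(\sim_{\TLORel{\sigma}{}})$ is symmetric. I also need to check that each partial composition of the $\sigma_i$ is itself a valid reindexing of $M$, so that the intermediate operators appearing in the induction are genuinely well-defined.
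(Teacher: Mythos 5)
Your proposal is correct, and its core strategy is the same as the paper's: sort the four indices of $M$ down to $(0,1,2,3)$ by conjugation with $X$-type generators, realize each index move by an instance of \cref{Eq:O8DRels:Orig:Perm8,Eq:O8DRels:Orig:Perm9,Eq:O8DRels:Orig:Perm10,Eq:O8DRels:Orig:Perm11} padded with the order relation \cref{Eq:O8DRels:Orig:Perm1}, then read off $v$ from the recorded transpositions and take $\sigma$ to be their product. The difference is the granularity of the decomposition. The paper moves each index $a_k$ to position $k$ in a single shot, conjugating by the generally non-adjacent transposition $\tau_{k,a_k}$; since the relations in $\TLORel{\sigma}{}$ admit arbitrary well-formed two-level operators $\TLO{X}{k,a_k}$, the whole proof is at most four explicitly written conjugations and needs no induction. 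You instead slide each index down one adjacent slot at a time, which is also sound (each one-step instance is well-formed because the moving index descends only through empty slots), but it forces an induction over a variable-length schedule plus the well-formedness bookkeeping you flag at the end. Restricting to adjacent transpositions buys nothing for this lemma, since $\TLORel{\sigma}{}$ contains all instances of \cref{Eq:O8DRels:Orig:Perm8,Eq:O8DRels:Orig:Perm9,Eq:O8DRels:Orig:Perm10,Eq:O8DRels:Orig:Perm11}; it is, however, essentially the technique the paper reserves for the reduced relation set $\TLORel{n}{3}$ (compare \cref{Lemma:KTxPull}), where only adjacent-transposition relations survive, so your argument would transfer to that later setting with little change. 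Two minor remarks: your appeal to \cref{Eq:O8DRels:Orig:Perm4} is unnecessary, since in your schedule every transposition touches exactly one index of the current $K$-operator, so the disjoint case never arises; and the $\sigma$ you ultimately produce is the specific permutation determined by your schedule rather than ``any'' permutation sorting $(a,b,c,d)$ to $(0,1,2,3)$, which is exactly what the existential statement requires.
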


\begin{proof}
    Since $M$ is a four-level operator of type $K$, then there exists an increasing sequence $(a_0,a_1,a_2,a_3)$ over $[n]$ such that $M 
= \TLO{(-1)}{a_0,a_1,a_2,a_3}$.
    Since $(a_0,a_1,a_3,a_4)$ is increasing, then $k \le a_k$ for $k \in [4]$.
    Then for each $k \in [4]$, define $\sigma_k$ to be $\tau_{k,a_k}$ if $k \ne a_k$, or identity otherwise.
    Then the following equations hold.
    {\scriptsize\begin{align*}
        \sigma_0(M) &= \TLO{(-1)}{0,a_1,a_2,a_3}
        &
        \sigma_1(\sigma_0(M)) &= \TLO{(-1)}{0,1,a_2,a_3}
        \\
        \sigma_2(\sigma_1(\sigma_0(M))) &= \TLO{(-1)}{0,1,2,a_3}
        &
        \sigma_3(\sigma_2(\sigma_1(\sigma_0(M)))) &= \TLO{(-1)}{0,1,2,3}
    \end{align*}}%
    Let $v = \interp{\sigma_3}_S \cdot \interp{\sigma_2}_S \cdot \interp{\sigma_1}_S \cdot \interp{\sigma_0}_S$.
    Then the following derivations hold by \cref{Eq:O8DRels:Orig:Perm1,Eq:O8DRels:Orig:Perm8,Eq:O8DRels:Orig:Perm9,Eq:O8DRels:Orig:Perm10,Eq:O8DRels:Orig:Perm11}.
    We assume that each $\sigma_k$ is not the identity, else the derivation is trivial.
    {\scriptsize\begin{align*}
        \sigma_0(M) &\leftarrow \TLO{X}{0,a_0}{}^2 \cdot \TLO{K}{0,a_1,a_2,a_3}  \rightarrow \TLO{X}{0,a_0} \cdot \TLO{K}{a_0,a_1,a_2,a_3} \cdot \TLO{X}{0,a_0}
        \\
        \sigma_1(\sigma_0(M)) &\leftarrow \TLO{X}{1,a_1}{}^2 \cdot \TLO{K}{0,1,a_2,a_3} \rightarrow \TLO{X}{1,a_1} \cdot \TLO{K}{0,a_1,a_2,a_3} \cdot \TLO{X}{1,a_1} = \TLO{X}{1,a_1} \cdot \sigma_0(M) \cdot \TLO{X}{1,a_1}
        \\
        \sigma_2(\sigma_1(\sigma_0(M))) &\leftarrow \TLO{X}{2,a_2}{}^2 \cdot \TLO{K}{0,1,2,a_3} \rightarrow \TLO{X}{2,a_2} \cdot \TLO{K}{0,1,a_2,a_3} \cdot \TLO{X}{2,a_2} = \TLO{X}{2,a_2} \cdot \sigma_1(\sigma_0(M)) \cdot \TLO{X}{2,a_2}
        \\
        \sigma_3(\sigma_2(\sigma_1(\sigma_0(M)))) &\leftarrow \TLO{X}{3,a_3}{}^2 \cdot \TLO{K}{0,1,2,3} \rightarrow \TLO{X}{3,a_3} \cdot \TLO{K}{0,1,2,a_3} \cdot \TLO{X}{3,a_3} = \TLO{X}{3,a_3} \cdot \sigma_1(\sigma_0(M)) \cdot \TLO{X}{3,a_3}
    \end{align*}}%
    It follows that $\sigma(M) \sim_{\TLORel{\sigma}{}} v \cdot M \cdot \overline{v}$ where $\sigma = \sigma_3 \cdot \sigma_2 \cdot \sigma_1 \cdot \sigma_0$.
    Clearly $\interp{v}_O^* = \interp{\sigma}_S$.
\end{proof}

\begin{lemma}
    \label{Lemma:PermuteK:2}
    Let $M = \TLO{K}{0,1,2,3}$ be a four-level operator of dimension $n$.
    For any increasing sequence $(a_0,a_1,a_2,a_3)$ over $[n]$, there exists a valid reindexing $\sigma$ for $M$ and a word $v$ over two-level operators of type $X$, such that $\sigma(M) = \TLO{K}{a_0,a_1,a_2,a_3}$, $\interp{v}_O^* = \interp{\sigma}_S$, and $\TLO{K}{a_0,a_1,a_2,a_3} \sim_{\TLORel{\sigma}{}} v \cdot M \cdot \overline{v}$.
\end{lemma}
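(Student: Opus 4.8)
The plan is to reduce this statement to its "backward" counterpart \cref{Lemma:PermuteK:1}, which already transports an arbitrary four-level $K$-operator to the canonical $\TLO{K}{0,1,2,3}$, and then invert. First I would set $N := \TLO{K}{a_0,a_1,a_2,a_3}$, note that $N$ is a well-formed four-level operator of type $K$ (since $(a_0,a_1,a_2,a_3)$ is increasing), and apply \cref{Lemma:PermuteK:1} to $N$. This produces a valid reindexing $\sigma'$ for $N$ together with a word $v'$ over two-level operators of type $X$ such that $\interp{v'}_O^* = \interp{\sigma'}_S$ and $\TLO{K}{0,1,2,3} \sim_{\TLORel{\sigma}{}} v' \cdot N \cdot \overline{v'}$; by its construction $\sigma'$ moreover satisfies $\sigma'(N) = \TLO{K}{0,1,2,3}$, equivalently $\sigma'(a_k) = k$ for each $k \in [4]$.

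Next I would take $\sigma := (\sigma')^{-1}$ and $v := \overline{v'}$ as the witnesses. The algebraic conditions then follow directly. Since $\sigma'$ sends $a_k \mapsto k$, its inverse sends $k \mapsto a_k$, so the formal application of $\sigma$ to $M = \TLO{K}{0,1,2,3}$ yields exactly $\TLO{K}{a_0,a_1,a_2,a_3}$; this shows both that $\sigma$ is a valid reindexing for $M$ and that $\sigma(M) = \TLO{K}{a_0,a_1,a_2,a_3}$. For the semantic identity I would use that reversing a word of self-inverse generators realizes the matrix inverse, together with the fact that the standard representation of $\mathrm{S}(n)$ is a homomorphism, giving $\interp{v}_O^* = \interp{\overline{v'}}_O^* = \bigl(\interp{v'}_O^*\bigr)^{-1} = \bigl(\interp{\sigma'}_S\bigr)^{-1} = \interp{\sigma}_S$.

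The remaining work is the rewriting claim $\TLO{K}{a_0,a_1,a_2,a_3} \sim_{\TLORel{\sigma}{}} v \cdot M \cdot \overline{v}$, i.e.\ $N \sim_{\TLORel{\sigma}{}} \overline{v'} \cdot \TLO{K}{0,1,2,3} \cdot v'$ (noting $\overline{v} = v'$). I would obtain this by conjugating the relation supplied by \cref{Lemma:PermuteK:1}: since $(\sim_{\TLORel{\sigma}{}})$ is a congruence, left-multiplying $\TLO{K}{0,1,2,3} \sim_{\TLORel{\sigma}{}} v' \cdot N \cdot \overline{v'}$ by $\overline{v'}$ and right-multiplying by $v'$ gives
\[
\overline{v'} \cdot \TLO{K}{0,1,2,3} \cdot v' \;\sim_{\TLORel{\sigma}{}}\; \overline{v'} \cdot v' \cdot N \cdot \overline{v'} \cdot v'.
\]
Because $v'$ is a word over two-level operators of type $X$ and $\TLORel{\sigma}{}$ contains the relevant order and braiding relations (it includes $\TLORel{n}{B}$), \cref{Lemma:FormalInverse} yields $\overline{v'} \cdot v' \sim_{\TLORel{\sigma}{}} \epsilon$; applying this twice (again by congruence) collapses the right-hand side to $N$, and transitivity of $(\sim_{\TLORel{\sigma}{}})$ closes the derivation.

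I expect the only delicate point to be the bookkeeping around conjugation direction and the reversal operation: in particular, confirming that the word returned by \cref{Lemma:PermuteK:1} is genuinely over two-level $X$-operators (so that \cref{Lemma:FormalInverse} applies) and that $(\sigma')^{-1}$ is well-defined and valid for $M$. Everything else is a routine use of the congruence property of $(\sim_{\TLORel{\sigma}{}})$, which I use freely. An alternative, somewhat longer route would mirror the inductive construction of \cref{Lemma:PermuteK:1} in the forward direction, pushing $\TLO{K}{0,1,2,3}$ out to $\TLO{K}{a_0,a_1,a_2,a_3}$ one transposition at a time via \cref{Eq:O8DRels:Orig:Perm8,Eq:O8DRels:Orig:Perm9,Eq:O8DRels:Orig:Perm10,Eq:O8DRels:Orig:Perm11}; I would prefer the reduction above, as it avoids repeating that induction.
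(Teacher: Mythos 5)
Your proof is correct, but it takes a genuinely different route from the paper's. The paper proves \cref{Lemma:PermuteK:2} directly, in mirror image of \cref{Lemma:PermuteK:1}: it sets $\sigma_k = \tau_{k,a_k}$ (or the identity), applies them in the reverse order ($\sigma_3$ first, with $\sigma = \sigma_0 \circ \sigma_1 \circ \sigma_2 \circ \sigma_3$ and $v = \interp{\sigma_0}_S \cdot \interp{\sigma_1}_S \cdot \interp{\sigma_2}_S \cdot \interp{\sigma_3}_S$), and repeats the four-step chain of derivations via \cref{Eq:O8DRels:Orig:Perm1,Eq:O8DRels:Orig:Perm8,Eq:O8DRels:Orig:Perm9,Eq:O8DRels:Orig:Perm10,Eq:O8DRels:Orig:Perm11}, pushing $\TLO{K}{0,1,2,3}$ outward to $\TLO{K}{a_0,a_1,a_2,a_3}$ one index at a time. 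You instead invert \cref{Lemma:PermuteK:1}: apply it to $N = \TLO{K}{a_0,a_1,a_2,a_3}$, take $\sigma = (\sigma')^{-1}$ and $v = \overline{v'}$, and cancel $\overline{v'} \cdot v'$ using \cref{Lemma:FormalInverse} and the fact that $(\sim_{\TLORel{\sigma}{}})$ is stable under left and right concatenation. This is sound: $\TLORel{\sigma}{}$ does contain the order and braiding instances that \cref{Lemma:FormalInverse} requires (they are instances of \cref{Eq:O8DRels:Orig:Perm1,Eq:O8DRels:Orig:Perm5,Eq:O8DRels:Orig:Perm6}), reversal of a word of self-inverse generators realizes the matrix inverse, and $\interp{\cdot}_S$ is a homomorphism, so the semantic condition transfers. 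The one place your argument is not purely black-box is the claim $\sigma'(a_k) = k$: the statement of \cref{Lemma:PermuteK:1} asserts only the rewriting and semantic conditions, not $\sigma'(N) = \TLO{K}{0,1,2,3}$, so you must either cite its construction (as you do) or recover the fact semantically (conjugation by $\interp{\sigma'}_S$ sends $\TLO{K}{a_0,a_1,a_2,a_3}$ to $\TLO{K}{\sigma'(a_0),\ldots,\sigma'(a_3)}$, and validity of the reindexing forces these images to be increasing, hence equal to $0,1,2,3$). In exchange, your reduction avoids duplicating the induction and makes the inverse relationship between the two lemmas explicit, whereas the paper's version keeps each lemma self-contained with an explicit witness word at the cost of essentially repeating the derivation in reverse.
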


\begin{proof}
    Since $(a_0,a_1,a_3,a_4)$ is increasing, then $k \le a_k$ for $k \in [4]$.
    Then for each $k \in [4]$, define $\sigma_k$ to be $\tau_{k,a_k}$ if $k \ne a_k$, or identity otherwise.
    Then the following equations hold.
    {\scriptsize\begin{align*}
        \sigma_3(M) &= \TLO{(-1)}{0,1,2,a_3}
        &
        \sigma_2(\sigma_3(M)) &= \TLO{(-1)}{0,1,a_2,a_3}
        \\
        \sigma_1(\sigma_2(\sigma_3(M))) &= \TLO{(-1)}{0,a_1,a_2,a_3}
        &
        \sigma_0(\sigma_1(\sigma_2(\sigma_3(M)))) &= \TLO{(-1)}{a_0,a_2,a_3,a_4}
    \end{align*}}%
    Let $v = \interp{\sigma_0}_S \cdot \interp{\sigma_1}_S \cdot \interp{\sigma_2}_S \cdot \interp{\sigma_3}_S$.
    Then the following derivations hold by \cref{Eq:O8DRels:Orig:Perm1,Eq:O8DRels:Orig:Perm8,Eq:O8DRels:Orig:Perm9,Eq:O8DRels:Orig:Perm10,Eq:O8DRels:Orig:Perm11}.
    We assume that each $\sigma_k$ is not the identity, else the derivation is trivial.
    {\scriptsize\begin{align*}
        \sigma_3(M) &\leftarrow \TLO{K}{0,1,2,3} \cdot \TLO{X}{3,a_3}{}^2 \leftarrow \TLO{X}{3,a_3} \cdot \TLO{K}{0,1,2,a_3} \cdot \TLO{X}{3,a_3}
        \\
        \sigma_2(\sigma_3(M)) &\leftarrow \TLO{K}{0,1,a_2,a_3} \cdot \TLO{X}{2,a_2}{}^2 \leftarrow \TLO{X}{2,a_2} \cdot \TLO{K}{0,1,2,a_3} \cdot \TLO{X}{2,a_2} = \TLO{X}{2,a_2} \cdot \sigma_3(M) \cdot \TLO{X}{2,a_2}
        \\
        \sigma_1(\sigma_2(\sigma_3(M))) &\leftarrow \TLO{K}{0,a_1,a_2,a_3} \cdot \TLO{X}{1,a_1}{}^2 \leftarrow \TLO{X}{1,a_1} \cdot \TLO{K}{0,1,a_2,a_3} \cdot \TLO{X}{1,a_1} = \TLO{X}{1,a_1} \cdot \sigma_2(\sigma_3(M)) \cdot \TLO{X}{1,a_1}
        \\
        \sigma_0(\sigma_1(\sigma_2(\sigma_3(M)))) &\leftarrow \TLO{K}{a_0,a_1,a_2,a_3} \cdot \TLO{X}{0,a_0}{}^2 \leftarrow \TLO{X}{0,a_0} \cdot \TLO{K}{0,a_1,a_2,a_3} \cdot \TLO{X}{0,a_0} = \TLO{X}{0,a_0} \cdot \sigma_1(\sigma_2(\sigma_3(M))) \cdot \TLO{X}{0,a_0}
    \end{align*}}%
    It follows that $\sigma(M) \sim_{\TLORel{\sigma}{}} v \cdot M \cdot \overline{v}$ where $\sigma = \sigma_0 \cdot \sigma_1 \cdot \sigma_2 \cdot \sigma_3$.
    Clearly $\interp{v}_O^* = \interp{\sigma}_S$.
\end{proof}

\begin{lemma}
    \label{Lemma:PermuteK:3}
    If $\sigma \in \mathrm{S}(n)$ is a valid reindexing for a four-level operator $M$ of type $K$ and $\sigma(M) = M$, then there exists a word $v$ over the two-level operators of type $X$, such that $\interp{v}_O^* = \interp{\sigma}_S$ and $M \sim_{\TLORel{\sigma}{}} v \cdot M \cdot \overline{v}$.
\end{lemma}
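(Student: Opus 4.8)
The plan is to exploit the hypothesis $\sigma(M) = M$ to reduce the entire statement to the commutation relation \cref{Eq:O8DRels:Orig:Perm4}, thereby avoiding the index-shuffling relations \cref{Eq:O8DRels:Orig:Perm8,Eq:O8DRels:Orig:Perm9,Eq:O8DRels:Orig:Perm10,Eq:O8DRels:Orig:Perm11} that were needed in \cref{Lemma:PermuteK:1,Lemma:PermuteK:2}. Since $M$ is a four-level operator of type $K$, write $M = \TLO{K}{a,b,c,d}$ for an increasing sequence $(a,b,c,d)$ over $[n]$. By the definition of formal reindexing, $\sigma(M) = \TLO{K}{\sigma(a),\sigma(b),\sigma(c),\sigma(d)}$, and since $\sigma$ is a valid reindexing this tuple is strictly increasing. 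The equality of symbols $\sigma(M) = M$ then forces $(\sigma(a),\sigma(b),\sigma(c),\sigma(d)) = (a,b,c,d)$ componentwise, so $\sigma$ fixes each of $a$, $b$, $c$, $d$ and hence restricts to a permutation of the complement $C = [n] \setminus \{a,b,c,d\}$.

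First I would factor $\sigma$ into transpositions that avoid $\{a,b,c,d\}$ entirely. Because $\sigma|_C$ is a permutation of the finite set $C$, it decomposes as a product $\tau_{p_1,q_1} \circ \cdots \circ \tau_{p_m,q_m}$ of transpositions with each $p_i, q_i \in C$; extending these by the identity on $\{a,b,c,d\}$ yields a factorization of $\sigma$ itself. Setting $v = \interp{\tau_{p_1,q_1}}_S \cdots \interp{\tau_{p_m,q_m}}_S$, each letter of $v$ is a two-level operator $\TLO{X}{p_i,q_i}$ of type $X$, and $\interp{v}_O^* = \interp{\sigma}_S$ because the standard representation of $\mathrm{S}(n)$ is a homomorphism sending $\tau_{j,k}$ to $\TLO{X}{j,k}$.

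The key step is then a pure commutation argument. Every letter $\TLO{X}{p_i,q_i}$ of $v$, and hence of $\overline{v}$, has index set disjoint from $\{a,b,c,d\}$, so it commutes with $M$ by \cref{Eq:O8DRels:Orig:Perm4}. Pushing $M$ rightward through $\overline{v}$ one letter at a time gives $M \cdot \overline{v} \sim_{\TLORel{\sigma}{}} \overline{v} \cdot M$, whence $v \cdot M \cdot \overline{v} \sim_{\TLORel{\sigma}{}} v \cdot \overline{v} \cdot M$. Since $v$ is a word over two-level operators of type $X$ and $\TLORel{n}{B} \subseteq \TLORel{\sigma}{}$, \cref{Lemma:FormalInverse} yields $v \cdot \overline{v} \sim_{\TLORel{\sigma}{}} \epsilon$, so $v \cdot M \cdot \overline{v} \sim_{\TLORel{\sigma}{}} M$, and the claim follows by the symmetry of $(\sim_{\TLORel{\sigma}{}})$.

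I expect the only delicate point to be the bookkeeping that $\sigma(M) = M$ genuinely forces $\sigma$ to fix $a$, $b$, $c$, $d$ \emph{pointwise}, rather than merely stabilize the set $\{a,b,c,d\}$: this is exactly what permits the complement-transposition factorization and rules out any appeal to the permutation-sensitive relations that encode the internal structure of the $K$ matrix (which are absent from $\TLORel{\sigma}{}$). Once that observation is in place, the remainder is routine given \cref{Lemma:FormalInverse} and \cref{Eq:O8DRels:Orig:Perm4}.
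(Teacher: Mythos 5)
Your proposal is correct and takes essentially the same route as the paper: both arguments deduce that $\sigma(M)=M$ forces $\sigma$ to fix $a,b,c,d$ pointwise, factor $\sigma$ into transpositions supported on $[n]\setminus\{a,b,c,d\}$, and then use the disjoint-support commutation relation \cref{Eq:O8DRels:Orig:Perm4} together with cancellation of type-$X$ words to conclude $M \sim_{\TLORel{\sigma}{}} v \cdot M \cdot \overline{v}$. The only cosmetic difference is organizational: the paper runs an explicit induction on the number of transpositions, inserting $\TLO{X}{j,l}{}^2$ via the order relation at each step, whereas you commute $M$ through $\overline{v}$ all at once and then cancel $v \cdot \overline{v}$ by appealing to \cref{Lemma:FormalInverse}.
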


\begin{proof}
    Since $M$ is a four-level operator of type $K$, then there exists an increasing sequence $(a,b,c,d)$ over $[n]$ such that $M 
= \TLO{(-1)}{a,b,c,d}$.
    Since $\sigma(M) = M$, then $\sigma$ fixes $\{ a, b, c, d \}$.
    Then $\sigma$ restricts to a permutation on $[n] \setminus \{ a, b, c, d \}$.
    Decompose this restriction of $\sigma$ into a sequence of transpositions $\sigma_1 \circ \sigma_2 \circ \cdots \circ \sigma_m$ on $[n] \setminus \{ a, b, c, d \}$.
    Since $\sigma$ fixes $\{ a, b, c, d \}$, then $\sigma = \sigma_1 \circ \sigma_2 \circ \cdots \circ \sigma_m$ when viewing each $\sigma_j$ as a permutation on $[n]$.
    Define $v = \interp{\sigma_1}_S \cdot \interp{\sigma_2}_S \cdots \interp{\sigma_m}_S$.
    It follows by induction on $m$ that $M \sim_{\TLORel{\sigma}{}} v \cdot M \cdot \overline{v}$.
    \begin{enumerate}
    \item[--] \textbf{Base Case}.
          If $m = 0$, then $v = \overline{v} = \epsilon$.
          Then $M \sim_{\TLORel{\sigma}{}} v \cdot M \cdot \overline{v}$ by the reflexivity of $( \sim_{\TLORel{\sigma}{}} )$.
    \item[--] \textbf{Inductive Hypothesis}.
          Assume that for some $k \in \mathbb{N}$, if $m = k$, then $M \sim_{\TLORel{\sigma}{}} v \cdot M \cdot \overline{v}$.
    \item[--] \textbf{Inductive Step}.
          Assume that $m = k + 1$ and define $u = \interp{\sigma_1}_S \cdot \interp{\sigma_2}_S \cdots \interp{\sigma_k}_S$.
          Then by the inductive hypothesis $M \sim_{\TLORel{\sigma}{}} u \cdot M \cdot \overline{u}$.
          Since $\sigma_m$ is a transposition of elements in $[n] \setminus \{ a, b, c, d \}$, then there exists some $j, l \in [n] \setminus \{ a, b, c, d \}$ such that $\sigma_m = \tau_{j,l}$.
          Furthermore, $\interp{\sigma_m}_S = \TLO{X}{j,l}$.
          Since $j, l \not \in \{ a, b, c, d \}$, then the following derivation holds by \cref{Eq:O8DRels:Orig:Perm1,Eq:O8DRels:Orig:Perm4}.
          {\scriptsize\begin{equation*}
              \TLO{K}{a,b,c,d} \leftarrow \TLO{X}{j,k}^2 \cdot \TLO{K}{a,b,c,d} \rightarrow \TLO{X}{j,l} \cdot \TLO{K}{a,b,c,d} \cdot \TLO{X}{j,l}
          \end{equation*}}%
          Then $M \sim_{\TLORel{\sigma}{}} \interp{\sigma_m}_S \cdot M \cdot \interp{\sigma_m}_S$.
          Then $M \ \sim_{\TLORel{\sigma}{}} u \cdot M \cdot \overline{u} \sim_{\TLORel{\sigma}{}} v \cdot M \cdot \overline{v}$.
    \end{enumerate}
    Then by the principle of induction, $M \sim_{\TLORel{\sigma}{}} v \cdot M \cdot \overline{v}$.
    Clearly $\interp{v}_O^* = \interp{\sigma}_S$.
\end{proof}

\begin{theorem}
    \label{Thm:PermuteK}
    If $\sigma \in \mathrm{S}(n)$ is a valid reindexing for a four-level operator $M$ of type $K$, then there exists a word $v$ over the two-level operators of type $X$, such that $\interp{v}_O^* = \interp{\sigma}_S$ and $\sigma(M) \sim_{\TLORel{\sigma}{}} v \cdot M \cdot \overline{v}$.
\end{theorem}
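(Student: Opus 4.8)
The plan is to decompose an arbitrary valid reindexing $\sigma$ of a four-level operator $M = \TLO{K}{a_0,a_1,a_2,a_3}$ into three pieces, each of which is handled by one of the three preceding lemmas, and then to glue the resulting derivations together using the transitivity of $(\sim_{\TLORel{\sigma}{}})$. Write $A = \{ a_0, a_1, a_2, a_3 \}$ for the support of $M$. Since $\sigma$ is a valid reindexing, the word $\sigma(M)$ is a well-formed four-level operator $\TLO{K}{b_0,b_1,b_2,b_3}$ with $b_0 < b_1 < b_2 < b_3$; in particular $\sigma$ must preserve the order on $A$, so that $\sigma(a_k) = b_k$ for each $k \in [4]$. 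This order-preservation is the fact that ultimately lets the three lemmas fit together.

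First I would apply \cref{Lemma:PermuteK:1} to obtain a valid reindexing $\pi$ with $\pi(M) = \TLO{K}{0,1,2,3}$ and a word $v_\pi$ over two-level operators of type $X$ such that $\interp{v_\pi}_O^* = \interp{\pi}_S$ and $\TLO{K}{0,1,2,3} \sim_{\TLORel{\sigma}{}} v_\pi \cdot M \cdot \overline{v_\pi}$. Next I would apply \cref{Lemma:PermuteK:2} with target sequence $(b_0, b_1, b_2, b_3)$ to obtain a reindexing $\rho$ with $\rho(\TLO{K}{0,1,2,3}) = \sigma(M)$ and a word $v_\rho$ such that $\interp{v_\rho}_O^* = \interp{\rho}_S$ and $\sigma(M) \sim_{\TLORel{\sigma}{}} v_\rho \cdot \TLO{K}{0,1,2,3} \cdot \overline{v_\rho}$. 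Substituting the first derivation into the second and using the identity $\overline{v_\rho \cdot v_\pi} = \overline{v_\pi} \cdot \overline{v_\rho}$, I obtain $\sigma(M) \sim_{\TLORel{\sigma}{}} w \cdot M \cdot \overline{w}$, where $w = v_\rho \cdot v_\pi$ satisfies $\interp{w}_O^* = \interp{\rho}_S \circ \interp{\pi}_S = \interp{\rho \circ \pi}_S$ since $\interp{\cdot}_S$ is a homomorphism.

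It remains to correct the discrepancy between $\rho \circ \pi$ and $\sigma$, which is exactly where \cref{Lemma:PermuteK:3} enters. The composite $\rho \circ \pi$ is a valid reindexing carrying $M$ to $\sigma(M)$, and because $(\rho \circ \pi)(a_k) = \rho(k) = b_k = \sigma(a_k)$, it agrees with $\sigma$ on $A$. Hence $\mu := \sigma \circ (\rho \circ \pi)^{-1}$ fixes the support $\{ b_0, b_1, b_2, b_3 \}$ of $\sigma(M)$ pointwise, so $\mu(\sigma(M)) = \sigma(M)$, and \cref{Lemma:PermuteK:3} yields a word $v_\mu$ with $\interp{v_\mu}_O^* = \interp{\mu}_S$ and $\sigma(M) \sim_{\TLORel{\sigma}{}} v_\mu \cdot \sigma(M) \cdot \overline{v_\mu}$. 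Chaining this with the previous derivation gives $\sigma(M) \sim_{\TLORel{\sigma}{}} v \cdot M \cdot \overline{v}$ for $v = v_\mu \cdot w$, and $\interp{v}_O^* = \interp{\mu}_S \circ \interp{\rho \circ \pi}_S = \interp{\mu \circ \rho \circ \pi}_S = \interp{\sigma}_S$, as required.

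I expect the main obstacle to be bookkeeping rather than anything conceptual: one must carefully verify that the leftover permutation $\mu$ genuinely fixes the support of $\sigma(M)$ pointwise (which the order-preservation forced by validity of $\sigma$ guarantees), and one must keep track of the homomorphism property of $\interp{\cdot}_S$ together with the reversal identity on $\overline{(\cdot)}$ so that the interpretations of the concatenated words compose precisely to $\interp{\sigma}_S$.
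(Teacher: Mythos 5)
Your proposal is correct and takes essentially the same route as the paper's proof: you use \cref{Lemma:PermuteK:1} to carry $M$ to $\TLO{K}{0,1,2,3}$, \cref{Lemma:PermuteK:2} to carry that to $\sigma(M)$, and \cref{Lemma:PermuteK:3} to absorb the leftover permutation $\mu = \sigma \circ (\rho \circ \pi)^{-1}$, which fixes the support of $\sigma(M)$ pointwise — precisely the paper's $\sigma_3 = \sigma \circ \sigma_1^{-1} \circ \sigma_2^{-1}$. Your explicit remark that validity of $\sigma$ forces it to preserve the order of the support (so that the Lemma~2 target is the sorted sequence $(\sigma(a_0),\ldots,\sigma(a_3))$) is a detail the paper leaves implicit, but the argument is otherwise identical.
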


\begin{proof}
    Since $M$ is a four-level operator of type $K$, then there exists an increasing sequence $(a_0,a_1,a_2,a_3)$ over $[n]$ such that $M 
= \TLO{(-1)}{a_0,a_1,a_2,a_3}$.
    By \cref{Lemma:PermuteK:1}, there exists a word $u$ over two-level operators of type $X$, and a permutation $\sigma_1$ such that $\sigma_1(M) = \TLO{(-1)}{0,1,2,3}$, $\interp{\sigma_1}_S = \interp{u}_S^*$, and $\sigma_1(M) \sim_{\TLORel{\sigma}{}} u \cdot M \cdot \overline{u}$.
    By \cref{Lemma:PermuteK:2}, there exists a word $v$ over two-level operators of type $X$, and a permutation $\sigma_2$ such that $\sigma_2(\sigma_1(M)) = \TLO{(-1)}{\sigma(a),\sigma(b),\sigma(b),\sigma(c)} = \sigma(M)$, $\interp{\sigma_2}_S = \interp{v}_S^*$, and $\sigma_2(\sigma_1(M)) \sim_{\TLORel{\sigma}{}} v \cdot \sigma_1(M) \cdot \overline{v}$.
    Then define $\sigma_3 = \sigma \circ \sigma_1^{-1} \circ \sigma_2^{-1}$.
    Then $\sigma_3(\sigma(a_k)) = \sigma(\sigma_1^{-1}(\sigma_2^{-1}(\sigma(a_k)))) = \sigma(\sigma_1^{-1}(k)) = \sigma(a_k)$ for all $k \in [4]$.
    Then $\sigma_3$ is a valid reindexing for $M$ with $\sigma_3(\sigma_2(\sigma_1(M))) = M$.
    By \cref{Lemma:PermuteK:3}, there exists a word $w$ over two-level operators of type $X$, such that $\interp{\sigma_3}_S = \interp{w}_S^*$ and $\sigma(M) \sim_{\TLORel{\sigma}{}} w \cdot \sigma_2(\sigma_1(M)) \cdot \overline{w}$.
    It follows that $\sigma(M) \sim_{\TLORel{\sigma}{}} w \cdot \sigma_2(\sigma_1(M)) \overline{w} \sim_{\TLORel{\sigma}{}} w \cdot u \cdot \sigma_1(M) \cdot \overline{w \cdot u} \sim_{\TLORel{\sigma}{}} w \cdot u \cdot v \cdot M \cdot \overline{w \cdot u \cdot v}$.
    Moreover, $\interp{w \cdot u \cdot v}_O^* = \interp{w}_O^* \circ \interp{u}_O^* \circ \interp{v} = \interp{\sigma_3}_S \circ \interp{\sigma_2}_S \circ \interp{\sigma_1}_S = \interp{\sigma}_S$.
\end{proof}

\subsubsection{Permuting the Indices in Relations Over Multi-Level Operators}

\begin{theorem}
    \label{Lemma:PermRelElim}
    For each $\sigma \in \mathrm{S}( n )$, there exits a set of words $L_{\sigma}$ with the following properties.
    \begin{enumerate}
        \item If $v_1 \in L_{\sigma}$ and $v_2 \in L_{\sigma}$, then $v_1 \sim_{\TLORel{\sigma}{}} v_2$.
        \item If $\sigma$ is a valid reindexing for $w$, then there exists a $v \in L_{\sigma}$ such that $\sigma(w) \sim_{\TLOGen{\sigma}{}} v \cdot w \cdot \overline{v}$.
    \end{enumerate}
\end{theorem}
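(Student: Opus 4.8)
The plan is to take $L_{\sigma}$ to be the set of all words $v$ over the two-level operators of type $X$ satisfying $\interp{v}_O^* = \interp{\sigma}_S$, i.e., all type-$X$ circuit representations of the permutation $\sigma$. This set is nonempty because $\mathrm{S}(n)$ is generated by the adjacent transpositions $\tau_{j,j+1}$, each represented by $\TLO{X}{j,j+1}$, so any decomposition of $\sigma$ into adjacent transpositions yields an element of $L_{\sigma}$. The first property is then immediate from \cref{Lemma:TypeXBraids}: if $v_1, v_2 \in L_{\sigma}$, then both are words over two-level operators of type $X$ with $\interp{v_1}_O^* = \interp{\sigma}_S = \interp{v_2}_O^*$, and since $\TLORel{n}{B} \subseteq \TLORel{\sigma}{}$ (the order relation \cref{Eq:O8DRels:Orig:Perm1} and the braiding instances of \cref{Eq:O8DRels:Orig:Perm5,Eq:O8DRels:Orig:Perm6} supply every relation of $\TLORel{n}{B}$), we conclude $v_1 \sim_{\TLORel{\sigma}{}} v_2$.

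For the second property I would induct on $|w|$. In the base case $w = \epsilon$, the formal application gives $\sigma(\epsilon) = \epsilon$; choosing any $v \in L_{\sigma}$ and invoking \cref{Lemma:FormalInverse} yields $v \cdot \overline{v} \sim_{\TLORel{\sigma}{}} \epsilon = \sigma(w)$. For the inductive step, write $w = M \cdot u$ where $M$ is a single multi-level operator and $u$ is shorter. Since $\sigma$ is a valid reindexing for $w$, it is valid for both $M$ and $u$. Depending on whether $M$ has type $X$, $(-1)$, or $K$, I apply \cref{Thm:PermuteX}, \cref{Thm:PermuteZ}, or \cref{Thm:PermuteK} to obtain a word $v \in L_{\sigma}$ with $\sigma(M) \sim_{\TLORel{\sigma}{}} v \cdot M \cdot \overline{v}$, while the inductive hypothesis supplies a word $v' \in L_{\sigma}$ with $\sigma(u) \sim_{\TLORel{\sigma}{}} v' \cdot u \cdot \overline{v'}$.

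These are then combined using $\sigma(w) = \sigma(M) \cdot \sigma(u)$ (the formal application distributes over concatenation) together with the congruence of $(\sim_{\TLORel{\sigma}{}})$, giving $\sigma(w) \sim_{\TLORel{\sigma}{}} v \cdot M \cdot \overline{v} \cdot v' \cdot u \cdot \overline{v'}$. The two crucial simplifications are: $\overline{v} \cdot v'$ is a type-$X$ word representing $\interp{\sigma}_S^{-1}\interp{\sigma}_S = I$, so \cref{Lemma:TypeXBraids} gives $\overline{v} \cdot v' \sim_{\TLORel{\sigma}{}} \epsilon$; and since $v \sim_{\TLORel{\sigma}{}} v'$ by the first property, \cref{Lemma:FormalInverse} gives $\overline{v'} \sim_{\TLORel{\sigma}{}} \overline{v}$. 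Substituting both yields $\sigma(w) \sim_{\TLORel{\sigma}{}} v \cdot M \cdot u \cdot \overline{v} = v \cdot w \cdot \overline{v}$ with $v \in L_{\sigma}$, closing the induction.

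The main difficulty will be precisely this last reconciliation step: the generator-level theorems and the inductive hypothesis each produce their own conjugating word ($v$ versus $v'$) for the same permutation $\sigma$, and these need not be syntactically equal, so the argument cannot simply glue them together. The entire proof hinges on being able to identify all representatives of $\sigma$ inside the rewriting system; the first property, the formal-inverse cancellation of \cref{Lemma:FormalInverse}, and the completeness of $\TLORel{\sigma}{}$ for type-$X$ words (\cref{Lemma:TypeXBraids}) are exactly the ingredients that force $\overline{v} \cdot v' \sim_{\TLORel{\sigma}{}} \epsilon$ and $\overline{v'} \sim_{\TLORel{\sigma}{}} \overline{v}$, allowing a single representative $v$ to be factored out of the whole word.
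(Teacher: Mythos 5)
Your proposal is correct and follows essentially the same route as the paper's proof: it defines $L_{\sigma}$ as the set of all type-$X$ words representing $\sigma$, derives Property (1) from \cref{Lemma:TypeXBraids}, and proves Property (2) by induction on $|w|$, splitting off one generator, invoking \cref{Thm:PermuteX,Thm:PermuteZ,Thm:PermuteK}, and reconciling the two conjugating words via Property (1), \cref{Lemma:FormalInverse}, and the completeness of $\TLORel{\sigma}{}$ for type-$X$ words. The only differences are cosmetic (you peel off the first symbol rather than the last, and you retain the generator's conjugator rather than the inductive one as the final representative), so the arguments coincide.
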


\begin{proof}
    Let $L_{\sigma}$ be the set of all words $v$ over the two-level operators of type $X$, such that $\interp{v}_O^* = \interp{\sigma}_S$.
    Let $v_1 \in L_{\sigma}$ and $v_2 \in L_{\sigma}$.
    Then $\interp{v_1}_O^* = \interp{\sigma}_S = \interp{v_2}_O^*$.
    Then $v_1 \sim_{\TLOGen{\sigma}{}} v_2$.
    Since $v_1$ and $v_2$ were arbitrary, then Property (1) holds.
    Now assume that $\sigma$ is a valid reindexing for $w$.
    Property (2) follows by induction on the length of $w$.
    \begin{enumerate}
        \item[--] \textbf{Base Case}.
              If $|w| = 0$, then $\sigma(w) = w$.
              Let $\sigma_1 \circ \sigma_2 \circ \cdots \circ \sigma_m$ be a decomposition of $\sigma$ into transpositions.
              Define $v = \interp{\sigma_1}_S \cdot \interp{\sigma_2}_S \cdots \interp{\sigma_m}$.
              Then $\interp{\overline{v}}_O^* = \interp{\sigma^{-1}}_S$ since $\overline{v}$ is the inverse to $v$.
              Since $v \cdot w \cdot \overline{v}$, then $\sigma(w) \sim_{\TLORel{\sigma}{}} v \cdot w \cdot \overline{v}$ by \cref{Lemma:FormalInverse}.
        \item[--] \textbf{Inductive Hypothesis}.
              Assume that for some $k \in \mathbb{N}$, if $|w| = k$, then there exists a $v \in L_{\sigma}$ such that $\sigma(w) \sim_{\TLOGen{\sigma}{}} v \cdot w \cdot \overline{v}$.
        \item[--] \textbf{Inductive Step}.
              Assume that $|w| = k + 1$.
              Then there exists some word $\hat{w}$ over $\TLOGen{n}{}$ and some $M \in \TLOGen{n}{}$ such that $w = \hat{w} \cdot M$ with $|\hat{w}| = k$.
              Clearly, $M$ is either of type $X$, type $(-1)$, or type $K$.
              In any case, there exists a word $v$ over the two-level operators of type $X$ such that $\interp{v}_O^* = \interp{\sigma}_S$ and $\sigma(M) \sim_{\TLORel{\sigma}{}} v \cdot M \cdot \overline{v}$.
              Then $v \in L_{\sigma}$.
              By the inductive hypothesis, there exists a $u \in L_{\sigma}$ such that $\sigma(\hat{w}) \sim_{\TLORel{\sigma}{}} u \cdot \hat{w} \cdot \overline{u}$.
              Since $u \in L_{\sigma}$ and $v \in L_{\sigma}$, then $u \sim_{\TLORel{\sigma}{}} v$ by Property (1).
              Furthermore $\overline{v} \cdot v \sim_{\TLORel{\sigma}{}} \epsilon$ by \cref{Lemma:FormalInverse}.
              Then $\overline{v} \cdot u  \sim_{\TLORel{\sigma}{}} \overline{v} \cdot v  \sim_{\TLORel{\sigma}{}} \epsilon$.
              Since $\sigma(w) = \sigma(\hat{w}) \cdot \sigma(M)$, then it follows $\sigma(w) \sim_{\TLORel{\sigma}{}} v \cdot \hat{w} \cdot \overline{v} \cdot \sigma(M) \sim_{\TLORel{\sigma}{}} v \cdot \hat{w} \cdot \overline{v} \cdot u \cdot M \cdot \overline{u} \sim_{\TLORel{\sigma}{}} v \cdot \hat{w} \cdot M \cdot \overline{u} \sim_{\TLORel{\sigma}{}} u \cdot w \cdot \overline{u}$ and the inductive step is established.
    \end{enumerate}
    Then by the principle of induction, Property (2) holds.
\end{proof}

\begin{corollary}
    \label{Thm:PermRelElim}
    Let $v$ and $w$ be words over $\TLOGen{n}{}$.
    If $\sigma$ is a valid reindexing for $u$ and $w$, then $\sigma(w)$ is derivable from $\sigma(u)$ using $\TLORel{\sigma}{} \cup \{ u \approx w \}$.
\end{corollary}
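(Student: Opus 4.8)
The plan is to reduce the claim to \cref{Lemma:PermRelElim} together with the standard observation that $(\sim_R)$ is a congruence on the free monoid for any relation set $R$. Since $\sigma$ is a valid reindexing for both $u$ and $w$, \cref{Lemma:PermRelElim} supplies words $v_1, v_2 \in L_\sigma$ such that $\sigma(u) \sim_{\TLORel{\sigma}{}} v_1 \cdot u \cdot \overline{v_1}$ and $\sigma(w) \sim_{\TLORel{\sigma}{}} v_2 \cdot w \cdot \overline{v_2}$. By Property (1) of the same lemma, $v_1 \sim_{\TLORel{\sigma}{}} v_2$, and by \cref{Lemma:FormalInverse} this also yields $\overline{v_1} \sim_{\TLORel{\sigma}{}} \overline{v_2}$.

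First I would insert the relation $u \approx w$ to bridge the two conjugates. The key chain is
\[
\sigma(u)
\;\sim\;
v_1 \cdot u \cdot \overline{v_1}
\;\to\;
v_1 \cdot w \cdot \overline{v_1}
\;\sim\;
v_2 \cdot w \cdot \overline{v_2}
\;\sim\;
\sigma(w),
\]
where the first and last steps are instances of \cref{Lemma:PermRelElim} (hence use only $\TLORel{\sigma}{}$), the middle $(\to)$ applies the single relation $u \approx w$ in the context $v_1 \cdot (\,\cdot\,) \cdot \overline{v_1}$, and the penultimate step lifts $v_1 \sim v_2$ and $\overline{v_1} \sim \overline{v_2}$ to $v_1 \cdot w \cdot \overline{v_1} \sim v_2 \cdot w \cdot \overline{v_2}$ by congruence. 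Since $\TLORel{\sigma}{} \subseteq \TLORel{\sigma}{} \cup \{u \approx w\}$, every step is justified by the enlarged relation set, so $\sigma(w)$ is derivable from $\sigma(u)$ using $\TLORel{\sigma}{} \cup \{u \approx w\}$.

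The only slightly delicate point is the congruence step, which I would make explicit: because the one-step rewrite $(\xrightarrow{R})$ is defined with arbitrary left and right context, $s \cdot q \cdot t \xrightarrow{R} s \cdot r \cdot t$, its symmetric, transitive, and reflexive closure $(\sim_R)$ is automatically compatible with concatenation, so $a \sim_R a'$ and $b \sim_R b'$ imply $a \cdot c \cdot b \sim_R a' \cdot c \cdot b'$ for every $c$. This is what permits replacing $v_1$ by $v_2$ and $\overline{v_1}$ by $\overline{v_2}$ around the fixed middle word $w$. No new computation is required beyond invoking the two cited results, so there is no genuine obstacle; the argument is pure bookkeeping that assembles \cref{Lemma:PermRelElim}, \cref{Lemma:FormalInverse}, and the congruence property into a single rewriting chain.
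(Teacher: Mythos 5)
Your proof is correct and takes essentially the same approach as the paper: both invoke \cref{Lemma:PermRelElim} to obtain $v_1, v_2 \in L_{\sigma}$ with $\sigma(u) \sim v_1 \cdot u \cdot \overline{v_1}$ and $\sigma(w) \sim v_2 \cdot w \cdot \overline{v_2}$, bridge the two conjugates with a single application of $u \approx w$, and exchange $v_1, \overline{v_1}$ for $v_2, \overline{v_2}$ by the congruence property of $(\sim_R)$. The only cosmetic difference is in justifying $\overline{v_1} \sim \overline{v_2}$: you cite the final claim of \cref{Lemma:FormalInverse} directly, whereas the paper re-derives it semantically (soundness of the interpretation plus completeness for type-$X$ words via \cref{Lemma:TypeXBraids}); both are valid.
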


\begin{proof}
    Let $Q = \TLORel{\sigma}{} \cup \{ u \approx w \}$.
    By \cref{Lemma:PermRelElim}, there exists words $v_1$ and $v_2$ over two-level operators of type $X$, such that $v_1 \sim_Q v_2$,  $\sigma(u) \sim_Q v_1 \cdot u \cdot \overline{v}_1$ and $\sigma(w) \sim_Q v_2 \cdot w \cdot \overline{v}_2$.
    Since $v_1 \sim_Q v_2$, then $\interp{v_1}_O^* = \interp{v_2}_O^*$.
    Then $\interp{\overline{v}_1}_O^* = \interp{\overline{v}_2}_O^*$.
    Since $\overline{v}_1$ and $\overline{v}_2$ are words over two-level operators of type $X$, then $\overline{v}_1 \sim_{Q} \overline{v}_2$ by \cref{Lemma:TypeXBraids}.
    Then the following derivation holds over $Q$.
    {\scriptsize\begin{align*}
        \sigma(u)
        \rightarrow
        v_1 \cdot u \cdot \overline{v}_1
        \rightarrow
        v_2 \cdot u \cdot \overline{v}_1
        \rightarrow
        v_2 \cdot w \cdot \overline{v}_1
        \rightarrow
        v_2 \cdot w \cdot \overline{v}_2
        \rightarrow
        \sigma(w)
    \end{align*}}%
    Then $\sigma(w)$ is derivable from $\sigma(w)$ using $Q$.
\end{proof}

\subsection{The Set of Representative Relations}
\label{Appendix:O8D:Reps}

\begin{figure}[t]
  {\scriptsize\begin{minipage}[t]{0.35\textwidth}
    \begin{align}
    \TLO{X}{a,b}{}^2 &\,\approx \, \epsilon \\
    \TLO{(-1)}{0}{}^2 &\,\approx \, \epsilon \\
    \TLO{K}{0,1,2,3}{}^2 &\,\approx \, \epsilon \\
    \TLO{X}{a,b} \cdot \TLO{X}{c,d} &\,\approx \, \TLO{X}{c,d} \cdot \TLO{X}{a,b} \\
    \TLO{X}{a,b} \cdot \TLO{(-1)}{c} &\,\approx \, \TLO{(-1)}{c} \cdot \TLO{X}{a,b} \\
    \TLO{X}{a,b} \cdot \TLO{K}{c,d,e,f} &\,\approx \, \TLO{K}{c,d,e,f} \cdot \TLO{X}{a,b} \\
    \TLO{(-1)}{4} \cdot \TLO{K}{0,1,2,3} &\,\approx \, \TLO{K}{0,1,2,3} \cdot \TLO{(-1)}{4} \\
    \TLO{(-1)}{0} \cdot \TLO{(-1)}{4} &\,\approx \, \TLO{(-1)}{4} \cdot \TLO{(-1)}{0} \\
    \TLO{K}{0,1,2,3} \cdot \TLO{K}{4,5,6,7} &\,\approx \, \TLO{K}{4,5,6,7} \cdot \TLO{K}{0,1,2,3} \\
    \TLO{X}{a,a+2} \cdot \TLO{X}{a,a+1} &\,\approx \, \TLO{X}{a+1,a+2} \cdot \TLO{X}{a,a+2}
    \end{align}
  \end{minipage}
  \begin{minipage}[t]{0.63\textwidth}
    \begin{align}
    \TLO{X}{a+1,a+2} \cdot \TLO{X}{a,a+1} &\,\approx \, \TLO{X}{a,a+2} \cdot \TLO{X}{a+1,a+2} \\
    \TLO{X}{a,b} \cdot \TLO{(-1)}{a} &\,\approx \, \TLO{(-1)}{b} \cdot \TLO{X}{a,b} \\
    \TLO{X}{a,e} \cdot \TLO{K}{a,b,c,d} &\,\approx \, \TLO{K}{e,b,c,d} \cdot \TLO{X}{a,e} \\
    \TLO{X}{b,e} \cdot \TLO{K}{a,b,c,d} &\,\approx \, \TLO{K}{a,e,c,d} \cdot \TLO{X}{b,e} \\
    \TLO{X}{c,e} \cdot \TLO{K}{a,b,c,d} &\,\approx \, \TLO{K}{a,b,e,d} \cdot \TLO{X}{c,e} \\
    \TLO{X}{d,e} \cdot \TLO{K}{a,b,c,d} &\,\approx \, \TLO{K}{a,b,c,e} \cdot \TLO{X}{d,e} \\
    \TLO{X}{0,1} \cdot \TLO{K}{0,1,2,3} &\,\approx \, \TLO{K}{0,1,2,3} \cdot \TLO{X}{0,1} \cdot \TLO{(-1)}{1} \cdot \TLO{(-1)}{3} \\
    \TLO{X}{1,2} \cdot \TLO{K}{0,1,2,3} &\,\approx \, \TLO{(-1)}{0} \cdot \TLO{K}{0,1,2,3} \cdot \TLO{(-1)}{0} \cdot \TLO{K}{0,1,2,3} \cdot \TLO{(-1)}{0} \\
    \TLO{X}{2,3} \cdot \TLO{K}{0,1,2,3} &\,\approx \, \TLO{K}{0,1,2,3} \cdot \TLO{X}{1,3} \\
    \TLO{K}{0,1,2,3} \cdot \TLO{K}{1,3,4,5} &\,\approx \, \TLO{K}{1,3,4,5} \cdot \TLO{K}{0,1,2,3} \\
    \TLO{(-1)}{0} \cdot \TLO{(-1)}{4} \cdot \TLO{X}{0,4} \cdot \rho &\,\approx\, \rho \cdot \TLO{X}{0,4} \cdot \TLO{(-1)}{4} \cdot \TLO{(-1)}{0}
    \end{align}
  \end{minipage}}
  \caption{The representative relations in $\TLORel{n}{1}$, for all valid choices of $a, b, c, d, e, f \in \mathbb{Z}$. We write $\rho$ for the substring $\TLO{K}{4,5,6,7} \cdot \TLO{K}{0,1,2,3} \cdot \TLO{X}{3,4} \cdot \TLO{K}{0,1,2,3} \cdot \TLO{K}{4,5,6,7} \cdot \TLO{X}{0,4}$.}
  \label{Fig:O8DRels:Reps}
  \hfill\rule{0.95\textwidth}{0.1mm}\hfill\hbox{}
\end{figure}

In \cref{Sec:O8D:RepRels}, a set of representative relations were selected from $\TLORel{n}{}$.
These relations are illustrated in \cref{Fig:O8DRels:Reps}.
In some sense, the choice of representative relations were arbitrary, since all choices are equivalent up to permutation.
However, preference was given to the parameters $[0]$, $[4]$, $[0,1,2,3]$, $[4,5,6,7]$, since these correspond well to controlled qubit operators.

\subsection{Proving the Redundant Relations are Derivable}
\label{Appendix:O8D:Reduce}

This section makes use of the braiding relations and the inverse relations, to derive several bifunctoriality and commutator relations.
Each proof follows the same structure.
First, the special case is proven where all generators of type $(-1)$ or $K$ have consecutive indices starting from $0$.
In all other cases, there is a generator of type $(-1)$ or $K$ conjugated by a permutation.
The braiding relations are used to obtain a convenient decomposition for each permutation.
The commutativity and bifunctoriality follow immediately from these decompositions.

\begin{lemma}
    \label{Lemma:ZTxPull}
    If $( 0, a, b )$ is an increasing sequence over $[n]$, then $\TLO{X}{a,b} \cdot \TLO{(-1)}{0} \sim_{\TLORel{n}{3}} \TLO{(-1)}{0} \cdot \TLO{X}{a,b}$.
\end{lemma}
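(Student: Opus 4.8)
The plan is to prove the statement by induction on $b - a$, reducing the general swap $\TLO{X}{a,b}$ to consecutive transpositions $\TLO{X}{c,c+1}$, each of which commutes with $\TLO{(-1)}{0}$ by the single consecutive-commutator relation of $\TLORel{n}{3}$ of the form $\TLO{X}{c,c+1} \cdot \TLO{(-1)}{0} \approx \TLO{(-1)}{0} \cdot \TLO{X}{c,c+1}$, which is valid whenever $c > 0$. The essential observation is that, because $0 < a < b$, every transposition appearing in the reduction has lower index at least $a \ge 1$, so this commutator relation always applies and qubit $0$ is never disturbed.

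Before the main induction I would establish two preliminary facts, using only the braiding relation $\TLO{X}{a+1,b} \cdot \TLO{X}{a,a+1} \approx \TLO{X}{a,b} \cdot \TLO{X}{a+1,b}$ and the consecutive order relation $\TLO{X}{a,a+1}{}^2 \approx \epsilon$ from $\TLORel{n}{3}$. First, a \emph{general} order relation $\TLO{X}{c,d}{}^2 \sim_{\TLORel{n}{3}} \epsilon$ for all $0 \le c < d$, proven by induction on $d - c$: the braiding relation rewrites $\TLO{X}{c,d}{}^2 \cdot \TLO{X}{c+1,d}$ down to $\TLO{X}{c+1,d}$ (applying it twice and then cancelling a consecutive pair), after which multiplying on the right by $\TLO{X}{c+1,d}$ and invoking the inductive order relation for the shorter swap closes the case. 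Second, the conjugation decomposition $\TLO{X}{a,b} \sim_{\TLORel{n}{3}} \TLO{X}{a+1,b} \cdot \TLO{X}{a,a+1} \cdot \TLO{X}{a+1,b}$, obtained from the braiding relation $\TLO{X}{a,b} \cdot \TLO{X}{a+1,b} \sim_{\TLORel{n}{3}} \TLO{X}{a+1,b} \cdot \TLO{X}{a,a+1}$ by multiplying through on the right by $\TLO{X}{a+1,b}$ and cancelling with the general order relation just established.

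With these in hand, the main induction proceeds as follows. The base case $b - a = 1$ is exactly the consecutive-commutator relation, since $a \ge 1 > 0$. For $b - a \ge 2$, I would substitute the decomposition for $\TLO{X}{a,b}$ and then shuttle $\TLO{(-1)}{0}$ leftward through the three factors: past each of the two outer copies of $\TLO{X}{a+1,b}$ using the inductive hypothesis (legitimate since $(0, a+1, b)$ is again increasing with $b - (a+1) < b - a$), and past the inner $\TLO{X}{a,a+1}$ using the consecutive-commutator relation (legitimate since $a > 0$). Re-collapsing the decomposition on the left then yields $\TLO{(-1)}{0} \cdot \TLO{X}{a,b}$, as required.

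The main obstacle is bureaucratic rather than conceptual: $\TLORel{n}{3}$ supplies the order and braiding relations only in their consecutive or single-step forms, so the general order relation and the conjugation decomposition must be bootstrapped by the nested inductions above. One must also verify at each rewrite that no transposition of the form $\TLO{X}{0,\ast}$ is ever introduced, since this is precisely what guarantees the index restriction $c > 0$ needed to apply the commutator relation. Tracking these index constraints carefully through the two preliminary inductions and the main induction is the only delicate part of the argument.
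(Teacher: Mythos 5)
Your proof is correct, but it takes a genuinely different route from the paper's. The paper's proof is semantic-then-syntactic: since $0 < a < b$, the permutation $\tau_{a,b}$ admits a decomposition into adjacent transpositions $\tau_{c_k,c_k+1}$ with every $c_k > 0$, so by the completeness result for words over two-level operators of type $X$ (\cref{Lemma:TypeXBraids}) one gets $\TLO{X}{a,b} \sim_{\TLORel{n}{3}} u$ for the corresponding word $u$ of adjacent transpositions; a single induction on $|u|$ then pulls $\TLO{(-1)}{0}$ through $u$ one letter at a time, using exactly the adjacent commutator relation you use. Your proof replaces the appeal to \cref{Lemma:TypeXBraids} with a purely equational induction on $b-a$: you bootstrap the general order relation $\TLO{X}{c,d}{}^2 \sim_{\TLORel{n}{3}} \epsilon$ and the conjugation decomposition $\TLO{X}{a,b} \sim_{\TLORel{n}{3}} \TLO{X}{a+1,b} \cdot \TLO{X}{a,a+1} \cdot \TLO{X}{a+1,b}$ directly from the relations $\TLO{X}{a,a+1}{}^2 \approx \epsilon$ and $\TLO{X}{a+1,b} \cdot \TLO{X}{a,a+1} \approx \TLO{X}{a,b} \cdot \TLO{X}{a+1,b}$ of \cref{Fig:O8DRels:Small}, and then shuttle $\TLO{(-1)}{0}$ through the three factors (outer factors by the inductive hypothesis, inner factor by the adjacent commutator, valid since $a > 0$). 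I checked your two preliminary inductions and the main step; they are sound, and every relation you invoke is literally present in $\TLORel{n}{3}$. What the paper's route buys is brevity and reuse: \cref{Lemma:TypeXBraids} is proved once and then invoked in several places (both pull-through lemmas and the reindexing theorems), so each individual proof is short. What your route buys is self-containment and mechanical verifiability: no semantic argument (matrix equality plus completeness of the symmetric-group presentation) is needed, and the index constraint keeping $\TLO{X}{0,\ast}$ out of every intermediate word is tracked explicitly rather than delegated to the choice of transposition decomposition. Both arguments rest on the same essential observation, namely that the swap $\TLO{X}{a,b}$ can be generated by adjacent transpositions that never touch index $0$.
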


\begin{proof}
    Let $\sigma = \tau_{a,b}$.
    Since $3 < a < b$, then there exists a decomposition $\tau_{c_1,c_1+1} \circ \tau_{c_2,c_2+1} \circ \cdots \tau_{c_m,c_m+1}$ of $\sigma$ into transpositions such that $0 < c_k$ for all $k \in [n]$.
    Define $u = \TLO{X}{c_1,c_1+1} \cdot \TLO{X}{c_2,c_2+1} \cdots \TLO{X}{c_m,c_m+1}$.
    Then $\interp{\TLO{X}{a,b}}_O^* = \interp{\sigma}_S = \interp{u}_O^*$.
    Then $\TLO{X}{a,b} \sim_{\TLORel{n}{3}} u$ by \cref{Lemma:TypeXBraids}.
    The proof follows by induction on $m$.
    \begin{enumerate}
    \item[--] \textbf{Base Case}.
          If $|u| = 0$, then $u \cdot \TLO{(-1)}{0} = \TLO{(-1)}{0} \cdot u$.
          Then $u \cdot \TLO{(-1)}{0} \sim_{\TLORel{n}{3}} \TLO{(-1)}{0} \cdot u$ by reflexivity.
    \item[--] \textbf{Inductive Hypothesis}.
          Assume that for some $k \in \mathbb{N}$, if $|u| = k$, then $u \cdot \TLO{(-1)}{0} \sim_{\TLORel{n}{3}} \TLO{(-1)}{0} \cdot u$.
    \item[--] \textbf{Inductive Step}.
          Assume that $m = k + 1$.
          Define $v = \TLO{X}{c_1,c_1+1} \cdot \TLO{X}{c_2,c_2+1} \cdots \TLO{X}{c_k,c_k+1}$.
          Then by definition $u = v \cdot \TLO{X}{c_m,c_m+1}$.
          Since $c_m > 0$, then $u \cdot \TLO{(-1)}{0} \sim_{\TLORel{n}{3}} v \cdot \TLO{(-1)}{0} \cdot \TLO{X}{c_m,c_m+1}$ by the relation $\TLO{X}{c_m,c_m+1} \cdot \TLO{(-1)}{0} \approx_{\TLORel{n}{3}} \TLO{(-1)}{0} \cdot \TLO{X}{c_m,c_m+1}$.
          Furthermore, since $|v| = k$, then by the inductive hypothesis $v \cdot \TLO{(-1)}{0} \sim_{\TLORel{n}{3}} \TLO{(-1)}{0} \cdot v$.
          Then $v \cdot \TLO{(-1)}{0} \cdot \TLO{X}{c_m,c_m+1} \sim_{\TLORel{n}{3}} \TLO{(-1)}{0} \cdot u$.
          Then by the transitivity of $( \sim_{\TLORel{n}{3}} )$, $u \cdot \TLO{(-1)}{0} \sim_{\TLORel{n}{3}} \TLO{(-1)}{0} \cdot u$ and the inductive step is established.
    \end{enumerate}
    Then $u \cdot \TLO{(-1)}{0} \sim_{\TLORel{n}{3}} \TLO{(-1)}{0} \cdot u$ by the principle of induction.
    Since $\TLO{X}{a,b} \cdot \TLO{(-1)}{0} \sim_{\TLORel{n}{3}} u \cdot \TLO{(-1)}{0}$ and $\TLO{(-1)}{0} \cdot u \sim_{\TLORel{n}{3}} \TLO{(-1)}{0} \cdot \TLO{X}{a,b}$, then $\TLO{X}{a,b} \cdot \TLO{(-1)}{0} \sim_{\TLORel{n}{3}} \TLO{(-1)}{0} \cdot \TLO{X}{a,b}$ by the transitivity of $( \sim_{\TLORel{n}{3}} )$.
\end{proof}

\begin{theorem}
    All instances of \cref{Eq:O8DRels:Orig:Perm3} are derivable from $\TLORel{n}{3}$.
\end{theorem}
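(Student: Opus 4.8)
The plan is to reduce every instance of \cref{Eq:O8DRels:Orig:Perm3} to the aligned case already proved in \cref{Lemma:ZTxPull}, while deliberately avoiding the general reindexing results of \cref{Appendix:O8D:Reindex}. The tempting move, applying \cref{Thm:PermRelElim} with a permutation carrying $\TLO{(-1)}{0}$ to $\TLO{(-1)}{c}$, is circular here: the reindexing set $\TLORel{\sigma}{}$ contains \cref{Eq:O8DRels:Orig:Perm3} itself (equivalently, \cref{Thm:PermuteZ} silently reintroduces it). So I would argue by hand, moving the one-level operator only through transpositions disjoint from its support. Fix an instance with $a < b$ and $c \notin \{a,b\}$. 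Since $\TLO{X}{a,b}$ is self-inverse, it suffices to establish the conjugated form $\TLO{X}{a,b} \cdot \TLO{(-1)}{c} \cdot \TLO{X}{a,b} \sim_{\TLORel{n}{3}} \TLO{(-1)}{c}$; re-appending $\TLO{X}{a,b}$ and applying the order relation then yields the commutator. If $c = 0$, then $0 < a < b$ and this is exactly \cref{Lemma:ZTxPull}, so I may assume $c > 0$ and unfold the abbreviation $\TLO{(-1)}{c} = \TLO{X}{0,c} \cdot \TLO{(-1)}{0} \cdot \TLO{X}{0,c}$.

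Write $w = \TLO{X}{0,c}$ and $u = \TLO{X}{a,b} \cdot w$, both words over two-level operators of type $X$. The conjugated left-hand side is then $u \cdot \TLO{(-1)}{0} \cdot \overline{u}$ and the target is $w \cdot \TLO{(-1)}{0} \cdot \overline{w}$. Conjugating by $w$ and using \cref{Lemma:FormalInverse} to collapse $w \cdot \overline{w}$ and $\overline{w} \cdot w$ to $\epsilon$, it suffices to prove $t \cdot \TLO{(-1)}{0} \cdot \overline{t} \sim_{\TLORel{n}{3}} \TLO{(-1)}{0}$ for $t = \overline{w} \cdot u$. A routine computation of underlying permutations shows $\interp{t}_O^*$ fixes the index $0$: since $c \notin \{a,b\}$, the word $u$ sends $0 \mapsto c$, and the leading $\overline{w} = \TLO{X}{0,c}$ sends $c \mapsto 0$. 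This is the only place the hypothesis $c \notin \{a,b\}$ is used, and it is exactly what makes the next step possible.

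The remaining claim is a clean generalization of \cref{Lemma:ZTxPull}: if $t$ is a word over two-level operators of type $X$ whose underlying permutation fixes $0$, then $t \cdot \TLO{(-1)}{0} \cdot \overline{t} \sim_{\TLORel{n}{3}} \TLO{(-1)}{0}$. A permutation fixing $0$ lies in the parabolic subgroup generated by $\TLO{X}{1,2}, \ldots, \TLO{X}{n-2,n-1}$, so by \cref{Lemma:TypeXBraids} (whose braiding and order relations are present in $\TLORel{n}{3}$) I can rewrite $t \sim_{\TLORel{n}{3}} t'$ for a word $t'$ of adjacent transpositions $\TLO{X}{d,d+1}$ with every $d \ge 1$, and by \cref{Lemma:FormalInverse} also $\overline{t} \sim_{\TLORel{n}{3}} \overline{t'}$. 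Each such transposition commutes with $\TLO{(-1)}{0}$ via the representative relation $\TLO{X}{d,d+1} \cdot \TLO{(-1)}{0} \approx \TLO{(-1)}{0} \cdot \TLO{X}{d,d+1}$ of $\TLORel{n}{3}$, so an induction on $|t'|$ identical in shape to \cref{Lemma:ZTxPull} pulls $\TLO{(-1)}{0}$ to the left of $t'$; then $t' \cdot \overline{t'} \sim_{\TLORel{n}{3}} \epsilon$ completes the reduction.

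I expect the main obstacle to be conceptual rather than computational: the whole difficulty is recognizing that one must not reindex $\TLO{(-1)}{c}$ directly, and instead reroute conjugation by the type-$X$ word through transpositions that never touch index $0$. Once the reduction to a permutation fixing $0$ is in place, the argument is a verbatim adaptation of \cref{Lemma:ZTxPull}, and the only bookkeeping is checking that the parabolic decomposition and the formal-inverse cancellations stay inside $\TLORel{n}{3}$.
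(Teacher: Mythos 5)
Your proposal is correct and takes essentially the same route as the paper's proof: both unfold $\TLO{(-1)}{c}$ as $\TLO{X}{0,c} \cdot \TLO{(-1)}{0} \cdot \TLO{X}{0,c}$, rearrange the type-$X$ words using \cref{Lemma:TypeXBraids} and \cref{Lemma:FormalInverse}, and finish by commuting $\TLO{(-1)}{0}$ past adjacent transpositions with positive indices via the representative relation, which is exactly the content of \cref{Lemma:ZTxPull}. The only difference is packaging — the paper conjugates $\TLO{X}{a,b}$ across $\TLO{X}{0,c}$ to obtain the single transposition $\TLO{X}{\sigma(a),\sigma(b)}$ with positive indices and cites \cref{Lemma:ZTxPull} verbatim, whereas you re-run that lemma's induction inline in the mildly more general form of a $0$-fixing type-$X$ word — and your observation that invoking \cref{Thm:PermRelElim} here would be circular is accurate and consistent with the paper's own avoidance of the reindexing machinery in this proof.
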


\begin{proof}
    Let $\{ a, b, c \} \in [n]$.
    Define $\sigma \in \mathrm{S}(n)$ such that $\sigma$ is $\tau_{k,c_k}$ if $c > 0$, or identity otherwise.
    Likewise, define $u$ to be $\TLO{X}{0,c}$ if $c > 0$, or $\epsilon$ otherwise.
    Clearly $\tau_{a,b} \circ \sigma = \sigma \circ \tau_{\sigma(a),\sigma(b)}$.
    Then, $$\interp{\TLO{X}{a,b} \cdot u}_O^* = \interp{\tau_{a,b} \circ \sigma}_S = \interp{\sigma \circ \tau_{\sigma(a),\sigma(b)}}_S = \interp{u \cdot \TLO{X}{\sigma(a),\sigma(b)}}_O^*.$$
    Then $\TLO{X}{a,b} \cdot u \sim_{\TLORel{n}{3}} u \cdot \TLO{X}{\sigma(a),\sigma(b)}$ by \cref{Lemma:TypeXBraids}.
    Likewise, $\TLO{X}{\sigma(a), \sigma(b)} \cdot \overline{u} \sim_{\TLORel{n}{3}} \overline{u} \cdot \TLO{X}{a,b}$ by \cref{Lemma:FormalInverse}.
    Since $\{ a, b, c \}$ are distinct, then $\sigma(a) > 0$ and $\sigma(b) > 0$.
    Then $\TLO{X}{\sigma{a},\sigma{b}} \cdot \TLO{(-1)}{0} \sim_{\TLORel{n}{3}} \TLO{(-1)}{0} \cdot \TLO{X}{\sigma{a},\sigma{b}}$ by \cref{Lemma:ZTxPull}.
    Then the following derivation holds.
    {\scriptsize\begin{equation*}
        \TLO{X}{a,b} \cdot u \cdot \TLO{(-1)}{0} \cdot \overline{u}
        \sim_{\TLORel{n}{3}}
        u \cdot \TLO{X}{\sigma(a),\sigma(b)} \cdot \TLO{(-1)}{0} \cdot \overline{u}
        \sim_{\TLORel{n}{3}}
        u \cdot \TLO{(-1)}{0} \cdot \TLO{X}{\sigma(a),\sigma(b)} \cdot \overline{u}
        \sim_{\TLORel{n}{3}}
        u \cdot \TLO{(-1)}{0} \cdot \overline{u} \cdot \TLO{X}{a,b}
    \end{equation*}}
    Since $\{ a, b, c \}$ were arbitrary, then all instances of \cref{Eq:O8DRels:Orig:Perm3} are derivable from $\TLORel{n}{3}$.
\end{proof}

\begin{lemma}
    \label{Lemma:KTxPull}
    If $( 3, a, b )$ is an increasing sequence over $[n]$, then $\TLO{X}{a,b} \cdot \TLO{K}{0,1,2,3} \sim_{\TLORel{n}{3}} \TLO{K}{0,1,2,3} \cdot \TLO{X}{a,b}$.
\end{lemma}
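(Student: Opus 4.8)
The plan is to mirror the proof of \cref{Lemma:ZTxPull}, replacing the one-level operator $\TLO{(-1)}{0}$ by the four-level operator $\TLO{K}{0,1,2,3}$ and using the commutation relation $\TLO{X}{c,c+1} \cdot \TLO{K}{0,1,2,3} \approx \TLO{K}{0,1,2,3} \cdot \TLO{X}{c,c+1}$ from \cref{Fig:O8DRels:Small} (valid for $c > 3$) in place of the analogous relation for $\TLO{(-1)}{0}$. First I would set $\sigma = \tau_{a,b}$ and observe that, since $3 < a < b$, the transposition $\sigma$ admits a decomposition $\tau_{c_1,c_1+1} \circ \cdots \circ \tau_{c_m,c_m+1}$ into adjacent transpositions in which every index satisfies $c_k > 3$; swapping $a$ and $b$ by adjacent transpositions only ever moves indices in the range $[a,b]$, all of which exceed $3$. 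Setting $u = \TLO{X}{c_1,c_1+1} \cdots \TLO{X}{c_m,c_m+1}$, we have $\interp{\TLO{X}{a,b}}_O^* = \interp{\sigma}_S = \interp{u}_O^*$, and hence $\TLO{X}{a,b} \sim_{\TLORel{n}{3}} u$ by \cref{Lemma:TypeXBraids}.

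The heart of the argument is then an induction on $m = |u|$ establishing $u \cdot \TLO{K}{0,1,2,3} \sim_{\TLORel{n}{3}} \TLO{K}{0,1,2,3} \cdot u$. The base case $m = 0$ holds by reflexivity. For the inductive step, writing $u = v \cdot \TLO{X}{c_m,c_m+1}$ with $|v| = k$, I would first apply the relation $\TLO{X}{c_m,c_m+1} \cdot \TLO{K}{0,1,2,3} \approx \TLO{K}{0,1,2,3} \cdot \TLO{X}{c_m,c_m+1}$ — legitimate precisely because $c_m > 3$ — to pull $\TLO{K}{0,1,2,3}$ past the trailing factor, then invoke the inductive hypothesis on $v$, and close the step by transitivity of $(\sim_{\TLORel{n}{3}})$. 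Finally I would assemble the pieces: from $\TLO{X}{a,b} \sim_{\TLORel{n}{3}} u$ we obtain both $\TLO{X}{a,b} \cdot \TLO{K}{0,1,2,3} \sim_{\TLORel{n}{3}} u \cdot \TLO{K}{0,1,2,3}$ and $\TLO{K}{0,1,2,3} \cdot u \sim_{\TLORel{n}{3}} \TLO{K}{0,1,2,3} \cdot \TLO{X}{a,b}$, and chaining these with the commutation just proved yields the claim, again by transitivity.

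The single point requiring genuine care — rather than routine bookkeeping — is the decomposition claim: that $\tau_{a,b}$ can be written using only adjacent transpositions $\tau_{c,c+1}$ with $c > 3$. This is exactly what licenses every use of the commutation relation, and it is where the hypothesis $3 < a$ (as opposed to merely $a \notin \{0,1,2,3\}$) is essential, since an adjacent transposition straddling the boundary, such as $\tau_{3,4}$, does \emph{not} commute with $\TLO{K}{0,1,2,3}$. Everything else is a direct transcription of \cref{Lemma:ZTxPull}.
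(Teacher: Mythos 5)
Your proof is correct and follows essentially the same route as the paper's own: decompose $\tau_{a,b}$ into adjacent transpositions with all indices exceeding $3$, convert to a word $u$ via \cref{Lemma:TypeXBraids}, commute $\TLO{K}{0,1,2,3}$ past $u$ one letter at a time by induction using the relation $\TLO{X}{c,c+1} \cdot \TLO{K}{0,1,2,3} \approx \TLO{K}{0,1,2,3} \cdot \TLO{X}{c,c+1}$ for $c > 3$, and close by transitivity. Your explicit justification of the decomposition claim (that the adjacent transpositions only involve indices in $[a,b]$, hence all greater than $3$) is a detail the paper merely asserts, so it is a welcome addition rather than a deviation.
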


\begin{proof}
    Let $\sigma = \tau_{a,b}$.
    Since $3 < a < b$, then there exists a decomposition $\tau_{c_1,c_1+1} \circ \tau_{c_2,c_2+1} \circ \cdots \tau_{c_m,c_m+1}$ of $\sigma$ into transpositions such that $3 < c_k$ for all $k \in [n]$.
    Define $u = \TLO{X}{c_1,c_1+1} \cdot \TLO{X}{c_2,c_2+1} \cdots \TLO{X}{c_m,c_m+1}$.
    Then $\interp{\TLO{X}{a,b}}_O^* = \interp{\sigma}_S = \interp{u}_O^*$.
    Then $\TLO{X}{a,b} \sim_{\TLORel{n}{3}} u$ by \cref{Lemma:TypeXBraids}.
    The proof follows by induction on $m$.
    \begin{enumerate}
    \item[--] \textbf{Base Case}.
          If $|u| = 0$, then $u \cdot \TLO{K}{0,1,2,3} \sim_{\TLORel{n}{3}} \TLO{K}{0,1,2,3} \cdot u$ by reflexivity.
    \item[--] \textbf{Inductive Hypothesis}.
          Assume that for some $k \in \mathbb{N}$, if $|u| = k$, then $u \cdot \TLO{K}{0,1,2,3} \sim_{\TLORel{n}{3}} \TLO{K}{0,1,2,3} \cdot u$.
    \item[--] \textbf{Inductive Step}.
          Assume that $m = k + 1$.
          Define $v = \TLO{X}{c_1,c_1+1} \cdot \TLO{X}{c_2,c_2+1} \cdots \TLO{X}{c_k,c_k+1}$.
          Then by definition $u = v \cdot \TLO{X}{c_m,c_m+1}$.
          Since $c_m > 3$, then $u \cdot \TLO{K}{0,1,2,3} \sim_{\TLORel{n}{3}} v \cdot \TLO{K}{0,1,2,3} \cdot \TLO{X}{c_m,c_m+1}$ by the relation $\TLO{X}{c_m,c_m+1} \cdot \TLO{K}{0,1,2,3} \approx_{\TLORel{n}{3}} \TLO{K}{0,1,2,3} \cdot \TLO{X}{c_m,c_m+1}$.
          Since $|v| = k$, then by the inductive hypothesis $v \cdot \TLO{K}{0,1,2,3} \sim_{\TLORel{n}{3}} \TLO{K}{0,1,2,3} \cdot v$.
          Then $v \cdot \TLO{K}{0,1,2,3} \cdot \TLO{X}{c_m,c_m+1} \sim_{\TLORel{n}{3}} \TLO{K}{0,1,2,3} \cdot u$.
          Then by the transitivity of $( \sim_{\TLORel{n}{3}} )$, $u \cdot \TLO{K}{0,1,2,3} \sim_{\TLORel{n}{3}} \TLO{K}{0,1,2,3} \cdot u$ and the inductive step is established.
    \end{enumerate}
    Then $u \cdot \TLO{K}{0,1,2,3} \sim_{\TLORel{n}{3}} \TLO{K}{0,1,2,3} \cdot u$ by the principle of induction.
    Since $\TLO{X}{a,b} \cdot \TLO{K}{0,1,2,3} \sim_{\TLORel{n}{3}} u \cdot \TLO{K}{0,1,2,3}$ and $\TLO{K}{0,1,2,3} \cdot u \sim_{\TLORel{n}{3}} \TLO{K}{0,1,2,3} \cdot \TLO{X}{a,b}$, then $\TLO{X}{a,b} \cdot \TLO{K}{0,1,2,3} \sim_{\TLORel{n}{3}} \TLO{K}{0,1,2,3} \cdot \TLO{X}{a,b}$ by the transitivity of $( \sim_{\TLORel{n}{3}} )$.
\end{proof}

\begin{theorem}
    All instances of \cref{Eq:O8DRels:Orig:Perm4} are derivable from $\TLORel{n}{3}$.
\end{theorem}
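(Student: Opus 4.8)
The plan is to mirror, almost verbatim, the proof just completed for \cref{Eq:O8DRels:Orig:Perm3}, replacing the one-level operator $\TLO{(-1)}{0}$ by the four-level operator $\TLO{K}{0,1,2,3}$ and \cref{Lemma:ZTxPull} by \cref{Lemma:KTxPull}. Fix a valid instance, so that $a < b$, $c < d < e < f$, and $\{a,b\} \cap \{c,d,e,f\} = \varnothing$. Under the abbreviations introduced in \cref{Sec:O8D}, the symbol $\TLO{K}{c,d,e,f}$ stands for $u \cdot \TLO{K}{0,1,2,3} \cdot \overline{u}$, where $u = \TLO{X}{0,c} \cdot \TLO{X}{1,d} \cdot \TLO{X}{2,e} \cdot \TLO{X}{3,f}$ (with any invalid factor read as $\epsilon$). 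Let $\sigma \in \mathrm{S}(n)$ be the permutation with $\interp{u}_O^* = \interp{\sigma}_S$; by construction $\sigma$ restricts to a bijection $\{0,1,2,3\} \to \{c,d,e,f\}$.

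The first step is to transport $\TLO{X}{a,b}$ across $u$. Since $\sigma$ maps $\{0,1,2,3\}$ onto $\{c,d,e,f\}$, its inverse maps the complement of $\{c,d,e,f\}$ into the complement of $\{0,1,2,3\}$; because $a,b \notin \{c,d,e,f\}$ this forces $a' := \sigma^{-1}(a)$ and $b' := \sigma^{-1}(b)$ to satisfy $a', b' > 3$. Using the permutation-group identity $\tau_{a,b} \circ \sigma = \sigma \circ \tau_{a',b'}$ and faithfulness of $\interp{\cdot}_S$, the words $\TLO{X}{a,b} \cdot u$ and $u \cdot \TLO{X}{a',b'}$ denote the same operator; as both are words over two-level operators of type $X$ and $\TLORel{n}{B} \subseteq \TLORel{n}{3}$, \cref{Lemma:TypeXBraids} gives $\TLO{X}{a,b} \cdot u \sim_{\TLORel{n}{3}} u \cdot \TLO{X}{a',b'}$, and \cref{Lemma:FormalInverse} gives $\TLO{X}{a',b'} \cdot \overline{u} \sim_{\TLORel{n}{3}} \overline{u} \cdot \TLO{X}{a,b}$.

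The second step invokes the key commutation lemma. Because $a', b' > 3$, \cref{Lemma:KTxPull} yields $\TLO{X}{a',b'} \cdot \TLO{K}{0,1,2,3} \sim_{\TLORel{n}{3}} \TLO{K}{0,1,2,3} \cdot \TLO{X}{a',b'}$. Chaining the three facts produces the derivation
\begin{align*}
\TLO{X}{a,b} \cdot u \cdot \TLO{K}{0,1,2,3} \cdot \overline{u}
&\;\sim_{\TLORel{n}{3}}\; u \cdot \TLO{X}{a',b'} \cdot \TLO{K}{0,1,2,3} \cdot \overline{u} \\
&\;\sim_{\TLORel{n}{3}}\; u \cdot \TLO{K}{0,1,2,3} \cdot \TLO{X}{a',b'} \cdot \overline{u} \\
&\;\sim_{\TLORel{n}{3}}\; u \cdot \TLO{K}{0,1,2,3} \cdot \overline{u} \cdot \TLO{X}{a,b},
\end{align*}
whose endpoints are exactly $\TLO{X}{a,b} \cdot \TLO{K}{c,d,e,f}$ and $\TLO{K}{c,d,e,f} \cdot \TLO{X}{a,b}$. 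Since the instance was arbitrary, every instance of \cref{Eq:O8DRels:Orig:Perm4} is derivable from $\TLORel{n}{3}$.

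I expect the only genuine obstacle to be the index bookkeeping: one must confirm that $u$ realizes a permutation carrying $\{0,1,2,3\}$ onto $\{c,d,e,f\}$ — including the degenerate cases in which some of $c,d,e,f$ already lie in $\{0,1,2,3\}$, so that the corresponding transpositions collapse to $\epsilon$, exactly as the $c=0$ case is handled for \cref{Eq:O8DRels:Orig:Perm3} — and that this is what forces $a',b' > 3$. Everything else reduces to the completeness of the braiding and order relations for type-$X$ words (\cref{Lemma:TypeXBraids,Lemma:FormalInverse}) together with \cref{Lemma:KTxPull}, all of which are available over $\TLORel{n}{3}$.
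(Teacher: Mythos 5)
Your proof is correct and takes essentially the same route as the paper's: expand $\TLO{K}{c,d,e,f}$ as $u \cdot \TLO{K}{0,1,2,3} \cdot \overline{u}$, transport $\TLO{X}{a,b}$ across $u$ using \cref{Lemma:TypeXBraids} and \cref{Lemma:FormalInverse}, commute it past $\TLO{K}{0,1,2,3}$ with \cref{Lemma:KTxPull}, and transport back. If anything, your bookkeeping is slightly more careful than the paper's, which writes $\sigma(a),\sigma(b)$ where your $\sigma^{-1}(a),\sigma^{-1}(b)$ is the correct reading of the conjugation identity.
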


\begin{proof}
    Let $( c_0, c_1, c_2, c_3 )$ an increasing sequence over $[n]$.
    Since $(c_0,c_1,c_2,c_3)$ is increasing, then $k \le c_k$ for $k \in [4]$.
    Then for each $k \in [4]$, define $\sigma_k$ to be $\tau_{k,c_k}$ if $k \ne c_k$, or identity otherwise, and define $\sigma = \sigma_0 \circ \sigma_1 \circ \sigma_2 \circ \sigma_3$.
    Likewise, for each $k \in [4]$, define $u_k$ to be $\TLO{X}{k,c_k}$ if $k \ne c_k$, or $\epsilon$ otherwise, and let $u = u_0 \cdot u_1 \cdot u_2 \cdot u_3$.
    Clearly $\tau_{a,b} \circ \sigma = \sigma \circ \tau_{\sigma(a),\sigma(b)}$.
    Then,
    {\scriptsize\begin{equation*}
        \interp{\TLO{X}{a,b} \cdot u}_O^*
        =
        \interp{\tau_{a,b} \circ \sigma}_S
        =
        \interp{\sigma \circ \tau_{\sigma(a),\sigma(b)}}_S = \interp{u \cdot \TLO{X}{\sigma(a),\sigma(b)}}_O^*.
    \end{equation*}}%
    Then $\TLO{X}{a,b} \cdot u \sim_{\TLORel{n}{3}} u \cdot \TLO{X}{\sigma(a),\sigma(b)}$ by \cref{Lemma:TypeXBraids}.
    Likewise, $\TLO{X}{\sigma(a),\sigma(b)} \cdot \overline{u} \sim_{\TLORel{n}{3}} \overline{u} \cdot \TLO{X}{a,b}$ by \cref{Lemma:FormalInverse}.
    Since $a,b \not \in \{ c_0, c_1, c_2, c_3 \}$, then $\sigma(a) > 3$ and $\sigma(b) > 3$.
    Then $\TLO{X}{\sigma{a},\sigma{b}} \cdot \TLO{K}{0,1,2,3} \sim_{\TLORel{n}{3}} \TLO{K}{0,1,2,3} \cdot \TLO{X}{\sigma{a},\sigma{b}}$ by \cref{Lemma:KTxPull}.
    Then the following derivation holds.
    {\scriptsize\begin{equation*}
        \TLO{X}{a,b} \cdot u \cdot \TLO{K}{0,1,2,3} \cdot \overline{u}
        \sim_{\TLORel{n}{3}}
        u \cdot \TLO{X}{\sigma(a),\sigma(b)} \cdot \TLO{K}{0,1,2,3} \cdot \overline{u}
        \sim_{\TLORel{n}{3}}
        u \cdot \TLO{K}{0,1,2,3} \cdot \TLO{X}{\sigma(a),\sigma(b)} \cdot \overline{u}
        \sim_{\TLORel{n}{3}}
        u \cdot \TLO{K}{0,1,2,3} \cdot \overline{u} \cdot \TLO{X}{a,b}
    \end{equation*}}%
    Since $\{ a, b, c_0, c_1, c_2, c_3 \}$ were arbitrary, then all instances of \cref{Eq:O8DRels:Orig:Perm4} are derivable from $\TLORel{n}{3}$.
\end{proof}

\begin{lemma}
    \label{Lemma:KPullPerm}
    Let $( c_0, a, c_1, c_2, c_3 )$ be an increasing sequence over $[n]$.
    For each $k \in [4]$, define $\sigma_k$ to be $\tau_{k,c_k}$ if $k \ne c_k$, or identity otherwise.
    If $\sigma = \sigma_0 \cdot \sigma_1 \cdot \sigma_2 \cdot \sigma_3$ and $\rho = \tau_{0,a} \cdot \sigma_1 \cdot \sigma_2 \cdot \sigma_3$, then there exists an $\alpha \in \mathrm{S}(n)$ such that $\tau_{c_0,a} \circ \sigma = \rho \circ \alpha$ with $\alpha$ fixing $[4]$.
\end{lemma}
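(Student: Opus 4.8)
The plan is to produce $\alpha$ explicitly rather than to search for it: set $\alpha = \rho^{-1} \circ \tau_{c_0,a} \circ \sigma$, so that the equation $\tau_{c_0,a} \circ \sigma = \rho \circ \alpha$ holds by construction, and then verify that this $\alpha$ fixes $[4]$. With this reduction, the whole content of the lemma collapses to a single claim: the permutations $\tau_{c_0,a} \circ \sigma$ and $\rho$ agree on the four points of $[4] = \{0,1,2,3\}$. Once that is checked, $\alpha(k) = \rho^{-1}(\rho(k)) = k$ for each $k \in [4]$, which gives the desired fixing (pointwise, hence in particular setwise, so that $\alpha$ restricts to a permutation of $[n]\setminus[4]$ as needed downstream).

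To carry out the comparison I would first compute the images of $0,1,2,3$ under $\sigma = \sigma_0 \circ \sigma_1 \circ \sigma_2 \circ \sigma_3$, following the composition convention of \cref{Sec:O8D:Reindex} in which the leftmost factor is applied last. Using that $(c_0,a,c_1,c_2,c_3)$ is strictly increasing — so each $\tau_{k,c_k}$ fixes the points it is not designed to move — one reads off $\sigma(k) = c_k$ for each $k \in [4]$, exactly as in the analogous computation in the proof of \cref{Lemma:PermuteK:2} for the operator $\TLO{K}{0,1,2,3}$. Replacing the leftmost factor $\sigma_0 = \tau_{0,c_0}$ by $\tau_{0,a}$ produces $\rho$, and the same bookkeeping gives $\rho(0) = a$ together with $\rho(k) = c_k$ for $k \in \{1,2,3\}$, where the inequalities $a < c_1 < c_2 < c_3$ ensure $\tau_{0,a}$ fixes $c_1, c_2, c_3$. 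Finally, applying $\tau_{c_0,a}$ to the images under $\sigma$ and using $c_0 < a < c_1 < c_2 < c_3$ to rule out accidental coincidences yields $(\tau_{c_0,a}\circ\sigma)(0) = \tau_{c_0,a}(c_0) = a$ and $(\tau_{c_0,a}\circ\sigma)(k) = c_k$ for $k \in \{1,2,3\}$. These match the images under $\rho$ on all of $[4]$, which is precisely the claim.

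I do not expect a genuine obstacle: the argument is an elementary permutation calculation, and the only places demanding care are tracking the order of composition and invoking the strict chain $c_0 < a < c_1 < c_2 < c_3$ at each step where a transposition must fix a point. The degenerate cases are absorbed automatically: if some $c_k = k$ then $\sigma_k$ is the identity and the corresponding image computation is unaffected, and if $c_0 = 0$ then $\tau_{c_0,a} = \tau_{0,a}$ and $\sigma_0 = \mathrm{id}$, forcing $\alpha = \mathrm{id}$, consistent with the general formula. Thus the construction of $\alpha$ and the four image checks suffice to establish the lemma.
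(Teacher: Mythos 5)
Your proposal is correct and takes essentially the same route as the paper: both define $\alpha = \rho^{-1} \circ \tau_{c_0,a} \circ \sigma$ (the paper phrases this as existence via the group structure of $\mathrm{S}(n)$) and reduce the lemma to checking that $\tau_{c_0,a} \circ \sigma$ and $\rho$ agree pointwise on $[4]$, concluding that $\alpha$ fixes $[4]$ by injectivity of $\rho$ (which the paper casts as a short contradiction argument). Your write-up is in fact more careful than the paper's, which asserts the agreement on $[4]$ ``by definition'' with garbled displayed values, whereas you track the transposition bookkeeping, the role of the chain $c_0 < a < c_1 < c_2 < c_3$, and the degenerate cases explicitly.
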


\begin{proof}
    By definition, $\rho(0) = a = \tau_{c_0,a}(\sigma(0))$, $\rho(1) = c_1 = \tau_{c_0,a}(\sigma(0))$, $\rho(1) = c_2 = \tau_{c_0,a}(\sigma(0))$, and $\rho(2) = c_2 = \tau_{c_0,a}(\sigma(0))$.
    Since $\mathrm{S}(n)$ is a group, then there exists an $\alpha \in \mathrm{S}(n)$ such that $\tau_{c_0,a} \circ \sigma = \rho \circ \alpha$.
    Assume that there exists a $k \in [4]$ such that $\alpha(k) \ne k$.
    Then $(\tau_{c_0,a} \circ \sigma)(\alpha(k)) \ne (\tau_{c_0,a} \circ \sigma)(k) = \rho(k)$.
    Then by contradiction, $\alpha$ fixes $[4]$.
    Then $\alpha$ decomposes into a sequence of transpositions over $[n] \setminus [4]$.
\end{proof}

\begin{lemma}
    \label{Lemma:KPull}
    Let $w = \TLO{X}{a_0,b_0} \cdot \TLO{X}{a_1,b_1} \cdots \TLO{X}{a_m,b_m}$ such that $a_k > 3$ and $b_k > 3$ for all $k \in [m+1]$.
    Then $w \cdot \TLO{K}{0,1,2,3} \sim_{\TLORel{n}{3}} \TLO{K}{0,1,2,3} \cdot w$.
\end{lemma}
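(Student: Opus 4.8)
The plan is to prove this by induction on the number of two-level factors appearing in $w$, peeling off the leftmost factor and commuting $\TLO{K}{0,1,2,3}$ past it one symbol at a time via \cref{Lemma:KTxPull}. The crucial preliminary observation is that each $\TLO{X}{a_k,b_k}$ is a well-formed two-level operator, so its indices already satisfy $a_k < b_k$; combined with the hypotheses $a_k > 3$ and $b_k > 3$, this yields that $(3, a_k, b_k)$ is an increasing sequence over $[n]$. Hence every factor of $w$ meets exactly the hypothesis required to invoke \cref{Lemma:KTxPull}.

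For the base case $w = \epsilon$, we have $w \cdot \TLO{K}{0,1,2,3} = \TLO{K}{0,1,2,3} = \TLO{K}{0,1,2,3} \cdot w$, so $w \cdot \TLO{K}{0,1,2,3} \sim_{\TLORel{n}{3}} \TLO{K}{0,1,2,3} \cdot w$ by the reflexivity of $(\sim_{\TLORel{n}{3}})$. For the inductive step I would write $w = \TLO{X}{a_0,b_0} \cdot w'$, where $w'$ is strictly shorter and all of its factors still have both indices exceeding $3$, so the inductive hypothesis applies to $w'$. The argument then chains two rewrites followed by transitivity: first push $\TLO{K}{0,1,2,3}$ past the tail $w'$ using the inductive hypothesis, and then past the head $\TLO{X}{a_0,b_0}$ using \cref{Lemma:KTxPull}. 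Concretely,
\begin{align*}
w \cdot \TLO{K}{0,1,2,3}
&= \TLO{X}{a_0,b_0} \cdot w' \cdot \TLO{K}{0,1,2,3} \\
&\sim_{\TLORel{n}{3}} \TLO{X}{a_0,b_0} \cdot \TLO{K}{0,1,2,3} \cdot w' \\
&\sim_{\TLORel{n}{3}} \TLO{K}{0,1,2,3} \cdot \TLO{X}{a_0,b_0} \cdot w' \\
&= \TLO{K}{0,1,2,3} \cdot w,
\end{align*}
and the inductive step follows by the transitivity of $(\sim_{\TLORel{n}{3}})$.

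This is essentially the one-level-up analogue of the induction already carried out inside the proof of \cref{Lemma:KTxPull}, so I do not anticipate any genuine obstacle. The only care needed is the routine bookkeeping of decomposing $w$ as a head times a tail, verifying that the tail still satisfies the index hypothesis so the inductive hypothesis is applicable, and managing the repeated appeals to transitivity. In particular, no relation from $\TLORel{n}{3}$ beyond those already used by \cref{Lemma:KTxPull} is required, which is what makes the statement a convenient packaged form for the subsequent bifunctoriality and commutator derivations.
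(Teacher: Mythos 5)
Your proposal is correct and takes essentially the same approach as the paper: induction on the number of factors, using \cref{Lemma:KTxPull} to commute $\TLO{K}{0,1,2,3}$ past a single two-level operator and the inductive hypothesis for the remaining word, chained by transitivity. The only (immaterial) difference is that you peel off the leftmost factor where the paper peels off the rightmost; your explicit check that well-formedness gives $a_k < b_k$, so that $(3, a_k, b_k)$ is increasing and \cref{Lemma:KTxPull} applies, is a detail the paper leaves implicit.
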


\begin{proof}
    Let proof follows by induction on $|w|$.
    \begin{enumerate}
    \item[--] \textbf{Base Case}.
          If $|w| = 0$, then $w = \epsilon$ and $w \cdot \TLO{K}{0,1,2,3} \sim_{\TLORel{n}{3}} \TLO{K}{0,1,2,3} \cdot w$ by the transitivity of $( \sim_{\TLORel{n}{3}} )$.
    \item[--] \textbf{Inductive Hypothesis}.
          Assume that for some $k \in \mathbb{N}$, if $|w| = k$, then $w \cdot \TLO{K}{0,1,2,3} \sim_{\TLORel{n}{3}} \TLO{K}{0,1,2,3} \cdot w$
    \item[--] \textbf{Inductive Step}.
          Assume that $|w| = k + 1$.
          Define $v = \TLO{X}{a_0,b_0} \cdot \TLO{X}{a_1,b_1} \cdots \TLO{X}{a_k,b_k}$.
          Then by definition $w = v \cdot \TLO{X}{a_m,b_m}$.
          Since $a_m > 3$ and $b_m > 3$, then $\TLO{X}{a_m,b_m} \cdot \TLO{K}{0,1,2,3} \sim_{\TLORel{n}{3}} \TLO{K}{0,1,2,3} \cdot \TLO{X}{a_m,b_m}$ by \cref{Lemma:KTxPull}.
          As a result, $w \cdot \TLO{K}{0,1,2,3} \sim_{\TLORel{n}{3}} v \cdot \TLO{K}{0,1,2,3} \cdot \TLO{X}{a_m,b_m}$.
          Then by the inductive hypothesis, $v \cdot \TLO{K}{0,1,2,3} \sim_{\TLORel{n}{3}} \TLO{K}{0,1,2,3} \cdot v$.
          As a result, $v \cdot \TLO{K}{0,1,2,3} \cdot \TLO{X}{a_m,b_m} \sim_{\TLORel{n}{3}} \TLO{K}{0,1,2,3} \cdot w$.
          Then $w \cdot \TLO{K}{0,1,2,3} \sim \TLO{K}{0,1,2,3} \cdot w$ by the transitivity of $( \sim_{\TLORel{n}{3}} )$ and the inductive step is established.
    \end{enumerate}
    Then by the principle of induction, $w \cdot \TLO{K}{0,1,2,3} \sim_{\TLORel{n}{3}} \TLO{K}{0,1,2,3} \cdot w$.
\end{proof}

\begin{theorem}
    All instances of \cref{Eq:O8DRels:Orig:Perm8,Eq:O8DRels:Orig:Perm9,Eq:O8DRels:Orig:Perm10,Eq:O8DRels:Orig:Perm11} are derivable from $\TLORel{n}{3}$.
\end{theorem}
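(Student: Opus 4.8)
The plan is to treat \cref{Eq:O8DRels:Orig:Perm8,Eq:O8DRels:Orig:Perm9,Eq:O8DRels:Orig:Perm10,Eq:O8DRels:Orig:Perm11} uniformly, following the same template already used for \cref{Eq:O8DRels:Orig:Perm4}. Each instance asserts that conjugating $\TLO{K}{a,b,c,d}$ by a transposition $\TLO{X}{p,e}$, where $p$ is one of the four supporting indices and $e$ lies outside $\{a,b,c,d\}$ in the same slot, relabels that slot to $e$. Semantically this is just the permutation action $\tau_{p,e}(\TLO{K}{a,b,c,d})$, so every instance is sound with respect to $\interp{\cdot}_O^*$; the content is to realize this action syntactically using only $\TLORel{n}{3}$, whose relevant members are the braiding and order relations for the $X$-transpositions, the commutators $\TLO{X}{c,c+1} \cdot \TLO{K}{0,1,2,3} \approx \TLO{K}{0,1,2,3} \cdot \TLO{X}{c,c+1}$ for $c > 3$, and the representative relabelings involving $\TLO{X}{0,1}$, $\TLO{X}{1,2}$, and $\TLO{X}{2,3}$ against $\TLO{K}{0,1,2,3}$. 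In effect I am re-establishing the $\TLORel{n}{3}$-analogue of \cref{Thm:PermuteK}, which was proved relative to $\TLORel{\sigma}{}$ and is therefore unavailable here since $\TLORel{\sigma}{}$ contains the very relations being derived.

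The engine of the reduction is \cref{Lemma:KPullPerm}. For the first-slot family (\cref{Eq:O8DRels:Orig:Perm8}), given an increasing sequence $(c_0, a, c_1, c_2, c_3)$ with $e = a$ and $p = c_0$, \cref{Lemma:KPullPerm} factors the reindexing permutation as $\tau_{c_0,a} \circ \sigma = \rho \circ \alpha$, where $\sigma$ realizes the target indices, $\rho$ realizes the relabeled indices, and $\alpha$ fixes $[4]$. Realizing each permutation as a word over two-level operators of type $X$ with matching $\interp{\cdot}_O^*$-value via \cref{Lemma:TypeXBraids}, and cancelling the resulting $v \cdot \overline{v}$ factors by \cref{Lemma:FormalInverse}, I split the conjugation into a part governed by $\alpha$ and a part governed by $\rho$. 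Since $\alpha$ only permutes indices exceeding $3$, it is realized by a word of high-index transpositions that commutes with $\TLO{K}{0,1,2,3}$ by \cref{Lemma:KPull}; what remains is a single canonical slot-relabel. The three remaining families are handled by the evident analogues of \cref{Lemma:KPullPerm} for the second, third, and fourth slots.

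It then remains to discharge the base case: the canonical single-slot relabel of each of the four positions from the representative relations. Here I would peel off adjacent transpositions using the braiding relations, push the outer, high-index transpositions across $\TLO{K}{0,1,2,3}$ by \cref{Lemma:KTxPull} and \cref{Lemma:KPull}, and bottom out in the representative relabelings, absorbing the sign generators $\TLO{(-1)}{\cdot}$ that the relation $\TLO{X}{0,1} \cdot \TLO{K}{0,1,2,3} \approx \TLO{K}{0,1,2,3} \cdot \TLO{X}{0,1} \cdot \TLO{(-1)}{1} \cdot \TLO{(-1)}{3}$ introduces. The main obstacle will be the combinatorial bookkeeping in this reduction: verifying that after the factorization the transported transposition always decomposes into exactly the intended slot-relabel followed by an $[4]$-fixing permutation — for every admissible outside index $e$ and uniformly across all four slots — and checking that no spurious relabel of a second slot and no uncancelled sign generator survives. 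Controlling precisely this is what \cref{Lemma:KPullPerm} and \cref{Lemma:KPull} are designed to do, so the difficulty is in assembling them rather than in any single step.
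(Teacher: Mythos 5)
Your first two paragraphs are the paper's proof: parametrize an instance of \cref{Eq:O8DRels:Orig:Perm8} by an increasing sequence $(c_0,a,c_1,c_2,c_3)$, use \cref{Lemma:KPullPerm} to factor $\tau_{c_0,a}\circ\sigma = \rho\circ\alpha$ with $\alpha$ fixing $[4]$, realize the permutations as type-$X$ words and exchange words with equal semantics via \cref{Lemma:TypeXBraids} and \cref{Lemma:FormalInverse}, and push the $\alpha$-word across $\TLO{K}{0,1,2,3}$ with \cref{Lemma:KPull}; the remaining three families are handled symmetrically. Up to that point your proposal matches the paper exactly.

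The problem is your third paragraph: there is no remaining ``base case,'' and the way you propose to discharge it would not work. Recall that in $\TLORel{n}{2}$ and $\TLORel{n}{3}$ the symbols $\TLO{K}{a,b,c,d}$ are abbreviations: the target $\TLO{K}{a,c_1,c_2,c_3}$ literally denotes $v'\cdot\TLO{K}{0,1,2,3}\cdot\overline{v'}$ where $v'$ is a type-$X$ word realizing $\rho$. So once \cref{Lemma:KPull} has eliminated the $\alpha$-word, both sides of the (expanded) relation are conjugates of $\TLO{K}{0,1,2,3}$ by type-$X$ words realizing the \emph{same} permutation $\rho$, and \cref{Lemma:TypeXBraids} together with \cref{Lemma:FormalInverse} closes the derivation immediately; no further reduction exists. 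Indeed, the ``canonical single-slot relabel'' is trivial after expansion: for example $\TLO{X}{3,4}\cdot\TLO{K}{0,1,2,3}\approx\TLO{K}{0,1,2,4}\cdot\TLO{X}{3,4}$ follows from the order relation $\TLO{X}{3,4}{}^2\approx\epsilon$ alone, because $\TLO{K}{0,1,2,4}$ abbreviates $\TLO{X}{3,4}\cdot\TLO{K}{0,1,2,3}\cdot\TLO{X}{3,4}$. The representative relations you want to ``bottom out'' in, \cref{EQ:O8DRels:Orig:Rep5a,EQ:O8DRels:Orig:Rep6,EQ:O8DRels:Orig:Rep7}, concern transpositions of two indices \emph{inside} the support $\{0,1,2,3\}$; these are a genuinely different phenomenon (the operator $K$ is not invariant under them, so they produce sign generators or new $K$-words rather than relabelings), which is precisely why they must be kept as axioms of $\TLORel{n}{3}$ while \cref{Eq:O8DRels:Orig:Perm8,Eq:O8DRels:Orig:Perm9,Eq:O8DRels:Orig:Perm10,Eq:O8DRels:Orig:Perm11}, whose instances swap a support index with an index \emph{outside} the support, can be derived. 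If your derivation ever introduces $\TLO{(-1)}{\cdot}$ factors that need absorbing, that is a symptom of the reduction having gone astray, not a step awaiting completion.
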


\begin{proof}
    Let be $( c_0, a, c_1, c_2, c_3 )$ an increasing sequence.
    For each $k \in [4]$, define $\sigma_k$ to be $\tau_{k,c_k}$ if $k \ne c_k$, or identity otherwise.
    Then define $\sigma = \sigma_0 \circ \sigma_1 \circ \sigma_2 \circ \sigma_3$ and $\rho = \tau_{0,a} \cdot \sigma_1 \cdot \sigma_2 \cdot \sigma_3$.
    By \cref{Lemma:KPullPerm}, there exists a sequence of transpositions $\alpha = \alpha_0 \circ \alpha_1 \circ \cdots \alpha_m$ such that $\tau_{c_0,a} \circ \sigma = \rho \circ \alpha$.
    Then define $w = \interp{\alpha_0}_S \cdot \interp{\alpha_1}_S \cdots \interp{\alpha_m}_S$.
    Next, for each $k \in [4]$, define $u_k$ to be $\TLO{X}{k,c_k}$ if $k \ne c_k$, or $\epsilon$ otherwise.
    Then define $u = u_0 \cdot u_1 \cdot u_2 \cdot u_3$, $v = \TLO{X}{c_0,a} \cdot u_1 \cdot u_2 \cdot u_3$, and $w = \interp{\alpha_0}_S \cdot \interp{\alpha_1}_S \cdots \interp{\alpha_m}_S$.
    It follows that $\interp{u}_O^* = \interp{\tau_{c_0,a} \circ \sigma}_S = \interp{\rho \circ \alpha}_S = \interp{v \cdot w}$.
    Since $u$, $v$, and $w$ are words over two-level operators of type $X$, then $\TLO{X}{c_0,a} \cdot u \sim_{\TLORel{n}{3}} v \cdot w$.
    Likewise, by \cref{Lemma:FormalInverse}, $\overline{u} \cdot \TLO{X}{c_0,a} \sim_{\TLORel{n}{3}} \overline{w} \cdot \overline{v}$.
    Then $\TLO{X}{c_0,a} \cdot u$ acts by conjugation on $\TLO{K}{0,1,2,3}$ as follows.
    {\scriptsize\begin{equation*}
        \TLO{X}{c_0,a} \cdot u \cdot \TLO{K}{0,1,2,3} \cdot \overline{u} \cdot \TLO{X}{c_0,1}
        \sim_{\TLORel{n}{3}}
        v \cdot w \cdot \TLO{K}{0,1,2,3} \overline{u} \cdot \TLO{X}{c_0,1}
        \sim_{\TLORel{n}{3}}
        v \cdot w \cdot \TLO{K}{0,1,2,3} \cdot \overline{w} \cdot \overline{v}
    \end{equation*}}%
    Then by \cref{Lemma:KPull}, $w \cdot \TLO{K}{0,1,2,3} \sim_{\TLORel{n}{3}} \TLO{K}{0,1,2,3} \cdot w$.
    Furthermore, $w \cdot \overline{w} \sim_{\TLORel{n}{3}} \epsilon$ \cref{Lemma:FormalInverse}.
    Then $w$ acts by conjugation on $\TLO{K}{0,1,2,3}$ as follows.
    {\scriptsize\begin{equation*}
        w \cdot \TLO{K}{0,1,2,3} \cdot \overline{w}
        \sim_{\TLORel{n}{3}}
        \TLO{K}{0,1,2,3} \cdot w \cdot \overline{w}
        \sim_{\TLORel{n}{3}}
        \TLO{K}{0,1,2,3}
    \end{equation*}}
    Since $\TLO{X}{c_0,a}{}^2 \sim_{\TLORel{n}{3}} \epsilon$ by \cref{Lemma:TypeXBraids}, then the following derivation also holds.
    {\scriptsize\begin{equation*}
        \TLO{X}{c_0,a} \cdot u \cdot \TLO{K}{0,1,2,3} \cdot \overline{u}
        \sim_{\TLORel{n}{3}}
        \TLO{X}{c_0,a} \cdot u \cdot \TLO{K}{0,1,2,3} \cdot \overline{u} \cdot \TLO{X}{c_0,1}{}^2
        \sim_{\TLORel{n}{3}}
        v \cdot w \cdot \TLO{K}{0,1,2,3} \cdot \overline{w} \cdot \overline{v} \cdot \TLO{X}{c_0,1}
        \sim_{\TLORel{n}{3}}
        v \cdot \TLO{K}{0,1,2,3} \cdot \overline{v} \cdot \TLO{X}{c_0,1}
    \end{equation*}}%
    Since $\{ a, b, c_0, c_1, c_2, c_3 \}$ were all arbitrary, then all instances of \cref{Eq:O8DRels:Orig:Perm8} holds.
    The cases of \cref{Eq:O8DRels:Orig:Perm9,Eq:O8DRels:Orig:Perm10,Eq:O8DRels:Orig:Perm11} follow symmetrically.
\end{proof}

\end{document}